\documentclass[a4paper,11pt]{article} 
  

\usepackage[english]{babel}
\usepackage[a4paper]{geometry}
\usepackage{amsmath}
\usepackage{amsthm}
\usepackage{amssymb}
\usepackage{color}

\usepackage{hyperref}         


\def\bR{\mathbb{R}}
\def\bT{\mathbb{T}}
\def\bN{\mathbb{N}}
\def\NN{\mathbb{N}}
\def\bZ{\mathbb{Z}}
\def\cC{\mathcal{C}}

\def\cQ{\mathcal{Q}}
\def\cD{\mathcal{D}}

\def\cR{\mathcal{R}}

\def\cV{\mathcal{V}}
\def\cO{\mathcal{O}}
\def\cF{\mathcal{F}}
\def\cG{\mathcal{G}}
\def\cL{\mathcal{L}}

\def\cN{\mathcal{N}}
\def\cE{\mathcal{E}}
\def\cK{\mathcal{K}}
\def\cH{\mathcal{H}}
\def\eps{\varepsilon}
\def\ph{\varphi}

\def\wt{\widetilde}


\def\indic{\hbox{\raise-2pt \hbox{\indbf 1}}}

\let\==\equiv

\let\0=\noindent

\def\*{{\hfill\break\null\hfill\break}}

\def\tende#1{\,\vtop{\ialign{##\crcr\rightarrowfill\crcr
             \noalign{\kern-1pt\nointerlineskip}
             \hskip3.pt${\scriptstyle #1}$\hskip3.pt\crcr}}\,}
\def\otto{\,{\kern-1.truept\leftarrow\kern-5.truept\to\kern-1.truept}\,}

\def\tr{{\rm tr}}

\newtheorem{theorem}{Theorem}[section]  
\newtheorem{cor}[theorem]{Corollary}
\newtheorem{prop}[theorem]{Proposition}
\newtheorem{lemma}[theorem]{Lemma}

\numberwithin{equation}{section}


\def\be{\begin{equation}}
\def\ee{\end{equation}}

\newcommand{\hc}{\mbox{h.c.}}

\let\a=\alpha \let\b=\beta    \let\g=\gamma     \let\d=\delta     
             \let\l=\lambda
                          
\let\s=\sigma \let\t=\tau         \let\ph=\varphi   
        
        \let\L=\Lambda


\begin{document}

\title{Optimal Rate for Bose-Einstein Condensation \\ in the Gross-Pitaevskii Regime}

\author{Chiara Boccato$^1$, Christian Brennecke$^2$, Serena Cenatiempo$^3$, Benjamin Schlein$^4$ \\
\\
Institute of Science and Technology Austria, \\ 
Am Campus 1, 3400 Klosterneuburg, Austria$^1$ \\
\\
Department of Mathematics, Harvard University, \\
One Oxford Street, Cambridge MA 02138, USA$^2$ \\
\\
Gran Sasso Science Institute, \\ Viale Francesco Crispi 7, 
67100 L'Aquila, Italy$^3$\\
\\
Institute of Mathematics, University of Zurich, \\
Winterthurerstrasse 190, 8057 Zurich, Switzerland$^4$}

\maketitle

\begin{abstract}
We consider systems of bosons trapped in a box, in the Gross-Pitaevskii regime. We show that low-energy states exhibit complete Bose-Einstein condensation with an optimal bound on the number of orthogonal  excitations. This extends recent results obtained in \cite{BBCS1}, removing the assumption of small interaction potential. 
\end{abstract}

\section{Introduction}

We consider systems of $N$ bosons trapped in the three-dimensional box $\Lambda = [0;1]^3$, with periodic boundary conditions (the three dimensional torus with volume one), interacting through a repulsive potential with scattering length of the order $N^{-1}$, a scaling limit known as the 
Gross-Pitaevskii regime. The Hamilton operator is given by 
\begin{equation}\label{eq:HN} H_N = \sum_{j=1}^N -\Delta_{x_j} + \sum_{i<j}^N N^2 V (N(x_i -x_j)) \end{equation}
and acts on a dense subspace of $L^2_s (\Lambda^N)$, the Hilbert space consisting of functions in $L^2 (\L^N)$ that are invariant with respect to permutations of the $N$ particles. We assume here 
$V \in L^3 (\bR^3)$ to have compact support and to be pointwise non-negative (i.e. $V(x) \geq 0$ for almost all $x \in \bR^3$). 

\medskip


Instead of trapping the Bose gas into the box $\Lambda = [0;1]^3$ and imposing periodic boundary conditions, one could also confine particles through an external potential $V_\text{ext} : \bR^3 \to \bR$, with $V_\text{ext} (x) \to \infty$, as $|x| \to \infty$. In this case, the Hamilton operator would have the form 
\begin{equation}\label{eq:HNV} H^\text{trap}_N = \sum_{j=1}^N \left[ -\Delta_{x_j} + V_\text{ext} (x_j) \right] + \sum_{i<j}^N N^2 V(N(x_i -x_j)) \end{equation}
and it would act on a dense subspace of $L^2_s (\bR^{3N})$. 

Lieb-Seiringer-Yngvason proved in \cite{LSY} that the ground state energy $E^\text{trap}_N$ of (\ref{eq:HNV}) is such that, as $N \to \infty$, 
\[
\frac{E^\text{trap}_N}{N} \to \min_{\ph \in L^2 (\bR^3) : \| \ph \|_2 = 1}   \cE_\text{GP} (\ph) \]
with the Gross-Pitaevskii energy functional 
\begin{equation}\label{eq:GPfun} \cE_\text{GP} (\ph) = \int \left[ |\nabla \ph|^2 + V_\text{ext} |\ph|^2 + 4 \pi \frak{a}_0 |\ph|^4 \right] dx \end{equation}
where $\frak{a}_0$ denotes the scattering length of the unscaled interaction potential $V$. 

In \cite{LS}, Lieb-Seiringer also proved that the normalized ground state vector $\psi^\text{trap}_N$ of (\ref{eq:HNV}) exhibits complete Bose-Einstein condensation in the minimizer $\ph_\text{GP}$ of (\ref{eq:GPfun}), meaning that its reduced one-particle density matrix $\gamma_N = \tr_{2,\dots , N} |\psi^\text{trap}_N \rangle \langle \psi^\text{trap}_N |$ (normalized so that $\tr \, \gamma_N =1$) satisfies 
\begin{equation}\label{eq:BEC0} \gamma_N \to |\ph_\text{GP} \rangle \langle \ph_\text{GP}|   \end{equation}
as $N \to \infty$ (convergence holds in the trace norm topology; since the limit is a rank-one projection, all reasonable notions of convergence are equivalent). Eq. (\ref{eq:BEC0}) asserts that, in the ground state of (\ref{eq:HNV}), all bosons, up to a fraction that vanishes in the limit $N \to \infty$, occupy the same one-particle state $\ph_\text{GP}$. In \cite{LS2}, Lieb-Seiringer extended Eq. (\ref{eq:BEC0}) to reduced density matrices associated with normalized sequences of approximate ground states, ie. states with expected energy per particle converging to the minimum of (\ref{eq:GPfun}) (under the constraint $\| \ph \| = 1$). 

A new proof of the results described above has been later obtained by Nam-Rougerie-Seiringer in \cite{NRS}, making use of the quantum de Finetti theorem, first proposed in the mean-field setting by Lewin-Nam-Rougerie \cite{LNR1,LNR2}. 

The results of \cite{LSY,LS,LS2,NRS} can be translated to the Hamilton operator (\ref{eq:HN}), defined on the torus, with no external potential. They imply, first of all, that the ground state energy $E_N$ of (\ref{eq:HN}) is such that 
\begin{equation}\label{eq:en-ti} \lim_{N \to \infty} \frac{E_N}{N} = 4\pi \frak{a}_0 \, . \end{equation}
Furthermore, they imply that for any sequence of approximate ground states, ie. for any sequence $\psi_N \in L^2_s (\L^N)$ with $\| \psi_N \| = 1$ and  
\begin{equation}\label{eq:appro-gs} \lim_{N \to \infty} \frac{1}{N} \langle \psi_N , H_N \psi_N \rangle = 4 \pi \frak{a}_0 \, , \end{equation}
the reduced density matrices $\gamma_N = \tr_{2, \dots , N} |\psi_N \rangle \langle \psi_N |$ are such that 
\begin{equation}\label{eq:BEC1} \lim_{N \to \infty} \tr \, \left| \gamma_N - |\ph_0 \rangle \langle \ph_0| \right| = 0 \end{equation}
where $\ph_0 \in L^2 (\Lambda)$ is the zero momentum mode defined by $\ph_0 (x) = 1$ for all $x \in \Lambda$. Since we will make use of this result in our analysis and since, strictly speaking, the translation invariant Hamiltonian (\ref{eq:HN}) is not treated in \cite{LS2,NRS}, we shortly discuss the proof of (\ref{eq:BEC1}) (in particular, how it follows from the analysis of \cite{NRS}) in Appendix \ref{app:NRS}.

\medskip

Under the additional assumption that the interaction potential $V$ is sufficiently small, in \cite{BBCS1} we recently improved (\ref{eq:en-ti}) and (\ref{eq:BEC1}), obtaining quantitative estimates showing, on the one hand, that $E_N - 4 \pi \frak{a}_0 N$ remains bounded, uniformly in $N$, and, on the other hand, that every sequence of approximate ground states $\psi_N$ of (\ref{eq:HN}) exhibit Bose-Einstein condensation, with number of excitations bounded by the excess energy $\langle \psi_N, H_N \psi_N \rangle - 4 \pi \frak{a}_0 N$. The goal of the present paper is to extend the results of \cite{BBCS1}, removing the assumption of small interaction. 
\begin{theorem}\label{thm:main}
Let $V \in L^3 (\bR^3)$ have compact support and be pointwise non-negative. Then there exists a constant $C > 0$ such that the ground state energy $E_N$ of (\ref{eq:HN}) satisfies 
\begin{equation}\label{eq:Enbd}
|E_N - 4 \pi \frak{a}_0 N | \leq C 
\end{equation}
Furthermore, consider a sequence $\psi_N \in L^2_s (\Lambda^N)$ with $\| \psi_N \| = 1$ and such that   
\[ \langle \psi_N , H_N \psi_N \rangle  \leq 4 \pi \frak{a}_0 N + K \]
for a $K > 0$. Then the reduced density matrix $\gamma_N = \tr_{2,\dots , N} | \psi_N \rangle \psi_N |$ associated with $\psi_N$ is such that 
\begin{equation}\label{eq:BEC} 1 - \langle \ph_0 , \gamma_N \ph_0 \rangle \leq  \frac{C(K+1)}{N} \end{equation}
for all $N \in \bN$ large enough. 
\end{theorem}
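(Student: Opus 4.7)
The plan is to follow the renormalization strategy developed in \cite{BBCS1}, but supplementing the quadratic Bogoliubov conjugation used there with an additional cubic conjugation that is required when the interaction potential is not small. The overall goal will be to establish a coercive lower bound of the form
\begin{equation*}
H_N \geq 4\pi \frak{a}_0 N - C + c\, \cN_+
\end{equation*}
on low-energy states, where $\cN_+$ counts the number of excitations orthogonal to the condensate wave function $\ph_0$; once this is obtained, both (\ref{eq:Enbd}) (by taking the expectation in the ground state) and (\ref{eq:BEC}) (since $\la \psi_N, \cN_+ \psi_N \ra \geq N ( 1 - \la \ph_0, \gamma_N \ph_0\ra)$) follow immediately.

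First I would factor out the condensate. Using the unitary map of \cite{LNSS} that identifies $L^2_s(\L^N)$ with the truncated excitation Fock space $\cF_+^{\leq N}$ built over $\{\ph_0\}^\perp$, one conjugates $H_N$ into an operator $\cL_N$ which is a polynomial of degree four in the creation and annihilation operators on $\cF_+^{\leq N}$. Second, I would introduce the generalized Bogoliubov kernel $\eta_p$ built from the ground-state solution of the zero-energy scattering equation for $V$ at scale $N^{-1}$, form $B(\eta) = \frac12 \sum_p \eta_p (a_p^* a_{-p}^* - \text{h.c.})$, and study the renormalized Hamiltonian $\cG_N = e^{-B(\eta)} \cL_N\, e^{B(\eta)}$, which cancels the off-diagonal quadratic terms and produces the emergent coupling $8\pi\frak{a}_0$.

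The new ingredient that removes the smallness assumption is a cubic conjugation. I would construct an antisymmetric cubic generator $A$, of the schematic form $A \sim \sum_{r,v} \eta_r\, a_{v+r}^* a_{-r}^* a_v - \text{h.c.}$ restricted to suitable momentum regimes, and pass to $\cJ_N = e^{-A} \cG_N e^A$. The choice of $A$ must cancel the dangerous cubic interactions in $\cG_N$ while keeping the large positive potential energy $\cV_N$ and kinetic energy $\cK$ available as reservoirs to dominate the error terms that the conjugation generates. The core estimates are commutator identities allowing one to write $e^{-A}\cV_N e^{A} = \cV_N + [\text{controlled errors}]$, and to reduce the cubic terms to terms of the form $\eta \cdot (\cK+ \cV_N)$ that are small provided the number of excitations is $\ll N$.

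The crucial external input, used here instead of the perturbative a priori bound available in \cite{BBCS1} for small $V$, is the qualitative BEC of Eq.~(\ref{eq:BEC1}) (proved in Appendix~\ref{app:NRS} for the torus starting from \cite{NRS}): for any sequence of approximate ground states, $\la \psi_N, \cN_+ \psi_N \ra / N \to 0$. This a priori smallness of the excitation fraction is exactly what is needed to absorb the error terms produced by the cubic conjugation into the positive main part of $\cJ_N$, yielding the desired coercive bound on $\cJ_N$, and hence on $H_N$. I expect the main obstacle to be precisely the control of $e^{-A} \cdot e^A$ acting on $\cV_N$ and on the cubic remainders of $\cG_N$: one needs sharp bounds showing that all errors are bounded by $C(\cN_+ + 1)$ or by $\e(\cK+\cV_N) + C_\e(\cN_+ + 1)$, uniformly in the size of $V$, so that the qualitative BEC input is strong enough to close the argument without any smallness of the potential.
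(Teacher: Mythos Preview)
Your outline correctly identifies the main ingredients --- the unitary $U_N$, the quadratic Bogoliubov conjugation $e^{B(\eta)}$, the cubic conjugation $e^A$, and the qualitative BEC input from (\ref{eq:BEC1}) --- but there is a structural gap in how you propose to combine them.

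The cubic conjugation does \emph{not} produce errors that are all of the form $C(\cN_++1)$ or $\e(\cK+\cV_N)+C_\e(\cN_++1)$. What one actually obtains (see Proposition~\ref{prop:RNell}) is a lower bound of the shape
\[
\cR_{N,\ell}\;\geq\;4\pi\frak{a}_0 N + (1-C\ell^\kappa)\cH_N - C\ell^{-3\alpha}\,\cN_+^2/N - C\ell^{-3\alpha},
\]
and the term $\cN_+^2/N$ cannot be absorbed into $\cH_N$ or $\cN_+$ uniformly over $\cF_+^{\leq N}$. Your plan to ``absorb the error terms using qualitative BEC'' does not close: qualitative BEC is an asymptotic statement about sequences of approximate ground states, not an a priori operator bound, so it cannot by itself control $\cN_+^2/N$ on arbitrary low-energy vectors.

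The missing idea is a \emph{localization in the number of excitations} in the spirit of \cite{LNSS}. One writes $\cG_{N,\ell}=f_M\,\cG_{N,\ell}\,f_M+g_M\,\cG_{N,\ell}\,g_M+\cE_M$ with smooth cutoffs $f_M,g_M$ at the scale $M=cN$ (Proposition~\ref{prop:GNell-loc}; this requires the commutator estimates (\ref{eq:errComm})). On the sector $\{\cN_+\lesssim M\}$ selected by $f_M$, one has $\cN_+^2/N\leq (M/N)\cN_+$, which \emph{is} small for $M\ll N$, and the cubic renormalization then gives the coercive bound. On the complementary sector $\{\cN_+\gtrsim M\}$ one does not attempt to control the errors at all; instead one argues by contradiction that the energy there is strictly above $4\pi\frak{a}_0 N$ by a positive amount times $N$ --- and \emph{this} is where the qualitative BEC (\ref{eq:BEC1}) enters, since a sequence of low-energy states with $\cN_+\geq cN$ would violate it. The two pieces then recombine to give (\ref{eq:cGN-fin}). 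Without this localization step your argument does not close.
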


{\it Remark:} Eq. (\ref{eq:BEC}) gives a bound on the number of orthogonal excitations of the Bose-Einstein condensate, for low-energy states of the Hamilton operator (\ref{eq:HN}). It implies that 
\begin{equation}\label{eq:exc-bd} \begin{split} 
\langle \psi_N, d\Gamma (1 - |\ph_0 \rangle \langle \ph_0|) \psi_N \rangle &= N - \langle \psi_N, a^* (\ph_0) a(\ph_0) \psi_N \rangle \\ &= N \left[ 1 - \langle \ph_0, \gamma_N \ph_0 \rangle \right] \leq C (K+1) \end{split} \end{equation}
and thus that, for low-energy states $\psi_N$ with finite excess energy $K$, the number of excitations of the Bose-Einstein condensate remains bounded, uniformly in $N$. Notice that the bounds (\ref{eq:BEC}), (\ref{eq:exc-bd}) remain valid and non-trivial even if $K$ grows, as $N \to \infty$, as long as $K \ll N$; in particular, they imply complete BEC for all sequences of approximate ground states $\psi_N$ satisfying  (\ref{eq:appro-gs}). 

\medskip

To prove Theorem \ref{thm:main}, we are going to introduce, in Section \ref{sec:fock}, an excitation Hamiltonian $\cL_N$, factoring out the Bose-Einstein condensate. In Section~\ref{sec:bt}, we define generalized Bogoliubov transformations that are used in Section~\ref{sec:ren} to model correlations among particles and to define a renormalized excitation Hamiltonian $\cG_{N,\ell}$; important properties of $\cG_{N,\ell}$ are collected in Prop. \ref{prop:GNell} and in Prop.~\ref{prop:GNell-loc}. A second renormalization, this time through the exponential of an operator cubic in creation and annihilation operators, is performed in Section \ref{sec:cubic}, leading to a new twice renormalized Hamiltonian $\cR_{N,\ell}$; an important bound for $\cR_{N,\ell}$ is stated in Prop. \ref{prop:RNell}. In Section \ref{sec:main}, we use the results of Prop. \ref{prop:GNell},  Prop.~\ref{prop:GNell-loc} and of Prop. \ref{prop:RNell} to show Theorem \ref{thm:main}. Section \ref{sec:GN} and Section \ref{sec:RN} are devoted to the proof of Prop. \ref{prop:GNell} and, respectively, of Prop. \ref{prop:RNell}. 

\medskip

The main novelty, with respect to the analysis in \cite{BBCS1} is the need for the second renormalization, through the exponential $S = e^A$ of a cubic operator $A$. Under the additional assumption of small potential, the analysis of $\cG_{N,\ell}$ was enough in \cite{BBCS1} to show Bose-Einstein condensation in the form (\ref{eq:BEC}).  Here, this is not the case. The point is that conjugation with a generalized Bogoliubov transformation renormalizes the quadratic terms in the excitation Hamiltonian, but it leaves the cubic term substantially unchanged. For small potentials, the cubic term can be controlled (by Cauchy-Schwarz) through the quartic interaction and through the gap in the kinetic energy. Without assumptions on the size of the potential, on the other hand, we need to conjugate with $S$, to renormalize the cubic term. After conjugation with $S$, we can apply techniques developed by Lewin-Nam-Serfaty-Solovej in \cite{LNSS} (inspired by previous work of Lieb-Solovej in \cite{LSo}) based on localization of the number of excitations. On sectors with few excitations (the cutoff will be set at $M = c N$, for a sufficiently small constant $c > 0$), the renormalized cubic term is small and it can be controlled by the gap in the kinetic energy operator. On sectors with many excitations, on the other hand, we are going to bound the energy from below, using the estimate (\ref{eq:BEC1}), due to \cite{LS2,NRS} (since on these sectors we do not have condensation, the energy per particle must be strictly larger than $4 \pi \frak{a}_0$). 

\medskip

Theorem \ref{thm:main} is the first important step that we need in \cite{BBCS3} to establish the validity of Bogoliubov theory, as proposed in \cite{B}, for the low-energy excitation spectrum of (\ref{eq:HN}). 

\medskip

{\it Acknowledgements.} We would like to thank P. T. Nam and R. Seiringer for several useful discussions and for suggesting us to use the localization techniques from \cite{LNSS}. C. Boccato has received funding from the European Research Council (ERC) under the programme Horizon 2020 (grant agreement 694227). B. Schlein gratefully acknowledges support from the NCCR SwissMAP and from the Swiss National Foundation of Science through the SNF Grant ``Dynamical and energetic properties of Bose-Einstein condensates''.

\section{The Excitation Hamiltonian}
\label{sec:fock}

The bosonic Fock space over $L^2 (\Lambda)$ is defined as 
\[ \cF = \bigoplus_{n \geq 0} L^2_s (\Lambda^{n}) = \bigoplus_{n \geq 0} L^2 (\Lambda)^{\otimes_s n} \]
where $L^2_s (\Lambda^{n})$ is the subspace of $L^2 (\Lambda^n)$ consisting of wave functions that are symmetric w.r.t. permutations. The vacuum vector in $\cF$ will be indicated with $\Omega = \{ 1, 0, \dots \} \in \cF$. 

For $g \in L^2 (\Lambda)$, the creation operator $a^* (g)$ and the annihilation operator $a(g)$ are defined by  
\[ \begin{split} 
(a^* (g) \Psi)^{(n)} (x_1, \dots , x_n) &= \frac{1}{\sqrt{n}} \sum_{j=1}^n g (x_j) \Psi^{(n-1)} (x_1, \dots , x_{j-1}, x_{j+1} , \dots , x_n) 
\\
(a (g) \Psi)^{(n)} (x_1, \dots , x_n) &= \sqrt{n+1} \int_\Lambda  \bar{g} (x) \Psi^{(n+1)} (x,x_1, \dots , x_n) \, dx \end{split} \]
Observe that $a^* (g)$ is the adjoint of $a(g)$ and that the canonical commutation relations
\begin{equation*}
[a (g), a^* (h) ] = \langle g,h \rangle , \quad [ a(g), a(h)] = [a^* (g), a^* (h) ] = 0 \end{equation*}
hold true for all $g,h \in L^2 (\Lambda)$ ($\langle g,h \rangle$ is the inner product on $L^2 (\Lambda)$). 

It will be convenient for us to work in momentum space $\Lambda^* = 2\pi \bZ^3$. For $p \in \Lambda^*$, we consider the plane wave $\ph_p  (x) = e^{-ip\cdot x}$ in $L^2 (\Lambda)$. We define the operators 
\begin{equation*}
a^*_p = a^* (\ph_p), \quad \text{and } \quad  a_p = a (\ph_p) \end{equation*} 
creating and, respectively, annihilating a particle with momentum $p$. 

To exploit the non-negativity of the interaction potential $V$, it will sometimes be useful to switch to position space. To this end, we introduce operator valued distributions $\check{a}_x, \check{a}_x^*$ such that  
\begin{equation*}
 a(f) = \int \bar{f} (x) \,  \check{a}_x \, dx , \quad a^* (f) = \int f(x) \, \check{a}_x^* \, dx  \end{equation*}

The number of particles operator, defined on a dense subspace of $\cF$ by $(\cN\Psi)^{(n)} = n \Psi^{(n)}$, can be expressed as 
\[ \cN = \sum_{p \in \Lambda^*} a_p^* a_p  = \int \check{a}^*_x \check{a}_x \, dx \,  \]
It is then easy to check that creation and annihilation operators are bounded with respect to the square root of $\cN$, i.e.   
\begin{equation*}
\| a (f) \Psi \| \leq \| f \| \| \cN^{1/2} \Psi \|, \quad \| a^* (f) \Psi \| \leq \| f \| \| (\cN+1)^{1/2} \Psi \| 
\end{equation*}
for all $f \in L^2 (\Lambda)$. 

Recall that $\ph_0 (x) = 1$ for all $x \in \Lambda$ is the zero-momentum mode in $L^2 (\Lambda)$. We define $L^2_{\perp} (\Lambda)$ as the orthogonal complement in $L^2 (\Lambda)$ of the one dimensional space spanned by $\ph_0$. The Fock space over $L^2_\perp (\Lambda)$, generated by the creation operators $a_p^*$ with $p \in \Lambda^*_+ :=  2\pi \bZ^3 \backslash \{ 0 \}$, will be denoted by  
\[ \cF_{+} = \bigoplus_{n \geq 0} L^2_{\perp} (\Lambda)^{\otimes_s n} \,  \]
On $\cF_+$, the number of particles operator will be indicated by  
\[ \cN_+ = \sum_{p \in \Lambda^*_+} a_p^* a_p \]

For $N \in \bN$, we also define the truncated Fock space  
\[ \cF_{+}^{\leq N} = \bigoplus_{n=0}^N L^2_{\perp} (\Lambda)^{\otimes_s n} \,  \]
On this Hilbert space, we are going to describe the orthogonal excitations of the 
Bose-Einstein condensate. To this end, we are going to use a unitary map $U_N : L^2_s (\Lambda^N) \to \cF_+^{\leq N}$, first introduced in \cite{LNSS}, which removes the condensate. To define $U_N$, we notice that every $\psi_N \in L^2_s (\Lambda^N)$ can be uniquely decomposed as 
\[ \psi_N = \alpha_0 \ph_0^{\otimes N} + \alpha_1 \otimes_s \ph_0^{\otimes (N-1)} + \dots + \alpha_N \]
with $\alpha_j \in L^2_\perp (\Lambda)^{\otimes_s j}$ (the symmetric tensor product of $j$ copies of the orthogonal complement $L^2_\perp (\Lambda)$ of $\ph_0$) for all $j = 0, \dots , N$. Therefore, we can put $U_N \psi_N = \{ \alpha_0, \alpha_1, \dots , \alpha_N \} \in \cF_+^{\leq N}$. We can also define $U_N$ identifying $\psi_N$ with the Fock space vector $\{ 0, 0, \dots , \psi_N, 0, \dots \}$ and using creation and annihilation operators; we find 
\[ U_N \, \psi_N = \bigoplus_{n=0}^N  (1-|\ph_0 \rangle \langle \ph_0|)^{\otimes n} \frac{a(\ph_0)^{N-n}}{\sqrt{(N-n)!}} \, \psi_N \]
for all $\psi_N \in L^2_s (\Lambda^N)$. It is then easy to check that $U_N^* : \cF_{+}^{\leq N} \to L^2_s (\Lambda^N)$ is given by 
\[ U_N^* \, \{ \alpha^{(0)}, \dots , \alpha^{(N)} \} = \sum_{n=0}^N \frac{a^* (\ph_0)^{N-n}}{\sqrt{(N-n)!}} \, \alpha^{(n)} \]
and that $U_N^* U_N = 1$, ie. $U_N$ is unitary. 

Using $U_N$, we can define the excitation Hamiltonian $\cL_N := U_N H_N U_N^*$, acting on a dense subspace of $\cF_+^{\leq N}$. To compute the operator $\cL_N$, we first write the Hamiltonian (\ref{eq:HN}) in momentum space, in terms of creation and annihilation operators. We find 
\begin{equation}\label{eq:Hmom} H_N = \sum_{p \in \Lambda^*} p^2 a_p^* a_p + \frac{1}{2N} \sum_{p,q,r \in \Lambda^*} \widehat{V} (r/N) a_{p+r}^* a_q^* a_{p} a_{q+r} 
\end{equation}
where 
\[ \widehat{V} (k) = \int_{\bR^3} V (x) e^{-i k \cdot x} dx \] 
is the Fourier transform of $V$, defined for all $k \in \bR^3$ (in fact, (\ref{eq:HN}) is the restriction of (\ref{eq:Hmom}) to the $N$-particle sector of the Fock space $\cF$). We can now determine the excitation Hamiltonian $\cL_N$ using the following rules, describing the action of the unitary operator $U_N$ on products of a creation and an annihilation operator (products of the form $a_p^* a_q$ can be thought of as operators mapping $L^2_s (\Lambda^N)$ to itself). For any $p,q \in \Lambda^*_+ = 2\pi \bZ^3 \backslash \{ 0 \}$, we find (see \cite{LNSS}):
\begin{equation}\label{eq:U-rules}
\begin{split} 
U_N \, a^*_0 a_0 \, U_N^* &= N- \cN_+  \\
U_N \, a^*_p a_0 \, U_N^* &= a^*_p \sqrt{N-\cN_+ } \\
U_N \, a^*_0 a_p \, U_N^* &= \sqrt{N-\cN_+ } \, a_p \\
U_N \, a^*_p a_q \, U_N^* &= a^*_p a_q 
\end{split} \end{equation}
We conclude that 
\begin{equation}\label{eq:cLN} \cL_N =  \cL^{(0)}_{N} + \cL^{(2)}_{N} + \cL^{(3)}_{N} + \cL^{(4)}_{N} \end{equation}
with
\begin{equation}\label{eq:cLNj} \begin{split} 
\cL_{N}^{(0)} =\;& \frac{N-1}{2N}\widehat{V} (0) (N-\cN_+ ) + \frac{\widehat{V} (0)}{2N} \cN_+  (N-\cN_+ ) \\
\cL^{(2)}_{N} =\; &\sum_{p \in \Lambda^*_+} p^2 a_p^* a_p + \sum_{p \in \Lambda_+^*} \widehat{V} (p/N) \left[ b_p^* b_p - \frac{1}{N} a_p^* a_p \right] \\ &+ \frac{1}{2} \sum_{p \in \Lambda^*_+} \widehat{V} (p/N) \left[ b_p^* b_{-p}^* + b_p b_{-p} \right] \\
\cL^{(3)}_{N} =\; &\frac{1}{\sqrt{N}} \sum_{p,q \in \Lambda_+^* : p+q \not = 0} \widehat{V} (p/N) \left[ b^*_{p+q} a^*_{-p} a_q  + a_q^* a_{-p} b_{p+q} \right] \\
\cL^{(4)}_{N} =\; & \frac{1}{2N} \sum_{\substack{p,q \in \Lambda_+^*, r \in \Lambda^*: \\ r \not = -p,-q}} \widehat{V} (r/N) a^*_{p+r} a^*_q a_p a_{q+r} 
\end{split} \end{equation}
where we introduced generalized creation and annihilation operators  
\begin{equation}\label{eq:bp-de} 
b^*_p = a^*_p \, \sqrt{\frac{N-\cN_+}{N}} , \qquad \text{and } \quad  b_p = \sqrt{\frac{N-\cN_+}{N}} \, a_p 
\end{equation}
for all $p \in \Lambda^*_+$. Observe that, by (\ref{eq:U-rules}), 
\[ U_N^* b_p^* U_N = a^*_p  \frac{a_0}{\sqrt{N}}, \qquad U_N^* b_p U_N = \frac{a_0^*}{\sqrt{N}} a_p \]
In other words, $b_p^*$ creates a particle with momentum $p \in \Lambda^*_+$ but, at the same time, it annihilates a particle from the condensate; it creates an excitation, preserving the total number of particles in the system. On states exhibiting complete Bose-Einstein condensation in the zero-momentum mode $\ph_0$, we have $a_0 , a_0^* \simeq \sqrt{N}$ and we can therefore expect that $b_p^* \simeq a_p^*$ and that $b_p \simeq a_p$. Modified creation and annihilation operators satisfy the commutation relations 
\begin{equation}\label{eq:comm-bp} \begin{split} [ b_p, b_q^* ] &= \left( 1 - \frac{\cN_+}{N} \right) \delta_{p,q} - \frac{1}{N} a_q^* a_p 
\\ [ b_p, b_q ] &= [b_p^* , b_q^*] = 0 
\end{split} \end{equation}
Furthermore, we find 
\begin{equation}\label{eq:comm2} [ b_p, a_q^* a_r ] = \delta_{pq} b_r, \qquad  [b_p^*, a_q^* a_r] = - \delta_{pr} b_q^* \end{equation}
for all $p,q,r \in \Lambda_+^*$; this implies in particular that $[b_p , \cN_+] = b_p$, $[b_p^*, \cN_+] = - b_p^*$. It is also useful to notice that the operators $b^*_p, b_p$, like the standard creation and annihilation operators $a_p^*, a_p$, can be bounded by the square root of the number of particles operators; we find
\begin{equation*}
\begin{split} 
\| b_p \xi \| &\leq \Big\| \cN_+^{1/2} \Big( \frac{N+1-\cN_+}{N} \Big)^{1/2} \xi \Big\| \leq \| \cN_+^{1/2} \xi \|  \\ 
\| b^*_p \xi \| &\leq \Big\| (\cN_+ +1)^{1/2} \Big( \frac{N-\cN_+ }{N} \Big)^{1/2} \xi \Big\| \leq  \| (\cN_+ + 1)^{1/2} \xi \| 
\end{split} \end{equation*}
for all $\xi \in \cF^{\leq N}_+$. Since $\cN_+  \leq N$ on $\cF_+^{\leq N}$, the operators $b_p^* , b_p$ are bounded, with $\| b_p \|, \| b^*_p \| \leq (N+1)^{1/2}$. 

We can also define modified operator valued distributions 
\[ \check{b}_x = \sqrt{\frac{N-\cN_+}{N}} \, \check{a}_x, \qquad \text{and } \quad  \check{b}^*_x = \check{a}^*_x \, \sqrt{\frac{N-\cN_+}{N}} \]
in position space, for $x \in \Lambda$. The commutation relations (\ref{eq:comm-bp}) take the form 
\begin{equation*}
\begin{split}  [ \check{b}_x, \check{b}_y^* ] &= \left( 1 - \frac{\cN_+}{N} \right) \delta (x-y) - \frac{1}{N} \check{a}_y^* \check{a}_x \\ 
[ \check{b}_x, \check{b}_y ] &= [ \check{b}_x^* , \check{b}_y^*] = 0 
\end{split} \end{equation*}
Moreover, (\ref{eq:comm2}) translates to 
\begin{equation*}
\begin{split}
[\check{b}_x, \check{a}_y^* \check{a}_z] &=\delta (x-y)\check{b}_z, \qquad 
[\check{b}_x^*, \check{a}_y^* \check{a}_z] = -\delta (x-z) \check{b}_y^*
\end{split} \end{equation*}
which also implies that $[ \check{b}_x, \cN_+ ] = \check{b}_x$, $[ \check{b}_x^* , \cN_+ ] = - \check{b}_x^*$. 

\section{Generalizated Bogoliubov Transformations}
\label{sec:bt}

Conjugation with $U_N$ extracts, from the original quartic interaction in (\ref{eq:Hmom}), some constant and some quadratic contributions, collected in $\cL^{(0)}_N$ and $\cL^{(2)}_N$ in (\ref{eq:cLNj}). In the Gross-Pitevskii regime, however, this is not enough; there are still large contributions to the energy hidden among cubic and quartic terms in $\cL^{(3)}_N$ and $\cL^{(4)}_N$. 

To extract the missing energy, we have to take into account the correlation structure. Since $U_N$ only removes products of the zero-energy mode $\ph_0$, correlations among particles, which play a crucial role in the Gross-Pitaevskii regime and carry an energy of order $N$, remain in the excitation vector $U_N \psi_N$. To factor out correlations, it is natural to conjugate $\cL_N$ with a Bogoliubov transformation. In fact, to make sure that the truncated Fock space $\cF_+^{\leq N}$ remains invariant, we will have to use generalized Bogoliubov transformations. Their definition and their main properties will be discussed in this section. 

For $\eta \in \ell^2 (\Lambda^*_+)$ with $\eta_{-p} = \eta_{p}$ for all $p \in \Lambda^*_+$, we define 
\begin{equation}\label{eq:defB} 
B(\eta) = \frac{1}{2} \sum_{p\in \L^*_+}  \left( \eta_p b_p^* b_{-p}^* - \bar{\eta}_p b_p b_{-p} \right) \, \end{equation}
and we consider
\begin{equation}\label{eq:eBeta} 
e^{B(\eta)} = \exp \left[ \frac{1}{2} \sum_{p \in \Lambda^*_+}   \left( \eta_p b_p^* b_{-p}^* - \bar{\eta}_p  b_p b_{-p} \right) \right] 
\end{equation}
We refer to unitary operators of the form (\ref{eq:eBeta}) as generalized Bogoliubov transformations, in analogy with the standard Bogoliubov transformations 
\begin{equation}\label{eq:wteBeta} e^{\wt{B} (\eta)} = \exp \left[  \frac{1}{2} \sum_{p\in \Lambda^*_+}  \left( \eta_p a_p^* a_{-p}^* - \bar{\eta}_p a_p a_{-p} \right) \right] \end{equation}
defined by means of the standard creation and annihilation operators. In this paper, we will work with (\ref{eq:eBeta}), rather than (\ref{eq:wteBeta}), because the generalized Bogoliubov transformations, in contrast with the standard transformations, leave the truncated Fock space $\cF_+^{\leq N}$ invariant. 
The price we will have to pay is the fact that, while the action of standard Bogoliubov transformation 
on creation and annihilation operators is explicitly given by 
\begin{equation}\label{eq:act-Bog} e^{-\wt{B} (\eta)} a_p e^{\wt{B} (\eta)}  = \cosh (\eta_p) a_p + \sinh (\eta_p) a_{-p}^* \,   \end{equation}
there is no such formula describing the action of generalized Bogoliubov transformations. 

A first important tool to control the action of generalized Bogoliubov transformations is the following lemma, whose proof can be found in \cite[Lemma 3.1]{BS} (a similar result has been previously established in \cite{Sei}).
\begin{lemma}\label{lm:Ngrow}
For every $n \in \bN$ there exists a constant $C > 0$ such that, on $\cF_+^{\leq N}$, 
\begin{equation}\label{eq:bd-Beta} e^{-B(\eta)} (\cN_+ +1)^{n} e^{B(\eta)} \leq C e^{C \| \eta \|} (\cN_+ +1)^{n}  
\end{equation}
for all $\eta \in \ell^2 (\L^*)$.
\end{lemma}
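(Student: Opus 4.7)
The argument is by Gronwall's inequality. Fix $\xi \in \cF_+^{\leq N}$, set $\psi_s := e^{sB(\eta)}\xi$, and consider
\[
\varphi(s) := \langle \psi_s, (\cN_++1)^n \psi_s \rangle, \qquad s \in [0,1].
\]
It suffices to prove a uniform differential inequality $|\varphi'(s)| \leq C_n \|\eta\|\,\varphi(s)$, since Gronwall then yields $\varphi(1) \leq e^{C_n \|\eta\|} \varphi(0)$, which, as $\xi$ is arbitrary, is precisely the claimed operator bound.

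Differentiating, $\varphi'(s) = \langle\psi_s, [(\cN_++1)^n, B(\eta)]\psi_s\rangle$, so the task reduces to estimating the commutator. Because $b_p^* b_{-p}^*$ shifts $\cN_+$ by $+2$ (and $b_p b_{-p}$ by $-2$), the algebraic identity $b_p^* b_{-p}^* f(\cN_+) = f(\cN_+ - 2) b_p^* b_{-p}^*$ yields
\[
[(\cN_++1)^n, b_p^* b_{-p}^*] = [(\cN_++1)^n - (\cN_+-1)^n]\,b_p^* b_{-p}^*,
\]
with an adjoint formula for the $b_p b_{-p}$ piece. Since $(\cN_++1)^n - (\cN_+-1)^n$ is a polynomial of degree $n-1$ in $\cN_+$, bounded by $C_n (\cN_++1)^{n-1}$, the core quantity to control is
\[
\sum_{p\in\L^*_+}\eta_p \,\langle b_{-p} b_p\,\psi_s,\,(\cN_++1)^{n-1}\psi_s\rangle \quad \text{(plus adjoint)}.
\]

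To close the estimate, use the asymmetric splitting $(\cN_++1)^{n-1} = (\cN_++1)^{(n-2)/2}\cdot(\cN_++1)^{n/2}$ (meaningful since $\cN_++1 \geq 1$, even when $n=1$), apply Cauchy-Schwarz in the inner product, and commute $(\cN_++1)^{(n-2)/2}$ through $b_{-p}b_p$ at the cost of a harmless shift $\cN_+ \mapsto \cN_+ - 2$ (the shift only matters on sectors where $b_{-p}b_p$ vanishes anyway). Then Cauchy-Schwarz in $p$ together with the pair-annihilation estimate
\[
\sum_{p\in\L^*_+}\|b_{-p}b_p\,\phi\|^2 \leq \sum_{p\in\L^*_+}\|a_{-p}a_p\,\phi\|^2 \leq C\|\cN_+\phi\|^2,\qquad \phi \in \cF_+^{\leq N},
\]
(which follows from $\|b_p\cdot\|\leq\|a_p\cdot\|$ on $\cF_+^{\leq N}$ and the elementary identity $\sum_p a_p^* a_{-p}^* a_{-p} a_p \leq C\cN_+^2$) contributes one full power of $\cN_+$ in norm; combined with $(\cN_++1)^{(n-2)/2}$ this gives $\|(\cN_++1)^{n/2}\psi_s\|$. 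Together with the remaining factor $\|(\cN_++1)^{n/2}\psi_s\|$ this reconstructs $\varphi(s)$, so that $|\varphi'(s)| \leq C_n\|\eta\|\,\varphi(s)$ and Gronwall applies.

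The main obstacle is this bookkeeping of powers of $\cN_+$: the commutator naively produces only $(\cN_++1)^{n-1}$, one power short of what $\varphi(s)$ needs. The missing factor is recovered precisely because the paired operators $b_{-p}b_p$ annihilate two excitations, so that $\sum_p \|b_{-p}b_p\phi\|^2$ gains a full power of $\cN_+$ rather than the half power one would get from summing $\sum_p\|b_p\phi\|^2 = \|\cN_+^{1/2}\phi\|^2$ over single annihilations. It is exactly the pair structure of $B(\eta)$ that makes the Gronwall closure possible, which is also consistent with the fact that the generalized Bogoliubov transformations are designed from the $b_p^*b_{-p}^*$'s so as to preserve $\cF_+^{\leq N}$.
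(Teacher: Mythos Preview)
Your Gronwall argument is correct and is precisely the standard approach: the paper itself does not prove this lemma but refers to \cite[Lemma~3.1]{BS}, where exactly this strategy---differentiate $s\mapsto\langle e^{sB(\eta)}\xi,(\cN_++1)^n e^{sB(\eta)}\xi\rangle$, commute the polynomial in $\cN_+$ through the pair operators, and close via Cauchy--Schwarz---is carried out. The same technique appears verbatim in the paper's proof of Proposition~\ref{prop:AellNgrow}, including the mean-value step $(\cN_++2)^k-(\cN_++1)^k=k(\cN_++1+\theta)^{k-1}$ and the asymmetric distribution of powers of $(\cN_++1)$ across the two factors in Cauchy--Schwarz, so your bookkeeping of the ``missing'' power of $\cN_+$ recovered from the pair structure is exactly the mechanism used there.
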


Bounds of the form (\ref{eq:bd-Beta}) on the change of the number of particles operator are not enough for our purposes; we will need more precise information about the action of unitary operators of the form $e^{B(\eta)}$. To this end, we expand, for any $p \in \Lambda^*_+$,  
\[\begin{split} e^{-B(\eta)} \, b_p \, e^{B(\eta)} &= b_p + \int_0^1 ds \, \frac{d}{ds}  e^{-sB(\eta)} b_p e^{sB(\eta)} \\ &= b_p - \int_0^1 ds \, e^{-sB(\eta)} [B(\eta), b_p] e^{s B(\eta)} \\ &= b_p - [B(\eta),b_p] + \int_0^1 ds_1 \int_0^{s_1} ds_2 \, e^{-s_2 B(\eta)} [B(\eta), [B(\eta),b_p]] e^{s_2 B(\eta)} \end{split} \]
Iterating $m$ times, we find 
\begin{equation}\label{eq:BCH} \begin{split} 
e^{-B(\eta)} b_p e^{B(\eta)} = &\sum_{n=1}^{m-1} (-1)^n \frac{\text{ad}^{(n)}_{B(\eta)} (b_p)}{n!} \\ &+ \int_0^{1} ds_1 \int_0^{s_1} ds_2 \dots \int_0^{s_{m-1}} ds_m \, e^{-s_m B(\eta)} \text{ad}^{(m)}_{B(\eta)} (b_p) e^{s_m B(\eta)} \end{split} \end{equation}
where we recursively defined \[ \text{ad}_{B(\eta)}^{(0)} (A) = A \quad \text{and } \quad \text{ad}^{(n)}_{B(\eta)} (A) = [B(\eta), \text{ad}^{(n-1)}_{B(\eta)} (A) ]  \]
We are going to expand the nested commutators $\text{ad}_{B(\eta)}^{(n)} (b_p)$ and   
$\text{ad}_{B(\eta)}^{(n)} (b^*_p)$. To this end, we need to introduce some additional notation. 
We follow here \cite{BS,BBCS1,BBCS2}. For $f_1, \dots , f_n \in \ell_2 (\Lambda^*_+)$, $\sharp = (\sharp_1, \dots , \sharp_n), \flat = (\flat_0, \dots , \flat_{n-1}) \in \{ \cdot, * \}^n$, we set 
\begin{equation}\label{eq:Pi2}
\begin{split}  
\Pi^{(2)}_{\sharp, \flat} &(f_1, \dots , f_n) \\ &= \sum_{p_1, \dots , p_n \in \Lambda^*}  b^{\flat_0}_{\alpha_0 p_1} a_{\beta_1 p_1}^{\sharp_1} a_{\alpha_1 p_2}^{\flat_1} a_{\beta_2 p_2}^{\sharp_2} a_{\alpha_2 p_3}^{\flat_2} \dots  a_{\beta_{n-1} p_{n-1}}^{\sharp_{n-1}} a_{\alpha_{n-1} p_n}^{\flat_{n-1}} b^{\sharp_n}_{\beta_n p_n} \, \prod_{\ell=1}^n f_\ell (p_\ell)  \end{split} \end{equation}
where, for $\ell=0,1, \dots , n$, we define $\alpha_\ell = 1$ if $\flat_\ell = *$, $\alpha_\ell =    -1$ if $\flat_\ell = \cdot$, $\beta_\ell = 1$ if $\sharp_\ell = \cdot$ and $\beta_\ell = -1$ if $\sharp_\ell = *$. In (\ref{eq:Pi2}), we require that, for every $j=1,\dots, n-1$, we have either $\sharp_j = \cdot$ and $\flat_j = *$ or $\sharp_j = *$ and $\flat_j = \cdot$ (so that the product $a_{\beta_\ell p_\ell}^{\sharp_\ell} a_{\alpha_\ell p_{\ell+1}}^{\flat_\ell}$ always preserves {} the number of particles, for all $\ell =1, \dots , n-1$). With this assumption, we find that the operator $\Pi^{(2)}_{\sharp,\flat} (f_1, \dots , f_n)$ maps $\cF^{\leq N}_+$ into itself. If, for some $\ell=1, \dots , n$, $\flat_{\ell-1} = \cdot$ and $\sharp_\ell = *$ (i.e. if the product $a_{\alpha_{\ell-1} p_\ell}^{\flat_{\ell-1}} a_{\beta_\ell p_\ell}^{\sharp_\ell}$ for $\ell=2,\dots , n$, or the product $b_{\alpha_0 p_1}^{\flat_0} a_{\beta_1 p_1}^{\sharp_1}$ for $\ell=1$, is not normally ordered) we require additionally that $f_\ell  \in \ell^1 (\Lambda^*_+)$. In position space, the same operator can be written as 
\begin{equation}\label{eq:Pi2-pos} \Pi^{(2)}_{\sharp, \flat} (f_1, \dots , f_n) = \int   \check{b}^{\flat_0}_{x_1} \check{a}_{y_1}^{\sharp_1} \check{a}_{x_2}^{\flat_1} \check{a}_{y_2}^{\sharp_2} \check{a}_{x_3}^{\flat_2} \dots  \check{a}_{y_{n-1}}^{\sharp_{n-1}} \check{a}_{x_n}^{\flat_{n-1}} \check{b}^{\sharp_n}_{y_n} \, \prod_{\ell=1}^n \check{f}_\ell (x_\ell - y_\ell) \, dx_\ell dy_\ell \end{equation}
An operator of the form (\ref{eq:Pi2}), (\ref{eq:Pi2-pos}) with all the properties listed above, will be called a $\Pi^{(2)}$-operator of order $n$.

For $g, f_1, \dots , f_n \in \ell_2 (\Lambda^*_+)$, $\sharp = (\sharp_1, \dots , \sharp_n)\in \{ \cdot, * \}^n$, $\flat = (\flat_0, \dots , \flat_{n}) \in \{ \cdot, * \}^{n+1}$, we also define the operator 
\begin{equation}\label{eq:Pi1}
\begin{split} \Pi^{(1)}_{\sharp,\flat} &(f_1, \dots , f_n;g) \\ &= \sum_{p_1, \dots , p_n \in \Lambda^*}  b^{\flat_0}_{\alpha_0, p_1} a_{\beta_1 p_1}^{\sharp_1} a_{\alpha_1 p_2}^{\flat_1} a_{\beta_2 p_2}^{\sharp_2} a_{\alpha_2 p_3}^{\flat_2} \dots a_{\beta_{n-1} p_{n-1}}^{\sharp_{n-1}} a_{\alpha_{n-1} p_n}^{\flat_{n-1}} a^{\sharp_n}_{\beta_n p_n} a^{\flat n} (g) \, \prod_{\ell=1}^n f_\ell (p_\ell) \end{split} \end{equation}
where $\alpha_\ell$ and $\beta_\ell$ are defined as above. Also here, we impose the condition that, for all $\ell = 1, \dots , n$, either $\sharp_\ell = \cdot$ and $\flat_\ell = *$ or $\sharp_\ell = *$ and $\flat_\ell = \cdot$. This implies that $\Pi^{(1)}_{\sharp,\flat} (f_1, \dots , f_n;g)$ maps $\cF^{\leq N}_+$ back into $\cF_+^{\leq N}$. Additionally, we assume that $f_\ell \in \ell^1 (\Lambda^*_+)$ if $\flat_{\ell-1} = \cdot$ and $\sharp_\ell = *$ for some $\ell = 1,\dots , n$ (i.e. if the pair $a_{\alpha_{\ell-1} p_\ell}^{\flat_{\ell-1}} a^{\sharp_\ell}_{\beta_\ell p_\ell}$ is not normally ordered). In position space, the same operator can be written as
\begin{equation}\label{eq:Pi1-pos} \Pi^{(1)}_{\sharp,\flat} (f_1, \dots ,f_n;g) = \int \check{b}^{\flat_0}_{x_1} \check{a}_{y_1}^{\sharp_1} \check{a}_{x_2}^{\flat_1} \check{a}_{y_2}^{\sharp_2} \check{a}_{x_3}^{\flat_2} \dots  \check{a}_{y_{n-1}}^{\sharp_{n-1}} \check{a}_{x_n}^{\flat_{n-1}} \check{a}^{\sharp_n}_{y_n} \check{a}^{\flat n} (g) \, \prod_{\ell=1}^n \check{f}_\ell (x_\ell - y_\ell) \, dx_\ell dy_\ell \end{equation}
An operator of the form (\ref{eq:Pi1}), (\ref{eq:Pi1-pos}) will be called a $\Pi^{(1)}$-operator of order $n$. Operators of the form $b(f)$, $b^* (f)$, for a $f \in \ell^2 (\Lambda^*_+)$, will be called $\Pi^{(1)}$-operators of order zero. 

The next lemma gives a detailed analysis of the nested commutators $\text{ad}^{(n)}_{B(\eta)} (b_p)$ and $\text{ad}^{(n)}_{B(\eta)} (b^*_p)$ for $n \in \bN$; the proof can be found in \cite[Lemma 2.5]{BBCS1}(it is a translation to momentum space of \cite[Lemma 3.2]{BS}). 
\begin{lemma}\label{lm:indu}
Let $\eta \in \ell^2 (\Lambda^*_+)$ be such that $\eta_p = \eta_{-p}$ for all $p \in \ell^2 (\Lambda^*)$. To simplify the notation,  assume also $\eta$ to be real-valued (as it will be in applications). Let $B(\eta)$ be defined as in (\ref{eq:defB}), $n \in \bN$ and $p \in \Lambda^*$. Then the nested commutator $\text{ad}^{(n)}_{B(\eta)} (b_p)$ can be written as the sum of exactly $2^n n!$ terms, with the following properties. 
\begin{itemize}
\item[i)] Possibly up to a sign, each term has the form
\begin{equation}\label{eq:Lambdas} \Lambda_1 \Lambda_2 \dots \Lambda_i \, N^{-k} \Pi^{(1)}_{\sharp,\flat} (\eta^{j_1}, \dots , \eta^{j_k} ; \eta^{s}_p \ph_{\alpha p}) 
\end{equation}
for some $i,k,s \in \bN$, $j_1, \dots ,j_k \in \bN \backslash \{ 0 \}$, $\sharp \in \{ \cdot, * \}^k$, $ \flat \in \{ \cdot, * \}^{k+1}$ and $\alpha \in \{ \pm 1 \}$ chosen so that $\alpha = 1$ if $\flat_k = \cdot$ and $\alpha = -1$ if $\flat_k = *$ (recall here that $\ph_p (x) = e^{-ip \cdot x}$). In (\ref{eq:Lambdas}), each operator $\Lambda_w : \cF^{\leq N} \to \cF^{\leq N}$, $w=1, \dots , i$, is either a factor $(N-\cN_+ )/N$, a factor $(N-(\cN_+ -1))/N$ or an operator of the form
\begin{equation}\label{eq:Pi2-ind} N^{-h} \Pi^{(2)}_{\sharp',\flat'} (\eta^{z_1}, \eta^{z_2},\dots , \eta^{z_h}) \end{equation}
for some $h, z_1, \dots , z_h \in \bN \backslash \{ 0 \}$, $\sharp,\flat  \in \{ \cdot , *\}^h$. 
\item[ii)] If a term of the form (\ref{eq:Lambdas}) contains $m \in \bN$ factors $(N-\cN_+ )/N$ or $(N-(\cN_+ -1))/N$ and $j \in \bN$ factors of the form (\ref{eq:Pi2-ind}) with $\Pi^{(2)}$-operators of order $h_1, \dots , h_j \in \bN \backslash \{ 0 \}$, then 
we have
\begin{equation*}
 m + (h_1 + 1)+ \dots + (h_j+1) + (k+1) = n+1 \end{equation*}
\item[iii)] If a term of the form (\ref{eq:Lambdas}) contains (considering all $\Lambda$-operators and the $\Pi^{(1)}$-operator) the arguments $\eta^{i_1}, \dots , \eta^{i_m}$ and the factor $\eta^{s}_p$ for some $m, s \in \bN$, and $i_1, \dots , i_m \in \bN \backslash \{ 0 \}$, then \[ i_1 + \dots + i_m + s = n .\]
\item[iv)] There is exactly one term having of the form (\ref{eq:Lambdas}) with $k=0$ and such that all $\Lambda$-operators are factors of $(N-\cN_+ )/N$ or of $(N+1-\cN_+ )/N$. It is given by 
\begin{equation*}\label{eq:iv1} 
\left(\frac{N-\cN_+ }{N} \right)^{n/2} \left(\frac{N+1-\cN_+ }{N} \right)^{n/2} \eta^{n}_p b_p 
\end{equation*}
if $n$ is even, and by 
\begin{equation*} \label{eq:iv2} 
- \left(\frac{N-\cN_+ }{N} \right)^{(n+1)/2} \left(\frac{N+1-\cN_+ }{N} \right)^{(n-1)/2} \eta^{n}_p b^*_{-p}  \end{equation*}
if $n$ is odd.
\item[v)] If the $\Pi^{(1)}$-operator in (\ref{eq:Lambdas}) is of order $k \in \bN \backslash \{ 0 \}$, it has either the form  
\[ \sum_{p_1, \dots , p_k}  b^{\flat_0}_{\alpha_0 p_1} \prod_{i=1}^{k-1} a^{\sharp_i}_{\beta_i p_{i}} a^{\flat_i}_{\alpha_i p_{i+1}}  a^*_{-p_k} \eta^{2r}_p  a_p \prod_{i=1}^k \eta^{j_i}_{p_i}  \]
or the form 
\[\sum_{p_1, \dots , p_k} b^{\flat_0}_{\alpha_0 p_1} \prod_{i=1}^{k-1} a^{\sharp_i}_{\beta_i p_{i}} a^{\flat_i}_{\alpha_i p_{i+1}}  a_{p_k} \eta^{2r+1}_p a^*_p \prod_{i=1}^k \eta^{j_i}_{p_i}  \]
for some $r \in \bN$, $j_1, \dots , j_k \in \bN \backslash \{ 0 \}$. If it is of order $k=0$, then it is either given by $\eta^{2r}_p b_p$ or by $\eta^{2r+1}_p b_{-p}^*$, for some $r \in \bN$. 
\item[vi)] For every non-normally ordered term of the form 
\[ \begin{split} &\sum_{q \in \Lambda^*} \eta^{i}_q a_q a_q^* , \quad \sum_{q \in \Lambda^*} \, \eta^{i}_q b_q a_q^* \\  &\sum_{q \in \Lambda^*} \, \eta^{i}_q a_q b_q^*, \quad \text{or } \quad \sum_{q \in \Lambda^*} \, \eta^{i}_q b_q b_q^*  \end{split} \]
appearing either in the $\Lambda$-operators or in the $\Pi^{(1)}$-operator in (\ref{eq:Lambdas}), we have $i \geq 2$.
\end{itemize}
\end{lemma}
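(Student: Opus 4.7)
I would prove this lemma by induction on $n$, the order of the nested commutator. The base case $n=0$ is immediate: $\text{ad}^{(0)}_{B(\eta)}(b_p) = b_p$ is itself a $\Pi^{(1)}$-operator of order $k=0$ with no $\Lambda$ factors, so all conditions (i)--(vi) hold trivially (with the single term matching (iv) in the even case).

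For the inductive step, assume the claim holds at level $n$, and write $\text{ad}^{(n+1)}_{B(\eta)}(b_p) = [B(\eta), \text{ad}^{(n)}_{B(\eta)}(b_p)]$. Using the definition (\ref{eq:defB}) of $B(\eta)$ and the Leibniz rule, I would commute $B(\eta)$ through a generic term of the form $\Lambda_1 \cdots \Lambda_i \, N^{-k} \Pi^{(1)}_{\sharp,\flat}(\eta^{j_1},\dots,\eta^{j_k};\eta^s_p \varphi_{\alpha p})$. The commutator lands on one factor at a time. Two elementary computations drive the entire analysis: first, the commutators $[b_q, b_r^*]$ and its variants given in (\ref{eq:comm-bp}), which convert the pair $b_q^* b_{-q}^*$ (or $b_q b_{-q}$) of $B(\eta)$ into contributions of the form $(N-\cN_+)/N$ times a single $b$ operator, plus an $N^{-1} a^* a$ piece; second, the relations (\ref{eq:comm2}) for commutators of $b_p^{\sharp}$ with $a_q^* a_r$, which simply shuffle indices. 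Commuting $B(\eta)$ with a $\Lambda$-operator of the form (\ref{eq:Pi2-ind}) produces a new $\Pi^{(2)}$-operator of order one higher (using one more power of $\eta$ through the contracted index), while commuting with the trailing $b^{\sharp}$ end of the $\Pi^{(1)}$-operator increases its order by one and bumps $s \to s+1$ at the $\eta^s_p$ slot. In each case the result is again of the required form (\ref{eq:Lambdas}) with $(h_j+1)$ and $(k+1)$ weights tracked exactly as in (ii), and with total power of $\eta$ plus the exponent $s$ increasing by $1$ as required in (iii).

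The counting $2^{n+1}(n+1)! = 2(n+1)\cdot 2^n n!$ comes out as follows: at each step, the inductive term has $(k+1) + \sum(h_j+1) + m = n+1$ ``slots'' on which $B(\eta)$ can act, and $B(\eta)$ contributes two pieces (the $b^*b^*$ and $bb$ halves), so each old term gives rise to $2(n+1)$ new ones. I would check that the single distinguished term of (iv) is produced in exactly one way at each induction step: it arises when $B(\eta)$ commutes with the trailing $b$ (or $b^*$) of the previous distinguished term and one picks the contraction that produces a fresh $(N-\cN_+)/N$ or $(N+1-\cN_+)/N$ factor together with an extra $\eta_p$ — the signs and the parity-dependent form in (iv) follow directly from $[b_p,b_{-p}^*]$ versus $[b_{-p},b_p^*]$. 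The structural claim (v) about the two possible shapes of the $\Pi^{(1)}$-operator is stable: the commutator always attaches either an $a_{p_k}^* \eta^{2r}_p a_p$ block or an $a_{p_k} \eta^{2r+1}_p a_p^*$ block at the tail, depending on whether the number of $B(\eta)$ actions on the tail has been even or odd.

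The main obstacle is property (vi), the claim that every non-normally-ordered diagonal sum $\sum_q \eta^i_q a_q a_q^*$ etc.\ appears only with $i\geq 2$, which is what will later allow us to bound such sums by $\|\eta\|^2$ rather than the divergent $\sum_q 1$. This forces one to keep track, at each inductive step, of how a non-normally-ordered pair can be produced: it arises only when $B(\eta)$ commutes through a segment and places its $\eta_q$ factor next to an existing $\eta^{j}_q$ on the same index. Since $B(\eta)$ itself contributes $\eta_q$ and meets a slot already carrying at least $\eta^1_q$, the combined power is at least $\eta^2_q$. Verifying this carefully for every place where $B(\eta)$ can strike (both within a $\Lambda$-operator of the form (\ref{eq:Pi2-ind}) and within the $\Pi^{(1)}$-operator) is the most delicate bookkeeping; the cleanest way is to observe that each new contraction forces the shared momentum to carry the product of the $\eta$ coming from $B(\eta)$ and the $\eta$ already present at the contracted slot, which is guaranteed by the inductive hypothesis to exist. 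Once (vi) is in place, (i)--(v) follow from the recursive mechanism described above.
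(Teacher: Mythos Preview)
Your inductive strategy is the right one and matches the approach taken in the references the paper cites for this result (the paper does not give its own proof of Lemma~\ref{lm:indu} but refers to \cite[Lemma~2.5]{BBCS1}, which is the momentum-space translation of \cite[Lemma~3.2]{BS}). The architecture you describe --- apply Leibniz to a generic term $\Lambda_1\cdots\Lambda_i\,N^{-k}\Pi^{(1)}$, each hit either depositing a new scalar $\Lambda$-prefactor or lengthening a $\Pi$-string, with the bookkeeping identities in (ii) and (iii) advancing by one --- is exactly how the argument runs.

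One point deserves correction. Your explanation of the factor $2$ in $2(n+1)$ as ``the $b^*b^*$ and $bb$ halves of $B(\eta)$'' does not work uniformly: already at the first step one has $[b_qb_{-q},b_p]=0$, so only the $b^*b^*$ half acts on $b_p$, yet two terms must be produced. The doubling actually originates from the two summands in the modified CCR (\ref{eq:comm-bp}),
\[
[b_p,b_q^*]=\Big(1-\frac{\cN_+}{N}\Big)\delta_{p,q}-\frac1N\,a_q^*a_p\,,
\]
the $\delta$-piece depositing a fresh $(N-\cN_+)/N$ or $(N+1-\cN_+)/N$ factor, the $N^{-1}a_q^*a_p$-piece extending the $\Pi$-chain by one link carrying a new $\eta$. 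This distinction is also what makes (vi) clean: the chain-extending piece always attaches the incoming $\eta_q$ to an index that already carries a weight $\eta^{j_\ell}_q$ from the existing chain, so the combined exponent on any non-normally-ordered block is at least $2$. Your heuristic for (vi) is correct in spirit, but the systematic case analysis in the cited references is what makes the bookkeeping rigorous.
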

With Lemma \ref{lm:indu}, it follows from (\ref{eq:BCH}) that, if $\| \eta \|$ is sufficiently small, 
\begin{equation}\label{eq:conv-serie}
\begin{split} e^{-B(\eta)} b_p e^{B (\eta)} &= \sum_{n=0}^\infty \frac{(-1)^n}{n!} \text{ad}_{B(\eta)}^{(n)} (b_p) \\
e^{-B(\eta)} b^*_p e^{B (\eta)} &= \sum_{n=0}^\infty \frac{(-1)^n}{n!} \text{ad}_{B(\eta)}^{(n)} (b^*_p) \end{split} \end{equation}
where the series converge absolutely (the proof is a translation to momentum space of \cite[Lemma 3.3]{BS}).

While Lemma \ref{lm:indu} gives a complete characterization of terms appearing in the expansions (\ref{eq:conv-serie}), to localize the number of particles as we do in Prop. \ref{prop:GNell-loc}, we will need to consider double commutators of $\text{ad}_{-B(\eta)}^{(n)}(b_p)$ with a smooth function $f (\cN_+/M)$ of the number of particles operator $\cN_+$. varying on the scale $M\in\mathbb{N}\backslash\left\{0\right\}$. To this end, we will apply the following corollary, which is a simple consequence of Lemma \ref{lm:indu}.
\begin{cor}\label{lm:ff}
Let $f: \bR \to \bR$ be a real, smooth and bounded function. For $M \in \mathbb{N}\backslash\left\{0\right\}$, let $f_M = f (\cN_+/M)$.  Then, for any $n \in \bN$, $p \in \Lambda^*_+$, the double commutator $[f_M , [f_M, \text{ad}_{-B(\eta)}^{(n)}(b_p)]]$ can be written as the sum of $2^n n!$ (possibly vanishing) terms, having the form 
\[ F_{M,n} (\cN_+ ) \, \Lambda_1 \Lambda_2 \dots \Lambda_i N^{-k} \Pi^{(1)}_{\sharp, \flat} (\eta^{j_1}, \dots, \eta^{j_k} ; \eta_p^s \ph_{\alpha p})  \] 
for some $i,k,s \in \bN$, $j_1, \dots ,j_k \in \bN \backslash \{ 0 \}$, $\sharp \in \{ \cdot, * \}^k$, $ \flat \in \{ \cdot, * \}^{k+1}$ and $\alpha \in \{ \pm 1 \}$ chosen so that $\alpha = 1$ if $\flat_k = \cdot$ and $\alpha = -1$ if $\flat_k = *$, where the operators $\Lambda_1, \dots , \Lambda_i$ and $\Pi^{(1)}_{\sharp, \flat} (\eta^{j_1}, \dots, \eta^{j_k} ; \eta_p^s \ph_{\alpha p})$ satisfy all properties listed in the points i)-vi) in Lemma \ref{lm:indu} and where $F_{M,n}$ is a bounded function such that 
\begin{equation}\label{eq:FMn-bd} \| F_{M,n} (\cN_+ ) \| \leq \frac{C n^2}{M^2}  \|f' \|_\infty^2 \end{equation}
for a universal constant $C > 0$ (different terms will have different functions $F_{M,n}$, but they will all satisfy (\ref{eq:FMn-bd}) with the same constant $C > 0$).    
\end{cor}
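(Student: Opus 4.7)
The plan is to decompose $\text{ad}^{(n)}_{-B(\eta)}(b_p)$ into the $2^n n!$ summands provided by Lemma \ref{lm:indu}, and then to compute $[f_M,[f_M,\cdot]]$ on each summand separately. The key mechanism I will exploit is that every such summand shifts the number of excitations by a fixed integer, which will make the two nested commutators completely explicit.

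Fix a single summand $T$ of the form (\ref{eq:Lambdas}). Among its constituents, the factors $(N-\cN_+)/N$ and $(N+1-\cN_+)/N$, as well as the powers of $N^{-1}$, commute with $\cN_+$. Each $\Pi^{(2)}$-operator and the lone $\Pi^{(1)}$-operator are polynomials in creation and annihilation operators; by the pairing conditions enforced in the definitions (\ref{eq:Pi2}) and (\ref{eq:Pi1}), every interior pair $a^{\sharp_j}a^{\flat_j}$ preserves particle number, so the net $\cN_+$-shift is determined only by the endpoint operators. Hence each $\Pi^{(2)}$-factor shifts $\cN_+$ by an element of $\{-2,0,2\}$ and the $\Pi^{(1)}$-factor by $\pm 1$; combined with the sum rule in Lemma \ref{lm:indu}(ii), which forces the number of $\Pi^{(2)}$-factors to be at most $n/2$, this yields an integer $d_T$ with $|d_T|\leq C\,n$ (for a universal $C$) such that $g(\cN_+)\,T = T\,g(\cN_+ + d_T)$ for every bounded $g:\bR\to\bR$.

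Next, I would choose $g=f(\cdot/M)$ to obtain $[f_M, T] = T\,G_T(\cN_+)$ with $G_T(x):=f((x+d_T)/M) - f(x/M)$. Since $G_T(\cN_+)$ is a function of $\cN_+$, it commutes with $f_M$, so iterating the commutator gives
\[ [f_M,[f_M, T]] \;=\; T\,G_T(\cN_+)^2 \;=\; G_T(\cN_+ - d_T)^2\,T \;=:\; F_{M,n}(\cN_+)\,T, \]
where the shift identity is applied in reverse in the second equality to transport the $\cN_+$-function to the left of $T$. This is exactly the claimed form, since $T$ is one of the $2^n n!$ terms from Lemma \ref{lm:indu} and the factors $\Lambda_1,\dots,\Lambda_i$ and $\Pi^{(1)}_{\sharp,\flat}$ are left untouched.

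Finally, the bound (\ref{eq:FMn-bd}) will follow from the mean value theorem:
\[ \bigl|G_T(x-d_T)\bigr| \;=\; \bigl|f(x/M)-f((x-d_T)/M)\bigr| \;\leq\; \frac{|d_T|}{M}\|f'\|_\infty, \]
yielding $\|F_{M,n}(\cN_+)\|\leq d_T^2\|f'\|_\infty^2/M^2 \leq C\,n^2\|f'\|_\infty^2/M^2$. The main obstacle, really the only nontrivial step, is the uniform estimate $|d_T|\leq C\,n$; this requires a careful inventory of the endpoints of every $\Pi$-factor based on Lemma \ref{lm:indu}(i)--(ii). Once that bookkeeping is done, the remainder of the argument is elementary functional calculus for $\cN_+$.
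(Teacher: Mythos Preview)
Your proposal is correct and follows essentially the same approach as the paper: decompose via Lemma~\ref{lm:indu}, use that each summand has a definite $\cN_+$-shift determined by the endpoint operators of its $\Pi^{(2)}$- and $\Pi^{(1)}$-factors, compute the double commutator as a squared difference of $f$ at shifted arguments, and bound via the mean value theorem together with the count of $\Pi^{(2)}$-factors from item~(ii). Your version is in fact slightly more streamlined than the paper's, which expands the commutator via the Leibniz rule over the individual factors $\Lambda_u$ and applies the mean value theorem to each piece separately (the resulting sum telescopes to your single difference $f(\cN_+/M)-f((\cN_+-d_T)/M)$, though the paper does not make this explicit).
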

\begin{proof}
It follows from Lemma \ref{lm:indu} that, for any $n\in\mathbb{N}$, $\text{ad}_{-B(\eta)}^{(n)}(b_p)$ can be written as the sum of $2^n n!$ terms of the form (up to a sign) 
 \begin{equation}\label{eq:Lambdas1} \Lambda_1 \Lambda_2 \dots \Lambda_i \, N^{-k} \Pi^{(1)}_{\sharp,\flat} (\eta^{j_1}, \dots , \eta^{j_k} ; \eta^{s}_p \ph_{\alpha p}) 
\end{equation}
for some $i,k,s \in \bN$, $j_1, \dots ,j_k \in \bN \backslash \{ 0 \}$, $\sharp \in \{ \cdot, * \}^k$, $ \flat \in \{ \cdot, * \}^{k+1}$ and $\alpha \in \{ \pm 1 \}$ chosen so that $\alpha = 1$ if $\flat_k = \cdot$ and $\alpha = -1$ if $\flat_k = *$. In (\ref{eq:Lambdas1}), each operator $\Lambda_w : \cF^{\leq N} \to \cF^{\leq N}$, $w=1, \dots , i$, is either a factor $(N-\cN_+ )/N$, a factor $(N-(\cN_+ -1))/N$ or an operator of the form
\begin{equation}\label{eq:Pi2-ind2} N^{-h} \Pi^{(2)}_{\sharp',\flat'} (\eta^{z_1}, \eta^{z_2},\dots , \eta^{z_h}) \end{equation}
for some $h, z_1, \dots , z_h \in \bN \backslash \{ 0 \}$, $\sharp,\flat  \in \{ \cdot , *\}^h$. The commutator of (\ref{eq:Lambdas1}) with $f_M$ is therefore given by 
\begin{equation*}
 \begin{split}
 [f_M,&\Lambda_1 \Lambda_2 \dots \Lambda_i \, N^{-k} \Pi^{(1)}_{\sharp,\flat} (\eta^{j_1}, \dots , \eta^{j_k} ; \eta^{s}_p \ph_{\alpha p}) ]\\
 &=\sum_{u=1}^i\Big(\prod_{t=1}^{u-1}\L_t\Big)[f_M,\L_u]\Big(\prod_{t=u+1}^{i}\L_t\Big)N^{-k} \Pi^{(1)}_{\sharp,\flat} (\eta^{j_1}, \dots , \eta^{j_k} ; \eta^{s}_p \ph_{\alpha p}) \\
 &\quad+\Lambda_1 \Lambda_2 \dots \Lambda_i \, N^{-k} [f_M,\Pi^{(1)}_{\sharp,\flat} (\eta^{j_1}, \dots , \eta^{j_k} ; \eta^{s}_p \ph_{\alpha p})]
 \end{split}
\end{equation*}
Recalling \eqref{eq:Pi2} and \eqref{eq:Pi1} and using the identities $b_p \cN_+=(\cN_++1) b_p$, $b^*_p\cN_+=(\cN_+-1)b^*_p$, we obtain that 
 \begin{equation*}
 \begin{split}
 \big[f_M,\L_u \big]=\Big[ f\Big(\frac{\cN_+}{M}\Big)-f\Big(\frac{\cN_++e_u}{M}\Big)\Big]\L_u\\
 \end{split}
\end{equation*}
with $e_u=0$ if $\L_u$ is either $(N-\cN_+)/N$ or $(N-(\cN_+ -1))/N$, while $e_u$ takes values in the set $\{-2,0,2\}$ if $\L_u$ is of the form \eqref{eq:Pi2-ind2} ($\Pi^{(2)}_{\sharp,\flat}$-operators can either create or annihilate two excitations, or it can leave the number of excitations invariant). Moreover
 \begin{equation*}
 \begin{split}
 \big[f_M,\Pi^{(1)}_{\sharp,\flat} (\eta^{j_1}, \dots , \eta^{j_k} ; \eta^{s}_p \ph_{\alpha p}) 
 \big]=\Big[ f\Big(\frac{\cN_+}{M}\Big)-f\Big(\frac{\cN_+\pm1}{M}\Big)\Big] \Pi^{(1)}_{\sharp,\flat} (\eta^{j_1}, \dots , \eta^{j_k} ; \eta^{s}_p \ph_{\alpha p}) \end{split}
\end{equation*}
because $\Pi^{(1)}_{\sharp,\flat}$ can create or annihilate only one excitation.
Therefore
 \begin{equation}\nonumber
 \begin{split}
 &[f_M,\Lambda_1 \Lambda_2 \dots \Lambda_i \, N^{-k} \Pi^{(1)}_{\sharp,\flat} (\eta^{j_1}, \dots , \eta^{j_k} ; \eta^{s}_p \ph_{\alpha p}) ]\\
 &=\sum_{u=1}^i\Big(\prod_{t=1}^{u-1}\L_t\Big)\Big[ f\Big(\frac{\cN_+}{M}\Big)-f\Big(\frac{\cN_++ e_u}{M}\Big)\Big] \L_u \Big(\prod_{r=u+1}^{i}\L_t\Big)N^{-k} \Pi^{(1)}_{\sharp,\flat} (\eta^{j_1}, \dots , \eta^{j_k} ; \eta^{s}_p \ph_{\alpha p}) \\
 &\quad+\Lambda_1 \Lambda_2 \dots \Lambda_i \, N^{-k} \Big[ f\Big(\frac{\cN_+}{M}\Big)-f\Big(\frac{\cN_+\pm1}{M}\Big)\Big] \Pi^{(1)}_{\sharp,\flat} (\eta^{j_1}, \dots , \eta^{j_k} ; \eta^{s}_p \ph_{\alpha p})
 \end{split}
\end{equation}
Hence, we have 
 \begin{equation}\nonumber
 \begin{split}
 &[f_M,\Lambda_1 \Lambda_2 \dots \Lambda_i \, N^{-k} \Pi^{(1)}_{\sharp,\flat} (\eta^{j_1}, \dots , \eta^{j_k} ; \eta^{s}_p \ph_{\alpha p}) ]\\
 &=\Big\{ \sum_{u=1}^i\Big[ f\Big(\frac{\cN_++ n_{u-1}}{M}\Big)-f\Big(\frac{\cN_++ e_u+n_{u-1}}{M}\Big)\Big] \\ &\hspace{.5cm} + \Big[ f\Big(\frac{\cN_++ n_i}{M}\Big)-f\Big(\frac{\cN_+\pm1+ n_i}{M}\Big)\Big] \Big\} \,  \Lambda_1 \Lambda_2 \dots \Lambda_i \, N^{-k} \Pi^{(1)}_{\sharp,\flat} (\eta^{j_1}, \dots , \eta^{j_k} ; \eta^{s}_p \ph_{\alpha p})
 \end{split}
\end{equation}
where $n_u=\sum_{t=1}^u e_t$. By the mean value theorem, we can find functions $\theta_1:\mathbb{N}\to(0,\pm1)$, $\theta_u:\mathbb{N}\to(0,e_u)$ such that
 \begin{equation}\nonumber
 \begin{split}
 [f_M,&\Lambda_1 \Lambda_2 \dots \Lambda_i \, N^{-k} \Pi^{(1)}_{\sharp,\flat} (\eta^{j_1}, \dots , \eta^{j_k} ; \eta^{s}_p \ph_{\alpha p}) ]\\
 &=\frac{1}{M}\left[ \sum_{u=1}^ie_uf'\Big(\frac{\cN_++ \theta_u(\cN_+)}{M}\Big)+f'\Big(\frac{\cN_++ \theta_1(\cN_+)}{M}\Big)\right] \\
 &\hspace{6cm}\times \Lambda_1 \Lambda_2 \dots \Lambda_i \, N^{-k} \Pi^{(1)}_{\sharp,\flat} (\eta^{j_1}, \dots , \eta^{j_k} ; \eta^{s}_p \ph_{\alpha p})
 \end{split}
\end{equation}
It follows that
 \begin{equation*}
 \begin{split}
 [f_M,[f_M,&\Lambda_1 \Lambda_2 \dots \Lambda_i \, N^{-k} \Pi^{(1)}_{\sharp,\flat} (\eta^{j_1}, \dots , \eta^{j_k} ; \eta^{s}_p \ph_{\alpha p}) ]]\\
 &= F_{M,n} (\cN_+) \Lambda_1 \Lambda_2 \dots \Lambda_i \, N^{-k} \Pi^{(1)}_{\sharp,\flat} (\eta^{j_1}, \dots , \eta^{j_k} ; \eta^{s}_p \ph_{\alpha p})
 \end{split}
\end{equation*}
with 
\begin{equation*}
  F_{M,n} (\cN_+) = \frac{1}{M^2} \left[ \sum_{u=1}^i e_u f'\Big(\frac{\cN_++ \theta_u(\cN_+)}{M}\Big)+f'\Big(\frac{\cN_++ \theta_1(\cN_+)}{M}\Big)\right]^2
\end{equation*}
depending on the precise form of the operator $\Lambda_1 \Lambda_2 \dots \Lambda_i \, N^{-k} \Pi^{(1)}_{\sharp,\flat} (\eta^{j_1}, \dots , \eta^{j_k} ; \eta^{s}_p \ph_{\alpha p}) $. Since $e_u \not = 0$ only if $\Lambda_u$ is a $\Pi^{(2)}$ operator, since there are at most $n$ $\Pi^{(2)}$ operators among $\Lambda_1, \dots , \Lambda_i$ and since $|e_u| \leq 2$ for all $u \in \{1, \dots , i \}$, we conclude that, for example,  
\[ \| F_{M,n} \| \leq \frac{3n^2}{M^2} \| f' \|^2_\infty  \]
\end{proof}

As explained after their definition (\ref{eq:bp-de}), the generalized creation and annihilation operators $b^*_p, b_p$ are close to the standard creation and annihilation operators on states with only few excitations, ie. with $\cN_+ \ll N$. In particular, on these states we expect the action of the generalized Bogoliubov transformation (\ref{eq:eBeta}) to be close to the action (\ref{eq:act-Bog}) of the standard Bogoliubov transformation (\ref{eq:wteBeta}). To make this statement more precise we define, under the assumption that $\| \eta \|$ is small enough,  the remainder operators 
\begin{equation} \label{eq:defD}
d_q =\sum_{m\geq 0}\frac{1}{m!} \Big [\text{ad}_{-B(\eta)}^{(m)}(b_q) - \eta_q^m b_{\alpha_m q}^{\sharp_m }  \Big],\hspace{0.5cm} d^*_q =\sum_{m\geq 0}\frac{1}{m!} \Big [\text{ad}_{-B(\eta)}^{(m)}(b^*_q) - \eta_q^m b_{\alpha_m q}^{\sharp_{m+1}}  \Big]\end{equation}
where $q \in \L^*_+$, $ (\sharp_m, \alpha_m) = (\cdot, +1)$ if $m$ is even and $(\sharp_m, \alpha_m) = (*, -1)$ if $m$ is odd. It follows then from (\ref{eq:conv-serie}) that 
\begin{equation}\label{eq:ebe} 
e^{-B(\eta)} b_q e^{B(\eta)} = \gamma_q  b_q +\s_q b^*_{-q} + d_q, \hspace{1cm} e^{-B(\eta)} b^*_q e^{B(\eta)} = \g_q b^*_q +\s_q b_{-q} + d^*_q  \end{equation} 
where we introduced the notation $\g_q = \cosh (\eta_q)$ and $\s_q = \sinh (\eta_q)$. It will also be useful to introduce remainder operators in position space. For $x \in \Lambda$, we define the operator valued distributions $\check{d}_x, \check{d}^*_x$ through
\begin{equation}\label{eq:ebex} e^{-B(\eta)} \check{b}_x e^{B(\eta)} = b ( \check{\g}_x)  +  b^* (\check{\s}_x) + \check{d}_x, \qquad 
e^{-B(\eta)} \check{b}^*_x e^{B(\eta)} = b^* ( \check{\gamma}_x)  +  b (\check{\s}_x) + \check{d}^*_x
\end{equation}
where $\check{\gamma}_x (y) = \sum_{q \in \Lambda^*} \cosh (\eta_q) e^{iq \cdot (x-y)}$ and $\check{\s}_x (y) = \sum_{q \in \Lambda^*} \sinh (\eta_q) e^{iq \cdot (x-y)}$.  

The next lemma confirms the intuition that remainder operators are small, on states with $\cN_+ \ll N$, and provides estimates that will be crucial for our analysis.  
\begin{lemma} \label{lm:dp} 
Let $\eta \in \ell^2 (\Lambda_+^*)$, $n \in \bZ$. For $p \in \L_+^*$, let $d_p$ be defined as in (\ref{eq:defD}). If $\| \eta \|$ is small enough, there exists $C > 0$ such that  
\begin{equation}\label{eq:d-bds}
\begin{split} 
\| (\cN_+ + 1)^{n/2} d_p \xi \| &\leq \frac{C}{N} \left[ |\eta_p| \| (\cN_+ + 1)^{(n+3)/2} \xi \| + \| \eta \| \| b_p (\cN_+ + 1)^{(n+2)/2} \xi \| \right], \\ 
\| (\cN_+ + 1)^{n/2} d_p^* \xi \| &\leq \frac{C}{N} \, \| \eta \| \,\| (\cN_+ +1)^{(n+3)/2} \xi \| \end{split}  \end{equation}
for all $p \in \L^*_+, \xi \in \cF_+^{\leq N}$. With $\bar{\bar{d}}_p = d_p + N^{-1} \sum_{q \in \L_+^*} \eta_q b_q^* a_{-q}^* a_p$, we also have, for $p \not \in \text{supp } \eta$, the improved bound 
\begin{equation}\label{eq:off} \| (\cN_+ + 1)^{n/2} \bar{\bar{d}}_p \xi \| \leq \frac{C}{N} \| \eta \|^2 \| a_p (\cN_+ + 1)^{(n+2)/2} \xi \| \end{equation}
In position space, with $\check{d}_x$ defined as in (\ref{eq:ebex}), we find   
 \begin{equation}\label{eq:dxy-bds} 
 \| (\cN_+ + 1)^{n/2} \check{d}_x \xi \| \leq  \frac{C }{N}\, \| \eta \| \Big[ \,\| (\cN_+ + 1)^{(n+3)/2} \xi \| +  \| b_x (\cN_+ + 1) ^{(n+2)/2}\xi \| \Big] 
\end{equation}
Furthermore, letting $\check{\bar{d}}_x = \check{d}_x  + (\cN_+ / N) b^*(\check{\eta}_x)$, we find 
\be \begin{split} \label{eq:splitdbd}
\| (\cN_+ &+ 1)^{n/2} \check{a}_y \check{\bar{d}}_x \xi \| \\ &\leq \frac{C}{N} \, \Big[ \, \|\eta \|^2  \| (\cN_+ + 1)^{(n+2)/2} \xi \|  + \| \eta \| |\check{\eta} (x-y)|  \| (\cN +1)^{(n+2)/2}  \xi \| \\
& \hspace{1cm} + \| \eta \| \| \check{a}_x (\cN_++1)^{(n+1)/2} \xi \| +  \|\eta \|^2 \|\check{a}_y (\cN_+ + 1)^{(n+3)/2} \xi \|\\
& \hspace{1cm}  + \| \eta \| \| \check{a}_x \check{a}_y (\cN +1)^{(n+2)/2}  \xi \|   \, \Big]
\end{split}\ee
and, finally, 
\begin{equation}\label{eq:ddxy}
\begin{split} 
\| (\cN_+ &+ 1)^{n/2} \check{d}_x \check{d}_y \xi \|  \\ &\leq \frac C {N^2} \Big[ \; \|\eta\|^2  \| (\cN_++ 1)^{(n+6)/2} \xi \| + \| \eta \| |\check{\eta} (x-y)|  \| (\cN_+ + 1)^{(n+4)/2}  \xi \| \\ 
&\hspace{1cm} + \|\eta \|^2 \| {a}_x (\cN_+ + 1)^{(n+5)/2} \xi \|   + \| \eta \|^2 
\|{a}_y (\cN_+ + 1)^{(n+5)/2} \xi \| \\ &\hspace{1cm}  
+ \| \eta \|^2\, \|{a}_x {a}_y (\cN_+ +  1)^{(n+4)/2} \xi \| \; \Big] 
\end{split} \end{equation}
for all $\xi \in \cF^{\leq n}_+$. 
\end{lemma}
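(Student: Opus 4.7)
By \eqref{eq:defD} and the identity $\text{ad}^{(m)}_{-B(\eta)}(b_p)=(-1)^m\text{ad}^{(m)}_{B(\eta)}(b_p)$, Lemma \ref{lm:indu} expresses each $m$-th commutator as a sum of $2^m m!$ explicit summands of the form \eqref{eq:Lambdas}. I would isolate the unique ``leading'' summand from iv), which for $m$ even equals $[(1-\cN_+/N)(1-(\cN_+-1)/N)]^{m/2}\eta_p^m b_p$ (and analogously for $m$ odd, with $b^*_{-p}$ in place of $b_p$), and differs from the subtracted term $\eta_p^m b^{\sharp_m}_{\alpha_m p}$ only by a scalar prefactor. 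Using the elementary inequality $|a^{m/2}-1|\leq (m/2)(1-a)$ for $a\in[0,1]$, this prefactor is bounded by $Cm\cN_+/N$ on $\cF_+^{\leq N}$; combined with $[\cN_+,b_p]=-b_p$ and the convergent series $\sum_m m|\eta_p|^m/m!$, this produces exactly the $(C/N)|\eta_p|\|b_p(\cN_++1)^{(n+2)/2}\xi\|$ contribution in \eqref{eq:d-bds}.

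For each of the remaining $2^m m!-1$ summands at level $m$, the counting invariant ii) forces either $k\geq 1$ (so the explicit factor $N^{-k}$ is present) or at least one $\Lambda_w$ to be a $\Pi^{(2)}$-factor \eqref{eq:Pi2-ind} (giving an explicit $N^{-h}$); in either case one gains an automatic $N^{-1}$. Bounding each such summand by the standard operator-norm estimates for $\Pi^{(1)}$- and $\Pi^{(2)}$-operators (as developed in \cite{BS, BBCS1}), and using iii) to distribute the total $\eta$-power $n$ across the various $\eta^{j_\ell}, \eta^{z_\ell}, \eta_p^s$ while v) singles out the outer $\eta_p$-factor next to $\ph_{\alpha p}$, produces the $(C/N)\|\eta\|\|(\cN_++1)^{(n+3)/2}\xi\|$ term of \eqref{eq:d-bds}; absolute convergence in $m$ follows from $\|\eta\|$ being small enough to dominate the combinatorial $2^m$. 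The bound on $d_p^*$ is entirely parallel (the subtracted term $\eta_p^m b^{\sharp_{m+1}}_{\alpha_m p}$ matches the iv)-leading summand for $b^*_p$). For the improved estimate \eqref{eq:off}, observe that if $p\notin\text{supp}(\eta)$ the $m=0$ contribution to $d_p$ vanishes identically while a direct computation of $-[B(\eta),b_p]$ shows that the entire $m=1$ contribution reduces to $-N^{-1}\sum_q\eta_q b_q^*a_{-q}^*a_p$; subtracting precisely this piece in the definition of $\bar{\bar d}_p$ forces the sum to start at $m\geq 2$, and iii) then delivers the extra $\|\eta\|^2$ factor.

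The position-space estimates \eqref{eq:dxy-bds}--\eqref{eq:ddxy} are obtained by rerunning the same term-by-term analysis on the position-space representations \eqref{eq:Pi1-pos}, \eqref{eq:Pi2-pos}. The mixed structure of the right-hand sides, in particular the $|\check\eta(x-y)|$-terms appearing in \eqref{eq:splitdbd} and \eqref{eq:ddxy}, arises whenever a free momentum variable in some $\eta$-factor is pinned at the outer position $x-y$: summing it produces the distribution $\check\eta(x-y)$, whereas in all other cases Parseval turns the sum into the $\ell^2$-norm $\|\eta\|$. Property vi) is crucial here, as it ensures that any non-normally-ordered pair carries at least $\eta^2$, and can therefore always be absorbed by $\|\eta\|^2$ after Parseval. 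The main obstacle throughout is bookkeeping rather than hard analysis: tracking, across all $2^m m!$ summands, which $\eta$-factor carries the outer momentum $p$ (or the position $x-y$) is what delivers the precise separation of $|\eta_p|$/$|\check\eta(x-y)|$-terms from generic $\|\eta\|$-terms on the right-hand sides; once this tracking is set up correctly, each individual summand is elementary to estimate and the final step is just summing a geometric series in $\|\eta\|$.
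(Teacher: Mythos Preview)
Your overall strategy is the paper's: expand $d_p$ via Lemma~\ref{lm:indu}, separate the unique iv)-summand, and bound the remaining $2^m m!-1$ summands using the automatic $N^{-1}$ forced by ii). The gap is in how you attribute the contributions to the two terms of \eqref{eq:d-bds}. As written, your argument produces only
\[
\frac{C}{N}\Big[\,|\eta_p|\,\|b_p(\cN_++1)^{(n+2)/2}\xi\|+\|\eta\|\,\|(\cN_++1)^{(n+3)/2}\xi\|\,\Big],
\]
which is \emph{strictly weaker} than \eqref{eq:d-bds} and does not imply it: since $|\eta_p|\le\|\eta\|$ and $\|b_p\psi\|\le\|(\cN_++1)^{1/2}\psi\|$, your right-hand side dominates the one in \eqref{eq:d-bds}, not the other way around. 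Neither of the two expressions you write down actually appears in \eqref{eq:d-bds}.

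The missing ingredient is the case distinction on the exponent $s$ of $\eta_p^{\,s}$ in the $\Pi^{(1)}$-factor of each non-leading summand \eqref{eq:Lambdas}. When $s=0$, property v) shows that the $\Pi^{(1)}$-operator terminates in $a_p$ with \emph{no} $\eta_p$-prefactor; all $m$ powers of $\eta$ then sit in the internal arguments $\eta^{j_i},\eta^{z_i}$, and the estimate reads $C^m\|\eta\|^m N^{-1}\|b_p(\cN_++1)\xi\|$. Summed over $m$, this is precisely the second term $\|\eta\|\,\|b_p(\cN_++1)^{(n+2)/2}\xi\|$ of \eqref{eq:d-bds}. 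When $s\ge 1$, one extracts $|\eta_p|^{s}\le|\eta_p|\,\|\eta\|^{s-1}$ and bounds the trailing $b^{\sharp}_{\pm p}$ by $(\cN_++1)^{1/2}$; together with the iv)-leading term (which also carries $\eta_p^{\,m}$) this gives the first term $|\eta_p|\,\|(\cN_++1)^{(n+3)/2}\xi\|$. The same dichotomy is what drives \eqref{eq:off}: for $p\notin\operatorname{supp}\eta$ the condition $\eta_p=0$ kills every $s\ge 1$ summand, so only $s=0$ terms survive, each terminating in $a_p$ --- this is why the right-hand side of \eqref{eq:off} carries $\|a_p(\cN_++1)^{(n+2)/2}\xi\|$ rather than $\|(\cN_++1)^{(n+3)/2}\xi\|$. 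Among those $s=0$ terms the unique $m=1$ contribution is exactly the piece subtracted in $\bar{\bar d}_p$, and the remaining summands all have $m\ge 2$, hence $\|\eta\|^2$.
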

  
\begin{proof}
To prove the first bound in (\ref{eq:d-bds}), we notice that, from (\ref{eq:defD}) and from the triangle inequality (for simplicity, we focus on $n=0$, powers of $\cN_+$ can be easily commuted through the operators $d_p$), 
\begin{equation} \label{eq:d-sum}\| d_q \xi \| \leq \sum_{m \geq 0} \frac{1}{m!} \left\| \left[ \text{ad}^{(m)}_{-B(\eta)} (b_q) - \eta_q^m b^{\sharp_m}_{\alpha_m p} \right] \xi \right\| \end{equation} 
From Lemma \ref{lm:indu}, we can bound the norm $\| [ \text{ad}^{(m)}_{-B(\eta)} (b_q) - \eta_q^m b^{\sharp_m}_{\alpha_m p} ] \xi \|$ by the sum of one term of the form
\begin{equation}\label{eq:N-term} 
\left\| \left[ \left( \frac{N- \cN_+}{N} \right)^{\frac{m+ (1-\alpha_m)/2}{2}} \left( \frac{N+1-\cN_+}{N} \right)^{\frac{m-(1-\alpha_m)/2}{2}} - 1 \right] \eta_p^m b^{\sharp_m}_{\alpha_m p} \xi \right\| \end{equation}
and of exactly $2^m m! - 1$ terms of the form
\begin{equation}\label{eq:L-term} \left\| \Lambda_1 \dots \Lambda_{i_1} N^{-k_1} \Pi^{(1)}_{\sharp,\flat} (\eta^{j_1} , \dots , \eta^{j_{k_1}} ; \eta^{\ell_1}_p \ph_{\alpha_{\ell_1} p}) \xi \right\| \end{equation}
where $i_1, k_1, \ell_1 \in \bN$, $j_1, \dots , j_{k_1} \in \bN \backslash \{ 0 \}$ and where each $\Lambda_r$-operator is either a factor $(N-\cN_+ )/N$, a factor $(N+1-\cN_+ )/N$ or a $\Pi^{(2)}$-operator of the form 
\begin{equation}\label{eq:Pi2-ex}
N^{-h} \Pi^{(2)}_{\underline{\sharp}, \underline{\flat}} (\eta^{z_1} , \dots, \eta^{z_h}) 
\end{equation}
with $h, z_1, \dots , z_h \in \bN \backslash \{ 0 \}$. Furthermore, since we are considering the term (\ref{eq:N-term}) separately, each term of the form (\ref{eq:L-term}) must have either $k_1 > 0$ or it must contain at least one $\Lambda$-operator having the form (\ref{eq:Pi2-ex}). Since (\ref{eq:N-term}) vanishes for $m=0$, it is easy to bound
\[ \begin{split} &\left\| \left[ \left( \frac{N- \cN_+}{N} \right)^{\frac{m+ (1-\alpha_m)/2}{2}} \left( \frac{N+1-\cN_+}{N} \right)^{\frac{m-(1-\alpha_m)/2}{2}} - 1 \right] \eta_p^m b^{\sharp_m}_{\alpha_m p} \xi \right\| \\ & \hspace{8cm} \leq C^m |\eta_p|^{m} N^{-1} \| (\cN_+ + 1)^{3/2} \xi \| \end{split} \]
On the other hand, distinguishing the cases $\ell_1 > 0$ and $\ell_1 = 0$, we can bound
\begin{equation}\label{eq:ell1spl}  \begin{split}  &\left\| \Lambda_1 \dots \Lambda_{i_1} N^{-k_1} \Pi^{(1)}_{\sharp,\flat} (\eta^{j_1} , \dots , \eta^{j_{k_1}} ; \eta^{\ell_1}_p \ph_{\alpha_{\ell_1} p}) \xi \right\| \\ & \hspace{1cm} \leq C^m N^{-1} \left[ \|\eta\|^{m-\ell_1} \, |\eta_p|^{\ell_1}  \d_{\ell_1>0} \| (\cN_+ + 1)^{3/2} \xi \|  + \|\eta\|^m \| b_p (\cN_+ + 1) \xi \| \right] \\
& \hspace{1cm} \leq C^m  \|\eta\|^{m-1} N^{-1} \left[ \, |\eta_p|  \d_{m>0} \| (\cN_+ + 1)^{3/2} \xi \|  + \|\eta\| \| b_p (\cN_+ + 1) \xi \| \right]
\end{split} \end{equation}
where in the last line we used $|\eta_p| \leq \| \eta \|$. Inserting the last two bounds in (\ref{eq:d-sum}) and summing over $m$ under the assumption that $\| \eta \|$ is small enough, we arrive at the first estimate (\ref{eq:d-bds}). The second estimate in (\ref{eq:d-bds}) can be proven similarly (notice that, when dealing with the second estimate in (\ref{eq:d-bds}), contributions of the form (\ref{eq:L-term}) with $\ell_1 = 0$, can only be bounded by $\| b_p^* (\cN_+ +1) \xi \| \leq \| (\cN_+ + 1)^{3/2} \xi \|$). To show (\ref{eq:off}), we notice that $\bar{\bar{d}}_p$ is exactly defined to cancel the only contribution with $m=1$ that does not vanish for $p  \not \in \text{supp } \eta$. Moreover, the assumption $\eta_p = 0$ implies that only terms 
with $\ell_1 = 0$ survive in (\ref{eq:ell1spl}). Also the bounds in (\ref{eq:dxy-bds}) and \eqref{eq:splitdbd} can be shown analogously, using \cite[Lemma 7.2]{BBCS2}. 
\end{proof}

To localize the number of particles operator in Prop. \ref{prop:GNell-loc}, we will also need to control the double commutator of the remainder operators $d_p, d_p^*$ with smooth functions $f(\cN_+/M)$ of the number of particles operator, varying on the scale $M$. To this end, we use the next corollary, which is an immediate consequence of Corollary \ref{lm:ff} and of Lemma \ref{lm:dp} (and of its proof).
\begin{cor}\label{cor:ffd}
Let $f : \bR \to \bR$ be smooth and bounded. For $M \in \bN \backslash \{ 0 \}$, let $f_M = f(\cN_+ / M)$. The bounds in (\ref{eq:d-bds}), (\ref{eq:off}), (\ref{eq:dxy-bds}), (\ref{eq:splitdbd}) and (\ref{eq:ddxy}) remain true if we replace, on the left hand side, $d_p$ by $[f_M, [f_M, d_p]]$, $\bar{\bar{d}}_p$ by $[f_M, [f_M, \bar{\bar{d}}_p]]$, $\check{d}_x$ by $[f_M, [f_M, \check{d}_x]]$, $\check{a}_y \check{\bar{d}}_x$ by $[f_M, [f_M, \check{a}_y \check{\bar{d}}_x]]$ and $\check{d}_x \check{d}_y$ by $[ f_M, [f_M, \check{d}_x \check{d}_y]]$ and, on the right hand side, the constant $C$ by $C M^{-2} \| f' \|_\infty^2$. For example, the first bound in (\ref{eq:d-bds}) becomes 
\[ \begin{split} &\left\| (\cN_+ +1)^{n/2} [ f_M, [f_M, d_p]] \xi \right\| \\ &\hspace{3cm} \leq \frac{C \| f' \|^2_\infty}{N M^2} \left[ |\eta_p|  \| (\cN_+ + 1)^{(n+3)/2} \xi \| + \| \eta \| \| b_p (\cN_+ + 1)^{(n+2)/2} \xi \| \right] \end{split} \]
\end{cor}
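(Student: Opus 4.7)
The plan is to push the series-based proof of Lemma \ref{lm:dp} through the double commutator $[f_M,[f_M,\cdot]]$ termwise, using Corollary \ref{lm:ff} as the main technical input. From \eqref{eq:defD}, I would write
\[ [f_M, [f_M, d_p]] \;=\; \sum_{m \geq 0} \frac{1}{m!} \bigl[ f_M, \bigl[ f_M, \text{ad}_{-B(\eta)}^{(m)}(b_p) - \eta_p^m b^{\sharp_m}_{\alpha_m p} \bigr]\bigr], \]
and analogously for $d_p^*$, $\check{d}_x$, $\check{a}_y \check{\bar{d}}_x$ and $\check{d}_x \check{d}_y$, using the series representations underlying Lemma \ref{lm:dp}. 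By Lemma \ref{lm:indu}, each summand decomposes into the \emph{modified leading term} (item iv) with $\eta_p^m b^{\sharp_m}_{\alpha_m p}$ subtracted off) plus $2^m m!-1$ further terms of the form \eqref{eq:Lambdas}.

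For the $2^m m!-1$ non-leading terms, Corollary \ref{lm:ff} applies directly: the double commutator with $f_M$ produces the same operator multiplied by a bounded scalar function $F_{M,m}(\cN_+)$ with $\|F_{M,m}\|_\infty \leq C m^2 M^{-2} \|f'\|_\infty^2$. For the modified leading term, the key observation is that its $\cN_+$-dependent prefactor $\bigl[((N-\cN_+)/N)^a((N+1-\cN_+)/N)^b - 1\bigr]$, being a function of $\cN_+$, commutes with $f_M$; hence only the single factor $b^{\sharp_m}_{\alpha_m p}$ contributes, and by the mean value theorem
\[ \bigl[ f_M, \bigl[ f_M, b^{\sharp_m}_{\alpha_m p} \bigr] \bigr] \;=\; \bigl( f(\cN_+/M) - f((\cN_+ \mp 1)/M) \bigr)^2 b^{\sharp_m}_{\alpha_m p}, \]
which is bounded in operator norm by $M^{-2} \|f'\|_\infty^2$ times $b^{\sharp_m}_{\alpha_m p}$.

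Having thereby rewritten $[f_M, [f_M, d_p]]$ as exactly the same series as $d_p$, but with each summand multiplied by a function of $\cN_+$ of norm at most $C m^2 M^{-2}\|f'\|_\infty^2$, I would then repeat the estimates in the proof of Lemma \ref{lm:dp} line by line, pulling the scalar prefactor out before applying the individual bounds on the $\Pi^{(1)}$- and $\Pi^{(2)}$-operators. The series in $m$ still converges absolutely because $\sum_m m^2/m!<\infty$, so the extra $m^2$ factor is harmless against $1/m!$. The cases of $d_p^*$, $\check{d}_x$, $\check{a}_y \check{\bar{d}}_x$ and $\check{d}_x \check{d}_y$ are handled identically, by plugging the same commutator identities into the respective representations (the last two by invoking \cite[Lemma 7.2]{BBCS2}, as in the proof of Lemma \ref{lm:dp}).

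I expect the only non-trivial step to be the bookkeeping that isolates the modified leading term as the unique summand whose $\cN_+$-dependent prefactor commutes with $f_M$; once this is in place, the commutator structure of that term degenerates to a simple finite difference on a single $b^\sharp$, producing the same uniform scale $C\|f'\|_\infty^2/M^2$ as Corollary \ref{lm:ff} supplies for the other terms. Everything else is a verbatim transcription of the estimates in Lemma \ref{lm:dp}, with the scalar factor pulled outside.
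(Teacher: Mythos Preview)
Your proposal is correct and is essentially the approach the paper has in mind: the paper simply states that the corollary is ``an immediate consequence of Corollary \ref{lm:ff} and of Lemma \ref{lm:dp} (and of its proof)'', and you have spelled out precisely how---applying Corollary \ref{lm:ff} to the $2^m m!-1$ non-leading terms in the expansion of $d_p$ and handling the modified leading term directly via its commutation with $f_M$. The observation that the extra $m^2$ factor from $\|F_{M,m}\|$ is absorbed by the $1/m!$ in the series is exactly the point, and your treatment of the remaining operators by reference to the same series representations (and \cite[Lemma 7.2]{BBCS2}) matches the paper's intent.
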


\section{Quadratic Renormalization} 
\label{sec:ren}

We use now a generalized Bogoliubov transformation $\exp (B(\eta))$ of the form (\ref{eq:eBeta}) to renormalize the excitation Hamiltonian. To make sure that $\exp (B(\eta))$ removes correlations that are present in low-energy states, we have to choose the coefficients $\eta \in \ell^2 (\Lambda^*_+)$ appropriately. To this end, we consider the ground state solution of the Neumann problem 
\begin{equation}\label{eq:scatl} \left[ -\Delta + \frac{1}{2} V \right] f_{\ell} = \lambda_{\ell} f_\ell \end{equation}
on the  ball $|x| \leq N\ell$ (we omit here the $N$-dependence in the notation for $f_\ell$ and for $\lambda_\ell$; notice that $\lambda_\ell$ scales as $N^{-3}$), with the normalization $f_\ell (x) = 1$ if $|x| = N \ell$. By scaling, we observe that $f_\ell (N.)$ satisfies the equation 
\[ \left[ -\Delta + \frac{ N^2}{2} V (Nx) \right] f_\ell (Nx) = N^2 \lambda_\ell f_\ell (Nx) \]
on the ball $|x| \leq \ell$. We choose $0 < \ell < 1/2$, so that the ball of radius $\ell$ is contained in the box $\Lambda= [-1/2 ; 1/2]^3$ (later, we will choose $\ell > 0$ small enough, but always of order one, independent of $N$). We extend then $f_\ell (N.)$ to $\Lambda$, by setting $f_{N,\ell} (x) = f_\ell (Nx)$, if $|x| \leq \ell$ and $f_{N,\ell} (x) = 1$ for $x \in \Lambda$, with $|x| > \ell$. Then  
\begin{equation}\label{eq:scatlN}
 \left( -\Delta + \frac{N^2}{2} V (Nx) \right) f_{N,\ell} = N^2 \lambda_\ell f_{N,\ell} \chi_\ell  
\end{equation}
where $\chi_\ell$ is the characteristic function of the ball of radius $\ell$. The Fourier coefficients of the function $f_{N,\ell}$ are given by 
\begin{equation}\label{eq:fellN} \widehat{f}_{N,\ell} (p) := \int_\Lambda f_\ell (Nx) e^{-i p \cdot x} dx \end{equation}
for all $p \in \L^*$. It is also useful to introduce the function $w_\ell (x) = 1- f_\ell (x)$ for $|x| \leq N \ell$ and to extend it by setting $w_\ell (x) = 0$ for $|x| > N \ell$. Its rescaled version $w_{N,\ell} : \Lambda \to \bR$ is then defined through $w_{N,\ell} (x) = w_{\ell} (Nx)$ if $|x| \leq \ell$ and $w_{N,\ell} (x) = 0$ if $x \in \L$ with $|x| > \ell$.  The Fourier coefficients of $w_{N,\ell}$ are then given, for $p \in \L^*$, by  
\[  \widehat{w}_{N,\ell} (p) = \int_{\Lambda} w_\ell (Nx) e^{-i p \cdot x} dx = \frac{1}{N^3} \widehat{w}_\ell (p/N) \]
where \[ \widehat{w}_\ell (k) = \int_{\bR^3} w_\ell (x) e^{-ik \cdot x} dx \] denotes the  Fourier transform of the (compactly supported) function $w_\ell$. We find $\widehat{f}_{N,\ell} (p) = \delta_{p,0} - N^{-3} \widehat{w}_\ell (p/N)$. {F}rom (\ref{eq:scatlN}), we obtain  
\begin{equation}\label{eq:wellp}
\begin{split}  
- p^2 \widehat{w}_\ell (p/N) +  \frac{N^2}{2} \sum_{q \in \L^*} \widehat{V} ((p-q)/N) \widehat{f}_{N,\ell} (q) = N^5 \lambda_\ell \sum_{q \in \L^*} \widehat{\chi}_\ell (p-q) \widehat{f}_{N,\ell} (q) 
\end{split} 
\end{equation}

In the next lemma we collect some important properties of $w_\ell, f_\ell$. The proof of the lemma is given in Appendix \ref{appx:sceq}. 
\begin{lemma} \label{3.0.sceqlemma}
Let $V \in L^3 (\bR^3)$ be non-negative, compactly supported and spherically symmetric. Fix $\ell > 0$ and let $f_\ell$ denote the solution of \eqref{eq:scatl}. For $N$ large enough the following properties hold true.
\begin{enumerate}
\item [i)] We have 
\begin{equation}\label{eq:lambdaell} 
  \lambda_\ell = \frac{3\frak{a}_0 }{(\ell N)^3} \left(1 +\mathcal{O} \big(\frak{a}_0  / \ell N\big) \right)
\end{equation}
\item[ii)] We have $0\leq f_\ell, w_\ell\leq1$. Moreover there exists a constant $C > 0$ such that 
\begin{equation} \label{eq:Vfa0} 
\left|  \int  V(x) f_\ell (x) dx - 8\pi \frak{a}_0   \right| \leq \frac{C \frak{a}_0^2}{\ell N} \, 
\end{equation}
for all $\ell \in (0;1/2)$ and $N \in \bN$.
\item[iii)] There exists a constant $C>0 $ such that 
	\begin{equation}\label{3.0.scbounds1} 
	w_\ell(x)\leq \frac{C}{|x|+1} \quad\text{ and }\quad |\nabla w_\ell(x)|\leq \frac{C }{x^2+1}. 
	\end{equation}
for all $x \in \bR^3$, $\ell \in (0;1/2)$ and all $N$ large enough. 
\item[iv)] There exists a constant $C > 0$ such that 
\[ |\widehat{w}_{N,\ell} (p)| \leq \frac{C}{N p^2} \]
for all $p \in \bR^3$, all $\ell \in (0;1/2)$ and all $N$ large enough (such that $N \geq \ell^{-1}$).  
\end{enumerate}        
\end{lemma}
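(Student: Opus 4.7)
The plan is to build all four estimates by comparison with the zero-energy scattering solution $f_0$ on $\bR^3$, defined by $[-\Delta + V/2]f_0 = 0$ with $f_0(x) \to 1$ at infinity and $f_0(x) = 1 - \mathfrak{a}_0/|x|$ outside $\text{supp}\,V$; the defining identity $\int V f_0 = 8\pi\mathfrak{a}_0$ drops out of integrating the equation over a large ball and invoking the divergence theorem. Since the natural length is $N\ell$, every correction between $f_\ell$ and $f_0$ should be of order $\mathfrak{a}_0/(\ell N)$. The bounds $0 \leq f_\ell, w_\ell \leq 1$ in (ii) follow from a truncation argument on the variational characterization of $f_\ell$: since $V \geq 0$, replacing $f_\ell$ by $\min(1,\max(0,f_\ell))$ does not increase the Rayleigh quotient while preserving the boundary normalization $f_\ell = 1$ on $\partial B_{N\ell}$.

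Next, I would integrate \eqref{eq:scatl} over the ball $B_{N\ell}$ and use the Neumann condition $\partial_n f_\ell = 0$: the Laplacian term drops by the divergence theorem, giving the exact identity $\lambda_\ell = \tfrac{1}{2}\int V f_\ell\,/\int f_\ell$. Combining this with a comparison estimate $|f_\ell - f_0| \leq C\mathfrak{a}_0/(\ell N)$ on $B_{N\ell}$ (applied to $g := f_\ell - f_0$, which satisfies $[-\Delta + V/2]g = \lambda_\ell f_\ell$ with small data on $\partial B_{N\ell}$, since both $1 - f_0 \sim \mathfrak{a}_0/(N\ell)$ and $\partial_n f_0 \sim \mathfrak{a}_0/(N\ell)^2$ are small there) yields $\int V f_\ell = 8\pi\mathfrak{a}_0 + O(\mathfrak{a}_0^2/(\ell N))$, which is \eqref{eq:Vfa0}, together with $\int f_\ell = \tfrac{4\pi(\ell N)^3}{3}(1 + O(\mathfrak{a}_0/(\ell N)))$. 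Dividing gives \eqref{eq:lambdaell}. For (iii), I use that $w_\ell = 1 - f_\ell$ solves $-\Delta w_\ell = \tfrac{1}{2}Vf_\ell - \lambda_\ell f_\ell$ on $B_{N\ell}$ with $w_\ell = 0$ on the boundary; representing $w_\ell$ through the Green's function of the Dirichlet Laplacian on $B_{N\ell}$, the short-range piece $Vf_\ell/2$ produces the $\mathfrak{a}_0/(|x|+1)$ tail (using $\int Vf_\ell \leq C\mathfrak{a}_0$ from (ii)) while the bulk term $\lambda_\ell f_\ell$ contributes only $O(\mathfrak{a}_0/(\ell N))$ because $\lambda_\ell \sim (\ell N)^{-3}$. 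Differentiating under the Green's function representation yields $|\nabla w_\ell(x)| \leq C/(x^2+1)$ via standard singular-integral estimates using $V \in L^3$.

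For (iv), extend $w_\ell$ by zero outside $B_{N\ell}$; since $w_\ell$ vanishes on $\partial B_{N\ell}$ the extension is continuous, and the PDE reads $-\Delta w_\ell = \tfrac{1}{2}Vf_\ell - \lambda_\ell f_\ell \chi_\ell - (\partial_n w_\ell)\,\delta_{\partial B_{N\ell}}$ in $\bR^3$. Taking the Fourier transform, $|k|^2\widehat{w}_\ell(k)$ equals an $L^\infty$-bounded function of $k$: the first term is compactly supported and in $L^1$ (since $V \in L^3$ and $0 \leq f_\ell \leq 1$), the second has $L^1$-mass $\lambda_\ell\|f_\ell\chi_\ell\|_1 = O(\mathfrak{a}_0)$, and the boundary surface term has $L^1$-mass $O(\mathfrak{a}_0)$ using $|\partial_n w_\ell| \leq C/(\ell N)^2$ from (iii). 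Hence $|\widehat{w}_\ell(k)| \leq C/|k|^2$, and rescaling $\widehat{w}_{N,\ell}(p) = N^{-3}\widehat{w}_\ell(p/N)$ gives the claim. The main obstacle I expect is the quantitative comparison $|f_\ell - f_0| \leq C\mathfrak{a}_0/(\ell N)$ on $B_{N\ell}$: the eigenvalue $\lambda_\ell$ appearing in the equation for $g$ depends in turn on $\int V f_\ell$ and $\int f_\ell$, so the constants must be tracked carefully to avoid circularity. The cleanest route is to set up an integral equation for $g$ (or for $w_\ell$ directly) via the Green's function of $-\Delta$ on the ball whose $\lambda_\ell$-dependence can be absorbed into a bootstrap that is uniform in all $N$ large.
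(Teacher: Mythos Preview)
Your outline is viable but follows a genuinely different route from the paper. The paper exploits radial symmetry more directly: setting $m_\ell(r) = r f_\ell(r)$ reduces \eqref{eq:scatl} to the one-dimensional equation $-m_\ell'' + \tfrac{1}{2}V m_\ell = \lambda_\ell m_\ell$, which outside $\text{supp}\,V$ has the explicit solution
\[
m_\ell(r) = \lambda_\ell^{-1/2}\sin\big(\lambda_\ell^{1/2}(r-N\ell)\big) + N\ell\cos\big(\lambda_\ell^{1/2}(r-N\ell)\big)
\]
(matching the Neumann data at $r=N\ell$). Feeding in the asymptotics for $\lambda_\ell$ taken from \cite{ESY0} gives the expansion $m_\ell(r) = r - \frak{a}_0 + \tfrac{3}{2}\tfrac{\frak{a}_0}{N\ell}r - \tfrac{1}{2}\tfrac{\frak{a}_0}{(N\ell)^3}r^3 + O(\frak{a}_0^2/(N\ell))$, from which \eqref{eq:Vfa0}, the pointwise bounds \eqref{3.0.scbounds1} outside $\text{supp}\,V$, and then part iv) via the Fourier-transformed equation \eqref{eq:wellp}, all follow by direct computation; inside $\text{supp}\,V$ the gradient bound comes from integrating the ODE and invoking $V\in L^3$. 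This avoids entirely the bootstrap you flag as the main obstacle.

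Your Green's-function/PDE approach trades that explicit algebra for an integral equation and a fixed-point argument; it should work, but two points deserve care. First, the Neumann condition forces $\partial_n w_\ell = 0$ on $\partial B_{N\ell}$ in addition to $w_\ell = 0$, so the extension of $w_\ell$ by zero is actually $C^1$ and the surface term in your treatment of iv) vanishes identically --- your argument then collapses to exactly the paper's use of \eqref{eq:wellp}. Second, the truncation argument you sketch for $0\leq f_\ell\leq 1$ is not correct as written: the boundary normalization $f_\ell=1$ is imposed \emph{after} minimizing the Rayleigh quotient, not as a variational constraint, and in any case truncation shrinks both numerator and denominator of the quotient. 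The paper simply imports this bound (and part i)) from \cite[Lemma~A.1]{ESY0}, where it is proved by comparison arguments; you would need to do the same or supply a maximum-principle argument.
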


We define $\eta: \L^* \to \bR$ through
\[ 
\eta_p = -N \widehat{w}_{N,\ell} (p) = - \frac 1 {N^2} \widehat{w}_\ell(p/N)
\]
With Lemma \ref{3.0.sceqlemma}, we can bound
\be \label{eq:modetap}
|\eta_p| \leq \frac{C}{|p|^2}
\ee
for all $p \in \L_+^*=2\pi \bZ^3 \backslash \{0\}$, and  for some constant $C>0$ independent of $N$ and $\ell\in (0;\frac12)$, if $N$ is large enough. {F}rom (\ref{eq:wellp}), we also find the relation  
\begin{equation}\label{eq:eta-scat0}
\begin{split} 
p^2 \eta_p + \frac{1}{2} (\widehat{V} (./N) *\widehat{f}_{N,\ell}) (p) = N^3 \l_\ell (\widehat{\chi}_\ell * \widehat{f}_{N,\ell}) (p)
\end{split} \end{equation}
or equivalently, expressing the r.h.s. through the coefficients $\eta_p$, 
\begin{equation}\label{eq:eta-scat}
\begin{split} 
p^2 \eta_p + \frac{1}{2} \widehat{V} (p/N) & + \frac{1}{2N} \sum_{q \in \Lambda^*} \widehat{V} ((p-q)/N) \eta_q \\ &\hspace{2cm} = N^3 \lambda_\ell \widehat{\chi}_\ell (p) + N^2 \lambda_\ell \sum_{q \in \Lambda^*} \widehat{\chi}_\ell (p-q) \eta_q
\end{split} \end{equation}
Moreover, with (\ref{3.0.scbounds1}), we find 
\begin{equation}\label{eq:L2eta} \| \eta \|^2 = \| \check{\eta} \|^2 = \int_{|x| \leq \ell}  N^2 |w (N x)|^2 dx \leq C \int_{|x| \leq \ell} \frac{1}{|x|^2} dx \leq C \ell \end{equation}
In particular, we can make $\| \eta \|$ arbitrarily small, choosing $\ell$ small enough. 


For $\alpha > 0$, we now define the momentum set 
\begin{equation}\label{eq:PellHPellL1} P_{H}= \{p\in \Lambda_+^*: |p|\geq \ell^{-\alpha}\}, 		\end{equation}
depending on the parameter $\ell > 0$ introduced in (\ref{eq:scatl})\footnote{At the end, we will need the high-momentum cutoff $\ell^{-\a}$ to be sufficiently large. To reach this goal, we will choose $\ell$ sufficiently small. Alternatively, we could decouple the cutoff from the radius $\ell$ introduced in (\ref{eq:scatl}), keeping $\ell \in (0;1/2)$ fixed and choosing instead the exponent $\a$ sufficiently large.}. We set 
\be \label{eq:defetaH}
\eta_H (p)=\eta_p\, \chi(p \in P_H) = \eta_p \chi (|p| \geq \ell^{-\alpha}) \,.
\ee
Eq. \eqref{eq:modetap} implies that 
\begin{equation}\label{eq:etaHL2}
\| \eta_H \| \leq C \ell^{\a/2}
\end{equation}
For $\a > 1$, the last bound improves (\ref{eq:L2eta}). As we will see later, this improvement, obtained through the introduction of a momentum cutoff, will play an important role in our analysis. Notice, on the other hand, that the $H^1$-norms of $\eta$ and $\eta_{H}$ diverge, as $N \to \infty$. From Lemma \ref{3.0.sceqlemma}, part iii), we find 
\be \label{eq:H1eta}
\sum_{p \in P_H} p^2 |\eta_p|^2  \leq \sum_{p \in \L_+^*} p^2 |\eta_p|^2 \leq C N
\ee
for all $\ell \in (0;1/2)$ and $N \in \bN$ large enough. We will mostly use the coefficients $\eta_p$ with $p\neq 0$. Sometimes, however, it will be useful to have an estimate on $\eta_0$ (because Eq. \eqref{eq:eta-scat} involves $\eta_0$). From Lemma \ref{3.0.sceqlemma}, part iii) we find
\begin{equation}\label{eq:wteta0}  |\eta_0| \leq N^{-2} \int_{\bR^3} w_\ell (x) dx \leq C \ell^2 \end{equation}

It will also be useful to have bounds for the function $\check{\eta}_H : \L \to \bR$, having Fourier coefficients $\eta_H (p)$ as defined in (\ref{eq:defetaH}). Writing $\eta_H (p) = \eta_p - \eta_p \chi (|p| \leq \ell^{-\a})$, we obtain 
\[ \check{\eta}_H (x) = \check{\eta} (x) - \sum_{\substack{p \in \L^* :\\  |p| \leq \ell^{-\a}}} \eta_p e^{i p \cdot x} = -N w_\ell (Nx) - \sum_{\substack{p \in \L^* :\\  |p| \leq \ell^{-\a}}} \eta_p e^{i p \cdot x} \]
We obtain 
\begin{equation}\label{eq:etax}
|\check{\eta}_H (x)| \leq C N + \sum_{\substack{p \in \L^* :\\  |p| \leq \ell^{-\a}}} |p|^{-2} \leq C (N + \ell^{-\a}) \leq C N
\end{equation}
for all $x \in \L$, if $N \in \bN$ is large enough. 

With the coefficients (\ref{eq:defetaH}), we construct the generalized Bogoliubov transformation $e^{B(\eta_H)} : \cF_+^{\leq N} \to \cF^{\leq N}_+$, defined as in (\ref{eq:eBeta}). Furthermore, we define a new, renormalized, excitation Hamiltonian $\cG_{N,\ell} : \cF^{\leq N}_+ \to \cF^{\leq N}_+$ by setting 
\begin{equation}\label{eq:GN} 
\cG_{N,\ell} = e^{-B(\eta_H)} \cL_N e^{B(\eta_H)} = e^{-B(\eta_H)} U_N H_N U_N^* e^{B(\eta_H)} 
\end{equation}

In the next proposition, we collect some important properties of the renormalized excitation Hamiltonian $\cG_{N,\ell}$. In the following, we will use the notation 
\begin{equation}\label{eq:KcVN}  
\cK = \sum_{p \in \Lambda^*_+} p^2 a_p^* a_p \qquad \text{and } \quad \cV_N = \frac{1}{2N} \sum_{\substack{p,q \in \Lambda_+^*, r \in \Lambda^* : \\ r \not = -p, -q}} \widehat{V} (r/N)  a_{p+r}^* a_q^* a_{q+r} a_p \end{equation}
for the kinetic and potential energy operators, restricted on $\cF_+^{\leq N}$. We will also write $\cH_N = \cK + \cV_N$. 

\begin{prop} \label{prop:GNell} Let $V\in L^3(\bR^3)$ be compactly supported, pointwise non-negative and spherically symmetric. Then 
\be \label{eq:GNell-prel}
\cG_{N,\ell}  = 4 \pi \frak{a}_0 N + \cH_N + \theta_{\cG_{N,\ell}}
\ee
where for every $\delta > 0$ there exists a constant $C > 0$ such that  
\begin{equation}\label{eq:Gbd0}
\pm \theta_{\cG_{N,\ell}} \leq \delta \cH_N + C \ell^{-\a} (\cN_+ + 1)
\end{equation}
and the improved lower bound 
\begin{equation}\label{eq:theta-err}
\theta_{\cG_{N,\ell}} \geq  - \delta \cH_N - C \cN_+ - C \ell^{-\a}  
\end{equation}
hold true for all $\a >3$, $\ell \in (0;1/2)$ small enough, $N \in \bN$ large enough.  

Furthermore, let 
\begin{equation}\begin{split}
		\label{eq:GNeff} \cG^{\text{eff}}_{N,\ell}  :=&\;4\pi \frak{a}_0 (N-\cN_+) + \big[\widehat V(0)-4\pi \frak{a}_0\big]\cN_+\frac{(N-\cN_+)}{N}\\
		&   + \widehat V(0)\sum_{p\in P_H^c}   a^*_pa_p (1-N/\cN_+) + 4\pi \frak{a}_0\sum_{p\in P_H^c}  \big[ b^*_p b^*_{-p} + b_p b_{-p} \big] \\
		& +  \frac{1}{\sqrt N}\sum_{p,q\in\Lambda_+^*: p+q\neq 0} \widehat V(p/N)\big[ b^*_{p+q}a^*_{-p}a_q+ \emph{h.c.}\big]+\cH_N
		\end{split}\end{equation} 
Then there exists a constant $C > 0$ such that $\cE_{\cG_{N,\ell}} = \cG_{N,\ell} - \cG^\text{eff}_{N,\ell}$ is bounded by 
\begin{equation}\label{eq:GeffE} \pm \cE_{\cG_{N,\ell}} \leq  C \ell^{(\a-3)/2} \cH_N + C \ell^{-\a} \end{equation}
for all $\a > 3$, $\ell \in (0;1/2)$ small enough, and $N$ large enough. 

Finally, there exists a constant $C > 0$ such that 
\begin{equation} \begin{split}\label{eq:errComm}
\pm \left[ f (\cN_+/M), \left[ f (\cN_+ /M) , \theta_{\cG_{N,\ell}} \right] \right] & \leq C \ell^{-\a/2} M^{-2} \|f'\|^2_{\infty}\,  \big( \cH_N + 1 \big) \\
\pm \left[ f (\cN_+/M), \left[ f (\cN_+/M) , \cE_{\cG_{N,\ell}} \right] \right] &\leq C \ell^{(\a-3)/2} M^{-2} \|f' \|^2_{\infty}\, \big( \cH_N +1 \big)
\end{split}\end{equation}
for all $\a > 3$, $\ell \in (0;1/2)$ small enough, $f: \bR \to \bR$ smooth and bounded, $M \in \bN$ and $N\in \bN$ large enough.
\end{prop}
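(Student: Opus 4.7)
The plan is to decompose $\cL_N = \cL_N^{(0)} + \cL_N^{(2)} + \cL_N^{(3)} + \cL_N^{(4)}$ as in (\ref{eq:cLN})--(\ref{eq:cLNj}) and to conjugate each of the four pieces by $e^{B(\eta_H)}$, collecting the explicit contributions and tracking the errors via the expansion machinery from Section~\ref{sec:bt}. For the quadratic terms I would insert the identity (\ref{eq:ebe}), $e^{-B(\eta_H)} b_p e^{B(\eta_H)} = \gamma_p b_p + \sigma_p b_{-p}^* + d_p$, with $\gamma_p = \cosh \eta_H(p) = 1 + O(\eta_H(p)^2)$ and $\sigma_p = \sinh \eta_H(p) = \eta_H(p) + O(\eta_H(p)^3)$; the remainder operators $d_p, d_p^*$ are controlled by Lemma~\ref{lm:dp}. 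Because $\| \eta_H \| \leq C \ell^{\alpha/2}$, $|\eta_p| \leq C p^{-2}$, and $\sum_{p} p^2 |\eta_p|^2 \leq C N$, the higher-order Taylor remainders generate error terms of order $C \ell^{(\alpha-3)/2} \cH_N + C \ell^{-\alpha}$.

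The key step is the cancellation producing $4 \pi \frak{a}_0 N$. Three sources contribute to the relevant constant: (a) the term $\tfrac{N-1}{2}\widehat V(0)$ from $\cL_N^{(0)}$; (b) the conjugation of the off-diagonal quadratic piece $\tfrac{1}{2} \sum \widehat V(p/N)[b_p^* b_{-p}^* + \text{h.c.}]$, which picks up a contraction $\sim \sum \widehat V(p/N)\eta_H(p)$; and (c) the commutator of $B(\eta_H)$ with $\cL_N^{(4)}$, which generates $\tfrac{1}{2N} \sum \widehat V((p-q)/N) \eta_H(p) \eta_H(q)$. Summing these and invoking the scattering equation (\ref{eq:eta-scat}) together with Lemma~\ref{3.0.sceqlemma} (notably $N^3 \lambda_\ell \to 3\frak{a}_0 \ell^{-3}$ and $\int V f_\ell = 8\pi \frak{a}_0 + O((\ell N)^{-1})$) forces the surviving constant to equal $4 \pi \frak{a}_0 N$ modulo $O(1)$ and kinetic remainders. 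For the off-diagonal $b^* b^*$ terms the scattering equation cancels $\widehat V(p/N) + 2 p^2 \eta_p$ up to errors controllable by $C \ell^{(\alpha-3)/2} \cH_N$ when $p \in P_H$, leaving only the $p \in P_H^c$ pieces explicit in $\cG^{\text{eff}}_{N,\ell}$; the cubic $\cL_N^{(3)}$ is preserved up to similar errors.

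The bounds (\ref{eq:Gbd0})--(\ref{eq:GeffE}) would then follow by systematically estimating each term produced by the commutator expansion using Lemma~\ref{lm:indu}, the $d_p$-estimates from Lemma~\ref{lm:dp}, and Cauchy--Schwarz against $\cH_N$; the factor $\ell^{-\alpha}$ in (\ref{eq:Gbd0}) arises because summing $|p|^{-2}$ over $p \in P_H^c$ involves at most $O(\ell^{-3\alpha})$ modes. The improved lower bound (\ref{eq:theta-err}) exploits the pointwise non-negativity of $V$, bounding cubic residues by $\cV_N \geq 0$ rather than by the full $\cH_N$. For the double commutator bounds (\ref{eq:errComm}) I would apply Corollaries~\ref{lm:ff} and~\ref{cor:ffd} mechanically: each commutator $[f_M, \cdot]$ contributes a factor $M^{-1}\|f'\|_\infty$ by the mean value theorem while preserving the overall operator structure, so the same estimates already used for $\theta_{\cG_{N,\ell}}$ yield an extra $M^{-2}\|f'\|_\infty^2$ prefactor.

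The hard part will be the bookkeeping: tracking the many small error terms coming from the remainder operators $d_p$ and from higher-order iterations of the commutator expansion, while preserving enough powers of $\ell$ to close each estimate, and carefully separating the contributions from $p \in P_H$ and $p \in P_H^c$ so that the residual $b_p^* b_{-p}^*$ terms end up inside the explicit $\cG^{\text{eff}}_{N,\ell}$ rather than in $\cE_{\cG_{N,\ell}}$.
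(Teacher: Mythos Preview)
Your overall strategy matches the paper's: decompose $\cL_N$ into four pieces, conjugate each by $e^{B(\eta_H)}$ using the expansion (\ref{eq:ebe}) and Lemma~\ref{lm:dp}, then use the scattering equation (\ref{eq:eta-scat}) to produce the constant $4\pi\frak{a}_0 N$ and to cancel the high-momentum off-diagonal quadratic terms. Your description of the commutator bounds (\ref{eq:errComm}) via Corollaries~\ref{lm:ff} and~\ref{cor:ffd} is also correct.

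There is, however, a genuine misidentification in your mechanism for the improved lower bound (\ref{eq:theta-err}). You write that it ``exploits the pointwise non-negativity of $V$, bounding cubic residues by $\cV_N \geq 0$ rather than by the full $\cH_N$.'' This is not how the improvement works. The cubic term is handled identically for both (\ref{eq:Gbd0}) and (\ref{eq:theta-err}): switching to position space and applying Cauchy--Schwarz gives $|\langle\xi,\text{cubic}\,\xi\rangle| \leq C\|(\cN_++1)^{1/2}\xi\|\|\cV_N^{1/2}\xi\|$, which contributes $\delta\cV_N + C\delta^{-1}(\cN_++1)$ either way. The actual improvement comes from the \emph{off-diagonal quadratic} term $4\pi\frak{a}_0\sum_{p\in P_H^c}[b_p^*b_{-p}^*+b_pb_{-p}]$. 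For the two-sided bound (\ref{eq:Gbd0}) one uses Cauchy--Schwarz, picking up $\sum_{p\in P_H^c}|p|^{-2}\leq C\ell^{-\alpha}$ and hence a factor $\ell^{-\alpha}$ in front of $\cN_+$. For the one-sided lower bound (\ref{eq:theta-err}) the paper instead completes the square,
\[
0 \leq \sum_{p\in P_H^c}\Big(\sqrt{\delta}\,|p|\,b_p^* + \tfrac{4\pi\frak{a}_0}{\sqrt{\delta}\,|p|}\,b_{-p}\Big)\Big(\sqrt{\delta}\,|p|\,b_p + \tfrac{4\pi\frak{a}_0}{\sqrt{\delta}\,|p|}\,b_{-p}^*\Big),
\]
which after commuting $b_{-p}b_{-p}^*$ yields $4\pi\frak{a}_0\sum_{p\in P_H^c}(b_pb_{-p}+\text{h.c.}) \geq -\delta\cK - C\delta^{-1}\cN_+ - C\delta^{-1}\ell^{-\alpha}$. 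This separates the $\ell^{-\alpha}$ from the $\cN_+$ coefficient, which is precisely the content of (\ref{eq:theta-err}) versus (\ref{eq:Gbd0}). Without this completion-of-squares argument you cannot decouple the two, and your proposed mechanism via $\cV_N\geq 0$ does not achieve it.

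A minor omission: in listing the contributions to the constant you should also include the kinetic piece. Conjugation of $\cK$ produces $\sum_{p\in P_H} p^2\eta_p^2$ as a constant (and $\sum_{p\in P_H} p^2\eta_p(b_pb_{-p}+\text{h.c.})$ off-diagonally); it is the combination $p^2\eta_p + \tfrac12\widehat V(p/N) + \tfrac{1}{2N}(\widehat V(\cdot/N)\ast\eta_H)(p)$ appearing in front of $\eta_p$ that the scattering equation reduces to $\tfrac12\widehat V(p/N)$ plus small errors, and then (\ref{eq:Vfa0}) converts $\tfrac12\sum_p\widehat V(p/N)\eta_p$ into $4\pi\frak{a}_0 - \tfrac12\widehat V(0)$ up to $O(\ell^{-\alpha})$.
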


The proof of Prop. \ref{prop:GNell} is technical and quite long; it is deferred to Section \ref{sec:GN} below. 
Eq. \eqref{eq:errComm} allows us to prove a localization estimate for $\cG_{N,\ell}$. 
\begin{prop} \label{prop:GNell-loc}
Let $f,g : \bR \to [0;1]$  be smooth, with $f^2 (x) + g^2 (x) =1$ for all $x \in \bR$. For $M \in \bN$, let $f_M := f(\cN_+/M)$ and $g_M:= g(\cN_+/M)$. There exists $C > 0$ such that  
\[
\cG_{N,\ell} =  f_M\, \cG_{N, \ell}\, f_M + g_M\, \cG_{N, \ell}\, g_M + \cE_{M}
\]
with
\begin{equation*}
 \pm \cE_M \leq \frac{C\ell^{-\a/2}}{M^2}\big(\|f'\|^2_{\infty} +\|g'\|^2_{\infty}\big) \big( \cH_N +1 \big)
\end{equation*}
for all $\a > 3$, $\ell \in (0;1/2)$ small enough, $M \in \bN$ and $N \in \bN$ large enough. 
 \end{prop}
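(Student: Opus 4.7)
The plan is to reduce the statement to the commutator estimates already provided in \eqref{eq:errComm} via the standard IMS-type localization identity. Recall that for any self-adjoint operator $A$ and any bounded smooth functions $f, g$ with $f^2 + g^2 = 1$, one has the identity
\begin{equation*}
A = f A f + g A g + \tfrac{1}{2}[f, [f, A]] + \tfrac{1}{2}[g, [g, A]],
\end{equation*}
which follows from a direct manipulation of $f A f = \tfrac{1}{2}(f^2 A + A f^2) - \tfrac{1}{2}[f,[f,A]]$ and the analogous formula for $g$, adding the two, and using $f^2 + g^2 = 1$. Applying this identity to $A = \cG_{N,\ell}$, $f = f_M$, $g = g_M$ gives the decomposition of the proposition with the explicit error term
\begin{equation*}
\cE_M = \tfrac{1}{2}\bigl([f_M, [f_M, \cG_{N,\ell}]] + [g_M, [g_M, \cG_{N,\ell}]]\bigr).
\end{equation*}

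Next, I would use the splitting $\cG_{N,\ell} = 4\pi \frak{a}_0 N + \cH_N + \theta_{\cG_{N,\ell}}$ provided by \eqref{eq:GNell-prel}. The constant $4\pi \frak{a}_0 N$ commutes with everything. More importantly, both operators $\cK$ and $\cV_N$ (and hence $\cH_N = \cK + \cV_N$) preserve the total number of excitations, so they commute with every function of $\cN_+$; in particular $[f_M, \cH_N] = [g_M, \cH_N] = 0$. Consequently, the only contributions to $\cE_M$ come from double commutators with the remainder $\theta_{\cG_{N,\ell}}$:
\begin{equation*}
\cE_M = \tfrac{1}{2}\bigl([f_M, [f_M, \theta_{\cG_{N,\ell}}]] + [g_M, [g_M, \theta_{\cG_{N,\ell}}]]\bigr).
\end{equation*}

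Finally, I would invoke the first commutator bound in \eqref{eq:errComm} from Proposition \ref{prop:GNell}, which yields
\begin{equation*}
\pm [f_M, [f_M, \theta_{\cG_{N,\ell}}]] \leq C \ell^{-\a/2} M^{-2} \|f'\|_\infty^2 (\cH_N + 1),
\end{equation*}
and the analogous estimate for $g_M$. Summing the two immediately produces the claimed bound on $\cE_M$. There is no real obstacle here beyond correctly applying the IMS identity and checking that $\cH_N$ drops out of the commutators; all of the serious analytic work has already been carried out in the proof of \eqref{eq:errComm}, which in turn relies on the term-by-term expansion of $\theta_{\cG_{N,\ell}}$ together with the control on double commutators of $\Pi^{(1)}$-operators and remainder operators $d_p, \check d_x, \ldots$ provided by Corollary \ref{lm:ff} and Corollary \ref{cor:ffd}.
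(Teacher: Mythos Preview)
Your proof is correct and follows essentially the same approach as the paper: apply the IMS localization identity, use the decomposition $\cG_{N,\ell}=4\pi\frak{a}_0 N+\cH_N+\theta_{\cG_{N,\ell}}$ together with the fact that $\cH_N$ commutes with functions of $\cN_+$, and conclude via the first bound in \eqref{eq:errComm}.
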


\begin{proof}
 An explicit computation shows that
 \begin{equation*}
  \begin{split}
  \cG_{N,\ell}=f_M \cG_{N,\ell}f_M +g_M \cG_{N,\ell}g_M+\frac{1}{2}\Big([f_M,[f_M,\cG_{N,\ell}]]+[g_M,[g_M,\cG_{N,\ell}]]\Big)
  \end{split}
 \end{equation*}
Writing as in (\ref{eq:GNell-prel}), $\cG_{N,\ell} = 4\pi \frak{a}_0 N + \cH_N + \theta_{\cG_{N,\ell}}$, noticing that $4\pi \frak{a}_0 N$ and $\cH_N$ commute with $f_M, g_M$, and using the first bound in \eqref{eq:errComm}, we conclude that    
\begin{equation*}
  \begin{split}
\pm\Big([f_M,[f_M,\cG_{N,\ell}]]+[g_M,[g_M,\cG_{N,\ell}]]\Big)\leq \frac{C\ell^{-\a/2}}{M^2} \big(\|f_M'\|^2_{\infty} +\|g_M'\|^2_{\infty}\big) \big( \cH_N +1 \big)
  \end{split}
 \end{equation*}
\end{proof}

\section{Cubic Renormalization}
\label{sec:cubic} 

The quadratic renormalization leading to the excitation Hamiltonian $\cG_{N,\ell}$ is not enough to show Theorem \ref{thm:main}. In (\ref{eq:theta-err}), the error term proportional to the number of particles operator cannot be controlled by the gap in the kinetic energy (in \cite{BBCS1} this was possible, because the constant multiplying $\cN_+$ is small, if the interaction potential is weak). To circumvent this problem, we have to conjugate the main part $\cG_{N,\ell}^\text{eff}$ of $\cG_{N,\ell}$, as defined in (\ref{eq:GNeff}), with an additional unitary operator, given by the exponential of an expression cubic in creation and annihilation operators. 

For a parameter $0< \beta < \alpha$ we define the low-momentum set 
\[  P_{L} = \{p\in \Lambda_+^*: |p| \leq \ell^{-\beta}\} \]
depending again on the parameter $\ell > 0$ introduced in (\ref{eq:scatl})\footnote{At the end, we will need the low-momentum cutoff $\ell^{-\beta}$ to be sufficiently large (preserving however certain relations with the high-momentum cutoff). We will reach this goal by choosing $\ell$ small enough. Alternatively, as already remarked in the footnote after (\ref{eq:PellHPellL1}), also here we could decouple the low-momentum cutoff from the radius $\ell$ introduced in (\ref{eq:scatl}), by keeping $\ell \in (0;1/2)$ fixed and varying instead the exponent $\beta$.}. Notice that the high-momentum set $P_H$ defined in (\ref{eq:PellHPellL1}) and $P_{L}$ are separated by a set of intermediate momenta $\ell^{-\beta} < |p| < \ell^{-\alpha}$. We introduce the operator $A : \cF_+^{\leq N} \to \cF_+^{\leq N}$, by 
\begin{equation}\label{eq:Aell1} A := \frac1{\sqrt N} \sum_{r\in P_{H}, v \in P_{L}} 
\eta_r \big[b^*_{r+v}a^*_{-r}a_v - \text{h.c.}\big]\end{equation}

An important observation for our analysis is the fact that conjugation with $e^{A}$ does not substantially change the number of excitations. 
\begin{prop} \label{prop:AellNgrow}
Suppose that $A$ is defined as in (\ref{eq:Aell1}). For any $k\in \bN$ there exists a constant $C >0$ such that the operator inequality  
\[ e^{-A} (\cN_++1)^k e^{A} \leq C (\cN_+ +1)^k   \]
holds true on $\cF_+^{\leq N}$, for all $\alpha > \b > 0$, $\ell \in (0;1/2)$, and $N$ large enough. 
\end{prop}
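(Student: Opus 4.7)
The plan is a Grönwall argument along the flow $s \mapsto e^{sA}$. Fix $k \in \bN$ (the case $k=0$ being trivial) and $\xi \in \cF_+^{\leq N}$, and set
\[ \varphi_\xi(s) := \langle \xi, e^{-sA}(\cN_+ + 1)^k e^{sA}\xi\rangle, \qquad s \in [0,1]. \]
Then $\varphi_\xi'(s) = \langle e^{sA}\xi, [(\cN_+ + 1)^k, A] e^{sA}\xi\rangle$. Writing $A = A_1 - A_1^*$ with $A_1 := N^{-1/2}\sum_{r \in P_H, v \in P_L}\eta_r\, b^*_{r+v} a^*_{-r} a_v$, one checks that $A_1$ increases $\cN_+$ by exactly one unit, so $\cN_+ A_1 = A_1(\cN_+ + 1)$ and consequently
\[ [(\cN_+ + 1)^k, A_1] = A_1 \bigl[(\cN_+ + 2)^k - (\cN_+ + 1)^k\bigr] =: A_1\, R_k(\cN_+), \]
where $0 \leq R_k(\cN_+) \leq k\, 2^{k-1}(\cN_+ + 1)^{k-1}$ spectrally. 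An analogous identity holds for $A_1^*$.

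The core ingredient is the bilinear estimate, uniform in $N$,
\[ |\langle \xi_1, A_1 \xi_2\rangle| \leq C\, \|(\cN_+ + 1)^{1/2}\xi_1\|\, \|(\cN_+ + 1)^{1/2}\xi_2\|,\qquad \xi_1, \xi_2 \in \cF_+^{\leq N}, \]
proved in complete analogy with the standard bounds on the cubic operator $\cL_N^{(3)}$: starting from $\langle \xi_1, A_1 \xi_2\rangle = N^{-1/2}\sum_{r,v}\eta_r \langle a_{-r} b_{r+v}\xi_1, a_v\xi_2\rangle$, one applies Cauchy-Schwarz first in $r$ (using $\sum_{r \in P_H}|\eta_r|^2 = \|\eta_H\|^2 \leq C$ from (\ref{eq:etaHL2})) and then in $v$, obtaining
\[ |\langle \xi_1, A_1\xi_2\rangle| \leq \frac{\|\eta_H\|}{\sqrt{N}}\, \|\cN_+^{1/2}\xi_2\| \Bigl(\sum_{r,v}\|a_{-r} b_{r+v}\xi_1\|^2\Bigr)^{1/2}. \]
Extending $v$ to all of $\Lambda^*$ and using the operator identities $\sum_r a^*_{-r} a_{-r} = \cN_+$ and $\sum_s b^*_s b_s \leq 2\cN_+$ on $\cF_+^{\leq N}$ (with commutator corrections of size $O(N^{-1})$) bounds the double sum by $C\|(\cN_+ + 1)\xi_1\|^2$. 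Absorbing the remaining $1/\sqrt{N}$ via $\|(\cN_+ + 1)\xi_1\| \leq \sqrt{N+1}\,\|(\cN_+ + 1)^{1/2}\xi_1\|$ (truncated Fock) yields the bilinear bound.

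To close the Grönwall loop, I split the non-negative operator $R_k(\cN_+) = R_k(\cN_+)^{1/2} R_k(\cN_+)^{1/2}$ via the spectral theorem and use the intertwining $A_1\, g(\cN_+) = g(\cN_+ - 1) A_1$ to transfer half the weight across $A_1$; applying the bilinear estimate to $\psi := e^{sA}\xi$ then gives
\[ |\langle\psi, A_1 R_k(\cN_+)\psi\rangle| = |\langle R_k(\cN_+ - 1)^{1/2}\psi, A_1 R_k(\cN_+)^{1/2}\psi\rangle| \leq C_k\, \|(\cN_+ + 1)^{k/2}\psi\|^2 = C_k\, \varphi_\xi(s). \]
Together with the analogous estimate for $A_1^*$, this yields $|\varphi_\xi'(s)| \leq C_k\, \varphi_\xi(s)$, and Grönwall's inequality produces $\varphi_\xi(1) \leq e^{C_k}\varphi_\xi(0)$, which is the claim.

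The principal obstacle is the bilinear bound on $A_1$: because the sum over $r \in P_H$ covers arbitrarily large momenta, one must carefully pair every $r$-summation with an $\ell^2(P_H)$ factor (supplied by $\eta_H$), and then exploit the $N^{-1/2}$ prefactor against the truncated constraint $\cN_+ \leq N$ to obtain a bound independent of $N$. Once that estimate is in hand, the symmetric splitting of $R_k(\cN_+)$ and the shift relation $A_1 f(\cN_+) = f(\cN_+ - 1) A_1$ mechanically reduce the commutator to an operator controlled by $(\cN_+ + 1)^k$, closing the argument.
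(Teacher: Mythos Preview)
Your proof is correct and follows essentially the same route as the paper: a Grönwall argument on $\varphi_\xi(s)$, the commutator identity $[(\cN_++1)^k,A_1]=A_1\,R_k(\cN_+)$ with $R_k(\cN_+)\leq C_k(\cN_++1)^{k-1}$, Cauchy--Schwarz in $r$ against $\|\eta_H\|_{\ell^2}$, and absorption of the leftover half-power of $\cN_+$ via $\cN_+\leq N$ on $\cF_+^{\leq N}$ against the $N^{-1/2}$ prefactor. The only cosmetic difference is that you package the estimate as an abstract bilinear bound $|\langle\xi_1,A_1\xi_2\rangle|\leq C\|(\cN_++1)^{1/2}\xi_1\|\|(\cN_++1)^{1/2}\xi_2\|$ and then invoke the intertwining $A_1\,g(\cN_+)=g(\cN_+-1)\,A_1$ to split $R_k$ symmetrically, whereas the paper carries out the same computation inline with explicit shifted powers $(\cN_++1)^{\pm 1/4+(k-1)/2}$; the underlying mechanism is identical.
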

\begin{proof} Let $\xi\in\cF_+^{\leq N}$ and define $\varphi_{\xi}:\mathbb R\to \mathbb R$ by 
 \[\varphi_{\xi}(s):= \langle \xi, e^{-sA} (\cN_+ + 1)^k e^{sA} \xi \rangle \]
Then we have, using the notation $A _\g = N^{-1/2} \sum_{r \in P_H, v \in P_L} \eta_r b_{r+v}^* a_{-r}^* a_v$,  
\[ \partial_s\varphi_{\xi}(s) = 2 \text{Re } \langle \xi, e^{-sA} \big[(\cN_+ + 1)^k, A_{\g} \big]  e^{sA} \xi \rangle 
         \]
We find 
\[\begin{split}
\langle \xi, e^{-sA} &\big[(\cN_+ + 1)^k, A_{\g} \big]  e^{sA} \xi \rangle \\ &= \frac{1}{\sqrt{N}} \sum_{ r\in P_H, v\in P_L } \eta_r  \langle e^{sA} \xi , b^*_{r+v} a^*_{-r} a_{-v} \big[(\cN_+ + 2)^k - (\cN_+ + 1)^k \big]  e^{sA} \xi  \rangle \end{split}\]
With the mean value theorem, we find a function $\theta:\mathbb N\to (0;1)$ such that 
        \[ (\cN_+ + 2)^k - (\cN_+ + 1)^k= k (\cN_+ +  \theta(\cN_+) +1)^{k-1} \]
Since $b_p \cN_+ =(\cN_+ + 1) b_p$ and $b_p^* \cN_+ = (\cN_+ - 1) b_p^*$, we obtain, using Cauchy-Schwarz and the boundedness of $\theta$,
        \begin{equation*}
        \begin{split} 
        &\Big| \langle \xi, e^{-sA} \big[(\cN_+ + 1)^k, A_{\g} \big]  e^{sA} \xi \rangle  \Big|\\
        &\hspace{0.2cm} \leq   \frac{C}{\sqrt{N}}  \sum_{ r\in P_H, v\in P_L } |\eta_r|  \big\| b_{r+v} a_{-r}  (\cN_+ +1)^{-1/4 +(k-1)/2} e^{sA} \xi \big\| \\ &\hspace{6cm} \times \big\| a_{-v} (\cN_+ +1)^{1/4 +(k-1)/2}  e^{sA} \xi \big\| \\
                &\hspace{.2cm} \leq \frac{C}{\sqrt{N}} \Big[ \sum_{  r\in P_H , v \in P_L } \big\| b_{r+v} a_{-r}  (\cN_+ +1)^{-1/4 +(k-1)/2} e^{sA} \xi \big\|^2 \Big]^{1/2} \\ &\hspace{5cm} \times \Big[ \sum_{  r\in P_H, v 
                \in P_L } |\eta_r|^2  
         \big\| a_{-v} (\cN_+ +1)^{1/4 +(k-1)/2}  e^{sA} \xi \big\|^2 \Big]^{1/2} \\
        &\hspace{0.2cm} \leq \frac{C}{\sqrt{N}} \|\eta_H \| \big\|  (\cN_+ +1)^{3/4 +(k-1)/2} e^{sA} \xi \big\|^2 \\   &\hspace{0.2cm}\leq   \frac{C}{\sqrt{N}} \langle e^{sA} \xi, (\cN_+ +1)^{ k +1/2 } e^{sA} \xi \rangle\\
        &\hspace{0.2cm} \leq C \langle e^{sA} \xi, (\cN_+ +1)^k e^{sA} \rangle 
        \end{split}\end{equation*}
        for a constant $C>0$ depending on $k$, but not on $N$ or $\ell$. This proves that 
        \[ \partial_s\varphi_{\xi}(s) \leq C \varphi_{\xi}(s) \] so that, by Gronwall's lemma, we find a constant $C$ with
        \[\langle \xi, e^{-A} (\cN_+ + 1)^k e^{A}   \xi \rangle = C \langle \xi, (\cN_+ + 1)^k  \xi \rangle \, . \]
\end{proof}

We use now the cubic phase $e^{A}$ to introduce a new excitation Hamiltonian, defining 
		\begin{equation*} 
		\cR_{N,\ell}:= e^{-A} \,\cG^{\text{eff}}_{N,\ell}\,e^{A} \end{equation*}
on a dense subset of $\cF_+^{\leq N}$. The operator $\cG_{N,\ell}^\text{eff}$ is defined as in (\ref{eq:GNeff}). As explained in the introduction, conjugation with $e^{A}$ renormalizes the cubic term on the r.h.s. of (\ref{eq:GNeff}), effectively replacing the singular potential $\widehat{V} (p/N)$ by a potential decaying already on momenta of order one. This allows us to show the following proposition.
\begin{prop}\label{prop:RNell} Let $V\in L^3(\bR^3)$ be compactly supported, pointwise non-negative and spherically symmetric. Then, for all $\a > 3$ and $\a/2 < \b < 2\a/3$, there exists $\kappa > 0$ and a constant $C>0$ such that  
\begin{equation*} 
\begin{split} 
\cR_{N,\ell} \geq &\; 4\pi\mathfrak{a}_0 N +  \big(1 - C\ell^{\kappa}\big) \cH_N   - C\ell^{-3\alpha}\cN_+^2/N   - C\ell^{-3\alpha}
\end{split}
\end{equation*}	
for all $\ell \in (0;1/2)$ small enough and $N$ large enough.  
\end{prop}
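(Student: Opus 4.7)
The plan is to expand $\cR_{N,\ell} = e^{-A}\cG^{\text{eff}}_{N,\ell}e^{A}$ by writing, via the fundamental theorem of calculus,
\[
e^{-A}Xe^{A} = X + \int_0^1 e^{-sA}[X,A]e^{sA}\,ds
\]
for each summand $X$ of $\cG^{\text{eff}}_{N,\ell}$, and iterating once more whenever the first commutator still contains dangerous cubic or quartic pieces. Proposition \ref{prop:AellNgrow} lets me move powers of $\cN_+ + 1$ freely across $e^{\pm sA}$ at the cost of a universal constant, so the real work lies in computing commutators with $A = A_\gamma - A_\gamma^*$, where $A_\gamma = N^{-1/2}\sum_{r \in P_H, v \in P_L}\eta_r b^*_{r+v}a^*_{-r}a_v$, and in controlling them by $\cH_N$, $\cN_+^2/N$ and constants. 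The structural identity driving the whole calculation is the scattering equation \eqref{eq:eta-scat}, which expresses $2p^2\eta_p$ in terms of $-\widehat V(p/N)$ plus an explicit convolution term and small $\lambda_\ell$-contributions.

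The heart of the proof is the near-cancellation of the cubic term of $\cG^{\text{eff}}_{N,\ell}$ against $[\cH_N, A]$. A direct computation gives
\[
[\cK, A_\gamma] = \frac{1}{\sqrt N}\sum_{\substack{r\in P_H\\ v\in P_L}} \eta_r\bigl((r+v)^2+r^2-v^2\bigr)\,b^*_{r+v}a^*_{-r}a_v,
\]
and since $\alpha>\beta$ the bracket equals $2r^2$ plus an $O(|r||v|)$ correction on the relevant support. Substituting $2r^2\eta_r = -\widehat V(r/N) - N^{-1}(\widehat V(\cdot/N)\ast\eta)(r) + N^3\lambda_\ell(\ldots)$ from \eqref{eq:eta-scat}, and combining with the matching commutator of $A_\gamma$ with the quartic term $\cV_N$ (which reproduces exactly the convolution piece), I expect the leading output of $[\cH_N, A_\gamma]$ to equal minus the cubic term of $\cG^{\text{eff}}_{N,\ell}$ restricted to the sector $p\in P_H$, $q\in P_L$. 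Adding adjoints cancels the cubic term of $\cG^{\text{eff}}_{N,\ell}$ up to its restriction to the complement of $P_H\times P_L$; these leftover cubic pieces are then bounded by Cauchy--Schwarz against $\cV_N$ and the kinetic gaps $|v|^2\geq \ell^{-2\beta}$ (for $v\notin P_L$) and $|r|^2\geq \ell^{-2\alpha}$ (for $r\notin P_H$), gaining a factor $\ell^{\kappa}$ for some $\kappa>0$ precisely because $\alpha/2<\beta<2\alpha/3$.

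The quadratic parts of $\cG^{\text{eff}}_{N,\ell}$ are easier to handle. The Bogoliubov piece $4\pi\mathfrak{a}_0\sum_{P_H^c}[b^*_pb^*_{-p}+\textrm{h.c.}]$ and the $\widehat V(0)$-corrections are almost invariant under conjugation: their commutators with $A$ contain either an extra factor $\|\eta_H\|\lesssim \ell^{\alpha/2}$ or an extra factor $N^{-1/2}\|\eta\|$, and the Duhamel remainders, together with the second-order commutators $\int_0^1(1-s)e^{-sA}[[X,A],A]e^{sA}\,ds$, are controlled by repeated use of Proposition \ref{prop:AellNgrow} combined with the estimates $\|\eta\|\lesssim \ell^{1/2}$, $\|\eta_H\|\lesssim \ell^{\alpha/2}$ from \eqref{eq:L2eta}--\eqref{eq:etaHL2} and $\sum p^2|\eta_p|^2\lesssim N$ from \eqref{eq:H1eta}, producing only errors of the stated size $C\ell^{-3\alpha}(\cN_++1)+C\ell^{-3\alpha}\cN_+^2/N$. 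The kinetic--potential operator $\cH_N$ inside $\cG^{\text{eff}}_{N,\ell}$ is preserved up to a loss $C\ell^{\kappa}\cH_N$ absorbed into the coefficient $(1-C\ell^{\kappa})$, and the scalar $4\pi\mathfrak{a}_0(N-\cN_+)$ contributes the advertised leading term $4\pi\mathfrak{a}_0 N$ (the $-4\pi\mathfrak{a}_0\cN_+$ is merged with the $\cN_+$-error). The main obstacle I foresee is the bookkeeping of the many cross terms generated by $[\cV_N + \cL_N^{(3)} + \cL_N^{(2)}, A]$ and the double commutators $[[X,A],A]$: the challenge is to organize them so that, after the scattering-equation cancellation, every surviving contribution is either absorbable into $\cH_N$ with coefficient $o(1)$ or bounded by $\cN_+^2/N$ with a purely $\ell$-dependent prefactor, avoiding any $N$-growing remainder that could spoil the final inequality.
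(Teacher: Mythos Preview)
Your overall architecture—Duhamel expansion, commutator with the scattering equation, renormalization of the cubic term—matches the paper's Section~\ref{sec:RN}. But there is a genuine gap at the step where you dispose of the ``leftover cubic pieces.''

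You write that the leftover cubic (the part of $\cC_N$ on the complement of $P_H\times P_L$) is bounded using ``kinetic gaps $|v|^2\geq \ell^{-2\beta}$ (for $v\notin P_L$) and $|r|^2\geq \ell^{-2\alpha}$ (for $r\notin P_H$).'' The first gap is real and is used in the paper (see \eqref{eq:proofpropRnell5}). The second is wrong: for $r\notin P_H$ one has $|r|<\ell^{-\alpha}$, not $\geq$, so there is no gap whatsoever in the low-momentum sector $p\in P_H^c$, $q\in P_L$. On that sector the coefficient $\widehat V(p/N)\approx\widehat V(0)$ is order one, and the resulting cubic term is \emph{not} small in any sense. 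Your scheme therefore cannot close.

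What actually happens in the paper is that the cubic term is not eliminated but \emph{renormalized}: after combining $[\cK,A]$, $[\cV_N,A]$ and $[\cC_N,A]$, the surviving low-momentum cubic carries the constant coefficient $8\pi\frak{a}_0$ instead of $\widehat V(p/N)$ (Prop.~\ref{prop:RNell2K}, Prop.~\ref{prop:commAcCN}). This renormalized cubic, together with the quadratic pieces of $\cQ_{N,\ell}$ and the constant/linear pieces of $\cD_N$—none of which are individually small—are then recognized, in \eqref{eq:proofpropRnell-l}--\eqref{eq:RNfina}, as the excitation Hamiltonian $U_N\bigl[N^{-1}\sum_{i<j}\nu_\ell(x_i-x_j)\bigr]U_N^*$ of a bounded two-body potential with $\widehat\nu_\ell(p)=8\pi\frak{a}_0\,\chi(|p|\leq\ell^{-\alpha})\geq 0$, up to a quartic remainder $\leq C\ell^{-3\alpha}\cN_+^2/N$. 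The lower bound $4\pi\frak{a}_0 N - C\ell^{-3\alpha}$ then comes from the elementary positivity argument $\int\nu_\ell(x-y)[\sum_j\delta(x-x_j)-N][\sum_i\delta(y-x_i)-N]\,dxdy\geq 0$. This positivity step is the missing idea in your proposal; without it the leading constant $4\pi\frak{a}_0 N$ cannot be extracted.

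A secondary point: to control the Duhamel remainders you will need more than Prop.~\ref{prop:AellNgrow}. The paper establishes a priori growth bounds for $\cH_N$, the low-momentum kinetic energy $\cK_L$, $\cV_N$ and $\cK$ under $e^{sA}$ (Lemma~\ref{lm:cHNpropagation} through Cor.~\ref{cor:cKimproved}) via a chain of Gronwall arguments; these are what make the error estimates close with the right powers of $\ell$.
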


The proof of Proposition \ref{prop:RNell} will be given in Section \ref{sec:RN}. In the next section, we show how Prop. \ref{prop:RNell}, together with Prop. \ref{prop:GNell} and Prop. \ref{prop:GNell-loc}, implies Theorem \ref{thm:main}.

\section{Proof of Theorem \ref{thm:main}}
\label{sec:main}

The next proposition combines the results of Prop. \ref{prop:GNell}, Prop. \ref{prop:GNell-loc} and of Prop. \ref{prop:RNell}.
\begin{prop}\label{prop:fin}
Let $V\in L^3(\bR^3)$ be compactly supported, pointwise non-negative and spherically symmetric. Let $\cG_{N,\ell}$ be the renormalized excitation Hamiltonian defined as in (\ref{eq:GN}). Then, for every $\a > 3$, $\ell \in (0;1/2)$ small enough, there exist constants $C,c > 0$ such that 
\begin{equation}\label{eq:cGN-fin} \cG_{N,\ell} -4\pi \frak{a}_0 N \geq c \cN_+ - C \end{equation}
for all $N \in \bN$ sufficiently large.   
\end{prop}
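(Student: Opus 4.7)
The plan is to execute the strategy sketched in the introduction: localize in the number of excitations at scale $M = cN$, bound the low-excitation sector via the cubic renormalization of Proposition~\ref{prop:RNell}, and bound the high-excitation sector by invoking \eqref{eq:BEC1} from \cite{LS2, NRS} in contrapositive form. I would fix $\a > 3$ and $\b \in (\a/2, 2\a/3)$ so that Propositions~\ref{prop:GNell},~\ref{prop:GNell-loc}, and~\ref{prop:RNell} all apply, pick $\ell \in (0;1/2)$ sufficiently small, and take a smooth partition $f^2 + g^2 = 1$ with $f(x) = 1$ for $x \leq 1/2$ and $f(x) = 0$ for $x \geq 1$. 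Setting $M = cN$ with $c > 0$ small (to be chosen), Proposition~\ref{prop:GNell-loc} gives
\[ \cG_{N,\ell} = f_M \cG_{N,\ell} f_M + g_M \cG_{N,\ell} g_M + \cE_M, \qquad \pm \cE_M \leq \frac{C\ell^{-\a/2}}{c^2 N^2}(\cH_N + 1). \]
Combining \eqref{eq:GNell-prel} with \eqref{eq:theta-err} at $\delta = 1/2$ yields the crude a priori bound $\cH_N \leq C(\cG_{N,\ell} - 4\pi\frak{a}_0 N + \cN_+ + 1)$, which shows that $\cE_M$ is effectively of order $1/N$ and can be absorbed at the end.

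For the low-excitation contribution I would pull back through the cubic transformation. Writing $\cG_{N,\ell}^{\text{eff}} = e^A \cR_{N,\ell} e^{-A}$ and $\tilde\xi = e^{-A} f_M\xi$, Proposition~\ref{prop:RNell} gives
\[ \langle \tilde\xi, \cR_{N,\ell}\tilde\xi\rangle \geq 4\pi\frak{a}_0 N\|\tilde\xi\|^2 + (1-C\ell^\kappa)\langle\tilde\xi, \cH_N\tilde\xi\rangle - \frac{C}{\ell^{3\a}}\frac{\langle\tilde\xi, \cN_+^2\tilde\xi\rangle}{N} - \frac{C}{\ell^{3\a}}\|\tilde\xi\|^2. \]
By Proposition~\ref{prop:AellNgrow} (applied both to $A$ and to $-A$) one has $\|\tilde\xi\| = \|f_M\xi\|$ together with the two-sided comparison $\langle \tilde\xi, (\cN_++1)^k \tilde\xi\rangle \sim \langle f_M\xi, (\cN_++1)^k f_M\xi\rangle$ for $k = 1, 2$. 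Since $(\cN_++1)^2 \leq (M+1)(\cN_++1)$ on the range of $f_M$, the dangerous quadratic term is bounded by $C \ell^{-3\a}\, c\, \langle f_M \xi, (\cN_++1) f_M\xi\rangle$, and the choice $c = c_0 \ell^{3\a}$ with $c_0$ small renders the prefactor $Cc_0$ arbitrarily small. This makes the dangerous term absorbable into the kinetic lower bound $(1-C\ell^\kappa)\cH_N \geq (2\pi)^2(1-C\ell^\kappa)\cN_+$, still leaving a positive multiple of $\cN_+$. The error $\cE_{\cG_{N,\ell}}$ from Proposition~\ref{prop:GNell} is handled by pulling it back through $e^{-A}$ as well, i.e.\ working with $\cR_{N,\ell}^{\text{full}} := e^{-A}\cG_{N,\ell} e^A = \cR_{N,\ell} + e^{-A}\cE_{\cG_{N,\ell}} e^A$, and by controlling $e^{-A}\cH_N e^A \leq C(\cH_N + \cN_+ + 1)$ through a Gr\"onwall argument on the commutators $[A, \cK]$ and $[A, \cV_N]$ parallel to the proof of Proposition~\ref{prop:AellNgrow}; for $\a > 3$ and $\ell$ small, the extra $C\ell^{(\a-3)/2}\cH_N$ from $\cE_{\cG_{N,\ell}}$ is negligible compared to $(1-C\ell^\kappa)\cH_N$. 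This produces
\[ \langle f_M\xi, \cG_{N,\ell} f_M\xi\rangle \geq 4\pi\frak{a}_0 N\|f_M\xi\|^2 + c_1\langle f_M\xi, \cN_+ f_M\xi\rangle - C_1\|f_M\xi\|^2 \]
for some $c_1, C_1 > 0$ depending on $\ell$ but not on $N$.

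For the high-excitation contribution I would invoke \eqref{eq:BEC1} in quantitative form. A standard compactness argument turns \eqref{eq:BEC1} into the statement that for every $\kappa > 0$ there exist $\eps(\kappa) > 0$ and $N_0 \in \bN$ such that every unit vector $\psi_N \in L^2_s(\Lambda^N)$ with $1 - \langle \ph_0, \gamma_N^{(1)}\ph_0\rangle \geq \kappa$ satisfies $\langle \psi_N, H_N \psi_N\rangle \geq (4\pi\frak{a}_0 + \eps(\kappa))N$ for all $N \geq N_0$; indeed, a violating subsequence, together with the upper bound $E_N/N \to 4\pi\frak{a}_0$ from \eqref{eq:en-ti}, would form a sequence of approximate ground states without condensation, contradicting \eqref{eq:BEC1}. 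Translating via $\cL_N = U_N H_N U_N^*$ and $\cG_{N,\ell} = e^{-B(\eta_H)}\cL_N e^{B(\eta_H)}$, and using Lemma~\ref{lm:Ngrow} (with $\|\eta_H\|$ bounded by \eqref{eq:etaHL2}) to see that $e^{B(\eta_H)}$ preserves the ratio $\langle\cN_+\rangle/N$ up to a bounded factor, and then applying the estimate to the normalized vector $g_M\xi/\|g_M\xi\|$, which by construction of $g_M$ satisfies $\langle\cN_+\rangle \geq M/2 = cN/2$, produces $\eps' > 0$ with
\[ \langle g_M\xi, \cG_{N,\ell} g_M\xi\rangle \geq (4\pi\frak{a}_0 + \eps')N\|g_M\xi\|^2 \geq 4\pi\frak{a}_0 N\|g_M\xi\|^2 + \eps'\langle g_M\xi, \cN_+ g_M\xi\rangle, \]
the last step using $\cN_+ \leq N$ on $\cF_+^{\leq N}$.

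Summing the two sector estimates, absorbing the $O(1/N)$ term $\cE_M$ via the a priori bound on $\cH_N$ from the first paragraph, and setting $c := \min(c_1, \eps')/2$ yields \eqref{eq:cGN-fin}. The main obstacle, in my view, is the bookkeeping in the low-excitation step: the repeated passage through the cubic unitary $e^A$, and in particular the control of $e^{-A}\cH_N e^A$ required to absorb the $\cE_{\cG_{N,\ell}}$ error into the kinetic term coming from $\cR_{N,\ell}$, is where the non-commutation of $e^A$ with the kinetic and potential parts of $\cH_N$ demands an extension of the argument behind Proposition~\ref{prop:AellNgrow}. The high-excitation step, by contrast, is a soft argument resting entirely on the qualitative BEC \eqref{eq:BEC1} from \cite{LS2, NRS}.
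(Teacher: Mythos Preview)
Your overall architecture is right and matches the paper: localize at scale $M \sim \ell^{3\a} N$, treat the low-excitation sector via $\cR_{N,\ell}$, and treat the high-excitation sector by a contradiction argument based on \eqref{eq:BEC1}. The high-excitation step and the final absorption of $\cE_M$ are carried out essentially as you describe.

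The gap is in your handling of the error $\cE_{\cG_{N,\ell}}$ in the low-excitation sector. You propose to pull it back through $e^{-A}$ and then invoke a bound of the form $e^{-A}\cH_N e^A \leq C(\cH_N + \cN_+ + 1)$ with $C$ independent of $\ell$. The paper's own a-priori analysis of the growth of $\cH_N$ under cubic conjugation (Lemma~\ref{lm:cHNpropagation} and Corollary~\ref{cor:cKimproved}) shows only $e^{-A}\cK e^{A} \leq C\ell^{-(\a+\b)/2}(\cH_N+1)$; the kinetic energy genuinely picks up a large $\ell$-dependent factor because $[\cK,A]$ produces the cubic term $\text{T}_{12}$ (see \eqref{eq:commAcKterm1}) whose bound involves $\sum_{r\in P_H^c}|r|^{-2}\sim\ell^{-\a}$. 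With this, $C\ell^{(\a-3)/2}e^{-A}\cH_N e^A$ carries a prefactor $\ell^{-(3+\b)/2}$ in front of $\cH_N$, which cannot be absorbed into $(1-C\ell^\kappa)\cH_N$ for small $\ell$. So the ``extension of Proposition~\ref{prop:AellNgrow}'' you anticipate is not available with constants uniform in $\ell$.

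The paper circumvents this by a self-bounding trick that avoids conjugating $\cH_N$ through $e^A$ altogether. From \eqref{eq:GNell-prel} and \eqref{eq:theta-err} with $\delta=1/2$ one has $\cH_N \leq 2(\cG_{N,\ell}-4\pi\frak{a}_0 N) + C\cN_+ + C\ell^{-\a}$; substituting this into the bound $\pm\cE_{\cG_{N,\ell}}\leq C\ell^{(\a-3)/2}\cH_N + C\ell^{-\a}$ and rearranging yields the \emph{operator} inequality
\[
\cG_{N,\ell} - 4\pi\frak{a}_0 N \;\geq\; \tfrac12\big(\cG_{N,\ell}^{\text{eff}} - 4\pi\frak{a}_0 N\big) - C\ell^{(\a-3)/2}\cN_+ - C\ell^{-\a}.
\]
Sandwiching with $f_M$ and passing to $\cR_{N,\ell}$ then requires only the propagation of $\cN_+$ and $\cN_+^2$ through $e^A$, for which Proposition~\ref{prop:AellNgrow} is exactly sufficient. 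This is the one idea your proposal is missing.
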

\begin{proof}
As in Proposition \ref{prop:GNell-loc}, let $f,g: \bR \to [0;1]$ be smooth, with $f^2 (x) + g^2 (x)= 1$ for all $x \in \bR$. Moreover, assume that $f (x) = 0$ for $x > 1$ and $f (x) = 1$ for $x < 1/2$. We fix $M  = \ell^{3\alpha + \kappa} N$ (with $\kappa > 0$ as in Prop. \ref{prop:RNell}) and we set $f_M = f (\cN_+ / M), g_M = g (\cN_+ / M)$. It follows from Proposition  \ref{prop:GNell-loc} that 
\begin{equation}\label{eq:cGN-1} \cG_{N,\ell} - 4 \pi \frak{a}_0 N  \geq f_M (\cG_{N,\ell} - 4 \pi \frak{a}_0 N) f_M + g_M (\cG_{N,\ell} - 4 \pi \frak{a}_0 N) g_M - C \ell^{-13\alpha/2 -2\kappa} N^{-2} (\cH_N + 1) \end{equation}

Let us consider the first term on the r.h.s. of (\ref{eq:cGN-1}). From Prop. \ref{prop:GNell}, there exists $C> 0$ such that 
\[ \cG_{N,\ell} - 4\pi \frak{a}_0 N \geq \cG_{N,\ell}^\text{eff} - 4 \pi \frak{a}_0 N - C \ell^{(\alpha-3)/2} \cH_N - C \ell^{-\alpha} \]
and also, from (\ref{eq:GNell-prel}),
\begin{equation}\label{eq:second-bd} \cG_{N,\ell} - 4\pi \frak{a}_0 N \geq \frac{1}{2} \cH_N - C \cN_+ - C \ell^{-\a} \end{equation}
for all $\a > 3$, $\ell \in (0;1/2)$ small enough and $N$ large enough. 
Together, the last two bounds imply that 
\[ \begin{split} \cG_{N,\ell} - 4\pi \frak{a}_0 N & \geq (1- C \ell^{(\a-3)/2}) (\cG_{N,\ell}^\text{eff} - 4 \pi \frak{a}_0 N ) - C \ell^{(\alpha-3)/2} \cN_+ - C \ell^{-\alpha} \end{split} \]
Hence, for $\ell > 0$ small enough, 
\[  \cG_{N,\ell} - 4\pi \frak{a}_0 N   \geq  \frac{1}{2} (\cG_{N,\ell}^\text{eff} - 4 \pi \frak{a}_0 N ) - C \ell^{(\alpha-3)/2} \cN_+ - C \ell^{-\alpha} \]
With Prop. \ref{prop:RNell}, choosing $\a > 3$ and $\a/2 < \b < 2\a/3$, we find $\kappa > 0$ such that 
\[ \begin{split} f_M &(\cG_{N,\ell} - 4 \pi \frak{a}_0 N) f_M \\ &\geq \frac{1}{2} f_M (\cG_{N,\ell}^\text{eff} - 4 \pi \frak{a}_0 N ) f_M - C \ell^{(\alpha-3)/2} f^2_M \cN_+ - C \ell^{-\alpha} f_M^2 \\ &\geq \frac{1}{2} f_M e^{A}  \left[ (1-C\ell^\kappa) \cH_N  - C \ell^{-3\alpha} \frac{\cN_+^2}{N} -  C \ell^{-3\alpha} \right] e^{-A} f_M - C \ell^{(\alpha-3)/2} f^2_M \cN_+ - C \ell^{-\alpha} f_M^2 \\  &\geq  \frac{1}{2} f_M e^{A} \left[ (1-C\ell^\kappa) \cH_N  - C \ell^\kappa \cN_+ \right] e^{-A} f_M - C \ell^{(\alpha-3)/2} f^2_M \cN_+ - C \ell^{-3\alpha} f_M^2 
\end{split} \]
In the last inequality, we used Prop. \ref{prop:AellNgrow} to estimate
\[ \begin{split} f_M e^{-A} \cN_+^2 e^A f_M &\leq C f_M (\cN_+ +1)^2 f_M \\ &\leq C N \ell^{3\a+\kappa}  f_M (\cN_+ + 1) f_M \leq  C N  \ell^{3\a+\kappa} f_M e^{-A} (\cN_+ + 1) e^A f_M \end{split} \]
because we chose $M = \ell^{3\alpha + \kappa} N$. Since now $\cN_+ \leq C \cK \leq C \cH_N$, we obtain that, for $\ell \in (0;1/2)$ small enough,  
\[ f_M (\cG_{N,\ell} - 4 \pi \frak{a}_0 N) f_M \geq C f_M e^{A} \cN_+ e^{-A} f_M - C \ell^{(\alpha-3)/2} f^2_M \cN_+ - C \ell^{-3\alpha} f_M^2 \]
With Prop. \ref{prop:AellNgrow}, we conclude that, again for $\ell >0 $ small enough, 
\begin{equation}\label{eq:fMGN} f_M (\cG_{N,\ell} - 4 \pi \frak{a}_0 N) f_M \geq C f_M^2 \cN_+ - C \ell^{-3\alpha} f_M^2 \end{equation}

Let us next consider the second term on the r.h.s. of (\ref{eq:cGN-1}). From now on, we keep $\ell > 0$ fixed (so that (\ref{eq:fMGN}) holds true), and we will only worry about the dependence of $N$. We claim that there exists a constant $C > 0$ such that 
\begin{equation}\label{eq:gMGb} g_M (\cG_{N,\ell} - 4\pi \frak{a}_0 N ) g_M \geq C N g_M^2 \end{equation}
for all $N$ sufficiently large. To prove (\ref{eq:gMGb}) we observe that, since $g(x) = 0$ for all $x \leq 1/2$,   
\[ g_M ( \cG_{N,\ell} - 4 \pi \frak{a}_0 N ) g_M \geq \left[ \inf_{\xi \in \cF_{\geq M/2}^{\leq N} : \| \xi \| = 1} \frac{1}{N} \langle \xi, \cG_{N,\ell} \xi \rangle - 4 \pi \frak{a}_0 \right] N g_M^2 \]
where $\cF_{\geq M/2}^{\leq N} = \{ \xi \in \cF_+^{\leq N} : \xi = \chi (\cN_+ \geq M/2) \xi \}$ 
is the subspace of $\cF_+^{\leq N}$ where states with at least $M/2$ excitations are described (recall that $M = \ell^{3\a + \kappa} N$). To prove (\ref{eq:gMGb}) it is enough to show that there exists $C > 0$ with 
\begin{equation}\label{eq:gMGN2}  \inf_{\xi \in \cF_{\geq M/2}^{\leq N} : \| \xi \| = 1} \frac{1}{N} \langle \xi, \cG_{N,\ell} \xi \rangle - 4 \pi \frak{a}_0 \geq C \end{equation}
for all $N$ large enough. From the result (\ref{eq:BEC1}) of \cite{LS,LS2,NRS}, we already know that 
\[  \inf_{\xi \in \cF_{\geq M/2}^{\leq N} : \| \xi \| = 1} \frac{1}{N} \langle \xi, \cG_{N,\ell} \xi \rangle - 4 \pi \frak{a}_0  \geq  \inf_{\xi \in \cF_{+}^{\leq N} : \| \xi \| = 1} \frac{1}{N} \langle \xi, \cG_{N,\ell} \xi \rangle - 4 \pi \frak{a}_0 = \frac{E_N}{N} - 4 \pi \frak{a}_0 \to 0 \]
as $N \to \infty$. Hence, if we assume by contradiction that (\ref{eq:gMGN2}) does not hold true, then we can find a subsequence $N_j \to \infty$ with
\[  \inf_{\xi \in \cF_{\geq M_j/2}^{\leq N_j} : \| \xi \| = 1} \frac{1}{N_j} \langle \xi, \cG_{N_j ,\ell} \xi \rangle - 4 \pi \frak{a}_0 \to 0 \]
as $j \to \infty$ (here we used the notation $M_j = \ell^{3\alpha + \kappa} N_j$). This implies that there exists a sequence $\xi_{N_j} \in \cF^{\leq N_j}_{ \geq M_j /2}$ with $\| \xi_{N_j} \| = 1$ for all $j \in \bN$ such that 
\begin{equation*}
\lim_{j \to \infty} \frac{1}{N_j} \langle \xi_{N_j}, \cG_{N_j, \ell} \xi_{N_j} \rangle = 4\pi \frak{a}_0  \, . \end{equation*}
Let now $S:= \{N_j: j\in \NN\} \subset\NN$ and denote by $\xi_N$ a normalized minimizer of $\cG_{N,\ell}$ for all $N\in \NN\setminus S$. Setting $\psi_N = U_N^* e^{B(\eta_H)}  \xi_N$, for all $N \in \bN$, we obtain that $\| \psi_N \| = 1$ and that 
\[ \lim_{N \to \infty} \frac{1}{N} \langle \psi_N, H_N \psi_N \rangle = \lim_{N \to \infty} \frac{1}{N} \langle \xi_N, \cG_{N,\ell} \xi_N \rangle = 4\pi \frak{a}_0 \]
In other words, the sequence $\psi_N$ is an approximate ground state of $H_N$. From \ref{eq:BEC1}, we conclude that $\psi_N$ exhibits complete Bose-Einstein condensation in the zero-momentum mode $\ph_0$, meaning that
\[ \lim_{N \to \infty} 1 - \langle \ph_0, \gamma_N \ph_0 \rangle = 0 \]
Using Lemma \ref{lm:Ngrow} and the rules (\ref{eq:U-rules}), we observe that 
\begin{equation}\label{eq:contra1} \begin{split} \frac{1}{N} \langle \xi_N, \cN_+ \xi_N \rangle &= \frac{1}{N} \langle e^{-B(\eta_H)} U_N \psi_N , \cN_+ e^{-B(\eta_H)} U_N \psi_N \rangle \\ &\leq \frac{C}{N} \langle \psi_N , U_N^* (\cN_+ +1) U_N \psi_N \rangle = \frac{C}{N} + C \left[ 1 - \frac{1}{N} \langle \psi_N, a^* (\ph_0) a(\ph_0) \psi_N \rangle \right] \\ &= \frac{C}{N} + C \left[ 1 - \langle \ph_0 , \gamma_N \ph_0 \rangle \right]  \to 0 \end{split} \end{equation}
as $N \to \infty$. On the other hand, for $N \in S = \{ N_j : j \in \bN \}$, we have $\xi_N = \chi (\cN_+ \geq M/2) \xi_N$ and therefore
\[ \frac{1}{N} \langle \xi_N, \cN_+ \xi_N \rangle \geq \frac{M}{2N} = \frac{\ell^{3\alpha+\kappa}}{2} \]
in contradiction with (\ref{eq:contra1}). This proves (\ref{eq:gMGN2}), (\ref{eq:gMGb}) and therefore also
\begin{equation}\label{eq:gM-bd}  g_M ( \cG_{N,\ell} - 4 \pi \frak{a}_0 N ) g_M \geq C \cN_+ g_M^2 \end{equation}

Inserting (\ref{eq:fMGN}) and (\ref{eq:gM-bd}) on the r.h.s. of (\ref{eq:cGN-1}), we obtain that 
\begin{equation}\label{eq:lbd} \cG_{N,\ell} - 4\pi \frak{a}_0 N \geq C \cN_+  - C N^{-2} \cH_N - C \end{equation}
for $N$ large enough (the constants $C$ are now allowed to depend on $\ell$, since $\ell$ has been fixed once and for always after (\ref{eq:fMGN})). Interpolating (\ref{eq:lbd}) with (\ref{eq:second-bd}), we obtain (\ref{eq:cGN-fin}).
\end{proof}

We are now ready to show our main theorem.
\begin{proof}[Proof of Theorem \ref{thm:main}]
First of all, (\ref{eq:GNell-prel}) and (\ref{eq:Gbd0}) in Prop. \ref{prop:GNell} imply that
\begin{equation*} 
\cG_{N,\ell} - 4 \pi \frak{a}_0 N \leq 2 \cH_N + C \cN_+ + C \end{equation*}
With the vacuum $\Omega$ as trial state, we obtain the upper bound $E_N \leq 4\pi \frak{a}_0 N + C$ for the ground state energy $E_N$ of $\cG_{N,\ell}$ (which coincides with the ground state energy of $H_N$). With Eq. (\ref{eq:cGN-fin}), we also find the lower bound $E_N \geq 4 \pi \frak{a}_0 N - C$. This proves (\ref{eq:Enbd}). 

Let now $\psi_N \in L^2_s (\Lambda^N)$ with $\| \psi_N \| =1$ and
\[ \langle \psi_N , H_N \psi_N \rangle \leq 4 \pi \frak{a}_0 N + K \]
We define the excitation vector $\xi_N = e^{-B(\eta_H)} U_N \psi_N$. Then $\| \xi_N \| = 1$ and, recalling that $\cG_{N,\ell} = e^{-B(\eta_H)}  U_N H_N U_N^* e^{B(\eta_H)}$, we have 
\[ \langle \xi_N, \cN_+ \xi_N \rangle \leq C \langle \xi_N, (\cG_{N,\ell} - 4\pi \frak{a}_0 N) \xi_N \rangle + C \leq C  (K + 1) \]
If $\gamma_N$ denotes the one-particle reduced density matrix associated with $\psi_N$, we obtain 
\[ \begin{split} 
1 - \langle \ph_0, \gamma_N \ph_0 \rangle &= 1 - \frac{1}{N} \langle \psi_N, a^* (\ph_0) a (\ph_0) \psi_N \rangle \\ &= 1 - \frac{1}{N} \langle U_N^* e^{B(\eta_H)} \xi_N, a^* (\ph_0) a(\ph_0) U_N^* e^{B(\eta_H)} \xi_N \rangle \\ &= \frac{1}{N} \langle e^{B(\eta_H)} \xi_N, \cN_+ e^{B(\eta_H)} \xi_N \rangle \leq \frac{C}{N} \langle \xi_N , \cN_+ \xi_N \rangle \leq \frac{C(K+1)}{N} \end{split} \]
which concludes the proof of (\ref{eq:BEC}). 
\end{proof}

\section{Analysis of $ \cG_{N,\ell}$} \label{sec:GN}

From (\ref{eq:cLN}) and (\ref{eq:GN}), we can decompose
\[ \cG_{N,\ell} = e^{-B(\eta_H)} \cL_N e^{B(\eta_H)} = \cG^{(0)}_{N,\ell} + \cG_{N,\ell}^{(2)} + \cG_{N,\ell}^{(3)} + \cG_{N,\ell}^{(4)} \]
with 
\[ \cG_{N,\ell}^{(j)} = e^{-B(\eta_H)} \cL_N^{(j)} e^{B(\eta_H)} \]
In the next subsections, we prove separate bounds for the operators $\cG_{N,\ell}^{(j)}$, $j=0,2,3,4$. In Subsection \ref{sub:proofGN}, we combine these bounds to prove Prop. \ref{prop:GNell} 
and Prop. \ref{prop:GNell-loc}. Throughout this section, we will assume the potential $V \in L^3 (\bR^3)$ to be compactly supported, pointwise non-negative and spherically symmetric. 

\subsection{Analysis of $ \cG_{N,\ell}^{(0)}=e^{-B(\eta_H)}\cL^{(0)}_N e^{B(\eta_H)}$} 
\label{sub:G0}

From (\ref{eq:cLNj}), recall that 
\begin{equation}\label{eq:cLN02} \cL_{N}^{(0)} =\; \frac{(N-1)}{2N} \widehat{V} (0) (N-\cN_+ ) + \frac{\widehat{V} (0)}{2N} \cN_+  (N-\cN_+ ) \end{equation}
We define the error operator $\cE_{N,\ell}^{(0)}$  through the identity
\begin{equation}\label{eq:G0C}
 \begin{split}
\cG^{(0)}_{N,\ell} &= e^{-B(\eta_H)} \cL^{(0)}_N e^{B(\eta_H)}= \frac{(N-1)}{2N} \widehat{V} (0) (N-\cN_+)+\frac{\widehat{V} (0)}{2N} \cN_+ (N-\cN_+) + \cE_{N,\ell}^{(0)}
 \end{split}
\end{equation}
In the next proposition, we estimate $\cE_{N,\ell}^{(0)}$ and its double commutator with a smooth and bounded function of $\cN_+$. 
\begin{prop}\label{prop:G0}
There exists a constant $C > 0$ such that  
\begin{equation}\label{eq:E0C}
\pm \cE_{N,\ell}^{(0)} \leq C \ell^{\a/2} (\cN_+ +1)
\end{equation}
and 
\begin{equation}\label{eq:E0Cff}
\pm [f (\cN_+/M) , [f (\cN_+ /M) ,\cE_{N,\ell}^{(0)}]] \leq C \ell^{\a/2} M^{-2} \|f'\|^2_{\infty} (\cN_+ +1)
\end{equation}
for all $\a > 0$, $\ell \in (0;1/2)$, $f$ smooth and bounded, $M \in \bN$ and $N \in \bN$ large enough. 
\end{prop}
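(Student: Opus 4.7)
The plan is to exploit the fact that $\cL_N^{(0)}$ is a polynomial in $\cN_+$ alone: setting
\[ g(x) := \tfrac{N-1}{2N}\widehat V(0)(N-x) + \tfrac{\widehat V(0)}{2N}x(N-x), \]
we have $\cL_N^{(0)} = g(\cN_+)$, and hence
\[ \cE_{N,\ell}^{(0)} = e^{-B(\eta_H)} g(\cN_+)\, e^{B(\eta_H)} - g(\cN_+) = \int_0^1 \!ds\, e^{-sB(\eta_H)}\big[g(\cN_+),B(\eta_H)\big] e^{sB(\eta_H)}. \]
Once the integrand is controlled in operator sense by $C\ell^{\alpha/2}(\cN_+ + 1)$, Lemma \ref{lm:Ngrow} will transport the estimate through the conjugation $e^{\pm sB(\eta_H)}$ (up to a multiplicative constant, since $\|\eta_H\|$ is small) and yield \eqref{eq:E0C}.

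To bound the commutator I would use that $B(\eta_H) = \tfrac12 \sum_p \eta_p^H (b_p^* b_{-p}^* - b_p b_{-p})$, and that $b_p^* b_{-p}^*$, $b_p b_{-p}$ shift $\cN_+$ by $\pm 2$, to compute
\[ [g(\cN_+), B(\eta_H)] = \tfrac12 \sum_{p\in \Lambda_+^*} \eta_p^H \Big\{ b_p^* b_{-p}^* \big[g(\cN_++2)-g(\cN_+)\big] + b_p b_{-p} \big[g(\cN_+)-g(\cN_+-2)\big] \Big\}. \]
Since $g$ is a quadratic polynomial, each difference $g(\cN_+\pm 2) - g(\cN_+)$ is affine in $\cN_+$ with coefficients of order $N^{-1}$, so both are bounded in operator norm by $C(\cN_++1)/N$. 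Taking the expectation on $\xi \in \cF_+^{\leq N}$ and applying Cauchy--Schwarz in $p$ together with the standard estimate $\sum_p \|b_p b_{-p} \xi\|^2 \leq C \|\cN_+ \xi\|^2$ (consequence of $\|b_p \zeta\| \leq \|\cN_+^{1/2}\zeta\|$) yields
\[ \big|\langle \xi, [g(\cN_+), B(\eta_H)] \xi\rangle\big| \leq \frac{C\|\eta_H\|}{N} \langle \xi, (\cN_+ +1)^2 \xi\rangle \leq C\ell^{\alpha/2} \langle \xi, (\cN_+ + 1) \xi\rangle, \]
where the last step uses $\cN_+ \leq N$ on $\cF_+^{\leq N}$ and the bound $\|\eta_H\| \leq C\ell^{\alpha/2}$ from \eqref{eq:etaHL2}. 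This establishes \eqref{eq:E0C}.

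For the double commutator bound \eqref{eq:E0Cff}, the key observation is that $[f_M, g(\cN_+)] = 0$, so $[f_M, [f_M, \cE_{N,\ell}^{(0)}]] = [f_M, [f_M, e^{-B(\eta_H)} g(\cN_+) e^{B(\eta_H)}]]$. I would expand
$e^{-B(\eta_H)} g(\cN_+) e^{B(\eta_H)} = \sum_{n\geq 0} \frac{(-1)^n}{n!} \text{ad}_{B(\eta_H)}^{(n)}(g(\cN_+))$; by an induction analogous to Lemma \ref{lm:indu}, each nested commutator is a sum of $O(2^n n!)$ terms of the form (product of $\Pi^{(2)}$-operators and factors $(N-\cN_+)/N$) multiplied by a function of $\cN_+$ alone. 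Since $f_M$ commutes with every $\cN_+$-function, each application of $[f_M, \cdot]$ only hits the $\Pi^{(2)}$-type factors and produces, by exactly the mean-value-theorem argument already used for Corollary \ref{lm:ff}, an additional bounded factor of operator norm $\leq C \|f'\|_\infty/M$. Doing this twice gives the desired gain $M^{-2}\|f'\|_\infty^2$; the Cauchy--Schwarz estimate of the previous paragraph then produces \eqref{eq:E0Cff}.

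The main obstacle, which is however a routine bookkeeping exercise, is to verify absolute convergence of the resulting series of doubly-commuted nested terms, with the $n$-uniform $M^{-2}\|f'\|_\infty^2$ gain. This follows verbatim the scheme of Corollary \ref{lm:ff}, with $g(\cN_+)$ playing the role of $b_p$; the smallness of $\|\eta_H\|$ provides the convergence factor, just as in Lemma \ref{lm:Ngrow}.
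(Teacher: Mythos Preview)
Your argument for \eqref{eq:E0C} is correct and in fact more direct than the paper's. The paper first rewrites $\cL_N^{(0)}$ as a constant plus $\tfrac{\widehat V(0)}{2}\big(\sum_q b_q^*b_q - \cN_+\big)$ and then conjugates each piece separately, using the decomposition $e^{-B(\eta_H)}b_q e^{B(\eta_H)} = \gamma_q b_q + \sigma_q b_{-q}^* + d_q$ from \eqref{eq:ebe} together with the bounds on $d_q$ from Lemma~\ref{lm:dp}. Your Duhamel formula combined with the explicit shift computation $[g(\cN_+),B(\eta_H)]$ bypasses the $d_q$ machinery entirely here, which is a genuine simplification.

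Your treatment of \eqref{eq:E0Cff}, however, has a gap. You appeal to ``an induction analogous to Lemma~\ref{lm:indu}'' to claim that each $\text{ad}_{B(\eta_H)}^{(n)}(g(\cN_+))$ decomposes into products of $\Pi^{(2)}$-operators times functions of $\cN_+$. But Lemma~\ref{lm:indu} concerns $\text{ad}_{B(\eta_H)}^{(n)}(b_p)$, whose structure is \emph{odd} (there is always exactly one dangling $\Pi^{(1)}$-factor); the nested commutators of $g(\cN_+)$ have a genuinely different, \emph{even} structure, and establishing the claimed expansion with the correct combinatorics is a new lemma, not a corollary of Lemma~\ref{lm:indu} or Corollary~\ref{lm:ff}. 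In addition, the integrand in your formula is $e^{-sB(\eta_H)}[g(\cN_+),B(\eta_H)]e^{sB(\eta_H)}$, and the factor $h_\pm(\cN_+)$ does not commute with $e^{\pm sB(\eta_H)}$; you do not explain how the $M^{-2}\|f'\|_\infty^2$ gain survives this conjugation uniformly in $s$.

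The reason the paper's route to \eqref{eq:E0Cff} is short is precisely that its proof of \eqref{eq:E0C} already writes $\cE_{N,\ell}^{(0)}$ explicitly as sums of products of $b_p, b_p^*, d_p, d_p^*$ (times functions of $\cN_+$). For each such factor the double commutator with $f_M$ is handled by Corollary~\ref{cor:ffd} (for the $d$'s) and the identity \eqref{eq:ffbp} (for the $b$'s), giving the extra factor directly. The quickest way to repair your argument for \eqref{eq:E0Cff} is to expand the integrand $e^{-sB(\eta_H)}b_p b_{-p}e^{sB(\eta_H)}$ and its adjoint via \eqref{eq:ebe} --- which effectively brings you back to the paper's method.
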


\begin{proof}
From (\ref{eq:cLN02}) we have 
\begin{equation}\label{eq:cLN0-rew} \cL_N^{(0)} = \frac{(N-1)}{2} \widehat{V} (0) + \frac{1}{2N} \widehat{V} (0) \cN_+ - \frac{1}{2N} \widehat{V} (0) \cN_+^2 \end{equation}
In the last term, we rewrite 
\[ -\frac{\cN_+^2}{N} = \cN_+ \frac{N-\cN_+}{N} - \cN_+ =  \sum_{q \in \L^*_+} b_q^* b_q - \frac{\cN_+}{N} - \cN_+ \]
Inserting in (\ref{eq:cLN0-rew}), we obtain 
\[ \cL_N^{(0)} = \frac{(N-1)}{2} \widehat{V} (0) + \frac{\widehat{V} (0)}{2} \left[ \sum_{q \in \L^*_+} b_q^* b_q - \cN_+ \right] \]
From \eqref{eq:G0C}, it follows that
\begin{equation}\label{eq:e01}
 \begin{split}
\cE_{N,\ell}^{(0)} &= \frac{\widehat{V} (0)}{2} \sum_{q \in \L_+^*} \left[ e^{-B(\eta_H)} b_q^* b_q e^{B(\eta_H)} - b_q^* b_q \right] - \frac{\widehat{V} (0)}{2}  \left[ e^{-B(\eta_H)} \cN_+ e^{B(\eta_H)} - \cN_+ \right] \end{split} \end{equation}
With (\ref{eq:ebe}), we can express 
\[ \sum_{q \in \L_+^*}  e^{-B(\eta_H)} b_q^* b_q e^{B(\eta_H)} = \sum_{q \in \L^*_+} \left[ \g_q b_q^* + \s_q b_{-q} + d^*_q \right] \left[ \g_q b_q + \s_q b_{-q}^* + d_q \right]  \]
where we set $\g_q = \cosh \eta_H (q)$, $\s_q = \sinh \eta_H (q)$ and where $d_q, d^*_q$ are defined as in (\ref{eq:defD}), with $\eta$ replaced by $\eta_H (q) = \eta_q \chi (q \in P_H)$. Using $|\g_q^2 - 1| \leq C \eta_H (q)^2$, $|\s_q| \leq C |\eta_H (q)|$, the first bound in (\ref{eq:d-bds}), Cauchy-Schwarz and the estimate $\| \eta_H \| \leq C \ell^{\a/2}$ from (\ref{eq:etaHL2}), we conclude that first term on the r.h.s. of (\ref{eq:e01}) can be bounded by 
\[ \Big| \sum_{q \in \L_+^*}  \langle \xi , \big[ e^{-B(\eta_H)} b_q^* b_q e^{B(\eta_H)} - b_q^* b_q \big] \xi \rangle \Big| \leq C \ell^{\a/2} \| (\cN_+ + 1)^{1/2} \xi \|^2 \]
As for the second term on the r.h.s. of (\ref{eq:e01}), we expand using again (\ref{eq:ebe}), 
\begin{equation*}
\begin{split}
e^{-B(\eta_H)} &\cN_+  e^{B(\eta_H)} - \cN_+ \\ = \; &\int_0^1 e^{-sB(\eta_H)} [\cN_+ ,B(\eta_H)] e^{s B(\eta_H)}ds \\
=\; &\int_0^1 \sum_{p \in P_{H}}  \eta_p \, e^{-s B(\eta_H)} ( b_p b_{-p}  + b^*_p b^*_{-p} ) e^{s B(\eta_H)} \, ds \\
= \; &\int_0^1 ds \sum_{p \in P_{H}}  \eta_p \, \left[ (\g_p^{(s)} b_p + \s_p^{(s)} b_{-p}^* + d_p^{(s)}) ( \g_p^{(s)} b_{-p} + \s_p^{(s)} b_{-p}^* + d_{-p}^{(s)}) + \hc \right] 
\end{split}
\end{equation*}
with $\g_p^{(s)} = \cosh (s \eta_H (p))$, $\s_p^{(s)} = \sinh (s \eta_H (p))$ and where the operators $d_p^{(s)}$ are defined as in (\ref{eq:defD}), with $\eta$ replaced by $s \eta_H$. Using 
$|\g^{(s)}_p| \leq C$ and $|\s_p^{(s)}| \leq C |\eta_p|$, (\ref{eq:d-bds}) in Lemma \ref{lm:dp} and again (\ref{eq:etaHL2}), we arrive at  
\[ \begin{split} \Big| \langle \xi, \big[ e^{-B(\eta_H)} \cN_+  &e^{B(\eta_H)} - \cN_+ \big] \xi \rangle \Big| \\ &\leq C \| (\cN_+ + 1)^{1/2} \xi \|  \sum_{p \in P_H} |\eta_p| \left[ |\eta_p| \| (\cN_+ + 1)^{1/2} \xi \| + \| b_{p} \xi \| \right] \\ &\leq C \ell^{\a/2} \| (\cN_+ + 1)^{1/2} \xi \|^2 \end{split} \]
This concludes the proof of (\ref{eq:E0C}). 

The bound \eqref{eq:E0Cff} follows analogously, because, as observed in Cor. \ref{cor:ffd}, the estimates (\ref{eq:d-bds}) in Lemma \ref{lm:dp} remain true if we replace $d_p$ and $d_p^*$ by $[f (\cN_+/M), [f (\cN_+/M) , d_p]]$ and, respectively, $[f (\cN_+/M) , [ f (\cN_+/M), d_p^*]]$, provided we multiply the r.h.s. by an additional factor $M^{-2} \| f' \|^2_\infty$. The same observation holds true for bounds involving the operators $b_p, b_p^*$, since, for example, 
\begin{equation}\label{eq:ffbp} [ f (\cN_+ /M), [ f (\cN_+/M), b_p]] = (f(\cN_+/M) - f((\cN_+ +1)/M))^2 b_p \end{equation}
and $\| f(\cN_+ / M) - f ((\cN_+ + 1)/M) \| \leq C M^{-1} \| f' \|_\infty$.  
\end{proof}

\subsection{Analysis of $\cG_{N,\ell}^{(2)}=e^{-B(\eta_H)}\cL^{(2)}_N e^{B(\eta_H)}$} 
\label{sub:G2}

With (\ref{eq:cLNj}), we decompose $\cL_N^{(2)} = \cK + \cL_N^{(2,V)}$, where $\cK = \sum_{p \in \Lambda_+^*} p^2 a_p^* a_p$ is the kinetic energy operator and
\begin{equation}\label{eq:L2VN} 
\cL^{(2,V)}_N =  \sum_{p \in \L^*_+} \widehat{V} (p/N) a^*_pa_p \frac{N-\cN_+}{N} + \frac{1}{2} \sum_{p \in \L^*_+} \widehat{V} (p/N) \left[ b_p^* b_{-p}^* + b_p b_{-p} \right] 
\end{equation}
Accordingly, we have 
\begin{equation}\label{eq:dec-G2} \cG_{N,\ell}^{(2)} = e^{-B(\eta_H)} \cK   e^{B(\eta_H)} + e^{-B(\eta_H)} \cL_N^{(2,V)} e^{B(\eta_H)} \end{equation}
In the next two propositions, we analyse the two terms on the r.h.s. of the last equation.
\begin{prop}\label{prop:K}
There exists $C > 0$ such that   
\begin{equation} \label{eq:K-dec} \begin{split} e^{-B(\eta_H)}\cK e^{B(\eta_H)} = \; &\cK + \sum_{p \in P_{H}} p^2 \eta_p ( b_p b_{-p} + b^*_p b^*_{-p} ) \\&+ \sum_{p \in  P_{H}} p^2 \eta_p^2 \Big(\frac{N-\cN_+}{N}\Big) \Big(\frac{N-\cN_+ -1}{N}\Big) +\cE^{(K)}_{N,\ell}
 \end{split}
\end{equation}
where 
\begin{equation}\label{eq:errorKc}
 \begin{split}
\pm \cE^{(K)}_{N,\ell} \leq  C \ell^{(\a-3)/2} (\cH_N +1) 
 \end{split}
\end{equation}
and 
\begin{equation} \begin{split}\label{eq:errCommK}
\pm \left[ f (\cN_+/M), \left[ f (\cN_+ /M) ,\cE^{(K)}_{N,\ell} \right] \right] & \leq C M^{-2} \|f'\|^2_{\infty}\, \ell^{(\a-3)/2} \big( \cH_N + 1 \big)
\end{split}\end{equation} 
for all $\a > 3$, $\ell \in (0;1/2)$ small enough, $f$ smooth and bounded, $M \in \bN$ and $N \in \bN$ large enough.
\end{prop}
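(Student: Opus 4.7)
The plan is to combine a one-step Duhamel expansion with the explicit action formula (\ref{eq:ebe}) for the generalized Bogoliubov transformation, identifying one linear and one quadratic leading contribution in $\eta_H$ and absorbing everything else into $\cE^{(K)}_{N,\ell}$. Since $\cK$ commutes with $\cN_+$ and hence with $\sqrt{(N-\cN_+)/N}$, one has $[\cK, b_p^*] = p^2 b_p^*$ and $[\cK, b_p] = -p^2 b_p$, from which $[\cK, B(\eta_H)] = \sum_{p \in P_H} p^2 \eta_p (b_p^* b_{-p}^* + b_p b_{-p})$, using $\eta_{-p} = \eta_p$. Duhamel then yields
\[
e^{-B(\eta_H)}\cK e^{B(\eta_H)} = \cK + \int_0^1 e^{-sB(\eta_H)} [\cK, B(\eta_H)] e^{sB(\eta_H)}\, ds.
\]

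Next I apply (\ref{eq:ebe}), with $\eta$ replaced by $s\eta_H$, to rewrite each $b$-operator under the integral as $\gamma_p^{(s)} b_p + \sigma_p^{(s)} b_{-p}^* + d_p^{(s)}$ (and symmetrically for $b_p^*$), where $\gamma_p^{(s)} = \cosh(s\eta_H(p))$, $\sigma_p^{(s)} = \sinh(s\eta_H(p))$, and $d_p^{(s)}, (d_p^{(s)})^*$ are the remainders defined as in (\ref{eq:defD}) for the coefficients $s\eta_H$. The two $\gamma\gamma$-pairings then produce, using $\int_0^1 \gamma_p^{(s)}\gamma_{-p}^{(s)}\, ds = 1 + O(\eta_H(p)^2)$, the first explicit term in (\ref{eq:K-dec}). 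The mixed $\gamma\sigma$-pairings generate products of the type $\gamma_p^{(s)} \sigma_p^{(s)} b_p^* b_p$, and upon reordering via the commutation relations (\ref{eq:comm-bp}) their normal-ordered constant piece carries a factor $\bigl((N-\cN_+)/N\bigr)\bigl((N-\cN_+-1)/N\bigr)$ (the two distinct denominators arise from commuting $\sqrt{(N-\cN_+)/N}$ past an $a_p^*$). The $s$-integral $\int_0^1 2\gamma_p^{(s)}\sigma_p^{(s)}\, ds = \sinh^2(\eta_H(p))/\eta_H(p)$, combined with the outer factor $p^2 \eta_p$, produces $p^2 \sinh^2(\eta_H(p)) = p^2 \eta_p^2 + O(p^2 \eta_p^4)$, yielding the second explicit term of (\ref{eq:K-dec}) modulo a quartic-in-$\eta_H$ residue.

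The remaining contributions to $\cE^{(K)}_{N,\ell}$ are (i) higher-order Taylor corrections in $\gamma^{(s)},\sigma^{(s)}$; (ii) terms containing at least one remainder $d_p^{(s)}$; and (iii) leftover non-normally-ordered pieces from applying (\ref{eq:comm-bp}). Each is estimated, on a vector $\xi \in \cF_+^{\leq N}$, by Cauchy--Schwarz in the momentum sum, using $\|\eta_H\|^2 \leq C\ell^\alpha$ from (\ref{eq:etaHL2}), $\sum_p p^2|\eta_p|^2 \leq CN$ from (\ref{eq:H1eta}), the pointwise bound $p^2|\eta_p|\leq C$ on $P_H$ from (\ref{eq:modetap}), and the estimates on $d_p^{(s)}$ from Lemma \ref{lm:dp}. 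In each error term, the factor $p^2$ from the kinetic commutator pairs either with a single $\eta_p$ (giving a bounded factor) or with two $\eta_p$'s (invoking the $H^1$-type bound and closing with a factor of $\cK^{1/2}\xi$); after tracking the powers of $\ell$, all errors combine to produce (\ref{eq:errorKc}). For the double commutator (\ref{eq:errCommK}), I repeat the argument after noting that $\cK$, $\cN_+$ and the density factors commute with $f_M$; only the $b$-operators and remainders $d_p^{(s)}$ contribute nontrivially, and by Corollaries \ref{lm:ff} and \ref{cor:ffd} each such piece acquires the additional factor $CM^{-2}\|f'\|_\infty^2$ from the double commutator.

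The main obstacle is the quadratic bookkeeping: one must sum all $\gamma\sigma$-contributions and carefully apply (\ref{eq:comm-bp}) to confirm that the coefficient of $\sum_p p^2\eta_p^2$ is exactly $((N-\cN_+)/N)((N-\cN_+-1)/N)$ (rather than $(N-\cN_+)/N$ or $((N-\cN_+)/N)^2$), and that the residue after extracting this term is genuinely quartic in $\eta_H$, so that it fits within the claimed error bound $\ell^{(\alpha-3)/2}(\cH_N+1)$.
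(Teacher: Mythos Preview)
Your overall plan (Duhamel, then apply (\ref{eq:ebe}), then extract leading terms) matches the paper exactly, but there are two genuine gaps in your identification of the constant term and in your error control.

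First, the $\gamma\sigma$--pairings alone do \emph{not} produce the factor $\bigl(\frac{N-\cN_+}{N}\bigr)\bigl(\frac{N-\cN_+-1}{N}\bigr)$. The commutator (\ref{eq:comm-bp}) gives $[b_p,b_p^*]=(1-\cN_+/N)-N^{-1}a_p^*a_p$, so the constant piece from $\gamma_p^{(s)}\sigma_p^{(s)}\,b_p b_p^*$ (plus its conjugate) carries only a single factor $(1-\cN_+/N)$; after the $s$--integration this yields $\sum_{p\in P_H}p^2\eta_p^2(1-\cN_+/N)$. The second factor $(1-(\cN_++1)/N)$ comes from the cross term $\gamma_p^{(s)}\,b_p\,d_{-p}^{(s)}$: the leading piece of $d_{-p}^{(s)}$ is $-s\eta_H(p)(\cN_+/N)\,b_p^*$, and the identity $b_p\cN_+ b_p^*=(\cN_++1)(1-\cN_+/N)+\text{(normal ordered)}$ produces exactly the missing correction. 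The discrepancy between the two versions of the constant is $\sum_p p^2\eta_p^2\,(\cN_++1)/N\cdot(1-\cN_+/N)$, and since $\sum_p p^2\eta_p^2\leq CN$ by (\ref{eq:H1eta}) this is of size $C(\cN_++1)$ with \emph{no} $\ell$--smallness, so it cannot be absorbed into $\cE^{(K)}_{N,\ell}$. In other words, you cannot treat all $d$--contributions as errors; you must first split off the leading part of $d$ (the paper writes $\bar d_{-p}^{(s)}=d_{-p}^{(s)}+s\eta_H(p)(\cN_+/N)b_p^*$).

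Second, once that splitting is made, you are left with residual terms of the form $\sum_{p\in\Lambda_+^*}p^2\eta_p\,b_p\,\bar d_{-p}^{(s)}$ (and analogous $d\,b$ and $d\,d$ terms). Your plan to bound these using only $p^2|\eta_p|\leq C$ and the $d$--estimates of Lemma~\ref{lm:dp} does not close: the sum over $p$ then has no decay, and the $d$--bounds (\ref{eq:d-bds}) carry only a factor $N^{-1}$, not $\ell^{(\alpha-3)/2}$. The paper instead invokes the scattering relation (\ref{eq:eta-scat0}) to rewrite $p^2\eta_p=-\tfrac12(\widehat V(\cdot/N)*\widehat f_{N,\ell})(p)+N^3\lambda_\ell(\widehat\chi_\ell*\widehat f_{N,\ell})(p)$, passes to position space, and closes against $\cV_N$ using the position--space bounds (\ref{eq:splitdbd}) and (\ref{eq:ddxy}). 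This is precisely where the exponent $(\alpha-3)/2$ originates (the $\ell^{-3}\chi_\ell$ contribution), and it is not visible from your proposed bookkeeping.
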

\begin{proof}
To show \eqref{eq:errorKc}, we write
\[ \begin{split} 
e^{-B(\eta_H)} \cK e^{B(\eta_H)} -\cK &= \int_0^1 e^{-s B(\eta_H)} [\cK , B(\eta_H)] e^{sB(\eta_H)} ds \\
&=\int_0^1 \sum_{p \in P_{H}} p^2 \eta_p \left[ e^{-s B(\eta_H)} b_p b_{-p} e^{s B(\eta_H)} + e^{-s B(\eta_H)} b_p^* b_{-p}^* e^{sB(\eta_H)} \right] ds . \end{split}  \]
With relations \eqref{eq:ebe}, we can write
\begin{equation} \label{eq:cKterms}
\begin{split} 
&e^{-B(\eta_H)} \cK e^{B(\eta_H)}-\cK \\ &= \int_0^1ds \sum_{p \in P_{H}} p^2 \eta_p \Big [\big(\gamma_p^{(s)} b_p+ \s^{(s)}_p b^*_{-p}\big) \big(\gamma_p^{(s)} b_{-p} + \s^{(s)}_p b^*_{p}\big)\, +\hc\Big ]\\
&\hspace{.3cm} + \int_0^1ds \sum_{p \in P_{H}} p^2 \eta_p \big[\big(\gamma_p^{(s)} b_p+ \s^{(s)}_p b^*_{-p}\big) 
d_{-p}^{(s)}+ d_p^{(s)} \big( \gamma_p^{(s)} b_{-p}+ \s^{(s)}_p b^*_{p}\big)+\hc\big]\\
&\hspace{.3cm}+ \int_0^1ds \sum_{p \in P_{H}} p^2 \eta_p\big[ d_p^{(s)} d_{-p}^{(s)} + \hc \big]\\
&=: \text{G}_1+\text{G}_2+\text{G}_3
\end{split}  
\end{equation}
with the notation $\gamma_p^{(s)} = \cosh(s \eta_H (p))$, $\s^{(s)}_p =\sinh(s \eta_H (p))$ and where $d^{(s)}_p$ is defined as in (\ref{eq:defD}), with $\eta_p$ replaced by $s \eta_H (p)$ (recall that $\eta_H (p) = \eta_p \chi (p \in P_H)$). We start by analysing $\text{G}_1$. Expanding the product, we obtain
\begin{equation} \label{eq:first2cK}
\begin{split} 
\text{G}_1&=\,\int_0^1ds \sum_{p \in P_{H}} p^2 \eta_p \Big [\big (\gamma_p^{(s)})^2+(\s^{(s)}_p)^2\big)\big(b_pb_{-p}+b^*_{-p}b^*_{p}\big)+ \gamma_p^{(s)} \s^{(s)}_p (4b_p^*b_{p}-2N^{-1}a^*_pa_p)\big)\Big ]\\
&\quad+2\int_0^1ds \sum_{p \in P_{H}} p^2 \eta_p \gamma_p^{(s)} \s^{(s)}_p \left(1-\frac{\cN_+}{N}\right)\\
&= \sum_{p \in P_{H}} p^2 \eta_p \big(b_p b_{-p} + b^*_{-p} b^*_{p} \big)+ \sum_{p \in P_{H}} p^2 \eta_p^2 \left(1-\frac{\cN_+}{N}\right)+\cE^K_{1}
\end{split}  
\end{equation}
with
\begin{equation} \nonumber
\begin{split} 
\cE^K_{1}=&\,\int_0^1ds \sum_{p \in P_{H}} p^2 \eta_p \big[\big(( \gamma_p^{(s)})^2-1\big)+(\s^{(s)}_p)^2\big]\big(b_p b_{-p} + b^*_{-p} b^*_{p} \big)\\
&+\int_0^1ds \sum_{p \in P_{H}} p^2 \eta_p \gamma_p^{(s)} \s^{(s)}_p (4b_p^*b_{p}-2N^{-1}a^*_pa_p)\big)\\
&+2\int_0^1ds \sum_{p \in P_{H}} p^2 \eta_p  \left[ (\gamma_p^{(s)}-1) \s^{(s)}_p + (\s^{(s)}_p-s \eta_p) \right] \Big(1-\frac{\cN_+}{N}\Big) 
\end{split}  
\end{equation}
For an arbitrary $\xi\in\cF_+^{\leq N}$, we bound 
\begin{equation} \label{eq:cE0K}
\begin{split} 
|\langle\xi, &\cE^K_{1} \xi\rangle| \\ \leq \; & C \sum_{p \in P_{H}} p^2 |\eta_p|^3 \|b_p\xi\|\|(\cN_++1)^{1/2} \xi\|+C \sum_{p \in P_{H}} p^2 \eta_p^2 \|a_p\xi\|^2+C\sum_{p \in P_{H}} p^2 \eta_p^4\\
\leq \; &C\ell^{2\a}\|(\cN_++1)^{1/2}\xi\|^2,
\end{split}  
\end{equation}
since $|\big((\gamma_p^{(s)})^2-1\big)|\leq C \eta_p^2$, $(\s^{(s)}_p)^2\leq C \eta_p^2$ and $p^2 \eta_p^2 \leq C \ell^{2\alpha}$, for all $p \in P_H$.

We consider now $\text{G}_2$ in \eqref{eq:cKterms}. We split it as $\text{G}_2 = \text{G}_{21} + \text{G}_{22} + \text{G}_{23} + \text{G}_{24}$, with 
\begin{equation} \label{eq:cKtermsG2}
\begin{split} 
\text{G}_{21} &= \int_0^1ds \sum_{p \in P_{H}} p^2 \eta_p \left( \gamma_p^{(s)} b_p d_{-p}^{(s)} + \hc \right), \hspace{.23cm} \text{G}_{22} =  \int_0^1ds \sum_{p \in P_{H}} p^2 \eta_p \left( \s^{(s)}_p b^*_{-p} d_{-p}^{(s)} + \hc \right) \\
\text{G}_{23} &= \int_0^1ds \sum_{p \in P_{H}} p^2 \eta_p \left( \gamma_p^{(s)} d_p^{(s)} b_{-p}+ \hc \right), \hspace{.05cm} \text{G}_{24} =  \int_0^1ds \sum_{p \in P_{H}} p^2 \eta_p \left( \s^{(s)}_p d_p^{(s)} b^*_{p}+ \hc \right) 
\end{split} \end{equation}

We consider $\text{G}_{21}$ first. We write
\begin{equation}\nonumber
\begin{split} 
\text{G}_{21} = \; &\int_0^1ds \sum_{p \in P_{H}} p^2 \eta_p (\gamma_p^{(s)} -1) b_p d_{-p}^{(s)} + 
\int_0^1 ds \sum_{p \in \L^*_+} p^2 \eta_p b_p d_{-p}^{(s)} \\
&-\int_0^1 ds \sum_{p \in P_H^c} p^2 \eta_p b_p \left[ d_{-p}^{(s)} + \frac{1}{N} \sum_{q \in P_H} s \eta_q b_q^* a^*_{-q} a_{-p} \right]  \\
&+ \int_0^1 ds \, \frac{s}{N} \sum_{p\in P_H^c , q \in P_H} p^2 \eta_p \eta_q b_p b_q^* a_{-q}^* a_{-p} +\hc 
\end{split} 
\end{equation}
Massaging a bit the second term (similarly as we do below, in (\ref{eq:mainF321}), (\ref{eq:bNb-comm}) 
in the proof of Prop. \ref{prop:G2V}), we arrive at 
\begin{equation} 
\label{eq:G21f} \text{G}_{21} = - \sum_{p \in P_H} p^2 \eta_p \frac{\cN_+ + 1}{N} \frac{N-\cN_+}{N} + \left[  \cE_{2}^K + \hc \right] \end{equation}
where $\cE_{2}^K = \sum_{j=1}^5 \cE_{2j}^K$, with  
\begin{equation}\label{eq:cE2} \begin{split} 
\cE_{21}^K = \; &\frac{1}{2N} \sum_{p \in P_H} p^2 \eta_p^2 (\cN_++1) \big( b^*_p b_p - \frac 1 N a_p^* a_p\big) , \hspace{.3cm} 
\cE_{22}^K = \; \int_0^1 ds \sum_{p \in P_H} p^2 \eta_p (\gamma_p^{(s)} - 1)  b_p d_{-p}^{(s)} \\
\cE_{23}^K = \; & \int_0^1 ds \sum_{p \in \Lambda_+^*} p^2 \eta_p b_p \bar{d}_{-p}^{(s)} , \hspace{3.2cm} \cE_{24}^K = \; - \int_0^1 ds \sum_{p \in P_H^c} p^2 \eta_p b_p \bar{\bar{d}}_{-p}^{(s)} \\
\cE_{25}^K =\; & \frac{1}{2N} \sum_{p \in P_H^c , q \in P_H} p^2 \eta_p \eta_q b_p b_q^* a_{-q}^* a_{-p}  \end{split} \end{equation}
Here we introduced the notation \begin{equation}\label{eq:barbarnot} \bar{d}^{(s)}_{-p} = d_{-p}^{(s)} +s \eta_{H} (p) \frac{\cN_+}{N} b_p^*, \quad \text{ and } \quad \bar{\bar{d}}^{(s)}_{-p} = d_{-p}^{(s)} + \frac{1}{N} \sum_{q \in P_H} s \eta_q b_q^* a_{-q}^* a_{-p}.\end{equation} We can easily bound
\begin{equation}\label{eq:cEK21} |\langle \xi , \cE_{21}^K \xi \rangle | \leq C \sum_{p \in P_H} p^2 \eta_p^2 \| a_p \xi \|^2 \leq C \ell^{2\a}  \|  \cN_+^{1/2} \xi \|^2 \end{equation}
and, using $|\gamma_p^{(s)} - 1| \leq C \eta_p^2$ and (\ref{eq:d-bds}) in Lemma \ref{lm:dp}, 
\begin{equation}\label{eq:cEK22} \begin{split} |\langle \xi , \cE_{22}^K \xi \rangle | &\leq \sum_{p \in P_H} p^2 |\eta_p|^3 \| \cN_+^{1/2} \xi \| \| d^{(s)}_{-p} \xi \| \\ &\leq \sum_{p \in P_H} p^2 |\eta_p|^3 \| \cN_+^{1/2} \xi \| \left[ |\eta_p| \| \cN^{1/2}_+ \xi \|  + \| \eta \|  \| a_p \xi \| \right] \leq  C \ell^{3\a/2} \| \cN_+ + 1)^{1/2} \xi \|^2 \end{split} \end{equation}
With (\ref{eq:off}) in Lemma \ref{lm:dp}, we can also estimate
\begin{equation}\label{eq:cEK24} \begin{split} |\langle \xi , \cE_{24}^K \xi \rangle | &\leq \int_0^1 ds \sum_{p \in P_H^c} p^2 |\eta_p| \| (\cN_+ + 1)^{1/2} \xi \| \| (\cN_+ + 1)^{-1/2} \bar{\bar{d}}_{-p}^{(s)} \xi \|  \\ &\leq C \| \eta_H \|^2 \| (\cN_+ + 1)^{1/2} \xi \| \sum_{p \in P_H^c} p^2 |\eta_p | \| a_p \xi \| \\ &\leq C \ell^\a  \| (\cN_+ + 1)^{1/2} \xi \| \| \cK^{1/2} \xi \| \Big[ \sum_{|p| \leq \ell^{-\a} } p^2 \eta_p^2 \Big]^{1/2} \\ &\leq C \ell^{\a/2} \| (\cN_+ + 1)^{1/2} \xi \| \| \cK^{1/2} \xi \| \end{split} \end{equation}
To bound the last term in (\ref{eq:cE2}), we commute $b_p$ to the right (note that $p \not = q$). 
We find 
\begin{equation}\label{eq:cEK25} \begin{split} |\langle  \xi , \cE_{25}^K \xi \rangle | &\leq C N^{-1}  \sum_{p \in P_H^c, q \in P_H} p^2 |\eta_p|  |\eta_q| \| a_q a_{-q} \xi \| \| a_p a_{-p} \xi \| \\ &\leq C \sum_{p \in P_H^c, q \in P_H} p^2 |\eta_p| |\eta_q| \| a_q \xi \| \| a_p \xi \| \\ &\leq C \Big[ \sum_{p \in P_H^c, q \in P_H} p^2 \eta^2_p q^2 \| a_q \xi \|^2 \Big]^{1/2}  \Big[  \sum_{p \in P_H^c, q \in P_H} q^{-2} \eta_q^2 p^2  \| a_p \xi \|^2 \Big]^{1/2} \\ &\leq C \ell^\a \| \cK^{1/2} \xi \|^2 \end{split} \end{equation}
To control the third term in (\ref{eq:cE2}), we first use (\ref{eq:eta-scat0}) to write 
\[ \cE_{23}^K = \int_0^1 ds  \sum_{p \in \L_+^*} \bigg( \widehat{V} (./N) * \widehat{f}_{N,\ell} \bigg) (p) b_p \bar{d}^{(s)}_{-p} + \int_0^1 ds \, N^3 \lambda_\ell \sum_{p \in \L_+^*} \bigg( \widehat{\chi}_\ell * \widehat{f}_{N,\ell} \bigg) (p) b_p \bar{d}^{(s)}_{-p} \]
Switching to position space, we obtain
\[ \begin{split} \cE_{23}^K = \; &\int_0^1 ds \int_{\Lambda^2}  dx dy N^3 V(N(x-y)) f_{N,\ell} (x-y) \check{b}_x \check{\bar{d}}^{(s)}_y \\ &+  \int_0^1 ds N^3 \lambda_\ell \int_{\Lambda^2} dx dy \chi_\ell (x-y) f_{N,\ell} (x-y) \check{b}_x \check{\bar{d}}^{(s)}_y \end{split} \]
With Lemma \ref{3.0.sceqlemma}, we find
\[ \begin{split} 
|\langle \xi , \cE_{23}^K \xi \rangle | \leq \; &\int_0^1 ds \int_{\Lambda^2} dx dy \left[ N^3 V (N (x-y)) + \ell^{-3} \chi_\ell (x-y) \right]  \\ &\hspace{4cm}  \times \| (\cN_+ + 1)^{1/2} \xi \| \| (\cN_+ + 1)^{-1/2}  \check{a}_x \check{\bar{d}}^{(s)}_y \xi \|  \end{split} \]
Hence, with Eq. (\ref{eq:splitdbd}) in Lemma \ref{lm:dp}, 
\[ \begin{split} 
|\langle \xi , \cE_{23}^K \xi \rangle | \leq \; &C N^{-1} \| \eta_H \| 
\int_0^1 ds \int_{\Lambda^2} dx dy \left[ N^3 V (N (x-y)) + \ell^{-3} \chi_\ell (x-y) \right] \\ &\times \| (\cN_+ + 1)^{1/2} \xi \|  \Big[ N \| (\cN_++1)^{1/2} \xi \| +  \| \check{a}_x \cN_+ \xi \| +  \| \check{a}_y \cN_+ \xi \| + \| \check{a}_x \check{a}_y \cN_+^{1/2} \xi \| \Big] \\
\leq \; &C \ell^{(\a-3)/2} \| (\cN_+ + 1)^{1/2} \xi \|^2 + C \ell^{\a/2} \|(\cN_+ + 1)^{1/2} \xi \| \| \cV_N^{1/2} \xi \| \end{split} \]
Combining the last bound with (\ref{eq:cEK21}), (\ref{eq:cEK22}), (\ref{eq:cEK24}), (\ref{eq:cEK25}), we conclude that 
\begin{equation}\label{eq:cE2f} \pm \left[ \cE_2^K  + \hc \right] \leq C \ell^{(\a-3)/2} (\cH_N + 1) \end{equation}

Next, we consider the term $\text{G}_{22}$ in \eqref{eq:cKtermsG2}. With (\ref{eq:d-bds}) in Lemma \ref{lm:dp}, we find 
\begin{equation}\label{eq:G22}
\begin{split} 
|\langle\xi,\text{G}_{22}\xi\rangle|&\leq C \sum_{p \in P_{H}} p^2 \eta_p^2\|b_{-p}\xi\|\| d_{-p} \xi\|\\
&\leq C \sum_{p \in P_{H}} p^2 \eta_p^2\|b_{-p}\xi\|\left[ |\eta_p| \| (\cN_+ + 1)^{1/2} \xi \| + \| \eta \| \| b_p \xi \| \right]\\
&\leq C\ell^{5\a/2}\|(\cN_++1)^{1/2}\xi\|^2
\end{split}  
\end{equation}
As for the term $\text{G}_{23}$, defined in \eqref{eq:cKtermsG2}, we split it as $\text{G}_{23} = \sum_{j=1}^4  \cE_{3j}^K + \hc$, with 
\begin{equation*} 
\begin{split} 
\cE_{31}^K = \; &\int_0^1ds \sum_{p \in P_{H}} p^2 \eta_p \big(\gamma_p^{(s)}-1\big) d_p^{(s)}b_{-p} \, , \hspace{1.3cm} \cE_{32}^K =  \int_0^1ds \sum_{p \in \L_+^*} p^2 \eta_p d_p^{(s)} b_{-p} \\  \cE_{33}^K =  \; & \frac{1}{2N} \sum_{p \in P_H^c, q \in P_H} p^2 \eta_p \eta_q b_q^* a_{-q}^* a_p b_{-p} \, , \hspace{1.4cm}
\cE_{34}^K = -\int_0^1 ds \sum_{p\in P_H^c} p^2 \eta_p \bar{\bar{d}}^{(s)}_p b_{-p}
\end{split}  
\end{equation*}
with the notation for $\bar{\bar{d}}^{(s)}_p$ introduced in (\ref{eq:barbarnot}). With (\ref{eq:d-bds}) in Lemma \ref{lm:dp}, we find 
\[ \begin{split} 
|\langle \xi, \cE_{31}^K \xi \rangle | &\leq C \int_0^1 \sum_{p \in P_H} p^2 |\eta_p |^3 \| (\cN_+ + 1)^{1/2} \xi \| \| (\cN_++1)^{-1/2}  d_p^{(s)} b_{-p} \xi \| \\ &\leq C \| \eta_H \|  \| (\cN_+ + 1)^{1/2} \xi \| \sum_{p \in P_H} p^2 | \eta_p|^3  \| b_p \xi \| \leq C \ell^{3\a} \| (\cN_+ + 1)^{1/2} \xi \|^2 \end{split} \]
and also, proceeding as in (\ref{eq:cEK24}),  
\begin{equation}\label{eq:cEK34f} \begin{split} 
|\langle \xi, \cE_{34}^K \xi \rangle | &\leq C \int_0^1 ds \sum_{p \in P^c_H} p^2 |\eta_p | \| (\cN_+ + 1)^{1/2} \xi \| \| (\cN_+ + 1)^{-1/2} \bar{\bar{d}}_p^{(s)} b_{-p} \xi \| \\ &\leq C \| \eta_H \|^2 \| (\cN_+ + 1)^{1/2} \xi \| \sum_{p \in P_H^c} p^2 |\eta_p | \| b_{-p} \xi \| \\ &\leq C \ell^{\a/2}   \|  (\cN_+ + 1)^{1/2} \xi \| \| \cK^{1/2} \xi \| \end{split} \end{equation}
The term $\cE^K_{33}$ coincides with the contribution $\cE_{25}^K$ in (\ref{eq:cE2}); from (\ref{eq:cEK25}) we obtain $\pm \cE_{33}^K \leq C \ell^\a \cK$. As for $\cE_{32}^K$, we use (\ref{eq:eta-scat0}) and we switch to position space. Proceeding as we did above to control the term $\cE_{23}^K$, we arrive at 
\[\begin{split} |\langle \xi , \cE_{32}^K \xi \rangle | \leq \int_0^1 ds \int_{\Lambda^2} dx dy &\left[ N^3 V(N(x-y)) + \ell^{-3} \chi_\ell (x-y) \right] \\ &\hspace{2cm} \times \| (\cN_+ + 1)^{1/2} \xi \| \| (\cN_+ + 1)^{-1/2} \check{d}_x^{(s)} \check{a}_y \xi \| \end{split} \]
With (\ref{eq:dxy-bds}) in Lemma \ref{lm:dp}, we find 
\[\begin{split} |\langle \xi , \cE_{32}^K \xi \rangle | &\leq C N^{-1}  \| \eta_H \| \int_{\Lambda^2} dx dy \left[ N^3 V(N(x-y)) + \ell^{-3} \chi_\ell (x-y) \right] \\ &\hspace{2cm} \times \| (\cN_+ + 1)^{1/2} \xi \| \left[ \| \check{a}_y (\cN_+ + 1) \xi \| + \| \check{a}_x \check{a}_y (\cN_+ + 1)^{1/2} \xi \| \right]  \\ 
&\leq C \ell^{(\a-3)/2} \| (\cN_+ + 1)^{1/2} \xi \|^2 + C \ell^{\a/2} \| (\cN_+ + 1)^{1/2} \xi \| \| \cV_N^{1/2} \xi \| 
 \end{split} \]
Combining the last bounds, we conclude that 
\begin{equation}\label{eq:G23f} \pm \text{G}_{23} \leq C \ell^{(\a-3)/2} (\cH_N + 1) \end{equation} 

To estimate the term $\text{G}_{24}$ in \eqref{eq:cKtermsG2}, we use (\ref{eq:d-bds}) in Lemma \ref{lm:dp}; with (\ref{eq:H1eta}), we find 
\begin{equation*} 
\begin{split} 
|\langle &\xi,\text{G}_{24}\xi\rangle| \\ &\leq C \int_0^1 ds  \sum_{p \in P_{H}} p^2 \eta_p^2 \|(\cN_++1)^{1/2}\xi\|\|(\cN_++1)^{-1/2} d_p^{(s)}b^*_{p}\xi\| \\
&\leq C N^{-1} \|(\cN_++1)^{1/2}\xi\|  \sum_{p \in P_{H}} p^2 \eta_p^2  \left[ |\eta_p| \| (\cN_+ + 1)^{3/2} \xi \| + \| \eta_H \| \| b_p b_p^* (\cN_+ +1)^{1/2} \xi \| \right] \\
&\leq C  N^{-1} \|(\cN_++1)^{1/2}\xi\| \\ &\hspace{.5cm} \times \sum_{p \in P_{H}} p^2 \eta_p^2  \left[ |\eta_p| \| (\cN_+ + 1)^{3/2} \xi \| + \| \eta_H \| \| (\cN_+ +1)^{1/2} \xi \| + \| \eta_H \| \| a_p (\cN_+ + 1) \xi \| \right] \\
&\leq C \ell^{\a/2} \| (\cN_+ + 1)^{1/2} \xi \|^2 \end{split} \end{equation*}
Together with (\ref{eq:G21f}), (\ref{eq:cE2f}), (\ref{eq:G22}), (\ref{eq:G23f}), this implies that 
\[ \text{G}_2 =  - \sum_{p \in P_H} p^2 \eta_p \frac{\cN_+ + 1}{N} \frac{N-\cN_+}{N} + \cE_4^K \]
where 
\begin{equation}\label{eq:G2fi} \pm \cE_4^K \leq C \ell^{(\a-3)/2} (\cH_N + 1) \end{equation}

Finally, we consider $\text{G}_3$, defined in \eqref{eq:cKterms}. We split it as $\text{G}_3 = \cE_{51}^K + \cE_{52}^K + \hc$, with 
\[ \cE_{51}^K = \int_0^1 ds \sum_{p \in \L_+^*} p^2 \eta_p d^{(s)}_p d_{-p}^{(s)}, \qquad \cE_{52}^K= - \int_0^1 ds \sum_{p \in P_H^c} p^2 \eta_p d^{(s)}_p d^{(s)}_{-p}  \]
With (\ref{eq:d-bds}) in Lemma \ref{lm:dp} (using $\eta_{H} (p) = 0$ for $p \in P_H^c$) and proceeding as in (\ref{eq:cEK34f}), we obtain 
\[ \begin{split} 
|\langle \xi, \cE_{52}^K \xi \rangle | &\leq C \| \eta_H \| \sum_{p \in P_H^c} p^2 |\eta_p| \| (\cN_+ + 1)^{1/2} \xi \| \| d_{-p} \xi \|  \\ &\leq C \| \eta_H \|^2 \| (\cN_+ + 1)^{1/2} \xi \| \sum_{p \in P_H^c} p^2 |\eta_p | \| b_{-p} \xi \| \leq C \ell^{\a/2} \| (\cN_+ + 1)^{1/2} \xi \| \| \cK^{1/2} \xi \| \end{split} \]
To estimate $\cE_{51}^K$, we use (\ref{eq:eta-scat0}) and we switch to position space. Similarly as in the analysis of the terms $\cE_{23}^K$ and $\cE_{32}^K$ above, we obtain
\[ \begin{split}  | \langle \xi , \cE_{51}^K \xi \rangle | \leq \; &C \| (\cN_+ + 1)^{1/2} \xi \| \int_0^1 ds \int_{\L^2}  dx dy \left[ N^3 V(N(x-y)) + \ell^{-3} \chi_\ell (x-y) \right] \\ &\hspace{8cm} \times \| (\cN_+ + 1)^{-1/2} \check{d}^{(s)}_x \check{d}^{(s)}_y \xi \| \end{split} \]
With (\ref{eq:ddxy}) in Lemma \ref{lm:dp}, we arrive at
\[ \begin{split} | \langle &\xi , \cE_{51}^K \xi \rangle | \\ &\leq C N^{-2} \| (\cN_+ + 1)^{1/2} \xi \| \int_0^1 ds \int_{\L^2}  dx dy \left[ N^3 V(N(x-y)) + \ell^{-3} \chi_\ell (x-y) \right] \\ &\hspace{.3cm} \times \Big[ N \| \eta_H \| \| (\cN_+ + 1)^{3/2} \xi \|  + \| \eta_H \|^2 \| \check{a}_x \cN_+^2 \xi \| + \| \eta_H \|^2 \| \check{a}_y \cN_+^2 \xi \| + \| \eta_H \|^2 \| \check{a}_x \check{a}_y \cN_+^{3/2} \xi \| \Big] 
\\ &\leq C ( \ell^{\a /2} + \ell^{\alpha-3/2} ) \| (\cN_+ + 1)^{1/2} \xi \|^2 + C \ell^{3\a/2} \| (\cN_+ + 1)^{1/2} \xi \| \| \cV_N^{1/2} \xi \| 
\end{split} \]
Hence, $\pm \text{G}_3 \leq C (\ell^{\a/2} + \ell^{3(\alpha-1)/2}) (\cH_N + 1)$. With (\ref{eq:first2cK}),
(\ref{eq:cE0K}), (\ref{eq:G2fi}), we obtain (\ref{eq:K-dec}) and (\ref{eq:errorKc}), as desired. 

As explained in Corollary \ref{cor:ffd}, the bounds in Lemma \ref{lm:dp} continue to hold, with an additional factor $M^{-2} \| f' \|_\infty^2$ on the r.h.s., if we replace the operators $d_p$, $d^*_p$, $\bar{\bar{d}}_p$,  $\check{a}_y \check{\bar{d}}_x$, $\check{d}_x \check{d}_y$ by their double commutators with $f (\cN_+ /M)$.
From (\ref{eq:ffbp}) we conclude that also bounds involving $b_p$ and $b_p^*$ or, analogously $\check{b}_x$ and $\check{b}^*_x$ remain true if we replace them by their double commutator with $f(\cN_+/M)$. As a consequence, \eqref{eq:errCommK} follows through the same arguments that led us to (\ref{eq:errorKc}).
\end{proof}

In the next proposition, we study the second term on the r.h.s. of (\ref{eq:dec-G2}).
\begin{prop}\label{prop:G2V}
There is a constant $C > 0$ such that 
\begin{equation}\label{eq:cEV-defc} 
\begin{split}
 e^{-B (\eta_H )} &\cL^{(2,V)}_N  e^{B(\eta_H)} \\ = &\sum_{p \in P_{H}} \widehat{V} (p/N) \eta_p\Big(\frac{N-\cN_+}{N}\Big)\Big(\frac{N-\cN_+-1}{N}\Big) +\sum_{p \in \Lambda^*_+}  \widehat{V} (p/N)  a^*_pa_p \frac{N-\cN_+}{N} \\ &+ \frac{1}{2}\sum_{p \in \Lambda^*_+} \widehat{V} (p/N) \big( b_p b_{-p}+ b_{-p}^* b_p^*\big) +\cE_{N,\ell}^{(V)}
\end{split}
\end{equation}
where 
\begin{equation}\label{eq:errorVc}
  \pm \cE_{N,\ell}^{(V)}\leq  C \ell^{\a/2} (\cH_N +1)
\end{equation}
and 
\begin{equation} \begin{split}\label{eq:errCommV}
\pm \left[ f (\cN_+/M), \left[ f (\cN_+/M) ,\cE^{(V)}_{N,\ell} \right] \right] & \leq C \ell^{\a/2} M^{-2} \|f'\|^2_{\infty}\,  \big( \cH_N + 1 \big)
\end{split}\end{equation} 
for all $\a > 0$, $\ell \in (0;1/2)$ small enough, $f$ smooth and bounded, $M \in \bN$ and $N \in \bN$ large enough.
\end{prop}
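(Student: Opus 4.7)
The proof follows the same strategy as that of Proposition \ref{prop:K}, but uses that $\widehat V(\cdot)$ is only uniformly bounded (rather than carrying a $p^2$ factor), so error control relies entirely on the smallness $\|\eta_H\|\leq C\ell^{\alpha/2}$ from (\ref{eq:etaHL2}) and, for terms that cannot be summed in momentum space, on position-space estimates. I would decompose $\cL_N^{(2,V)} = A_1 + A_2$ with $A_1 = \sum_{p\in\Lambda_+^*}\widehat V(p/N)a_p^*a_p(N-\cN_+)/N$ and $A_2 = \frac12\sum_{p\in\Lambda_+^*}\widehat V(p/N)(b_p^*b_{-p}^* + b_pb_{-p})$, and handle each piece separately.

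For the off-diagonal piece $A_2$, the plan is to apply (\ref{eq:ebe}) directly to write
\[
e^{-B(\eta_H)}A_2\,e^{B(\eta_H)} = \tfrac12\sum_{p\in\Lambda_+^*}\widehat V(p/N)\bigl[(\gamma_p b_p^* + \sigma_p b_{-p} + d_p^*)(\gamma_p b_{-p}^* + \sigma_p b_p + d_{-p}^*) + \text{h.c.}\bigr],
\]
with $\gamma_p = \cosh\eta_H(p)$, $\sigma_p = \sinh\eta_H(p)$. Expanding this product, the terms $\gamma_p^2(b_p^*b_{-p}^* + b_p b_{-p})$ recover the off-diagonal piece of (\ref{eq:cEV-defc}) modulo an $O(\eta_p^2)$ remainder; the mixed terms $\gamma_p\sigma_p(b_p^*b_p + b_{-p}b_{-p}^* + \text{h.c.})$, once normal-ordered using $[b_{-p},b_{-p}^*] = (N-\cN_+)/N - N^{-1}a_{-p}^*a_{-p}$, yield precisely the advertised constant-like diagonal contribution $\sum_{p\in P_H}\widehat V(p/N)\eta_p(N-\cN_+)(N-\cN_+-1)/N^2$ upon using $\gamma_p\sigma_p = \eta_p + O(\eta_p^3)$ and $\eta_H(p) = \eta_p \chi_{P_H}(p)$; and all remaining terms (involving $\sigma_p^2$, or at least one $d_p^\sharp$) are error terms.

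For the diagonal piece $A_1$, I would use the fundamental-theorem identity $e^{-B(\eta_H)}A_1\,e^{B(\eta_H)} - A_1 = \int_0^1 e^{-sB(\eta_H)}[A_1,B(\eta_H)]e^{sB(\eta_H)}ds$. The commutator $[A_1,B(\eta_H)]$ produces $\Pi^{(2)}$-type operators which, together with the bound $|\widehat V(p/N)| \leq \|V\|_1$, Lemma \ref{lm:Ngrow}, and $\|\eta_H\|\leq C\ell^{\alpha/2}$, are easily bounded by $C\ell^{\alpha/2}(\cN_+ + 1)\leq C\ell^{\alpha/2}(\cH_N+1)$. Recovering $\sum_p\widehat V(p/N)a_p^*a_p(N-\cN_+)/N$ from its conjugate uses that the full conjugation-induced correction has the same form.

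The main technical obstacle is estimating those $A_2$-error terms which have the form $\sum_p\widehat V(p/N)(\text{operator with at least one factor }d_p^\sharp)$: since $\widehat V(p/N)$ carries no $p^2$ decay, momentum-space Cauchy--Schwarz alone does not close. The remedy, as in the analysis of $\cE^K_{23}, \cE^K_{32}, \cE^K_{51}$ in the proof of Prop.~\ref{prop:K}, is to use (\ref{eq:eta-scat0}) to switch to position space, rewriting the relevant expressions as $\int N^3 V(N(x-y))(\cdots)dxdy + N^3\lambda_\ell\int\chi_\ell(x-y)(\cdots)dxdy$, and then applying the position-space estimates (\ref{eq:dxy-bds})--(\ref{eq:ddxy}) of Lemma \ref{lm:dp} and Cauchy--Schwarz to absorb $N^3 V(N(x-y))$ into $\cV_N$. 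Because $\widehat V$ lacks the $p^2$ weight that produced the $\ell^{(\alpha-3)/2}$ prefactor in Prop.~\ref{prop:K}, each such estimate closes here with a strictly smaller $\ell^{\alpha/2}$ factor, giving the improved bound (\ref{eq:errorVc}). Finally, (\ref{eq:errCommV}) follows by reproducing the whole argument with Corollary \ref{cor:ffd} in place of Lemma \ref{lm:dp} and with the identity $[f_M,[f_M,b_p]] = (f(\cN_+/M)-f((\cN_++1)/M))^2 b_p$ in place of (\ref{eq:ebe}), each such substitution introducing precisely the extra $M^{-2}\|f'\|_\infty^2$ factor claimed in (\ref{eq:errCommV}).
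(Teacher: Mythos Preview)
There is a genuine gap in your extraction of the constant term. You claim that normal-ordering the cross terms $\gamma_p\sigma_p(b_p^*b_p + b_{-p}b_{-p}^*+\text{h.c.})$ yields the full diagonal contribution $\sum_{p\in P_H}\widehat V(p/N)\eta_p\,(N-\cN_+)(N-\cN_+-1)/N^2$. But the commutator $[b_{-p},b_{-p}^*]=(N-\cN_+)/N-N^{-1}a_{-p}^*a_{-p}$ only produces the factor $(N-\cN_+)/N$, not $(N-\cN_+)(N-\cN_+-1)/N^2$. The missing factor $(N-\cN_+-1)/N$ actually comes from a term you have declared to be a pure error, namely $\gamma_p b_p d_{-p}+\text{h.c.}$: the remainder $d_{-p}$ has a leading piece $-\eta_H(p)(\cN_+/N)b_p^*$ (this is exactly what the paper isolates by writing $d_{-p}=\bar d_{-p}-\eta_H(p)(\cN_+/N)b_p^*$), and $b_p\cdot(-\eta_H(p)(\cN_+/N)b_p^*)$ produces, after (\ref{eq:bNb-comm}), the correction $-\eta_H(p)(N-\cN_+)(\cN_++1)/N^2$ needed to complete the constant. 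Since $\sum_{p\in P_H}\widehat V(p/N)\eta_p=O(N)$, the discrepancy between your constant and the correct one is of order $(\cN_++1)$ with no $\ell^{\alpha/2}$ smallness, so it cannot be hidden in $\cE_{N,\ell}^{(V)}$ satisfying (\ref{eq:errorVc}). You must therefore split off this leading piece of the one-$d$ terms before declaring the rest an error; only the corrected remainder $\bar d_{-p}$ is genuinely $\ell^{\alpha/2}$-small.

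A second, minor point: the scattering relation (\ref{eq:eta-scat0}) is not what carries you to position space here. In Prop.~\ref{prop:K} it was needed to trade the unsummable weight $p^2\eta_p$ for $(\widehat V(\cdot/N)*\widehat f_{N,\ell})(p)$. In the present proposition the weight is already $\widehat V(p/N)$, so one passes to position space directly via $\widehat V(p/N)\leftrightarrow N^3V(N\cdot)$ and then applies (\ref{eq:dxy-bds})--(\ref{eq:ddxy}); this is exactly what the paper does for the terms $\cE_{42}^V$, $\cE_{52}^V$ and $\text{F}_{33}$, and it is why the final bound here is $\ell^{\alpha/2}$ rather than $\ell^{(\alpha-3)/2}$. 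Your Duhamel treatment of the diagonal piece $A_1$ is fine (the paper instead rewrites $a_p^*a_p(N-\cN_+)/N=b_p^*b_p-N^{-1}a_p^*a_p$ and conjugates $b_p^*b_p$ via (\ref{eq:ebe}), which is equivalent in effect), and your reasoning for (\ref{eq:errCommV}) via Corollary~\ref{cor:ffd} is correct.
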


\begin{proof}
To show \eqref{eq:errorVc}, we start from \eqref{eq:L2VN} and we decompose
\begin{equation}\label{eq:G2-deco} \begin{split} 
e^{-B(\eta_H)} \cL^{(2,V)}_{N}  e^{B(\eta_H)}  =\; & \sum_{p \in \Lambda^*_+}  \widehat{V} (p/N) e^{-B(\eta_H)} b_p^* b_p e^{B(\eta_H)} \\ &- \frac{1}{N} \sum_{p \in \Lambda^*_+} \widehat{V} (p/N) e^{B(\eta_H)} a_p^* a_p e^{-B(\eta_H)} \\ 
&+ \frac{1}{2} \sum_{p \in \Lambda^*_+} \widehat{V} (p/N) e^{-B(\eta_H)} \big[ b_p b_{-p} + b_p^* b_{-p}^* \big] e^{B(\eta_H)} \\ =: \; &\text{F}_1 + \text{F}_2 +\text{F}_3 
\end{split} \end{equation} 
With equations (\ref{eq:ebe}), we split $\text{F}_1$ as
\begin{equation*} 
\begin{split} 
\text{F}_1 =\; & \sum_{p \in  \L_+^*}  \widehat{V} (p/N) \big[ \gamma_p b_p^* + \sigma_p b_{-p} \big] \big[ \gamma_p b_p + \sigma_p b_{-p}^*] \\
&+\sum_{ p\in \L_+^*} \widehat{V} (p/N) \big[ (\gamma_p b_p^* + \sigma_p b_{-p})  d_p+  d_p^*(\gamma_p b_p + \sigma_p b_{-p}^*)+ d_p^* d_p\big]\\
=: &\, \text{F}_{11} + \text{F}_{12} 
\end{split} \end{equation*}
with the notation $\gamma_p = \cosh \eta_H (p)$, $\s_p = \sinh \eta_H (p)$ and the operators $d_p$, as defined in (\ref{eq:defD}), with $\eta$ replaced by $\eta_H$. We decompose 
\begin{equation*}
\begin{split} 
\text{F}_{11}=\; & \sum_{ p\in  \L_+^*} \widehat{V} (p/N)  a_p^*  a_p \frac{N-\cN_+}{N} + \cE_{1}^V
\end{split} \end{equation*}
with
\begin{equation*} 
\begin{split} 
\cE_{1}^V = &\frac{1}{N} \sum_{p \in \L_+^*} \widehat{V} (p/N) a_p^* a_p  + \sum_{ p\in P_H} \widehat{V} (p/N) \Big[ (\g_p^2-1) b_p^*b_p + \gamma_p \sigma_p (b_{-p}b_p+b_p^*b_{-p}^*) \\
&\hspace{5cm}+  \sigma_p^2 ( b_{p}^*b_p-N^{-1}a_{p}^*a_p)+ \sigma_p^2 \Big(\frac{N-\cN_+}{N}\Big)\Big] 
\end{split} \end{equation*}
where we used $\gamma_p = 1$ and $\sigma_p = 0$ for $p \in P_H^c$ to restrict the second sum. With $|\g_p^2-1|\leq C \eta_p^2$, $|\s_p|\leq C|\eta_p|$ for all $p \in P_H$ and since $\| \eta_H \| \leq \ell^{\a/2}$, we find 
\begin{equation*}
\begin{split} 
\pm\cE^V_{1}\leq C(\ell^{\a/2} + N^{-1}) (\cN_++1) \leq C \ell^{\a/2} (\cN_++ 1) 
\end{split} \end{equation*}
if $N$ is large enough. With Lemma \ref{lm:dp} (with $\eta$ replaced by $\eta_H$),  we can also bound $\pm \text{F}_{12} \leq C\ell^{\a/2}(\cN_++1)$. We conclude that 
\begin{equation}\label{eq:F1} \text{F}_1 = \sum_{p \in \L_+^*} \widehat{V} (p/N) a_p^* a_p \frac{N-\cN_+}{N} + \cE_2^V \end{equation}
with $\pm \cE_2^V \leq C \ell^{\a/2} (\cN_+ + 1)$. Let us now consider the second contribution on the r.h.s. of (\ref{eq:G2-deco}). We have $-\text{F}_2 \geq 0$ and, by Lemma \ref{lm:Ngrow}, 
\begin{equation}\label{eq:fin-G22} \begin{split} 
-\text{F}_2 = \; & \frac{1}{N} \sum_{p \in \Lambda^*_+} \widehat{V} (p/N) e^{-B(\eta_H)} a_p^* a_p e^{B (\eta_H)}  \leq  \frac{ \| \widehat{V} \|_\infty}{N} e^{-B (\eta_H)} \cN_+ e^{B (\eta_H)} \leq C \ell^{\a/2} (\cN_++1) 
\end{split} \end{equation}
if $N \in \bN$ is large enough, 
Finally, we turn our attention to the last term on the r.h.s. of \eqref{eq:G2-deco}.  With (\ref{eq:ebe}), we decompose  $\text{F}_3 $ as
\begin{equation}\label{eq:G23split}
\begin{split}
\text{F}_3 =\; &\frac{1}{2} \sum_{p \in \Lambda^*_+} \widehat{V} (p/N) \left[ \gamma_p b_p + \sigma_p b_{-p}^* \right] \left[ \gamma_p b_{-p} + \sigma_p b_p^* \right]  +\hc \\ 
&+ \frac{1}{2}  \sum_{p \in \Lambda^*_+} \widehat{V} (p/N) \, \left[ (\g_p b_p+ \s_p b^*_{-p}) \, d_{-p} + 
d_p\, (\g_p b_{-p} + \s_p b^*_{p}) \right] +\hc \\ 
&+\frac{1}{2}  \sum_{p \in \Lambda^*_+} \widehat{V} (p/N) d_p d_{-p} + \hc \\
=:  &\,\text{F}_{31} + \text{F}_{32}+\text{F}_{33}+\hc
\end{split} \end{equation}
We decompose the first term as 
\begin{equation}\label{eq:fin-G23}
\begin{split}
\text{F}_{31}
= \; & \frac{1}{2}\sum_{p \in \Lambda^*_+} \widehat{V} (p/N) \big( b_p b_{-p}+ b_{-p}^* b_p^*\big) +\sum_{p \in P_H} \widehat{V} (p/N) \eta_p \frac{N-\cN_+}{N} + \cE^V_3\\ 
\end{split} \end{equation}
with (recall that $\gamma_p =1$ and $\sigma_p = 0$ for $p \in P_H^c$) 
\begin{equation*}
\begin{split}
\cE_3^V= \; &\sum_{p \in P_H} \widehat{V} (p/N) \left[ \frac{1}{2}(\gamma_p^2-1+ \sigma_p^2)\big( b_p b_{-p}+ b_{-p}^* b_p^*\big)+2 \sigma_p \gamma_p b_{p}^* b_{p} \right. \\
 &\hspace{5cm} \left.-N^{-1} \gamma_p \sigma_p a^*_p a_p+ (\gamma_p \sigma_p - \eta_p) \frac{N-\cN_+}{N} \right]  
\end{split} \end{equation*}
Using again the estimates $|\g_p^2-1|\leq C\eta_p^2$ and $|\s_p|\leq C |\eta_p|$ for all $p \in P_H$, we find 
\begin{equation}\label{eq:G23-deco1} 
\begin{split}
\pm\cE_3^V \leq C\ell^{\a/2}(\cN_++1)
\end{split} \end{equation}
Let us now consider $\text{F}_{32}$ in \eqref{eq:G23split}. We divide it into four parts
\begin{equation}\label{eq:splitF32}
\begin{split}
\text{F}_{32} = \; & \frac{1}{2}  \sum_{p \in \Lambda^*_+} \widehat{V} (p/N) \, \left[ ( \g_p b_p+ \s_p b^*_{-p}) \, d_{-p} + d_p\, (\g_p b_{-p} + \s_p b^*_{p}) \right] +\hc \\ 
=: \; & \text{F}_{321}+\text{F}_{322}+\text{F}_{323}+\text{F}_{324} 
\end{split} \end{equation}
We start with $\text{F}_{321}$, which we decompose as
\begin{equation}\label{eq:mainF321}
\begin{split}
\text{F}_{321} = \; & \frac{1}{2}  \sum_{p \in \Lambda^*_+} \widehat{V} (p/N) (\g_p - 1) b_p d_{-p} + \frac{1}{2} \sum_{p \in \L_+^*} \widehat{V} (p/N) b_p \left[ d_{-p} + \eta_H (p) \frac{ \cN_+}{N} b_p^* \right] \\ &- \frac{1}{2N} \sum_{p \in \L_+^*} \widehat{V} (p/N) \eta_H (p) b_p \, \cN_+ b_p^* + \hc \end{split} \end{equation}
Using (\ref{eq:comm-bp}), we commute
\begin{equation}\label{eq:bNb-comm} b_p\,  \cN_+ b_p^* = (\cN_+ + 1) b_p b_p^* = (\cN_+ + 1) (1- \cN_+/N) + (\cN_+ + 1) (b_p^* b_p -N^{-1} a_p^* a_p) \end{equation}
We arrive at
\begin{equation*} 
\text{F}_{321} = - \sum_{p \in P_H} \widehat{V} (p/N) \eta_p \, \left(\frac{N-\cN_+}{N}\right)\left(\frac{\cN_+ +1}{N}\right) +\cE_4^V \end{equation*} 
where $\cE_4^V = \cE_{41}^V + \cE_{42}^V + \cE_{43}^V + \hc$, with 
\begin{equation*} 
\begin{split}
\cE^V_{41} = \; &\frac{1}{2}  \sum_{p \in \Lambda^*_+} \widehat{V} (p/N) \, (\g_p - 1) b_p d_{-p} \, , \qquad 
\cE_{42}^V = \frac{1}{2} \sum_{p \in \L_+^*} \widehat{V} (p/N) b_p \bar{d}_{-p} \\
\cE_{43}^V = \; &- \frac{1}{2} \sum_{p \in P_H} \widehat{V} (p/N) \eta_p \frac{\cN_+ + 1}{N} (b_p^* b_p - N^{-1} a_p^* a_p ) 
\end{split} \end{equation*}
and with the notation $\bar{d}_{-p} = d_{-p} + N^{-1}  \eta_H (p) \, \cN_+ b_p^*$. Since $|\g_p - 1| \leq C \eta_p^2 \chi (p \in P_H)$, we find easily with (\ref{eq:d-bds}) in Lemma \ref{lm:dp} that 
\[ \begin{split} |\langle \xi, \cE_{41}^V \xi \rangle |  &\leq C \sum_{p \in P_H} \eta_p^2 \| (\cN_+ + 1)^{1/2} \xi \| \left[ |\eta_p | \| (\cN_+ + 1)^{1/2} \xi \| + \| \eta_H \| \| a_p \xi \| \right] \\ &\leq C \ell^{3\a} \| (\cN_+ + 1)^{1/2} \xi \|^2 \end{split} \]
Furthermore
\[ |\langle \xi , \cE_{43}^V \xi \rangle | \leq C \sum_{p \in P_H} \eta_p \| a_p \xi \|^2 \leq C \ell^{2\a} \| \cN_+^{1/2} \xi \|^2 \]
To control $\cE_{42}^V$ we switch to position space. With (\ref{eq:splitdbd}) in Lemma \ref{lm:dp}, 
we find 
\[ \begin{split} 
|\langle \xi , \cE_{42}^V \xi \rangle | &\leq C \int_{\L^2}  dx dy \, N^3 V(N(x-y)) \| (\cN_+ + 1)^{1/2} \xi \| \| (\cN_+ + 1)^{-1/2} \check{a}_x \check{\bar{d}}_y \xi \| \\ &\leq C \| \eta_H \|  \int_{\L^2}  dx dy \, N^2 V(N(x-y)) \| (\cN_+ + 1)^{1/2} \xi \| \\ &\hspace{2cm} \times \Big[  N \| (\cN_+ + 1)^{1/2} \xi \| + \| \check{a}_x \cN_+ \xi \| +  \| \check{a}_y \cN_+ \xi \| +  \| \check{a}_x \check{a}_y \cN_+^{1/2} \xi \| \Big] \\
&\leq C \ell^{\a/2} \|(\cN_+ + 1)^{1/2} \xi \|^2 + C \ell^{\a/2} \| (\cN_+ + 1)^{1/2} \xi \| \| \cV_N^{1/2} \xi \|
\end{split} \]
We conclude that \begin{equation*}
\pm \cE^V_4 \leq C \ell^{\a/2} (\cH_N + 1). 
\end{equation*}
To estimate the term $\text{F}_{322}$ in (\ref{eq:splitF32}), we use (\ref{eq:d-bds}) in Lemma \ref{lm:dp} and $|\s_p|\leq C|\eta_H (p)|$; we obtain 
\begin{equation*} 
\begin{split}
|\langle\xi,\text{F}_{322}\xi\rangle| \leq \; &C \sum_{p \in P_H} |\eta_p| \| b_{-p} \xi\|\| d_{-p} \xi\| \\ &\leq C   \sum_{p \in P_H} |\eta_p| \| b_{-p} \xi\| \left[ |\eta_p| \| (\cN_+ + 1)^{1/2} \xi \| + \| \eta_H \| \| b_{-p} \xi \| \right] \\ &\leq C \ell^{5\a/2} \| (\cN_+ + 1)^{1/2} \xi \|^2  \end{split} \end{equation*}
Let us now consider the term $\text{F}_{323}$ on the r.h.s. of (\ref{eq:splitF32}). Here, we proceed as we did above to estimate $\text{F}_{321}$. We write  $\text{F}_{323} = \cE_{51}^V + \cE_{52}^V + \hc$, with
\begin{equation*} 
\begin{split}
\cE_{51}^V = \frac{1}{2}  \sum_{p \in \Lambda^*_+} \widehat{V} (p/N) \, (\g_p-1) \, d_p  b_{-p} \, , \qquad \cE_{52}^V = \frac{1}{2}  \sum_{p \in \Lambda^*_+} \widehat{V} (p/N) \,  d_p  b_{-p} 
\end{split} \end{equation*}
With $|\g_p - 1| \leq C \eta_p^2 \chi (p \in P_H)$, we obtain 
\[ |\langle \xi , \cE_{51}^V \xi \rangle | \leq C \sum_{p \in P_H} \eta_p^2 \| (\cN_+ + 1)^{1/2} \xi \| \| a_p \xi \| \leq C \ell^{5\a/2} \| (\cN_+ + 1)^{1/2} \xi \|^2 \]
Switching to position space, we find, by (\ref{eq:dxy-bds}), 
\[ \begin{split} |\langle \xi, \cE_{52}^V \xi \rangle | &\leq C \int_{\L^2}  dx dy \, N^3 V(N(x-y)) \| (\cN_+ + 1)^{1/2} \xi \| \| (\cN_+ +1)^{-1/2} \check{d}_x \check{a}_y \xi \| \\ &\leq C \| \eta_H \| \| (\cN_+ + 1)^{1/2} \xi \| \int_{\L^2}  dx dy \, N^2 V(N(x-y)) \left[ \| \check{a}_y \cN_+  \xi \| + \| \check{a}_x \check{a}_y \cN_+^{1/2} \xi \| \right] \\&\leq C \ell^{\a/2} \| (\cN_+ + 1)^{1/2} \xi \|^2 + C \ell^{\a/2} \| (\cN_+ + 1)^{1/2} \xi \| \| \cV_N^{1/2} \xi \| 
\end{split} \]
Hence, $\pm \text{F}_{323} \leq C \ell^{\a/2} (\cH_N + 1)$. 

To estimate the term $\text{F}_{324}$ in \eqref{eq:splitF32}, we use (\ref{eq:d-bds}) in Lemma \ref{lm:dp} and the estimate $\sum_{p \in \Lambda^*_+} \big|\widehat{V} (p/N) \big||\eta_p|\leq CN$; we find 
\begin{equation}\nonumber
\begin{split}
|\langle\xi, \text{F}_{324} \xi \rangle|  &\leq \; C\sum_{p \in P_H} \big|\widehat{V} (p/N) \big||\eta_p|\|(\cN_++1)^{1/2}\xi\|\|(\cN_++1)^{-1/2} d_p\, b^*_{p}\xi\|
\\ 
&\leq \; \frac{C}{N}  \sum_{p \in P_H} \big|\widehat{V} (p/N) \big||\eta_p| \|(\cN_++1)^{1/2}\xi \|  \\ &\hspace{1cm} \times 
\left[ |\eta_p| \| (\cN_+ + 1)^{3/2} \xi \| + \| \eta_H \| \| b_p b^*_p (\cN_+ + 1)^{1/2} \xi \| \right] \\ 
&\leq \; \frac{C}{N}   \sum_{p \in P_H} \big|\widehat{V} (p/N) \big||\eta_p|\|(\cN_++1)^{1/2}\xi \|  \\ &\hspace{1cm} \times  
\left[ |\eta_p| \| (\cN_+ + 1)^{3/2} \xi \| + \| \eta_H \| \| (\cN_+ + 1)^{1/2} \xi \| + \| \eta_H \| \| a_p (\cN_+ + 1) \xi \| \right] \\ &\leq \; C \ell^{\a/2}\|(\cN_++1)^{1/2}\xi\|^{2}\\ 
\end{split} \end{equation}
Combining the last bounds, we conclude that  
\[ \text{F}_{32} = \sum_{p \in P_H} \widehat{V} (p/N) \eta_p \left( \frac{N-\cN_+}{N} \right) \left( \frac{-\cN_+ -1}{N} \right) + \cE_6^V \]
with 
\begin{equation}\label{eq:F32} \pm \cE_6^V \leq C \ell^{\alpha/2} (\cH_N + 1) \end{equation}

To bound the last term $\text{F}_{33}$ in \eqref{eq:G23split}, we switch to position space. With Lemma \ref{lm:dp}, specifically (\ref{eq:ddxy}), and (\ref{eq:etax}), we obtain 
\[ \begin{split}  |\langle \xi , \text{F}_{33} \xi \rangle | &\leq C \| (\cN_+ + 1)^{1/2} \xi \| \int_{\L^2}  dx dy \, N^3 V(N(x-y)) \| (\cN_+ + 1)^{-1/2} \check{d}_x \check{d}_y \xi \| \\ &\leq C \| \eta_H \| \| (\cN_+ + 1)^{1/2} \xi \| \int_{\L^2}  dx dy \, N V(N(x-y)) \\ &\hspace{2cm} \times \left[ N \| (\cN_+ +1)^{3/2} \xi \| + \| \check{a}_x \cN_+^2 \xi \| + \| \check{a}_x \check{a}_y \cN^{3/2}_+ \xi \| \right] \\
&\leq C \ell^{\a/2} \| (\cN_+ + 1)^{1/2} \xi \|^2 + C \ell^{\a/2} \| \cN_+^{1/2} \xi \| \| \cV_N^{1/2} \xi \| \end{split} \]    
The last equation, combined with (\ref{eq:G23split}), (\ref{eq:fin-G23}), (\ref{eq:G23-deco1}) and 
(\ref{eq:F32}), implies that 
\[ \begin{split} 
\text{F}_3 = \; &\frac{1}{2} \sum_{p \in \L_+^*} \widehat{V} (p/N) (b_p b_{-p} + b^*_{-p} b^*_p) \\ &+ \sum_{p \in P_H} \widehat{V} (p/N) \eta_p \left( \frac{N-\cN_+}{N} \right) \left( \frac{N-\cN_+ - 1}{N} \right) + \cE_7^V \end{split} \]
with 
\[ \pm \cE_7^V \leq C \ell^{\a/2} (\cH_N + 1) \]
Together with (\ref{eq:F1}) and with (\ref{eq:fin-G22}), we obtain (\ref{eq:cEV-defc}) with (\ref{eq:errorVc}). Eq. (\ref{eq:errCommV}) follows similarly, arguing as we did at the end of the proof of Prop. \ref{prop:K} to show (\ref{eq:errCommK}). 
\end{proof}
 We conclude this section, summarizing the results of Prop. \ref{prop:K} and Prop. \ref{prop:G2V}.
 \begin{prop}\label{prop:G2}
 There exists a constant $C > 0$ such that 
 \begin{equation*} 
\begin{split}
\cG_{N,\ell}^{(2)} =&\; \cK + \sum_{p \in P_{H}} \Big[p^2 \eta_p^2 + \widehat{V} (p/N) \eta_p\Big]\Big(\frac{N-\cN_+}{N}\Big) \Big(\frac{N-\cN_+ -1}{N}\Big)\\
&+ \sum_{p \in P_{H}} p^2 \eta_p \big( b^*_p b^*_{-p} + b_p b_{-p} \big) +\sum_{p \in \Lambda^*_+}  \widehat{V} (p/N)  a^*_pa_p \frac{N-\cN_+}{N} \\ &+ \frac{1}{2}\sum_{p \in \Lambda^*_+} \widehat{V} (p/N) \big( b_p b_{-p}+ b_{-p}^* b_p^*\big) + \cE_{N,\ell}^{(2)} 
\end{split}
\end{equation*}
where  
\begin{equation*}
\pm \cE^{(2)}_{N,\ell} \leq C \ell^{(\a-3)/2} (\cH_N + 1)  \,  \end{equation*}
and 
\begin{equation*} \begin{split} 
\pm \left[ f (\cN_+/M), \left[ f (\cN_+ /M) ,\cE^{(2)}_{N,\ell} \right] \right] & \leq C \ell^{(\a-3)/2} M^{-2} \|f'\|^2_{\infty}\,  \big( \cH_N + 1 \big)
\end{split}\end{equation*} 
for all $\alpha > 3$, $\ell \in (0;1/2)$ small enough, $f$ smooth and bounded, $M \in \bN$, $N \in \bN$ large enough. 
\end{prop}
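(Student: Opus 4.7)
The plan is essentially a direct bookkeeping argument combining Prop.\ \ref{prop:K} and Prop.\ \ref{prop:G2V}. Starting from the decomposition $\cL_N^{(2)} = \cK + \cL_N^{(2,V)}$ recorded in (\ref{eq:dec-G2}), I would write
\[
\cG_{N,\ell}^{(2)} = e^{-B(\eta_H)}\cK e^{B(\eta_H)} + e^{-B(\eta_H)} \cL_N^{(2,V)} e^{B(\eta_H)}
\]
and substitute the explicit expansion (\ref{eq:K-dec}) from Prop.\ \ref{prop:K} for the first summand and the expansion (\ref{eq:cEV-defc}) from Prop.\ \ref{prop:G2V} for the second summand. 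Adding them produces exactly the five main terms in the statement of Prop.\ \ref{prop:G2}: the kinetic operator $\cK$, the constant/number-type term with coefficient $p^2\eta_p^2 + \widehat V(p/N)\eta_p$ (obtained by combining the $p^2\eta_p^2$-contribution of Prop.\ \ref{prop:K} with the $\widehat V(p/N)\eta_p$-contribution of Prop.\ \ref{prop:G2V}), the off-diagonal $p^2\eta_p(b_p^*b_{-p}^*+b_pb_{-p})$ on $P_H$, the diagonal potential contribution $\widehat V(p/N)a_p^*a_p(N-\cN_+)/N$, and the full off-diagonal potential term $\tfrac12\sum \widehat V(p/N)(b_pb_{-p}+b_{-p}^*b_p^*)$.

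The error operator is then defined as $\cE_{N,\ell}^{(2)} := \cE_{N,\ell}^{(K)} + \cE_{N,\ell}^{(V)}$. For the norm bound I would use the estimates $\pm\cE_{N,\ell}^{(K)} \leq C\ell^{(\alpha-3)/2}(\cH_N+1)$ from (\ref{eq:errorKc}) and $\pm\cE_{N,\ell}^{(V)} \leq C\ell^{\alpha/2}(\cH_N+1)$ from (\ref{eq:errorVc}). Since $\ell \in (0;1/2)$ and $\alpha>3$, we have $\ell^{\alpha/2} \leq \ell^{(\alpha-3)/2}$, so both contributions are absorbed into a single bound of the form $\pm\cE_{N,\ell}^{(2)} \leq C\ell^{(\alpha-3)/2}(\cH_N+1)$, which is the claimed estimate. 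For the double-commutator bound, the same additivity works: adding the bounds (\ref{eq:errCommK}) and (\ref{eq:errCommV}) and using the same dominance $\ell^{\alpha/2}\leq\ell^{(\alpha-3)/2}$ gives the stated estimate with an extra factor $M^{-2}\|f'\|_\infty^2$.

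There is no real obstacle here, because Prop.\ \ref{prop:K} and Prop.\ \ref{prop:G2V} already do all the work; the only point that requires a brief justification is that the five ``main'' terms on the right-hand side of the statement of Prop.\ \ref{prop:G2} line up exactly with the non-error terms produced by the two previous propositions. In particular, the factor $(N-\cN_+)(N-\cN_+-1)/N^2$ multiplying $p^2\eta_p^2 + \widehat V(p/N)\eta_p$ appears in precisely this form in both (\ref{eq:K-dec}) and (\ref{eq:cEV-defc}), so no further rewriting of these factors is needed, and the proof reduces to these substitutions and the triangle inequality for the error operators.
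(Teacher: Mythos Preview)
Your proposal is correct and matches the paper's approach exactly: the paper states Prop.~\ref{prop:G2} simply as a summary of Prop.~\ref{prop:K} and Prop.~\ref{prop:G2V}, without giving a separate proof. Your bookkeeping---adding (\ref{eq:K-dec}) and (\ref{eq:cEV-defc}), setting $\cE_{N,\ell}^{(2)}=\cE_{N,\ell}^{(K)}+\cE_{N,\ell}^{(V)}$, and absorbing $\ell^{\alpha/2}\leq\ell^{(\alpha-3)/2}$---is precisely the intended argument.
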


\subsection{Analysis of $ \cG_{N,\ell}^{(3)}=e^{-B(\eta_H)}\cL^{(3)}_N e^{B(\eta_H)}$}
\label{sub:G3}

From (\ref{eq:cLNj}), we have
\be \label{eq:G3N2}
 \cG_{N,\ell}^{(3)} = \frac{1}{\sqrt{N}} \sum_{p,q \in \L^*_+ : p + q \not = 0} \widehat{V} (p/N) e^{-B(\eta_H)} b^*_{p+q} a^*_{-p} a_q e^{B(\eta_H)} + \hc  \ee
 
\begin{prop}\label{prop:GN-3} 
There exists a constant $C > 0$ such that 
\begin{equation}\label{eq:def-E3}
\cG_{N,\ell}^{(3)} = \frac{1}{\sqrt{N}} \sum_{p,q \in \L^*_+ : p + q \not = 0} \widehat{V} (p/N) \left[ b_{p+q}^* a_{-p}^* a_q  + \hc \right] + \cE^{(3)}_{N,\ell} \end{equation}
where  
\begin{equation}\label{eq:lm-GN31} \pm \cE_{N,\ell}^{(3)} \leq C \ell^{\a/2} \big(\cH_N + 1\big)  \end{equation}
and 
\begin{equation}\label{eq:E3Cff}
\pm [f (\cN_+/M) , [ f (\cN_+ / M) , \cE_{N,\ell}^{(3)}]] \leq C M^{-2} \|f'\|^2_{\infty}\ell^{\a/2} \big(\cH_N + 1\big) 
\end{equation}
for all $\a > 0$, $\ell \in (0;1/2)$ small enough, $f$ smooth and bounded, $M \in \bN$, $N \in \bN$ large enough.
\end{prop}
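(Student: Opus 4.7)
The plan is to follow the pattern of the proofs of Propositions \ref{prop:K} and \ref{prop:G2V}. Starting from \eqref{eq:G3N2}, I substitute into each $b^*_{p+q}$ factor the expansion from \eqref{eq:ebe},
\[ e^{-B(\eta_H)} b^*_{p+q} e^{B(\eta_H)} = \gamma_{p+q} b^*_{p+q} + \sigma_{p+q} b_{-(p+q)} + d^*_{p+q}, \]
with $\gamma_{p+q} = \cosh \eta_H(p+q)$, $\sigma_{p+q} = \sinh \eta_H(p+q)$, and I expand
\[ e^{-B(\eta_H)} a^*_{-p} a_q \, e^{B(\eta_H)} = a^*_{-p} a_q + \int_0^1 e^{-s B(\eta_H)} [a^*_{-p} a_q, B(\eta_H)] e^{s B(\eta_H)}\, ds, \]
where the commutator $[a^*_{-p} a_q, B(\eta_H)]$ is reduced by means of \eqref{eq:comm-bp} and \eqref{eq:comm2} to a sum of $\eta_H$-weighted $b$-bilinears. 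Since $\gamma_{p+q} = 1$ and $\sigma_{p+q} = 0$ whenever $p+q \notin P_H$, every contribution other than $\cL_N^{(3)}$ itself carries at least one factor of $\eta_H$.

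I then group the extra terms into: (a) contributions with a $\sigma_{p+q}$ factor; (b) contributions with a remainder $d^*_{p+q}$; (c) contributions from the commutator $[a^*_{-p} a_q, B(\eta_H)]$. For (a) and (c), naive momentum-space Cauchy--Schwarz in $(p,q)$ is too lossy because the sum $\sum_p |\widehat V(p/N)|^2$ grows like $N^3$. Instead I switch to position space, rewriting
\[ \cL_N^{(3)} = \sqrt{N} \int_{\Lambda^2} dx\, dy\, N^2 V(N(x-y))\, \check b^*_x \check a^*_y \check a_x + \hc, \]
so that the singular factor $N^3 V(N \cdot)$ can be absorbed into $\cV_N^{1/2}$ via bounds of the form $\int N^3 V(N(x-y)) \|\check a_x \check a_y \xi\|^2\, dxdy \leq C \|\cV_N^{1/2}\xi\|^2$. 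For (b), I use the position-space bounds of Lemma \ref{lm:dp} on $\check d_x$ and $\check a_y \check{\bar d}_x$, in exact analogy with the treatment of the terms $\cE^V_{42}$ and $\text{F}_{33}$ appearing in the proof of Proposition \ref{prop:G2V}. In each case, the surplus $\eta_H$ factor contributes $\|\eta_H\| \leq C\ell^{\alpha/2}$ from \eqref{eq:etaHL2}, producing the prefactor $\ell^{\alpha/2}$ in \eqref{eq:lm-GN31}.

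The double-commutator bound \eqref{eq:E3Cff} follows from an identical computation: by Corollary \ref{cor:ffd} and the observation \eqref{eq:ffbp}, the double commutation with $f(\cN_+/M)$ of each of the operators $b_p, b^*_p, d_p, d^*_p, \check b_x, \check a_y \check{\bar d}_x$ reproduces the same estimate up to an extra multiplicative factor $C M^{-2}\|f'\|^2_\infty$.

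The main obstacle will be the careful bookkeeping required in case (c): the commutator $[a^*_{-p} a_q, B(\eta_H)]$ produces expressions of the form $\eta_H(r)\, a^*_{-p} b^*_{\pm r} \delta_{r, \mp q}$ and their adjoints, which after multiplication by the original $b^*_{p+q}$ become quartic operators that must be controlled by the full $\cH_N$ rather than merely $\cN_+$. Extracting a $\cV_N^{1/2}$ contribution (instead of settling for $\cN_+^{1/2} \cK^{1/2}$) through position-space bounds and reconstruction of integrals against $N^3 V(N(x-y))$ is the technical heart of the argument, and this is where the analogy with the proof of Proposition \ref{prop:G2V} is most directly applied.
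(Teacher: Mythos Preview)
Your proposal is correct and follows essentially the same approach as the paper: expand $e^{-B(\eta_H)} b^*_{p+q} e^{B(\eta_H)}$ via \eqref{eq:ebe} and $e^{-B(\eta_H)} a^*_{-p} a_q e^{B(\eta_H)}$ via the Duhamel commutator formula, then bound the resulting error pieces (each carrying at least one factor of $\eta_H$) by a combination of momentum-space Cauchy--Schwarz and the position-space estimates of Lemma~\ref{lm:dp}, with the double-commutator bound following from Corollary~\ref{cor:ffd}. Two minor tactical differences: the paper keeps the full $e^{-B(\eta_H)} b^*_{p+q} e^{B(\eta_H)}$ factor intact when treating your case~(c) (bounding it via Lemma~\ref{lm:Ngrow} rather than re-expanding), and it handles the $\sigma_{p+q}$ term (your case~(a)) directly in momentum space using $\|\widehat V\|_\infty$ rather than switching to position space.
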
 

\begin{proof}[Proof of Proposition \ref{prop:GN-3}] We start by writing 
\[ \begin{split} e^{-B(\eta_H)} a_{-p}^* a_q e^{B(\eta_H)} &= a_{-p}^* a_q + \int_0^1 ds \, e^{-sB(\eta_H)} [a_{-p}^* a_q , B(\eta_H)] e^{sB(\eta_H)} \\ &= a_{-p}^* a_q + \int_0^1 ds e^{-sB(\eta_H)} ( \eta_H (p) b_q b_p + \eta_H (q) b_{-p}^* b^*_{-q} ) e^{s B(\eta_H)} \end{split} \]
{F}rom (\ref{eq:G3N2}), we find
\begin{equation*}
\begin{split} \cG_{N,\ell}^{(3)} &=  \frac{1}{\sqrt{N}} \sum_{p,q \in \Lambda^*_+ : p+q \not = 0} \widehat{V} (p/N) e^{-B(\eta_H)}  b^*_{p+q} e^{B(\eta_H )}\,a_{-p}^* a_q \\
&\hspace{.2cm}+ \frac{1}{\sqrt{N}} \sum_{p,q \in \Lambda_+^* , p+q \not = 0} \widehat{V} (p/N) \eta_H (p) \, e^{-B(\eta_H)}b^*_{p+q}e^{B(\eta_H)} \int_0^1 ds\, e^{-sB(\eta_H)} b_{p} b_{q} e^{sB(\eta_H)} \\
&\hspace{.2cm}+ \frac{1}{\sqrt{N}} \sum_{p,q \in \Lambda_+^* , p+q \not = 0} \widehat{V} (p/N) \eta_H (q) \, e^{-B(\eta_H)} b^*_{p+q}e^{B(\eta_H)} \int_0^1 ds\,  e^{-sB(\eta_H)}b_{-p}^*b^*_{-q} e^{sB(\eta_H)}
\\ &\hspace{.2cm}+\text{h.c.} 
\end{split}
\end{equation*}
Using \eqref{eq:ebe} we arrive at (\ref{eq:def-E3}), with 
\begin{equation} \label{eq:deco-cE3} \begin{split} 
\cE^{(3)}_{N,\ell} &=  \frac{1}{\sqrt{N}} \sum_{p,q \in \Lambda^*_+ : p+q \not = 0} \widehat{V} (p/N) \big((\g_{p+q}-1) b^*_{p+q} + \s_{p+q} b_{-p-q} + d_{p+q}^* \big) \, a_{-p}^* a_q  \\
&\hspace{.3cm}+ \frac{1}{\sqrt{N}} \sum_{p,q \in \Lambda_+^* , p+q \not = 0} \widehat{V} (p/N) \eta_H (p) \, e^{-B(\eta_H)}b^*_{p+q}e^{B(\eta_H)} \int_0^1 ds\, e^{-sB(\eta_H)} b_{p} b_{q} e^{sB(\eta_H)}\\
&\hspace{.3cm}+ \frac{1}{\sqrt{N}} \sum_{p,q \in \Lambda_+^* , p+q \not = 0} \widehat{V} (p/N) \eta_H (q) \, e^{-B(\eta_H)} b^*_{p+q}e^{B(\eta_H)} \int_0^1 ds\,  e^{-sB(\eta_H)}b_{-p}^*b^*_{-q} e^{sB(\eta_H)}  \\
&\hspace{.3cm}+ \hc \\ &=:  \; \cE^{(3)}_1 + \cE_2^{(3)} + \cE_3^{(3)} + \hc 
\end{split} \end{equation}
where we defined $\gamma_p = \cosh \eta_H (p)$, $\s_p = \sinh \eta_H (p)$ and where the operator $d_p$ is defined as in (\ref{eq:defD}), with $\eta$ replaced by $\eta_H$. To complete the proof of the proposition, we have to show that the three error terms $\cE_1^{(3)}, \cE_2^{(3)}, \cE_3^{(3)}$ all satisfy the bounds (\ref{eq:lm-GN31}), (\ref{eq:E3Cff}). We start by considering $\cE_1^{(3)}$. We decompose it as
\[ \begin{split}
\cE^{(3)}_1 =\; &  \frac{1}{\sqrt{N}} \sum_{p,q \in \Lambda^*_+ : p+q \not = 0} \widehat{V} (p/N) \big((\g_{p+q}-1) b^*_{p+q} + \s_{p+q} b_{-p-q} + d_{p+q}^* \big) \, a_{-p}^* a_q \\
=: & \; \cE^{(3)}_{11} + \cE^{(3)}_{12} +\cE^{(3)}_{13} 
\end{split}\]
Since $|\g_{p+q}-1|\leq |\eta_H (p+q)|^2$ and $\| \eta_H \| \leq C \ell^{\a/2}$, we have
\be \begin{split} \label{eq:GN3-P11}
|\langle \xi, \cE^{(3)}_{11} \xi \rangle| 
& \leq  \frac{C}{\sqrt{N}} \sum_{p,q \in \Lambda^*_+ : p+q \not = 0} |\widehat{V} (p/N)| | \eta_H (p+q)|^2 \, \| b_{p+q} a_{-p} \xi \| \| a_q \xi \| \\
& \leq   \frac{C}{\sqrt{N}} \Big[\sum_{p,q \in \Lambda^*_+ : p+q \not = 0}   |\eta_H (p+q)|^2 \, \| a_{-p}(\cN_++1)^{1/2} \xi \|^2 \Big]^{1/2}\\ &\hspace{4cm} \times  \Big[\sum_{p,q \in \Lambda^*_+ : p+q \not = 0} |\eta_H (p+q)|^2   \| a_q \xi \|^2 \Big]^{1/2}\\
& \leq C \| \eta_H \|^2 \| (\cN_++1)^{1/2} \xi \|^2 \leq C \ell^\a \| (\cN_+ + 1)^{1/2} \xi \|^2 
\end{split}\ee
To bound  $\cE^{(3)}_{12}$ we move $a^*_{-p}$ to the left  of $b_{-p-q}$ (using $[a_{-p-q}, a_{-p}^*] = 0$, since $q \not = 0$). With $|\s_{p+q}| \leq C |\eta_H (p+q)|$, we obtain 
\be \begin{split} \label{eq:GN3-P12}
|\langle \xi, \cE^{(3)}_{12} \xi \rangle| & \leq  \frac{C}{\sqrt{N}} \sum_{p,q \in \Lambda^*_+ : p+q \not = 0} |
\widehat{V} (p/N)| | \eta_H (p+q)| \, \| a_{-p} \xi \| \| a_q b_{-p-q} \xi \| \\
& \leq   \frac{C}{\sqrt{N}} \Big[\sum_{p,q \in \Lambda^*_+ : p+q \not = 0}   |\eta_H (p+q)|^2 \, \| a_{-p} \xi \|^2 \Big]^{1/2} \Big[\sum_{p,q \in \Lambda^*_+ : p+q \not = 0}  \| a_q b_{-p-q} \xi \|^2 \Big]^{1/2}\\
& \leq C \ell^{\a/2} \| (\cN_++1)^{1/2} \xi \|^2
\end{split}\ee
In $ \cE^{(3)}_{13} $, on the other hand, we write $d^*_{p+q}= \bar{d}^*_{p+q} - \frac{(\cN_++1)}{N} \eta_H (p+q) b_{-p-q}$. We obtain $\cE^{(3)}_{13} = \cE^{(3)}_{131} + \cE^{(3)}_{132}$, with 
\[ \begin{split} 
\cE^{(3)}_{131} =\; & \frac{1}{\sqrt{N}} \sum_{p,q \in \Lambda^*_+ : p+q \not = 0} \widehat{V} (p/N) \, \bar d^*_{p+q}   a^*_{-p} a_q \\
\cE^{(3)}_{132} = \; & -\frac{(\cN_++1)}{N}  \frac{1}{\sqrt{N}} \sum_{p,q \in \Lambda^*_+ : p+q \not = 0} \widehat{V} (p/N) \eta_H (p+q) \, b_{-p-q} a^*_{-p} a_q 
\end{split}\]
The term $\cE^{(3)}_{132}$ can be bounded like $\cE_{12}^{(3)}$, commuting $a_{-p}^*$ to the left of $b_{-p-q}$; we find $\pm \cE^{(3)}_{132} \leq C \ell^{\a/2} (\cN_+ + 1)$. As for the term $\cE^{(3)}_{131}$, we switch to position space:
\[ \begin{split}
\cE^{(3)}_{131}& =\frac{1}{\sqrt{N}} \sum_{p,q \in \Lambda^*_+ : p+q \not = 0} \widehat{V} (p/N)\, \bar d^*_{p+q}   a^*_{-p} a_q =  \int_{\L^2}  dx dy N^{5/2} V(N(x-y)) \check{\bar d}^*_x \check{a}^*_y \check{a}_x
\end{split}\]
With \eqref{eq:splitdbd}, we bound
\[\begin{split} 
|\langle \xi, \cE^{(3)}_{131} \xi \rangle|  & \leq 
 \int_{\L^2}  dx dy N^{5/2} V(N(x-y))  \| \check{a}_x \xi \| \|\check{a}_y \check{\bar d}_x  \xi \|  \\
 & \leq  C \| \eta_H \| \int_{\L^2}  dx dy N^{5/2} V(N(x-y))\| \check{a}_x \xi \| \\ & \hspace{1cm} \times  \big[\|(\cN_++1)\xi \| + \| \check{a}_x (\cN_+ + 1)^{1/2} \xi \| + \| \check{a}_y (\cN_+ + 1)^{1/2} \xi \| + \| \check{a}_x \check{a}_y \xi \| \big]  \\
 & \leq  C \| \eta_H \| \| \cN_+^{1/2} \xi \| \Big[  \int_{\L^2}  dx dy N^{2} V(N(x-y)) \\ & \hspace{3.5cm} \times  \big[\|(\cN_++1)\xi \|^2 + \| \check{a}_x (\cN_+ + 1)^{1/2} \xi \|^2 + \| \check{a}_x \check{a}_y \xi \|^2 \big] \Big]^{1/2}   \\
 & \leq C \| \eta_H \| \| \cN_+^{1/2} \xi \| \big[ \| (\cN_+ + 1)^{1/2} \xi \| + \| \cV_N^{1/2} \xi \| \big] \\
 & \leq C \ell^{\a/2} \big[ \| (\cN_+ + 1)^{1/2} \xi \|^2 + \| \cV_N^{1/2} \xi \|^2 \big] 
 \end{split}\]
With \eqref{eq:GN3-P11} and \eqref{eq:GN3-P12} we conclude that 
\be \label{eq:G3N-P1end}
\pm \cE^{(3)}_1 \leq C \ell^{\a /2} (\cV_N + \cN_+ + 1) \leq C \ell^{\a/2} (\cH_N + 1) 
\ee

Next, we consider the term $\cE^{(3)}_2$, defined in (\ref{eq:deco-cE3}). Using Eq. \eqref{eq:ebe} we rewrite
\begin{equation}\label{eq:cE32-deco} \begin{split}
\cE^{(3)}_2 =\; & \frac{1}{\sqrt{N}} \sum_{p,q \in \Lambda_+^* , p+q \not = 0} \widehat{V} (p/N) \eta_H (p) \, e^{-B( \eta_H )}b^*_{p+q}e^{B( \eta_H)}\\
& \hskip 1cm \times \int_0^1 ds\, \big( \g^{(s)}_p  \g^{(s)}_{q} b_{p} b_q + \s^{(s)}_p  \s^{(s)}_{q}b^*_{-p} b^*_{-q} +\g^{(s)}_p   \s^{(s)}_{q} b^*_{-q} b_{p} +  \s^{(s)}_p  \g^{(s)}_{q}  b^*_{-p} b_{q}  \big)  \\
& +\frac{1}{\sqrt{N}} \sum_{p,q \in \Lambda_+^* , p+q \not = 0} \widehat{V} (p/N) \eta_H (p) \, e^{-B( \eta_H )}b^*_{p+q}e^{B( \eta_H)} \int_0^1 ds\,  \g^{(s)}_p  \s^{(s)}_{q} [b_{p},  b^*_{-q}]  \\
& + \frac{1}{\sqrt{N}} \sum_{p,q \in \Lambda_+^* , p+q \not = 0} \widehat{V} (p/N)  \eta_H (p) \, e^{-B( \eta_H )}b^*_{p+q}e^{B( \eta_H )}\\
& \hskip 1cm\times \int_0^1 ds\, \Big[ d^{(s)}_p \big( \g^{(s)}_{q} b_{q} + \s^{(s)}_{q} b^*_{-q} \big) + \big( \g^{(s)}_p b_{p} + \s^{(s)}_p b^*_{-p} \big)   d^{(s)}_{q} +   d^{(s)}_{p}  d^{(s)}_{q}   \Big]\\
=: &\; \cE^{(3)}_{21} + \cE^{(3)}_{22} + \cE^{(3)}_{23}
\end{split}\end{equation}
where, for any $s \in [0;1]$ and $p \in \L_+^*$, $\gamma^{(s)}_p = \cosh (s \eta_H (p))$, $\s^{(s)}_p = \sinh (s \eta_H (p))$ and $d^{(s)}_p$ is the operator defined as in (\ref{eq:defD}), with $\eta$ replaced by $s \eta_H$. We have 
\be \begin{split} \label{eq:G3N-P31} 
|\langle \xi , \cE^{(3)}_{21} \xi \rangle | \leq \;  &\frac{C}{\sqrt{N}} \sum_{\substack{p,q \in \Lambda^*_+ : p \not =-q}} | \eta_H (p)|  \| b_{p+q}e^{B( \eta_H )}\xi \| \Big[ \| b_p b_q \xi \|  + | \eta_H (p)| \| b_q (\cN_++1)^{1/2}\xi \|  \\ & \hspace{2.5cm} + | \eta_H (q)| \| b_p (\cN_++1)^{1/2}\xi \|  + | \eta_H (p)| | \eta_H (q)| \| (\cN_++1)\xi\|  \Big] \\
 \leq \; &C \| \eta_H \| \| (\cN_++1)^{1/2}\xi \|^2  \leq C \ell^{\a/2} \| (\cN_++1)^{1/2}\xi \|^2 
\end{split}\ee
Since $[b_{p},b^*_{-q}] = - a^*_{-q} a_{p} /N $ for all $p \neq -q$, we find 
\be \begin{split}\label{eq:G3N-P32} 
|\langle \xi , \cE^{(3)}_{22} \xi \rangle | &\leq \frac{C}{N^{3/2}} \sum_{p,q \in \Lambda_+^* , p+q \not = 0} | \eta_H (p) |  | \eta_H (q) |  \| b_{p+q} e^{B( \eta_H )}\xi \| \|  a_{p} (\cN_++1)^{1/2}  \xi \| \\
& \leq \frac{C}{N}  \| \eta_H \|^2  \|(\cN_++1)^{1/2} \xi \|^2 \leq \frac{C \ell^\alpha}{N} \| (\cN_+ + 1)^{1/2} \xi \|^2 
\end{split}\ee
To bound the third term on the r.h.s. of (\ref{eq:cE32-deco}), we switch to position space. We obtain 
\[ \begin{split}
\cE^{(3)}_{23} =\;&  \int_{\L^3} dx dy dz  N^{5/2} V(N(x-z))  \check{\eta}_H (z-y) \,e^{-B( \eta_H )} \check{b}^*_x e^{B( \eta_H )}\\
& \hskip 1cm\times \int_0^1 ds\, \Big[  \check{d}^{(s)}_y \big( b(\check{\g}^{(s)}_{x}) +b^*(\check{\s}^{(s)}_{x}) \big) +  \big( b(\check{ \g}^{(s)}_{y}) +b^*(\check{\s}^{(s)}_{y}) \big) \check{ d}^{(s)}_x    +  \check{ d}^{(s)}_y \check{ d}^{(s)}_x  \Big]
\end{split}\]
Using the bounds \eqref{eq:dxy-bds}, \eqref{eq:splitdbd}, \eqref{eq:ddxy} and Lemma \ref{lm:Ngrow} we arrive at 
\begin{equation*}\begin{split} 
|\langle \xi , \cE^{(3)}_{23} \xi \rangle | \leq \; & C \| \eta_H \|  \int_{\Lambda^3} dx dy dz \, N^{5/2} V(N(x-z))   | \check{\eta}_H (y-z)|\, \| \check{b}_x e^{B( \eta_H)}  \xi \| \\ & \times \, \Big[ \| \check{b}_{x} \check{b}_{y} \xi \| + \| (\cN_++1) \xi \| + \| \check{b}_x (\cN_++1)^{1/2} \xi \| + \| \check{b}_y (\cN_++1)^{1/2} \xi \| \Big] \\
\leq \; & \frac{C \| \eta_H \|^2}{\sqrt{N}} \|  \cN_+^{1/2} e^{B (\eta_H)} \xi \| \| (\cN_++ 1) \xi \| \\
\leq \; &C \ell^\a \| (\cN_+ + 1)^{1/2} \xi \|^2 
\end{split} \end{equation*}
Combined with (\ref{eq:G3N-P31}) and (\ref{eq:G3N-P32}), the last bound implies that 
\begin{equation}\label{eq:cE32f} \pm \cE_2^{(3)} \leq C \ell^{\a/2} (\cN_+ + 1) \end{equation}
Finally, we consider the last term on the r.h.s. of (\ref{eq:deco-cE3}). In fact, it is convenient to bound (in absolute value) the expectation of its adjoint, which we decompose as  
\begin{equation*} \begin{split} 
\cE^{(3)*}_3 =\; & \frac{1}{\sqrt{N}} \sum_{p,q \in \Lambda_+^* , p+q \not = 0} \widehat{V} (p/N)  \eta_H (q) \int_0^1 ds\, e^{-sB(\eta_H)} b_{-q}e^{sB( \eta_H )}\\
& \hskip 1cm \times  \big( \g^{(s)}_p b_{-p} + \s^{(s)}_p b^*_{p} +  d^{(s)}_{-p}\big) \big( \g_{p+q} b_{p+q} + \s_{p+q} b^*_{-p-q} +  d_{p+q}\big) \\
 =\; &\frac{1}{\sqrt{N}} \sum_{p,q \in \Lambda_+^* , p+q \not = 0} \widehat{V} (p/N)  \eta_H (q) \, \int_0^1 ds\,e^{-sB(\eta_H)} b_{-q}e^{sB( \eta_H)}\\
&  \times \Big[ \, \g^{(s)}_p  \g_{p+q} b_{-p} b_{p+q} + \s^{(s)}_p  \s_{p+q} b^*_{p} b^*_{-p-q}  + \g^{(s)}_p  \s_{p+q} b^*_{-p-q} b_{-p}+  \g_{p+q}  \s^{(s)}_p b^*_{p} b_{p+q} \\
& \hskip 1.5cm +  d^{(s)}_{-p} \big( \g_{p+q} b_{p+q} + \s_{p+q} b^*_{-p-q}\big) +  \big( \g^{(s)}_p b_{-p} + \s^{(s)}_p b^*_{p}\big)  d_{p+q} +   d^{(s)}_{-p}  d_{p+q}\Big]  \\
& +\frac{1}{\sqrt{N}} \sum_{p,q \in \Lambda_+^* , p+q \not = 0} \widehat{V} (p/N)  \eta_H (q) \, \int_0^1 ds\,e^{-sB( \eta_H )} b_{-q}e^{sB( \eta_H)}   \g^{(s)}_p  \s_{p+q} [b_{-p},b^*_{-p-q}] 
\\ =: & \, \cE_{31}^{(3)} + \cE_{32}^{(3)} 
\end{split} \end{equation*}
Using that $q \neq 0$ and thus that $[b_{-p},b^*_{-p-q}] = - a^*_{-p-q} a_{-p} /N $, we can estimate the second term by 
\begin{equation}\label{eq:cE332f}\begin{split}
| \langle \xi,  &\cE_{32}^{(3)}  \xi\rangle| \\ & \leq  \frac{C}{N^{3/2}}  \int_0^1 ds \sum_{p,q \in \Lambda_+^* , p+q \not = 0} | \eta_H (q) | | \eta_H (p+q)| \, \| a_{-p-q}\,e^{-sB( \eta_H )} b^*_{-q}e^{sB( \eta_H )} \xi \| \|  a_{-p} \xi \| \\
& \leq  \frac{C}{N^{3/2}} \int_0^1 ds  \Big[ \sum_{\substack{p,q \in \Lambda_+^* \\ p+q \not = 0}} | \eta_H (q) |^2  \, \| a_{-p-q}\,e^{-sB( \eta_H )} b^*_{-q}e^{sB( \eta_H )} \xi \|^2 \Big]^{1/2} \\ &\hspace{6cm} \times \Big[ \sum_{\substack{p,q \in \Lambda_+^*\\ p+q \not = 0}}  | \eta_H (p+q)|^2  \|  a_{-p} \xi \|^2 \Big]^{1/2}  \\
& \leq \frac{C}{N}  \| \eta_H\|^2  \|(\cN_++1)^{1/2} \xi \|^2 \leq \frac{C\ell^\a}{N} \| (\cN_+ + 1)^{1/2} \xi \|^2 
\end{split}\end{equation}
To bound the expectation of $\cE^{(3)}_{31}$, it is convenient  to switch to position space. We find 
\[\begin{split}
 \cE_{31}^{(3)} &= \int_0^1 ds \int_{\Lambda^2} dx dy \, N^{5/2} V(N(x-y)) \, e^{-sB( \eta_H)} b(\check{\eta}_{H,x}) e^{sB( \eta_H)} \\
& \hskip 1cm \times \, \Big[  b(\check{ \g}^{(s)}_x) b(\check{ \g}_y) + b^*(\check{ \s}^{(s)}_x) b^*(\check{ \s}_y) +b^*(\check{ \s}_y) b(\check{ \g}^{(s)}_x)+ b^*(\check{ \s}^{(s)}_x) b(\check{ \g}_y)  \\
& \hskip 3 cm + \check{ d}_x^{(s)} \big(  b(\check{ \g}_y) + b^*(\check{ \s}_y) \big) + \big(  b(\check{ \g}^{(s)}_x) + b^*(\check{ \s}^{(s)}_x) \big) \check{ d}_y +\check{ d}_x^{(s)}\check{ d}_y\Big]
\end{split} \]
where we used the notation $\check{\eta}_H$, $\check{\g}^{(s)}$ and $\check{\s}^{(s)}$ to indicate the functions on $\Lambda$ with Fourier coefficients $\eta_H (p)$, $\cosh (s \eta_H (p))$ and, respectively, $\sinh (s \eta_H (p))$, and where $\check{\eta}_{H,x}$, $\check{\g}_{x}$ and $\check{\s}_{x}$ denote the functions defined by $\check{\eta}_{H,x} (z) = \check{\eta}_H (z-x)$, $\check{\g}_{x} (z) = \check{\g}(z-x)$ and $\check{\s}_{x} (z) = \check{\s}^s(z-x)$. Using \eqref{eq:dxy-bds}, \eqref{eq:splitdbd}, \eqref{eq:ddxy} and the bound (\ref{eq:etax}), we find, for $N$ large enough,  
\begin{equation*} \begin{split}  \label{eq:GN3-P2} 
|\langle \xi , \cE^{(3)}_{31} \xi \rangle | \leq \; & \int_0^1  ds\,  \int_{\Lambda^2} dx dy \, N^{5/2} V(N(x-y))\|  b^*(\check{\eta}_{H,x}) e^{sB( \eta_H)}  \xi \| \\ & \times \, \Big[  \| \check{b}_{x}\check{b}_{y} \xi \| +  \| \check{b}_x (\cN_++1)^{1/2} \xi \| + \| \check{b}_y (\cN_++1)^{1/2} \xi \| + \| (\cN_++1) \xi \| \Big]
\end{split} \end{equation*}
With Lemma \ref{lm:Ngrow}, we estimate
\begin{equation*} 
\|  b^*(\check{\eta}_{H,x}) e^{sB( \eta_H)}  \xi \| \leq C \|  \eta_H \| \| (\cN_+ +1)^{1/2} \xi \| \end{equation*}
We conclude that 
\begin{equation*} \begin{split}  
& |\langle \xi , \cE_{31}^{(3)} \xi \rangle | \leq  C \ell^{\a/2} \Big[ \| (\cN_+ +1)^{1/2} \xi \|^2 + \| \cV_N^{1/2}\xi \|^2 \Big]
\end{split} \end{equation*}
From (\ref{eq:cE332f}), we find 
\[ \pm \cE^{(3)}_3 \leq C \ell^{\a/2} (\cH_N + 1) \]
and thus, combining this bound with (\ref{eq:deco-cE3}), (\ref{eq:G3N-P1end}) and (\ref{eq:cE32f}), we arrive at 
\[ \pm \cE^{(3)}_{N,\ell} \leq C  \ell^{\a/2} (\cH_N + 1) \]
This proves (\ref{eq:lm-GN31}). The bound (\ref{eq:E3Cff}) follows similarly, arguing as we did at the end of the proof of Prop. \ref{prop:K} to show (\ref{eq:errCommK}). 
\end{proof}

\subsection{Analysis of $ \cG_{N,\ell}^{(4)}=e^{-B(\eta_H)}\cL^{(4)}_N e^{B(\eta_H)}$}
\label{sub:G4}

With $\cL^{(4)}_N$ as defined in (\ref{eq:cLNj}), we write
\begin{equation*} \begin{split} 
\cG_{N,\ell}^{(4)} = \; &e^{-B(\eta_H)} \cL^{(4)}_{N} e^{B(\eta_H)} \\ = \; &\cV_N + \frac{1}{2N} \sum_{\substack{q \in \Lambda^*_+, r\in \Lambda^* \\ q,\, q+ r \in P_H}} \widehat{V} (r/N) \eta_{q+r} \eta_q \left( 1-\frac{\cN_+ }{N} \right) \left( 1 - \frac{\cN_+ +1}{N} \right)  \\ &+ \frac{1}{2N} \sum_{\substack{q \in \Lambda^*_+, r \in \L^*: \\ q +r \in P_H}} \widehat{V} (r/N) \, \eta_{q+r} \left(  b_q b_{-q} + b^*_q b^*_{-q} \right)  + \cE^{(4)}_{N,\ell} \end{split} \end{equation*}
\begin{prop}\label{prop:GN-4} 
There exists a constant $C > 0$ such that 
\begin{equation}\label{eq:E4bound2}
  \pm \cE_{N,\ell}^{(4)} \leq C \ell^{\a/2} \big( \cH_N +1 \big) 
\end{equation}
and 
\begin{equation}\label{eq:E4Cff}
\pm [f (\cN_+/M), [f (\cN_+/M),\cE_{N,\ell}^{(4)}]] \leq C M^{-2} \|f'\|^2_{\infty}\ell^{\a/2} \big(\cH_N + 1\big) 
\end{equation}
for all $\a > 0$, $\ell \in (0;1/2)$ small enough, $f$ smooth and bounded, $M \in \bN$, $N \in \bN$ large enough.  
\end{prop}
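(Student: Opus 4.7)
The plan is to expand $\cG_{N,\ell}^{(4)} = e^{-B(\eta_H)} \cL_N^{(4)} e^{B(\eta_H)}$ in position space, where the kernel $N^2 V(N(x-y))$ will allow us to absorb $\cV_N^{1/2}$ factors on the right side of the estimates. The first step is to rewrite $\cL_N^{(4)}$ in terms of $\check{b}$-operators, obtaining
\[
\cL_N^{(4)} = \tfrac{1}{2} \int_{\Lambda^2} N^2 V(N(x-y))\, \check{b}^*_x \check{b}^*_y \check{b}_y \check{b}_x\, dx\, dy + \text{(corrections)},
\]
where the correction terms, arising from the conversion $\check{a} \leftrightarrow \check{b}$, from enforcing $p, q \in \Lambda_+^*$ in the definition of $\cL_N^{(4)}$ and from the necessary reorderings, are controlled by $C N^{-1}(\cV_N + \cN_+ + 1)$. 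Conjugation is then performed by inserting identity (\ref{eq:ebex}) into each of the four $\check{b}$-positions, writing $e^{-B(\eta_H)} \check{b}_x e^{B(\eta_H)} = b(\check{\gamma}_x) + b^*(\check{\sigma}_x) + \check{d}_x$. This produces $3^4 = 81$ terms to be analyzed.

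Three families of these terms reconstruct the explicit main expressions. The selection that puts $b(\check{\gamma})$ at every slot gives back $\cV_N$, with the error $\check{\gamma}_x - \delta_x$ contributing at worst $\|\eta_H\|^2 \cV_N \leq C\ell^\alpha \cV_N$. The full pairing of the two $b^*(\check{\sigma})$-factors (from the two $\check{b}^*$) with the two $b(\check{\sigma})$-factors (from the two $\check{b}$), through contractions across the $V$-kernel, yields the constant term proportional to $\sum \widehat{V}(r/N)\, \eta_{q+r}\, \eta_q$, multiplied by the two $(1 - \cN_+/N)$-factors left over from the $bb^*$ commutators. Finally, the single cross-pairing of one $b^*(\check{\sigma})$ with one $b(\check{\sigma})$ produces the bilinear $b^*_q b^*_{-q} + b_q b_{-q}$ term with the single $\eta_{q+r}$ prefactor.

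The remaining terms enter $\cE_{N,\ell}^{(4)}$ and split into two types. Contributions containing at least one $\check{d}$-factor are bounded using (\ref{eq:dxy-bds}), (\ref{eq:splitdbd}) and (\ref{eq:ddxy}) from Lemma \ref{lm:dp}: combined with the $N^2 V(N(x-y))$ kernel, one always extracts a $\cV_N^{1/2}$ factor, after which the residual $\|\eta_H\| \leq C\ell^{\alpha/2}$ delivers the claimed smallness --- entirely analogous to the treatment of terms like $\cE_{23}^K$, $\cE_{32}^K$, $\cE_{52}^K$ in Proposition \ref{prop:K} and $\cE^{(3)}_{31}$ in Proposition \ref{prop:GN-3}. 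Contributions that involve only $\check{\gamma}$ and $\check{\sigma}$ factors but have unmatched $\sigma$-slots are handled by Cauchy--Schwarz in position space, using $\|\check{\sigma}\|_2 \leq C\|\eta_H\|$ and the bound (\ref{eq:etax}) on $\sup_x |\check{\eta}_H(x)|$. The double-commutator estimate (\ref{eq:E4Cff}) then follows \emph{mutatis mutandis}: by Corollary \ref{cor:ffd} every $\check{d}$-bound persists under double commutation with $f(\cN_+/M)$ up to an extra factor $M^{-2}\|f'\|_\infty^2$, and the identity (\ref{eq:ffbp}) provides the analogous statement for $b$-operator bounds.

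The main obstacle will be the sheer bookkeeping: systematically tracking the $81$ terms and verifying that each one either reconstructs one of the three explicit main contributions or satisfies the $C\ell^{\alpha/2}(\cH_N + 1)$ remainder estimate, with careful control of the lower-order corrections coming from reordering $\check{a}^*\check{a}^*\check{a}\check{a}$ into $\check{b}^*\check{b}^*\check{b}\check{b}$ and from the $\Lambda_+^*$-restriction. All individual estimates required are already present in the toolkit developed in Section \ref{sec:bt} and used in Section \ref{sub:G2}, so no new technical ingredient beyond those should be needed.
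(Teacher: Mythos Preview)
Your route is genuinely different from the paper's, and it has a real gap in the first step.

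The paper does \emph{not} convert $\cL_N^{(4)}$ to $b$-operators and then expand all four slots via (\ref{eq:ebex}). Instead it uses an iterated Duhamel expansion: it writes
\[
e^{-B(\eta_H)} \cV_N e^{B(\eta_H)} = \cV_N + \int_0^1 ds\, e^{-sB(\eta_H)} [\cV_N, B(\eta_H)] e^{sB(\eta_H)},
\]
computes the commutator $[\cV_N,B(\eta_H)]$ explicitly (which naturally produces one pair $b_q^* b_{-q}^*$ or a factor $b_{p+r}^* b_q^* a_{-q-r}^* a_p$), then expands once more the $a^*a$ part by a second Duhamel step. This yields only four terms $W_1,\dots,W_4$, each containing at most one $e^{-sB}(\cdot)e^{sB}$ block of the form $b^*b^*$ or $bb$, which are then handled via Lemma~\ref{lm:dp} and the auxiliary Lemma~\ref{lm:prel4}. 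No $a\to b$ conversion of the full quartic is ever needed.

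Your claimed bound on the conversion correction is not correct. Since $\check b_x^* \check b_y^* \check b_y \check b_x = \frac{(N-\cN_++2)(N-\cN_++1)}{N^2}\,\check a_x^* \check a_y^* \check a_y \check a_x$, the correction is exactly
\[
\Big[1-\tfrac{(N-\cN_++2)(N-\cN_++1)}{N^2}\Big]\,\cV_N,
\]
which is of order $(\cN_+/N)\,\cV_N$ and hence \emph{not} bounded by $CN^{-1}(\cV_N+\cN_++1)$ uniformly on $\cF_+^{\leq N}$. After conjugation you would need to control $e^{-B(\eta_H)}\big[(\cN_+/N)\,\cV_N\big]e^{B(\eta_H)}$, but bounding $e^{-B(\eta_H)}\cV_N e^{B(\eta_H)}$ is precisely the content of the proposition you are trying to prove --- so the argument becomes circular. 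Relatedly, your ``all-$\gamma$'' term reproduces $\cV_N^{(b)}$, not $\cV_N$, and the discrepancy is the same correction again; the two appearances do not cancel but rather combine into $e^{-B}(\text{Corr})e^B-\text{Corr}$, which you still have to estimate. The paper's Duhamel strategy sidesteps this entirely because $[\cV_N,B(\eta_H)]$ already factors out the $b$-pairs without touching the remaining two $a$-operators.
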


The following lemma will be useful in the proof of Prop. \ref{prop:GN-4}.
\begin{lemma}\label{lm:prel4}
Let $\eta_H \in \ell^2 (\L^*)$, as defined in (\ref{eq:defetaH}). Then there 
exists a constant $C > 0$ such that 
\be \begin{split} \label{eq:prel4-2}
&\| (\cN_++1)^{n/2} e^{-B(\eta_H)} \check{b}_x \check{b}_y  e^{B( \eta_H)}\xi \| \\
&\leq  C  \Big [\; \| (\cN_+ +1)^{(n+2)/2} \xi \| + N \| (\cN_+ +1)^{n/2} \xi \| \\
&\hskip 1cm + \| \check{a}_y (\cN_+ +1)^{(n+1)/2} \xi \| + \| \check{a}_x (\cN_+ +1)^{(n+1)/2} \xi \|   +  \| \check{a}_x \check{a}_y (\cN_++1)^{n/2} \xi \| \Big ]
\end{split}\ee
for all $\xi \in \cF_+^{\leq N}$, $n \in \bZ$. \end{lemma}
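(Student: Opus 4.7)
The plan is to use the position-space expansion \eqref{eq:ebex} twice, writing
\begin{equation*}
e^{-B(\eta_H)}\check{b}_x\check{b}_y e^{B(\eta_H)} = \bigl(b(\check{\g}_x)+b^*(\check{\s}_x)+\check{d}_x\bigr)\bigl(b(\check{\g}_y)+b^*(\check{\s}_y)+\check{d}_y\bigr),
\end{equation*}
and to estimate each of the nine resulting terms separately. Since $\cosh(\eta_H(p)) = 1 + O(\eta_H(p)^2)$, I would split $\check{\g}_x(z)=\delta(z-x)+\check{p}_x(z)$, so that $b(\check{\g}_x) = \check{b}_x + b(\check{p}_x)$ with $\|\check{p}_x\|_2\le C\|\eta_H\|^2$; similarly, $\|\check{\s}_x\|_2\le C\|\eta_H\|$. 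In addition, proceeding as in the derivation of \eqref{eq:etax}, one has the pointwise bound $|\check{\s}_x(y)|\le CN$, which is the source of the factor $N$ in \eqref{eq:prel4-2}.

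The leading term $\check{b}_x\check{b}_y$ is handled by the commutation $\check{b}_x\cN_+ = (\cN_++1)\check{b}_x$ (and likewise for $\check{b}_y$) to move $(\cN_++1)^{n/2}$ past the two annihilation operators, producing exactly $\|\check{a}_x\check{a}_y(\cN_++1)^{n/2}\xi\|$. For mixed terms in which one factor is $\check{b}_x$ (or $\check{b}_y$) and the other is $b(\check{p}_y), b^*(\check{\s}_y)$ or $\check{d}_y$, I would first normal-order (picking up possible commutators $\check{\s}_y(x)(1-\cN_+/N)$ or $\check{\eta}_H(x-y)(1-\cN_+/N)$, which are bounded pointwise by $CN$), then apply the standard bounds $\|b(f)\xi\|,\|b^*(f)\xi\|\le\|f\|_2\|(\cN_++1)^{1/2}\xi\|$ and the commutation rules to arrive at contributions of the form $\|\check{a}_x(\cN_++1)^{(n+1)/2}\xi\|$, $\|\check{a}_y(\cN_++1)^{(n+1)/2}\xi\|$, and $N\|(\cN_++1)^{n/2}\xi\|$. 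The purely "smooth" terms with no $\check{b}$-factor (such as $b(\check{p}_x)b(\check{p}_y)$, $b^*(\check{\s}_x)b(\check{\g}_y)$, etc.) contribute $\|(\cN_++1)^{(n+2)/2}\xi\|$, which is again one of the terms on the right hand side of \eqref{eq:prel4-2}.

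For the terms involving one $\check{d}$-operator, I would apply \eqref{eq:dxy-bds} and, after splitting $\check{d}_y = \check{\bar{d}}_y - (\cN_+/N)b^*(\check{\eta}_{H,y})$ to match the quantity bounded in \eqref{eq:splitdbd}, the estimate \eqref{eq:splitdbd}; the $N$-factor in \eqref{eq:prel4-2} accommodates the $\|\check{\eta}_H\|_\infty\le CN$ contribution from the correction. Finally, the term $\check{d}_x\check{d}_y$ is controlled directly by \eqref{eq:ddxy}: the prefactor $1/N^2$, combined with $|\check{\eta}_H(x-y)|\le CN$ and the trivial bound $\cN_+\le N$ to convert $\|(\cN_++1)^{(n+6)/2}\xi\|$ into $N^3\|(\cN_++1)^{n/2}\xi\|$, yields exactly the $N\|(\cN_++1)^{n/2}\xi\|$ and $\|\check{a}_\bullet(\cN_++1)^{\cdots}\xi\|$ contributions.

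The main obstacle is bookkeeping: checking that the commutator corrections arising from normal-ordering (which involve pointwise values of $\check{\s}_x$ and $\check{\eta}_H$) can be absorbed into the $N\|(\cN_++1)^{n/2}\xi\|$ term via \eqref{eq:etax}, and that no "stray" contribution falls outside the five families appearing on the right of \eqref{eq:prel4-2}. No new idea is needed: once Lemma \ref{lm:dp} is in hand, the argument reduces to a careful expansion, normal-ordering, and Cauchy-Schwarz.
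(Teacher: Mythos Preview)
Your proposal is correct and matches the paper's approach essentially line by line: the paper also expands via \eqref{eq:ebex}, splits $b(\check{\gamma}_x)=\check{b}_x+b(\check{r}_x)$ with $r_p=1-\gamma_p$, uses $|\check{\sigma}(x-y)|\le CN$ from \eqref{eq:etax} for the normal-ordering commutator, and then invokes \eqref{eq:splitdbd} (after the same splitting $\check{b}_x\check{d}_y=\check{b}_x\check{\bar d}_y-\check{b}_x(\cN_+/N)b^*(\check{\eta}_{H,y})$) and \eqref{eq:ddxy} for the $\check{d}$-terms. The paper's write-up is terser---it only treats $n=0$ explicitly and groups the sixteen terms into a few lines---but the ingredients and their roles are exactly as you describe.
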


\begin{proof} 
We consider $n=0$, the general case follows similarly. With the notation $\g_p = \cosh \eta_H (p)$, $r_p = 1 - \g_p$, $\s_p = \sinh \eta_H (p)$ and denoting by $\check{\s}$, $\check{r}$ the functions in $L^2 (\L)$ with Fourier coefficients $\s_p$ and $r_p$, we use (\ref{eq:ebe}) to write 
\[\begin{split} 
 \| e^{-B(\eta)} \check{b}_x \check{b}_y  e^{B( \eta)}\xi \|   = \; &  \| \big( \check{b}_x + b (\check{r}_x) + b^*(\check{\s}_x) + \check{d}_x \big) \big( \check{b}_y + b(\check{r}_y) + b^*(\check{\s}_y) + \check{d}_y\big) \xi \| \\
\leq \; &\| \check{b}_x \check{b}_y \xi \| + C (\| \check{b}_x  \cN_+^{1/2}  \xi \|  + \| \check{b}_y \cN_+^{1/2} \xi \|) + C |\check{\s} (x-y)| \| \xi \| \\ &+ \| \check{b}_x \check{d}_y \xi \| +  \| \check{d}_x  \big( \check{b}_y + b(\check{r}_y) + b^*(\check{\s}_y) + \check{d}_y\big) \xi \| 
\end{split} \]
because $\| r \|, \| \s \| \leq C \| \eta_H \| \leq C$. Using Eq. (\ref{eq:ddxy}) and (after writing $\check{b}_x \check{d}_y = \check{b}_x \check{\bar{d}}_y - \check{b}_x (\cN_+/ N) b^* (\check{\eta}_y)$) Eq. (\ref{eq:splitdbd}), and with the bound (\ref{eq:etax}) (which also implies $|\check{\sigma} (x)| \leq C N$), we obtain (\ref{eq:prel4-2}). 
\end{proof}

\begin{proof}[Proof of Prop. \ref{prop:GN-4}] We start by writing
\begin{equation}\label{eq:quartic1} \begin{split} 
&e^{-B(\eta_H)} \cL^{(4)}_{N} e^{B(\eta_H)} \\ &= \frac{1}{2N} \sum_{p,q \in \Lambda_+^*, r \in \Lambda^* : r \not = -p,q} \widehat{V} (r/N) e^{-B(\eta_H)} a_p^* a_q^* a_{q-r} a_{p+r}  e^{B( \eta_H)} \\ &= \cV_N + \frac{1}{2N} \sum_{p,q \in \Lambda_+^*, r \in \Lambda^* : r \not = -p,q} \widehat{V} (r/N) \int_0^1 ds \,  e^{-sB( \eta_H)} \left[ a_p^* a_q^* a_{q-r} a_{p+r} , B( \eta_H) \right] e^{sB(\eta_H)}
 \\ & = \cV_N + \frac{1}{2N} \sum_{q \in \Lambda_+^*, r \in \Lambda^* : r \not = -q} \widehat{V} (r/N) \eta_H (q+r) \int_0^1 ds \, \left( e^{-sB(\eta_H)} b_q^* b_{-q}^* e^{sB(\eta_H)} + \text{h.c.} \right)  \\
&\hspace{.3cm}  +\frac{1}{N} \sum_{p,q \in \Lambda_+^* , r \in \Lambda^* : r \not = p,-q} \widehat{V} (r/N) \eta_H (q+r) \int_0^1 ds \left( e^{-s B(\eta_H)} b_{p+r}^* b_q^* a^*_{-q-r} a_p e^{sB(\eta_H)} + \text{h.c.} \right) \end{split} \end{equation}
Now we observe that
\[ \begin{split}  e^{-sB( \eta_H )} &a^*_{-q-r} a_p e^{s B(\eta_H)}  \\ &=
a^*_{-q-r} a_p + \int_0^s d\tau \, e^{-\tau B(\eta_H)} \left[ a^*_{-q-r} a_p , B( \eta_H) \right] e^{-\tau B(\eta_H)} \\ &= a^*_{-q-r} a_p + \int_0^s d\tau \, e^{-\tau B(\eta_H)} \left( \eta_H (p) b^*_{-p} b^*_{-q-r} + \eta_H ({q+r}) b_p b_{q+r} \right) e^{-\tau B(\eta_H)} \end{split}  \]
Inserting in (\ref{eq:quartic1}), we obtain 
\[  \cG^{(4)}_{N,\ell}  -\cV_N = \text{W}_1 + \text{W}_2 + \text{W}_3 + \text{W}_4 \]
where we defined 
\begin{equation}\label{eq:defW} 
\begin{split} 
\text{W}_1 = \; & \frac{1}{2N} \sum_{q \in \Lambda_+^* , r \in \Lambda^* : r \not = -q} \widehat{V} (r/N) \eta_H ({q+r}) \int_0^1 ds \big( e^{-s B(\eta_H)} b_{q} b_{-q}\,e^{sB(\eta_H)}   + \text{h.c.} \big) \\ 
  \text{W}_2 = \; & \frac{1}{N} \sum_{p,q \in \Lambda_+^* , r \in \Lambda^* : r \not = p,-q} \widehat{V} (r/N)\,  \eta_H ({q+r}) \int_0^1 ds  \big(  e^{-s B(\eta_H)} b^*_q b^*_{-q} e^{s B( \eta_H )}  a^*_{-q-r} a_p + \text{h.c.} \big)\\
\text{W}_3 = \; &   \frac{1}{N} \sum_{p,q\in \Lambda^*_+, r \in \Lambda^* : r \not = -p -q} \widehat{V} (r/N) \eta_H ({q+r}) \eta_H (p) \, \\ &\hspace{1.5cm} \times  \int_0^1 ds\,\int_0^s d\t \, \big(e^{-sB( \eta_H)} b^*_{p+r} b^*_q e^{sB( \eta_H)} e^{-\t B( \eta_H)} b^*_{-p} b^*_{-q-r} e^{\t B( \eta_H )}+ \hc \big) \\ 
\text{W}_4 = \; & \frac{1}{N} \sum_{p,q\in \Lambda^*_+, r \in \Lambda^* : r \not = -p -q} \widehat{V} (r/N) \, \eta^2_H ({q+r}) \\ &\hspace{1.5cm} \times \int_0^1 ds\,\int_0^s d\t \,  \big( e^{-sB( \eta_H)} b^*_{p+r} b^*_q e^{sB( \eta_H)} e^{-\t B( \eta_H)} b_{p} b_{q+r} e^{\t B( \eta_H)} + \hc \big) 
  \end{split} \end{equation}
  
First, we consider the term $\text{W}_1$. With (\ref{eq:ebe}), we find
\[ \begin{split}  \text{W}_1 = &\;\frac{1}{2N} \sum_{q \in \Lambda_+^* , r \in \Lambda^* : r \not = -q} \widehat{V} (r/N) \eta_H ({q+r}) \\ &\hspace{2cm} \times \int_0^1 ds \big( \g_q^{(s)} b_q + \s^{(s)}_q b_{-q}^* + d_q^{(s)} \big) \big(  \g_q^{(s)} b_{-q} + \s^{(s)}_q b_{q}^* + d_{-q}^{(s)} \big) + \hc \end{split}  \]
where we defined $\gamma^{(s)}_q = \cosh (s \eta_H (q))$, $\s_q^{(s)} = \sinh (s \eta_H (q))$ and where $d_q^{(s)}$ is defined as in (\ref{eq:defD}), with $\eta$ replaced by $s \eta_H$. We write 
\begin{equation}\label{eq:W1-dec} 
\begin{split} 
\text{W}_1 = \; & \frac{1}{2N} \sum_{q \in \Lambda_+^* , r \in \Lambda^* : r \not = -q} \widehat{V} (r/N) \eta_H (q+r) \int_0^1 ds (\g^{(s)}_q)^2  (b_q b_{-q} + \hc ) \\ &+ \frac{1}{2N} \sum_{q \in \Lambda_+^* , r \in \Lambda^* : r \not = -q} \widehat{V} (r/N) \eta_H (q+r) \int_0^1 ds \, \g_q^{(s)} \s^{(s)}_q \big( [b_q , b_q^*] + \hc \big) \\
&+ \frac{1}{2N} \sum_{q \in \Lambda_+^* , r \in \Lambda^* : r \not = -q} \widehat{V} (r/N) \eta_H (q+r) \int_0^1 ds  \, \g_q^{(s)} \big( b_q d_{-q}^{(s)}+\hc \big) + \cE_{10}^{(4)} \\
=: & \; \text{W}_{11} + \text{W}_{12} + \text{W}_{13} + \cE^{(4)}_{10}  
\end{split} 
\end{equation}
where  
\begin{equation}\label{eq:cE410}
 \cE_{10}^{(4)} =  \; \cE_{101}^{(4)} +  \cE_{102}^{(4)} +  \cE_{103}^{(4)} +  \cE_{104}^{(4)} +  \cE_{105}^{(4)} \end{equation} 
with the errors 
\begin{equation}\label{eq:cE410j} \begin{split} \cE^{(4)}_{101} &= \frac{1}{2N} \sum_{q \in \Lambda_+^* , r \in \Lambda^* : r \not = -q} \widehat{V} (r/N) \eta_H (q+r) \int_0^1 ds \Big[ 2 \gamma_q^{(s)} \s_q^{(s)} b_q^* b_q  + (\s_q^{(s)})^2 b_{-q}^* b_q^* +\hc \Big] \\
 \cE^{(4)}_{102} &= \frac{1}{2N} \sum_{q \in \Lambda_+^* , r \in \Lambda^* : r \not = -q} \widehat{V} (r/N) \eta_H (q+r) \int_0^1 ds \, \sigma_q^{(s)} \big( b_{-q}^* d_{-q}^{(s)} + \hc \big)  \\
  \cE^{(4)}_{103} &= \frac{1}{2N} \sum_{q \in \Lambda_+^* , r \in \Lambda^* : r \not = -q} \widehat{V} (r/N) \eta_H (q+r) \int_0^1 ds \, \sigma_q^{(s)} \big( d^{(s)}_q b_q^* + \hc \big) \\
 \cE^{(4)}_{104} &= \frac{1}{2N} \sum_{q \in \Lambda_+^* , r \in \Lambda^* : r \not = -q} \widehat{V} (r/N) \eta_H (q+r) \int_0^1 ds \, \g_q^{(s)} \big( d^{(s)}_q b_{-q} + \hc \big) \\
   \cE^{(4)}_{105} &= \frac{1}{2N} \sum_{q \in \Lambda_+^* , r \in \Lambda^* : r \not = -q} \widehat{V} (r/N) \eta_H (q+r) \int_0^1 ds \big( d^{(s)}_q d^{(s)}_{-q} + \hc \big) 
\end{split} \end{equation}
Since 
\begin{equation}\label{eq:VetaN} 
\sup_{q \in \Lambda_+^*} \frac{1}{N} \sum_{r \in \L_+^*} |\widehat{V} (r/N)| |\eta_{q+r}|  \leq C \, < \infty \end{equation}
uniformly in $N \in \bN$ and $\ell \in (0;1/2)$, we can bound the first term in (\ref{eq:cE410j}) by 
\[ |\langle \xi, \cE^{(4)}_{101} \xi \rangle | \leq C \sum_{q \in \L_+^*} \left[ |\eta_q| \| b_q \xi \|^2 + \eta_q^2 \| b_q \xi \| \| (\cN_+ + 1)^{1/2} \xi \| \right] \leq C \ell^{2\a} \| (\cN_+ + 1)^{1/2} \xi \|^2 \]
To estimate the second term in (\ref{eq:cE410j}), we use (\ref{eq:VetaN}) and Lemma \ref{lm:dp}; we find 
\[ \begin{split} |\langle \xi, \cE_{102}^{(4)} \xi \rangle | &\leq C \sum_{q \in \L_+^*} |\eta_H (q)| \| b_{-q} \xi \| \left[ |\eta_H (q)| \| (\cN_+ + 1)^{1/2} \xi \| + \| \eta_H \| \| b_{-q} \xi \| \right] \\ &\leq C \ell^{2\a} \| (\cN_+ + 1)^{1/2} \xi \|^2 \end{split} \]
For the third term in (\ref{eq:cE410j}), we use (\ref{eq:VetaN}), Lemma \ref{lm:dp}, and also  
\begin{equation*}
\frac{1}{N^2} \sum_{q \in \L_+^*, r \in \L^*, r \not = -q} |\widehat{V} (r/N)| |\eta_H (q+r)| |\eta_H (q)| \leq C < \infty \end{equation*}
uniformly in $N$ and $\ell \in (0;1/2)$. We obtain 
\[ \begin{split} |\langle \xi, \cE_{103}^{(4)} \xi \rangle | & \leq \frac{C \| (\cN_+ + 1)^{1/2} \xi \|}{N} \sum_{q \in \Lambda_+^* , r \in \Lambda^* : r \not = -q} |\widehat{V} (r/N)| |\eta_H (q+r)| |\eta_H (q)|\\ &\hspace{5.5cm} \times \left[ |\eta_q | \| b_q^* \xi \| + N^{-1} \| \eta_H \| \| b_q b_q^* \cN^{1/2}_+ \xi \| \right] \\ & \leq \frac{C \| (\cN_+ + 1)^{1/2} \xi \|}{N} \sum_{q \in \Lambda_+^* , r \in \Lambda^* : r \not = -q} |\widehat{V} (r/N)| |\eta_H (q+r)| |\eta_H (q)| \\ &\hspace{5.5cm} \times \left[ (|\eta_q | + N^{-1} \| \eta_H \|)\| \cN_+^{1/2} \xi \| + \| \eta_H \| \| b_q  \xi \| \right] \\ &\leq C \ell^\a \| (\cN_++1)^{1/2} \xi \|^2 \end{split} \]
Consider now the fourth term in (\ref{eq:cE410j}). We write $\cE_{104}^{(4)} = \cE_{1041}^{(4)} + \cE_{1042}^{(4)}$, with 
\[ \begin{split} \cE_{1041}^{(4)} &= \frac{1}{2N} \sum_{q \in \Lambda_+^* , r \in \Lambda^* : r \not = -q} \widehat{V} (r/N) \eta_H (q+r) \int_0^1 ds \, (\g_q^{(s)} -1) d_q^{(s)} b_{-q} \\
\cE_{1042}^{(4)} &= \frac{1}{2N} \sum_{q \in \Lambda_+^* , r \in \Lambda^* : r \not = -q} \widehat{V} (r/N) \eta_H (q+r) \int_0^1 ds  \, d^{(s)}_q b_{-q} 
\end{split} \]
With $|\g_q^{(s)} - 1| \leq C |\eta_H (q)|^2$, (\ref{eq:VetaN}) and Lemma \ref{lm:dp}, we easily find 
\[|  \langle \xi , \cE_{1041}^{(4)} \xi \rangle|  \leq C \ell^{3\a} \| (\cN_+ + 1)^{1/2} \xi \|^2 \]  
As for the term $\cE_{1042}^{(4)}$, we switch to position space. Using (\ref{eq:etax}) and (\ref{eq:dxy-bds}) in Lemma \ref{lm:dp}, we obtain 
\begin{equation*} 
\begin{split}
| \langle \xi , \cE_{1042}^{(4)} \xi \rangle | &= \Big|\frac{1}{2} \int_0^1 ds \int_{\Lambda^2} dx dy N^2 V(N(x-y)) \check{\eta}_H (x-y) \langle \xi, \check{d}^{(s)}_x \check{b}_y \xi \rangle \Big| \\ &\leq C \int_0^1 \int_{\Lambda^2} dx dy N^3 V(N(x-y)) \| (\cN_+ + 1)^{1/2} \xi \| \| (\cN_+ + 1)^{-1/2} \check{d}^{(s)}_x \check{b}_y \xi \| \\ &\leq C \| \eta_H \|  \int_0^1 \int_{\Lambda^2} dx dy N^2 V(N(x-y)) \| (\cN_+ + 1)^{1/2} \xi \|  \\ &\hspace{5cm} \times \left[ \| \check{a}_y \cN_+ \xi \| + \| \check{a}_x \check{a}_y \cN_+^{1/2} \xi \| \right] \\ 
&\leq C \ell^{\a/2} \| (\cN_+ + 1)^{1/2} \xi \|^2 + C \ell^{\a/2} \| (\cN_+ + 1)^{1/2} \xi \| \| \cV_N^{1/2} \xi \| 
\end{split} \end{equation*} 
Let us now consider the last term in (\ref{eq:cE410j}). Switching to position space and using (\ref{eq:ddxy}) in Lemma \ref{lm:dp} and again (\ref{eq:etax}), we arrive at 
\[ \begin{split} |\langle \xi , \cE_{105}^{(4)} \xi \rangle | &\leq C \int_{\L^2} dx dy \, N^3 V(N(x-y))  \| (\cN_+ + 1)^{1/2} \xi \| \| (\cN_+ + 1)^{-1/2} \check{d}_x \check{d}_y \xi \| \\ &\leq C \| \eta_H \|  \| (\cN_+ + 1)^{1/2} \xi \| \int_{\L^2} dx dy \, N V(N(x-y)) \\ &\hspace{2cm} \times \left[ N \| (\cN_+ + 1)^{3/2} \xi \| + \| \check{a}_x \cN_+^2 \xi \| +  \| \check{a}_y \cN_+^2 \xi \| +  \| \check{a}_x \check{a}_y \cN_+^{3/2} \xi \| \right] \\ &\leq C \ell^{\a/2} \| (\cN_++ 1)^{1/2} \xi \|^2  + C \ell^{\a/2} \| (\cN_+ + 1)^{1/2} \xi \| \| \cV_N^{1/2} \xi \| \end{split} \]
We conclude that the error term (\ref{eq:cE410}) can be estimated by 
\[ \pm \cE_{10}^{(4)} \leq C \ell^{\a/2} (\cH_N + 1) \]

Next, we come back to the terms $\text{W}_{11}, \text{W}_{12}, \text{W}_{13}$ defined  in (\ref{eq:W1-dec}). 
Using (\ref{eq:VetaN}) and $|\gamma_q^{(s)} -1| \leq C \eta_H (q)^2$, we can write  
\begin{equation}\label{eq:W11} \text{W}_{11} =  \frac{1}{2N} \sum_{q \in \Lambda_+^* , r \in \Lambda^* : r \not = -q} \widehat{V} (r/N) \eta_H (q+r) (b_q b_{-q} + \hc ) + \cE_{11}^{(4)} \end{equation}
where $\cE_{11}^{(4)}$ satisfies the estimate
\[ \begin{split} | \langle \xi , \cE_{11}^{(4)} \xi \rangle | &\leq \frac{C}{N} \sum_{q \in \Lambda_+^* , r \in \Lambda^* : r \not = -q} |\widehat{V} (r/N)| |\eta_H (q+r)| |\eta_H (q)|^2 \| b_q \xi \| \| ( \cN_+ +1)^{1/2} \xi \| \\ &\leq C \ell^{5\a/2} \| (\cN_+ + 1) \xi \|^2 \end{split} \]
The second term in (\ref{eq:W1-dec}) can be decomposed as 
\begin{equation}\label{eq:W12}  \text{W}_{12} = \frac{1}{2N} \sum_{q \in \Lambda_+^* , r \in \Lambda^* : r \not = -q} \widehat{V} (r/N) \eta_H (q+r) \eta_H (q)  \left(1- \frac{\cN_+}{N} \right) + \cE_{12}^{(4)} \end{equation}
where the error 
\[ \begin{split} \cE_{12}^{(4)}  = \; & -\frac{1}{2N^2} \sum_{q \in \Lambda_+^* , r \in \Lambda^* : r \not = -q} \widehat{V} (r/N) \eta_H (q+r) \int_0^1 ds \g_q^{(s)} \s^{(s)}_q a_q^* a_q \\ &+ \frac{1}{2N} \sum_{q \in \Lambda_+^* , r \in \Lambda^* : r \not = -q} \widehat{V} (r/N) \eta_H (q+r) \int_0^1 ds (\g_q^{(s)} \s^{(s)}_q -s \eta_H (q)) \left( 1- \frac{\cN_+}{N} \right)  \end{split} \]
can be bounded, using (\ref{eq:VetaN}) and $| \g_q^{(s)} \s^{(s)}_q -s \eta_H (q))| | \leq C |\eta_H (q)|^3$, by
\[ \pm \cE^{(4)}_{12}  \leq C \ell^{2\a} (\cN_+ + 1) \]
As for the third term on the r.h.s. of (\ref{eq:W1-dec}), we write 
\begin{equation}\label{eq:W13} \text{W}_{13} = - \frac{1}{2N} \sum_{q \in \Lambda_+^* , r \in \Lambda^* : r \not = -q} \widehat{V} (r/N) \eta_H (q+r) \eta_H (q) \left(1- \frac{\cN_+}{N} \right) \frac{\cN_+ +1}{N}  + \cE^{(4)}_{13} \end{equation}
where $\cE^{(4)}_{13} = \cE_{131}^{(4)} +  \cE_{132}^{(4)} + \cE_{133}^{(4)} + \cE_{134}^{(4)}$, with 
\begin{equation*}
\begin{split} 
\cE^{(4)}_{131} = \; &\frac{1}{2N} \sum_{q \in \Lambda_+^* , r \in \Lambda^* : r \not = -q} \widehat{V} (r/N) \eta_H (q+r) \int_0^1 ds (\g_q^{(s)} -1) b_q d_{-q}^{(s)} +\hc  \\
\cE^{(4)}_{132} = \; &\frac{1}{2N} \sum_{q \in \Lambda_+^* , r \in \Lambda^* : r \not = -q} \widehat{V} (r/N) \eta_H (q+r) \int_0^1 ds \, b_q \left[ d_{-q}^{(s)} + s \eta_H (q) \frac{\cN_+}{N} b_{q}^* \right] + \hc \\
\cE^{(4)}_{133} = \;&- \frac{1}{2N} \sum_{q \in \Lambda_+^* , r \in \Lambda^* : r \not = -q} \widehat{V} (r/N) \eta_H (q+r) \eta_H (q) b^*_q  b_{q} \frac{\cN_+ +1}{N} \\
\cE^{(4)}_{134} = \; & \frac{1}{2N^2} \sum_{q \in \Lambda_+^* , r \in \Lambda^* : r \not = -q} \widehat{V} (r/N) \eta_H (q+r) \eta_H (q) a_q^* a_q \frac{\cN_+ +1}{N} 
\end{split} \end{equation*}
It is easy to estimate the last two terms: with (\ref{eq:VetaN}), we have 
\[  \pm \cE_{133}^{(4)} \leq C \ell^{2\a} (\cN_+ + 1) , \qquad \pm \cE_{134}^{(4)} \leq C \ell^{2\a} 
(\cN_+ + 1)  \]
With $|\g_q^{(s)} -1| \leq C \eta_H (q)^2$, Lemma \ref{lm:dp} and, again, (\ref{eq:VetaN}), we also find 
\[  \begin{split} |\langle \xi , \cE_{131}^{(4)} \xi \rangle | &\leq \frac{C}{N} \sum_{q \in \Lambda_+^* , r \in \Lambda^* : r \not = -q} |\widehat{V} (r/N)| |\eta_H (q+r)| |\eta_H (q)|^2  \| (\cN_+ +1)^{1/2} \xi \| \\ &\hspace{5cm} \times \left[ |\eta_H (q)| \| (\cN_+ +1)^{1/2} \xi \| + \| \eta_H \| \| b_q \xi \| \right] \\ &\leq C \ell^{3\a} \| (\cN_+ + 1)^{1/2} \xi \|^2 \end{split} \]
Let us now focus on $\cE_{132}^{(4)}$. Switching to position space, making use of the notation $\check{\bar{d}}^{(s)}_y = d^{(s)}_y + s (\cN_+ / N) b^* (\check{\eta}_{H,y})$ and using Lemma \ref{lm:dp}, specifically (\ref{eq:splitdbd}), we obtain 
\[ \begin{split} 
 |\langle \xi , \cE_{132}^{(4)} \xi \rangle |  &= \Big| \int_0^1 ds \int_{\L^2} dx dy N^2 V(N(x-y)) \check{\eta}_H (x-y) \langle \xi, \check{b}_x \check{\bar{d}}_y \xi \rangle \Big| \\ &\leq C \| \eta_H \| 
\int_{\L^2}  dx dy N^2 V(N(x-y)) \| (\cN_+ + 1)^{1/2} \xi \| \\ &\hspace{2cm} \times \left[ N \| (\cN_+ + 1)^{1/2} \xi \|  + \| \check{a}_x \cN_+ \xi \| + \| \check{a}_y \cN_+ \xi \| + \| \check{a}_x \check{a}_y \cN_+^{1/2} \xi \| \right] \\
&\leq C \ell^{\a/2} \| (\cN_+ + 1)^{1/2} \xi \|^2 + C \ell^{\a/2} \| (\cN_+ + 1)^{1/2} \xi \| \| \cV_N^{1/2} \xi \| \end{split} \]
We conclude that $\pm \cE_{13}^{(4)} \leq C \ell^{\a/2} (\cH_N + 1)$. Combining this with (\ref{eq:W11}), (\ref{eq:W12}), (\ref{eq:W13}), we obtain 
\begin{equation}\label{eq:W1f} \begin{split}  \text{W}_1 = \; &\frac{1}{2N} \sum_{q \in \L_+^*, r \in \L^* : r \not = -q} \widehat{V} (r/N) \eta_H (q+r)  \big(b_q b_{-q} + \hc \big) \\ &+\frac{1}{2N} \sum_{q \in \Lambda_+^* , r \in \Lambda^* : r \not = -q} \widehat{V} (r/N) \eta_H (q+r) \eta_H (q)  \left(1- \frac{\cN_+}{N} \right) \left(1- \frac{\cN_+ + 1}{ N} \right) + \cE^{(4)}_1 \end{split} \end{equation}
with 
\[ \pm \cE_{1}^{(4)} \leq C \ell^{\a/2} (\cH_N + 1) \]

Next, we consider the term $\text{W}_2$, in (\ref{eq:defW}). To this end, it is convenient to switch to position space. We find
\begin{equation*} \begin{split} 
\text{W}_2 = \int_{\Lambda^2} dx dy N^2 V(N(x-y))  \int_0^1  ds \big( e^{-sB( \eta_H)} \check{b}^*_x \check{b}^*_y  e^{s B( \eta_H)} 
a^* (\check{\eta}_{H,x}) \check{a}_y  + \text{h.c.}   \big)
\end{split} 
\end{equation*}
with the notation $\check{\eta}_{H,x} (z) = \check{\eta}_H (x-z)$. By Cauchy-Schwarz, we have 
\[ \begin{split} 
 |\langle  \xi, \text{W}_2 \xi \rangle | &\leq \int_{\Lambda^2} dx dy \, N^2 V(N(x-y))  \int_0^1  ds  \\ &\hspace{1cm} \times  \| (\cN_+ +1)^{1/2} e^{-sB(\eta_H)} \check{b}_x \check{b}_y  e^{s B(\eta_H)}  \xi \| \| (\cN_+ +1)^{-1/2} a^* (\check{\eta}_{H,x}) \check{a}_y \xi \| 
\end{split} \]
With 
\[ \| (\cN_+ +1)^{-1/2} a^* (\check{\eta}_{H,x}) \check{a}_y \xi \| \leq C \|  \eta_H \| \| \check{a}_y \xi \| \leq C \ell^{\a/2} \| \check{a}_y \xi \|  \]
and using Lemma \ref{lm:prel4}, we obtain 
\be \begin{split}  \label{eq:W2end}
|\langle \xi, \text{W}_2 \xi \rangle |&\leq  C \ell^{\a/2} \int_{\Lambda^2}  dx dy \, N^2 V(N(x-y))  \| \check{a}_y \xi \| \\ &\hspace{1cm} \times \Big\{ N \| (\cN_+ +1)^{1/2} \xi \| + N  \| \check{a}_x \xi \| + N \| \check{a}_y \xi \| + N^{1/2}  \| \check{a}_x \check{a}_y \xi \| \Big\}  \\ 
& \leq C \ell^{\a/2}  \,  \| (\cN_+ +1)^{1/2} \xi \| \| (\cV_N + \cN_+  + 1 )^{1/2} \xi \| \end{split} \ee
Also for the term $\text{W}_3$ in (\ref{eq:defW}), we switch to position space. We find
\begin{equation*}
\begin{split}  \text{W}_3 = &\;  \int_{\L^2}  dx dy \, N^2 V(N(x-y))  \\
&  \times \int_0^1 ds\, \int_0^s d \t \, \big( e^{-sB( \eta_H)} \check{b}^*_x \check{b}^*_y e^{sB( \eta_H)} \, e^{- \t B( \eta_H)} b^*(\check{ \eta}_{H,x}) b^* (\check{\eta}_{H,y}) e^{\t B( \eta_H)} + \text{h.c.} \big) \end{split} \end{equation*}
and thus 
\[ \begin{split} & \left|\langle \xi, \text{W}_3 \xi \rangle \right|  \leq  \int_{\L^2} dx dy \, N^2 V(N(x-y))  \int_0^1 ds\, \int_0^s d \t \, \| (\cN_+ +1)^{1/2} e^{-sB( \eta_H)} \check{b}_x \check{b}_y e^{sB( \eta_H)} \xi \| \\ &\hspace{5cm} \times  \|  (\cN_+ +1)^{-1/2}  e^{- \t B( \eta_H)}  b^*(\check{ \eta}_{H,x})) b^* (\check{ \eta}_{H,y}) e^{ \t B( \eta_H)} \xi \|\,, 
\end{split} \]
With Lemma \ref{lm:Ngrow}, we find 
\begin{equation*}
\begin{split} 
\| (\cN_+ +1)^{-1/2}  e^{- \t B( \eta_H)}  b^*(\check{ \eta}_{H,x})) b^*(\check{ \eta}_{H,y}) e^{ \t B( \eta_H)} \xi \| &  \leq C \| \eta_H\|^2 \| (\cN_+ +1)^{1/2} \xi \|  
\end{split} \end{equation*}
Using Lemma \ref{lm:prel4}, we conclude that 
\be \begin{split}  \label{eq:W3end}
|\langle \xi, \text{W}_3 \xi \rangle |&\leq  C \ell^\a \, \int_{\Lambda^2}  dx dy \, N^2 V(N(x-y))  \| (\cN_+ +1)^{1/2}\xi \| \\ &\hspace{1cm} \times \Big\{ N \| (\cN_+ +1)^{1/2} \xi \| + N  \| \check{a}_x \xi \| + N \| \check{a}_y \xi \| + N^{1/2}  \| \check{a}_x \check{a}_y \xi \| \Big\}  \\ 
& \leq C \ell^\a \,  \| (\cN_+ +1)^{1/2} \xi \| \| (\cV_N + \cN_+  + 1 )^{1/2} \xi \| \end{split} \ee
The term $\text{W}_4$ in (\ref{eq:defW}) can be bounded similarly. Switching to position space, we find 
\begin{equation*}
\begin{split} \text{W}_4 = \; & \int dxdy \, N^2 V(N(x-y)) \\
& \times\int_0^1 ds \int_0^s d \t \,  \big( e^{-sB(\eta_H)} \check{b}^*_x \check{b}^*_y  \, e^{sB( \eta_H )} \,  e^{-\t B(\eta_H)} b(\check{\eta}^2_{H,x}) \check{b}_y e^{\t B(\eta_H)} + \text{h.c.} \big) 
 \end{split} \end{equation*}
where $\check{\eta}^2_H$ denotes the function with Fourier coefficients $\eta_H^2 (q)$, for $q \in \L^*$, and where $\check{\eta}^2_{H,x} (y) := \check{\eta}^2_H (x-y)$. We conclude that $\| \check{\eta}^2_{H,x} \|  = \| \eta^2_H \| \leq C \ell^{5\a/2}$. With Cauchy-Schwarz, we arrive at
\[ \begin{split} |\langle \xi , \text{W}_4 \xi \rangle | \leq \; & C \ell^{5\a/2} \int_0^1 ds \int_0^s d\tau \int dx dy N^2 V(N(x-y)) \\ &\hspace{2cm} \times  \| (\cN_+ + 1)^{1/2} e^{-sB(\eta_H)} \check{b}_y \check{b}_x e^{sB(\eta_H)} \xi \|  \| \check{b}_y e^{\tau B(\eta_H)} \xi \| \end{split} \] 
Applying Lemma \ref{lm:prel4} and then Lemma \ref{lm:Ngrow}, we obtain 
\[ \begin{split} 
|\langle \xi , \text{W}_4 \xi \rangle | \leq &\; C \ell^{5\a/2} \int_0^1 ds \int_0^s d\tau \int dx dy N^2 V(N(x-y))
\| \check{b}_y e^{\tau B(\eta_H)} \xi \| \\ &\hspace{2cm} \times \left\{  N \| (\cN_+ +1)^{1/2} \xi \| + N \| \check{a}_x \xi \| + N \| \check{a}_y \xi \| + N^{1/2} \| \check{a}_x \check{a}_y \xi \| \right\} 
\\ \leq \; &C \ell^{5\a /2}  \int_0^1 ds \int_0^s d\tau \, \| (\cN_++1)^{1/2} e^{-\t B(\eta_H)} \xi \| \| (\cV_N + \cN_+ + 1)^{1/2} \xi \| 
\\ \leq \; &C  \ell^{5\a /2} \| (\cN_+ + 1)^{1/2} \xi \|   \| (\cV_N + \cN_+ + 1)^{1/2} \xi \| 
\end{split} \]
Combining (\ref{eq:W1f}), (\ref{eq:W2end}), (\ref{eq:W3end}) with the last bound, we find 
\[ \begin{split} \cG^{(4)}_{N,\ell} = \; &\cV_N + \frac{1}{2N} \sum_{q \in \L_+^*, r \in \L^* : r \not = -q} \widehat{V} (r/N) \eta_H (q+r)  \big(b_q b_{-q} + \hc \big) \\ &+\frac{1}{2N} \sum_{q \in \Lambda_+^* , r \in \Lambda^* : r \not = -q} \widehat{V} (r/N) \eta_H (q+r) \eta_H (q)  \left(1- \frac{\cN_+}{N} \right) \left(1- \frac{\cN_+ + 1}{ N} \right) + \cE^{(4)}_{N,\ell} \end{split} \] 
where $\cE_{N,\ell}^{(4)}$ satisfies \eqref{eq:E4bound2}.  As for the bound \eqref{eq:E4Cff}, it follows similarly, arguing as we did at the end of the proof of Prop. \ref{prop:K} to show (\ref{eq:errCommK}). 
\end{proof}

\subsection{Proof of Propositions \ref{prop:GNell}}
\label{sub:proofGN}

We now combine the results of Subsections \ref{sub:G0} - \ref{sub:G4} to prove Proposition \ref{prop:GNell}. From Propositions \ref{prop:G0}, \ref{prop:G2}, \ref{prop:GN-3}, \ref{prop:GN-4}, we conclude that the excitation Hamiltonian $\cG_{N,\ell}$ defined in \eqref{eq:GN} is such that 
\begin{equation} \begin{split} \label{eq:proofGNell-1}
\cG_{N,\ell} = \; &    \frac{\widehat{V} (0)}{2}\, (N +\cN_+ -1) \, \frac{N-\cN_+}{N}\\
& + \sum_{p \in P_{H}} \eta_p \Big[p^2 \eta_p + \widehat{V} (p/N) + \frac 1 {2N} \sum_{\substack{r \in \L^*\\ p+r \in P_H}} \widehat{V} (r/N)  \eta_{p+r}\Big]\Big(\frac{N-\cN_+}{N}\Big) \Big(\frac{N-\cN_+ -1}{N}\Big)\\
& +\cK +\sum_{p \in \Lambda^*_+}  \widehat{V} (p/N)  a^*_pa_p \frac{N-\cN_+}{N} \\
&+ \sum_{p \in P_{H}} \Big[\; p^2 \eta_p + \frac 12 \widehat{V} (p/N) + \frac 1 {2N} \sum_{r \in \L^*:\; p+r \in P_H}\hskip -0.5cm \widehat{V} (r/N)  \eta_{p+r} \; \Big]  \big( b^*_p b^*_{-p} + b_p b_{-p} \big) \\
 & + \frac{1}{2}\sum_{p \in P_H^c} \Big[ \widehat{V} (p/N) + \frac 1 {2N} \sum_{r \in \L^*:\; p+r \in P_H}\hskip -0.5cm \widehat{V} (r/N)  \eta_{p+r} \Big]\big( b_p b_{-p}+ b_{-p}^* b_p^*\big) \\
& + \frac{1}{\sqrt{N}} \sum_{p,q \in \L^*_+ : p + q \not = 0} \widehat{V} (p/N) \left[ b_{p+q}^* a_{-p}^* a_q  + \hc \right]  +\cV_N   + \cE_{1}
\end{split} \end{equation}
where 
\begin{equation*}
\pm \cE_1 \leq C \ell^{(\a-3)/2} \big(\cH_N + 1 \big) \end{equation*}
and, with the notation $f_M = f(\cN_+ / M)$,
\[  
\pm [f_M,[f_M, \cE_1]] \leq C  \ell^{(\a-3)/2}  M^{-2} \|f'\|^2_{\infty} \big(\cH_N+ 1\big) 
\]
for every $f$ bounded and smooth and $M \in \bN$.

Our first goal is to show (\ref{eq:GeffE}). With \eqref{eq:eta-scat}, we have 
\begin{equation*} 
\begin{split} 
 \sum_{p \in P_{H}} \eta_p \Big[p^2 \eta_p &+ \widehat{V} (p/N) + \frac 1 {2N} \sum_{r \in \L^*:\; p+r \in P_H}\hskip -0.5cm \widehat{V} (r/N)  \eta_{p+r}\Big] \\
 & = \sum_{p \in P_{H}} \eta_p \Big[ \;\frac 1 2 \widehat{V} (p/N)  + \l_\ell N^3 \widehat \chi_\ell(p) + \l_\ell N^2 \sum_{q \in \L^*} \widehat \chi_\ell(p-q) \eta_q \;\Big]\\
 & \quad - \frac 1 {2N} \sum_{\substack{p,q \in \L^*: \\ p\in P_H,\, q \in P_H^c}}\hskip -0.5cm \widehat{V} ((p-q)/N) \eta_p  \eta_{q} 
\end{split} \end{equation*}
With Lemma  \ref{3.0.sceqlemma} and estimating \begin{equation}\label{eq:estichi} \| \widehat{\chi}_\ell \| = \| \chi_\ell \| \leq C \ell^{3/2}, \qquad \| \eta_H \| \leq \ell^{\a/2}, \qquad \| \widehat{\chi}_\ell * \eta_H \| = \| \chi_\ell \check{\eta}_H \| \leq \| \check{\eta}_H \| \leq \ell^{\a/2},\end{equation} we conclude that
\begin{equation*} \begin{split}  \sum_{p \in P_{H}} \eta_p \Big[p^2 \eta_p + \widehat{V} (p/N) + \frac 1 {2N} &\sum_{\substack{r \in \L^*\\ p+r \in P_H}} \widehat{V} (r/N)  \eta_{p+r}\Big]\Big(\frac{N-\cN_+}{N}\Big) \Big(\frac{N-\cN_+ -1}{N}\Big) \\ & = 
\frac{1}{2} \sum_{p \in P_H} \widehat{V} (p/N) \eta_p \left( \frac{N-\cN_+}{N} \right) \left( \frac{N-\cN_+ - 1}{N} \right) + \cE_2 \end{split} \end{equation*}
with $\pm \cE_2 \leq C \ell^{-\a}$ (and with $[f_M, [f_M, \cE_2]] = 0$). Since $\sum_{p \in P^c_H} |V(p/N)|  |\eta_p | \leq C \ell^{-\a}$, and from (\ref{eq:Vfa0}), we further obtain 
\begin {equation}\label{eq:1lineG} \begin{split}   \sum_{p \in P_{H}} \eta_p \Big[p^2 \eta_p + \widehat{V} (p/N) &+ \frac 1 {2N} \sum_{\substack{r \in \L^*\\ p+r \in P_H}} \widehat{V} (r/N)  \eta_{p+r}\Big]\Big(\frac{N-\cN_+}{N}\Big) \Big(\frac{N-\cN_+ -1}{N}\Big) \\ & = \left[ 4\pi \frak{a}_0 - \frac{\widehat{V} (0)}{2} \right] (N-\cN_+ -1) \left( \frac{N-\cN_+}{N} \right)  + \cE_3 \end{split} \end{equation}
where $\pm \cE_3 \leq C \ell^{-\a}$ (and $[f_M, [f_M, \cE_3]] = 0$). Using \eqref{eq:eta-scat}, we can also handle the fourth line of \eqref{eq:proofGNell-1}; we find 
\begin{equation}\begin{split} \label{eq:proofGNellQ}
\sum_{p \in P_{H}} \Big[\, p^2 \eta_p + \frac 12 &\widehat{V} (p/N) + \frac 1 {2N} \sum_{r \in \L^*:\; p+r \in P_H}\hskip -0.5cm \widehat{V} (r/N)  \eta_{p+r} \, \Big]  \big( b^*_p b^*_{-p} + b_p b_{-p} \big) \\
& = \sum_{p \in P_H} \Big[ N^3 \lambda_\ell \widehat{\chi}_\ell (p) + N^2 \lambda_\ell \sum_{q \in \Lambda^*} \widehat{\chi}_\ell (p-q) \eta_q \Big] \big( b^*_p b^*_{-p} + b_p b_{-p} \big) \\
& \hskip .3cm -  \frac 1 {2N} \sum_{\substack{p,q \in \L^*: \\ p\in P_H,\, q \in P_H^c}}\hskip -0.5cm \widehat{V} ((p-q)/N) \eta_{q} \big( b^*_p b^*_{-p} + b_p b_{-p} \big)
\end{split} \end{equation}
Observe that 
\begin{equation*}\begin{split}
 \Big| \langle \xi, N^3 \lambda_\ell  \sum_{p \in P_{H}} \widehat{\chi}_\ell (p) b_p b_{-p} \xi \rangle \Big| &\leq C \ell^{-3} \| (\cN_+ + 1)^{1/2} \xi \|  \sum_{p \in P_H} |p|^{-1} |\widehat{\chi}_\ell (p)| |p| \| b_p \xi \| \\ &\leq  C \ell^{-3 + \a} \| \widehat{\chi}_\ell \| \| (\cN_+ + 1)^{1/2} \xi \| \| \cK^{1/2} \xi \| 
 \\ &\leq C \ell^{\a -3/2} \| (\cN_+ + 1)^{1/2} \xi \| \| \cK^{1/2} \xi \| 
\end{split} \end{equation*}
Using $\widehat{\chi}_\ell * \eta = \eta$ (because $\chi_\ell (x) w_\ell (x) = w_\ell (x)$ in position space), we also find  
\[  \Big| \langle \xi, N^2 \lambda_\ell \sum_{p \in P_H, q \in \L^*} \widehat{\chi}_\ell (p-q) \eta_q (b_p^* b_{-p}^* + b_p b_{-p}) \xi \rangle \Big| \leq C N^{-1} \ell^{-3 + 3\a/2} \| (\cN_+ + 1)^{1/2} \xi \| \| \cK^{1/2} \xi \| \, . \]
Furthermore, we have  
\begin{equation}\begin{split}\label{eq:quadr3}
& \Big|   \langle \xi,  \frac 1 {2N} \sum_{\substack{p,q \in \L^*: \\ p\in P_H,\, q \in P_H^c}} \widehat{V} ((p-q)/N) \eta_{q} b_p b_{-p} \xi \rangle \Big| \\
& \leq  \frac{1}{2N} \bigg[ \sum_{\substack{p,q \in \L^*: \\ p\in P_H, q \in P_H^c}} \frac{1}{|q|^2} \frac{|\widehat{V} ((p-q)/N)|^2}{|p^2|} \bigg]^{1/2} \bigg[ \sum_{\substack{p,q \in \L^*: \\ p\in P_H,\, q \in P_H^c}}\frac{1}{|q|^2} |p|^2 \| b_p \xi \|^2 \bigg]^{1/2} \| (\cN_++1)^{1/2}\xi \| \\
& \leq  C \ell^{-\a} N^{-1/2} \| \cK^{1/2}\xi \| \| (\cN_+ + 1)^{1/2} \xi \|  
\end{split}\end{equation}
From (\ref{eq:proofGNellQ}), we conclude that 
\begin{equation} \label{eq:2lineG}
\pm \sum_{p \in P_{H}} \Big[p^2 \eta_p + \frac 12 \widehat{V} (p/N) + \frac 1 {2N} \sum_{\substack{r \in \L^* : \\ p+r \in P_H}}  \widehat{V} (r/N)  \eta_{p+r} \Big]  \big( b^*_p b^*_{-p} + b_p b_{-p} \big) 
\leq C \ell^{\a-3/2} (\cK + 1) \end{equation}
for $N$ large enough. As for the fifth line on the r.h.s. of \eqref{eq:proofGNell-1}, we can write it as 
\be \begin{split} \label{eq:quadr4}
 \frac{1}{2} \sum_{p \in P_H^c} \Big[ \widehat{V} (p/N) + \frac{1}{N} &\sum_{r \in \L^*:\; p+r \in P_H}\hskip -0.5cm \widehat{V} (r/N)  \eta_{p+r} \Big]\big( b_p b_{-p}+ b_{-p}^* b_p^*\big) \\
 & = \frac{1}{2}\sum_{p \in P_H^c}  (\widehat{V}(\cdot/N) * \widehat f_{N,\ell})_p \big( b_p b_{-p}+ b_{-p}^* b_p^*\big) + \cE_4 \end{split}\ee
where the error operator 
\[ \cE_4 =  \frac{1}{2N} \sum_{\substack{p,q \in \L^*: \\ p,\, q \in P_H^c}}  \widehat{V} ((p-q)/N) \eta_{q} \big( b_p b_{-p} + b_{-p}^* b_p^* \big) \]
can be bounded by $\pm \cE_4 \leq C N^{-1/2} \ell^{-\a} (\cK+1)$, similarly as in (\ref{eq:quadr3}). 

Combining (\ref{eq:proofGNell-1}) with (\ref{eq:1lineG}), (\ref{eq:2lineG}) and (\ref{eq:quadr4}), we conclude that  
\begin{equation*} \begin{split} 
\cG_{N,\ell} = \; & 4 \pi \frak{a}_0 (N-1) \left( \frac{N-\cN_+}{N} \right) + \left[ \widehat{V}(0) - 4\pi \frak{a}_0 \right] \cN_+ \left( \frac{N-\cN_+}{N} \right) \\
& +\cK +\sum_{p \in \Lambda^*_+}  \widehat{V} (p/N)  a^*_pa_p \frac{N-\cN_+}{N} 
  + \frac{1}{2}\sum_{p \in P_H^c}  (\widehat{V}(\cdot/N) * \widehat f_{N,\ell})_p \big( b_p b_{-p}+ b_{-p}^* b_p^*\big) \\
& + \frac{1}{\sqrt{N}} \sum_{p,q \in \L^*_+ : p + q \not = 0} \widehat{V} (p/N) \left[ b_{p+q}^* a_{-p}^* a_q  + \hc \right]  +\cV_N   + \cE_5
\end{split} \end{equation*} 
with
\[
\pm \cE_5 \leq  C \ell^{(\a-3)/2} \big( \cH_N +1 \big) + C \ell^{-\a}
\]
Observing that 
\[ \pm \sum_{p \in P_H} \widehat{V} (p/N) a_p^* a_p \leq C \ell^{2\a} (\cK+ 1) \, , \]
that $|\widehat{V} (p/N) - \widehat{V} (0)| \leq C |p| N^{-1}$, and that, by (\ref{eq:Vfa0}), 
\begin{equation}\label{eq:V-ao} \begin{split} |&(\widehat{V} (\cdot/N) * \widehat f_{N,\ell})_p  - 8 \pi \frak{a}_0 | \\ &\leq \int dx \, N^3 V(Nx) f_\ell (Nx) \big| e^{ip \cdot x} - 1 \big| + \left| \int  N^3 V(Nx) f_\ell (Nx) - 8\pi \frak{a}_0 \right|\leq C (|p| + 1) N^{-1} \end{split} \end{equation}
we arrive, with $\cG^\text{eff}_{N,\ell}$ defined as in (\ref{eq:GNeff}), at $\cG_{N,\ell} = \cG_\text{eff} + \cE_{N,\ell}$, with an error $\cE_{N,\ell}$ that satisfies 
\begin{equation}\label{eq:fin-cENell} \pm \cE_{N,\ell} \leq C \ell^{(\a-3)/2} \cH_N + C \ell^{-\a} \end{equation}
for all $N$ large enough. This completes the proof of (\ref{eq:GeffE}). The second bound in  (\ref{eq:errComm}) follows similarly, arguing as we did at the end of Prop. \ref{prop:K} (and noticing that the error term $\cE_3$ in (\ref{eq:1lineG}) which is responsible for the factor $\ell^{-\a}$ in (\ref{eq:fin-cENell}) actually commutes with $f (\cN_+/M)$).

Let us now prove (\ref{eq:theta-err}) and the first bound in (\ref{eq:errComm}). We have to control the off-diagonal quadratic term and the cubic term appearing in $\cG_{N,\ell}^\text{eff}$. We observe, first of all, that 
\begin{equation}\label{eq:theta10} \begin{split} \Big| 4\pi \frak{a}_0 \sum_{p \in P_H^c} \langle \xi , (b_p b_{-p} + b_{-p}^* b_p^*) \xi \rangle \Big| &\leq 4 \pi \frak{a}_0 \sum_{p \in P_H^c} \| (\cN_+ + 1)^{1/2} \xi \| \| b_p \xi \| \\ &\leq C \ell^{-\a/2} \| (\cN_+ + 1)^{1/2} \xi \| \| \cK^{1/2} \xi \| \end{split} \end{equation}
Using $[f_M, [f_M, b_p b_{-p}]] = (f (\cN_+/M) - f((\cN_++2)/M))^2 b_p b_{-p}$, and a similar identity for $[f_M, [f_M, b_p^* b_{-p}^*]]$, we also obtain 
\begin{equation}\label{eq:theta4}  \Big| 4 \pi \frak{a}_0 \sum_{p \in P_H^c} \langle \xi , \big[ f_M, \big[f_M, \big( b_p b_{-p} + b^*_p b^*_{-p} \big) \big] \big] \xi \rangle \Big| \leq C M^{-2} \ell^{-\a/2} \| f' \|_\infty^2  \| (\cN_+ + 1)^{1/2} \xi \| \| \cK^{1/2} \xi \| \end{equation}
It is possible to show an improved lower bound for the operator on the l.h.s. of (\ref{eq:theta10}), by noticing that, for an arbitrary $\delta > 0$, 
\[ \begin{split}
0 \leq\; &\sum_{p \in P_H^c} \left( \sqrt{\delta} |p| b_p^* + \frac{4\pi \frak{a}_0}{\sqrt{\delta} |p|} b_{-p} \right) \left( \sqrt{\delta} |p| b_p + \frac{4\pi \frak{a}_0}{\sqrt{\delta} |p|} b_{-p}^* \right) \\ 
= \; & \delta \sum_{p \in P_H^c} p^2 b_p^* b_p + \frac{(4\pi \frak{a}_0)^2}{\delta} \sum_{p \in P_H^c} \frac{1}{p^2} b_{-p} b_{-p}^* + 4\pi \frak{a}_0 \sum_{p \in P_H^c} (b_{-p} b_p + b_p^* b_{-p}^*) \end{split} \]
With (\ref{eq:comm-bp}), we commute 
\[ b_{-p} b_{-p}^* = b^*_{-p} b_{-p} + ( 1- \cN_+ / N) - N^{-1} a_{-p}^* a_{-p} \,. \]
Observing that 
\[ b_p^* b_p = a_p^* \frac{N-\cN_+}{N} a_p \leq a_p^* a_p \]
and that $\sum_{p \in P_H^c} |p|^{-2} \leq C \ell^{-\a}$, we conclude that there exists a constant $C > 0$, independent of $\ell \in (0;1/2)$ and of $N$, such that  
\begin{equation}\label{eq:off-diag} 4 \pi \frak{a}_0 \sum_{p \in P_H^c} (b_{-p} b_p + b_p^* b_{-p}^*)  \geq - \delta \cK - C \delta^{-1} \cN_+ - C \delta^{-1} \ell^{-\a} \end{equation}
for any $\delta > 0$. As for the cubic term on the r.h.s. of (\ref{eq:GNeff}), we have, switching 
to position space,  
\begin{equation}\label{eq:theta3} \begin{split} \Big| \frac{1}{\sqrt{N}} & \sum_{p,q \in \L_+^* : p+q \not = 0} \widehat{V} (p/N) \langle \xi , \big( b_{p+q}^* a_{-p}^* a_q + \hc \big) \xi \rangle \Big| \\ &\leq \int_{\L^2}  dx dy \, N^{5/2} V(N(x-y)) \| \check{a}_x \xi \| \| \check{a}_x \check{a}_y \xi \| \leq C \| (\cN_+ + 1)^{1/2} \xi \| \| \cV_N^{1/2} \xi \| \, . \end{split} \end{equation}
and analogously
\begin{equation}\label{eq:theta5} \begin{split}  &\Big| \frac{1}{\sqrt{N}} \sum_{p,q \in \L_+^* : p+q \not = 0} \widehat{V} (p/N) \langle \xi , \big[f_M, \big[ f_M, \big( b_{p+q}^* a_{-p}^* a_q + \hc \big) \big] \big] \xi \rangle \Big| \\ &\hspace{5cm}\leq C M^{-2} \| f' \|_\infty^2  \| (\cN_+ + 1)^{1/2} \xi \| \| \cV_N^{1/2} \xi \| \end{split} \end{equation}
Combining (\ref{eq:fin-cENell}) with (\ref{eq:theta10}) and (\ref{eq:theta3}), we obtain (\ref{eq:Gbd0}). From (\ref{eq:fin-cENell}),  (\ref{eq:off-diag}) and (\ref{eq:theta3}), we infer (\ref{eq:theta-err}). Combining instead the second bound in (\ref{eq:errComm}), with (\ref{eq:theta4}) and (\ref{eq:theta5}) we find the first bound in (\ref{eq:errComm}) (because all other contributions to $\cG_{N,\ell}^\text{eff}$ commute with $\cN_+$).

\section{Analysis of the excitation Hamiltonian $\cR_{N,\ell} $} \label{sec:RN}

The goal of this section is to prove Proposition \ref{prop:RNell}, which gives a lower bound on the excitation Hamiltonian $\cR_{N,\ell} = e^{-A} \cG_{N,\ell}^\text{eff} e^A$, with $\cG^\text{eff}_{N,\ell}$ as in (\ref{eq:GNeff}) and the cubic phase 
\begin{equation}\label{eq:Aell8} A = \frac1{\sqrt N} \sum_{r\in P_{H}, v \in P_{L}} \eta_r \big[b^*_{r+v}a^*_{-r}a_v - \text{h.c.}\big] \end{equation} 
introduced in (\ref{eq:Aell1}), with the high momentum set $P_H = \{ p \in \L_+^* : |p| \geq \ell^{-\a} \}$ and the low momentum set $P_L = \{ p \in \L_+^* : |p| \leq \ell^{-\b} \}$ for parameters $0 < \beta < \a$ and $\ell \in (0;1/2)$ (in the proof of Prop. \ref{prop:RNell}, we will assume $\a > 3$ and $\a/2 < \b < 2\a/3$). To study the properties of $\cR_{N,\ell}$, it is convenient to decompose 
\[ \cG_{N,\ell}^\text{eff} = \cD_{N} + \cK + \cQ_{N,\ell} + \cC_{N} + \cV_N \]
with $\cK$ and $\cV_N$ being the kinetic and the potential energy operators, as in (\ref{eq:KcVN}),  and
\begin{equation}\begin{split}\label{eq:wtGNell0to4} 
\cD_{N}  =&\;4\pi \frak{a}_0 (N-\cN_+) + \big[\widehat V(0)-4\pi \frak{a}_0\big]\cN_+ (1-\cN_+/N) , \\
\cQ_{N,\ell} =&\; \widehat V(0)\sum_{p\in P_H^c}   a^*_pa_p (1-N/\cN_+) + 4\pi \frak{a}_0\sum_{p\in P^c_H}  \big[ b^*_p b^*_{-p} + b_p b_{-p} \big] \\
\cC_{N} =&\;   \frac{1}{\sqrt N}\sum_{p,q\in\Lambda_+^*: p+q\neq 0} \widehat V(p/N)\big[ b^*_{p+q}a^*_{-p}a_q+ \emph{h.c.}\big] \, .		
\end{split}\end{equation} 
with $P_H^c = \Lambda^*_+ \backslash P_H$. To study the contributions of these operators to $\cR_{N,\ell}$ and to prove Proposition \ref{prop:RNell} we will need a-priori bounds controlling the growth of the expectation of the energy $\cH_N = \cK + \cV_N$ through cubic conjugation; these estimates are obtained 
in the next subsection. As we did in Section \ref{sec:GN}, also in this Section we will always assume that $V \in L^3 (\bR^3)$ is compactly supported, pointwise non-negative and spherically symmetric.  

\subsection{A priori bounds on the energy}\label{sec:aprioribnds}

Our first proposition controls the commutator of the cubic phase (\ref{eq:Aell8}) with the potential energy operator $\cV_N$. 
\begin{prop}\label{prop:commAcVN}
There exists a constant $C > 0$ such that  
\begin{equation}\label{eq:commAcVN} 
[\cV_N,A] = \frac{1}{N^{3/2}}\sum_{\substack{r\in\Lambda_+^*, v\in P_{L} \\ r\neq-v}}\big( \widehat V(\cdot/N)\ast \eta\big)(r)\big[b^*_{r+v}a^*_{-r} a_v +\emph{h.c.}\big] + \delta_{\cV_N}  \end{equation}
where 
  \begin{equation}\label{eq:commAcVNbnd}
        \begin{split}
        | \langle \xi,  \delta_{\cV_N}  \xi \rangle | \leq&\;  C \ell^{(\alpha- \beta)/2} \| \cV_N^{1/2}\xi \|\| \cK_L^{1/2}\xi\| + C\ell^{3(\alpha- \beta)/2}\| \cV_N^{1/2}\xi \| \| \cK^{1/2}\xi\|
        \end{split}\end{equation}
for all $\a > \b > 0$, $\ell \in (0;1/2)$ and $N$ large enough. Here $\cK_L = \sum_{p \in P_L} p^2 a_p^* a_p$ denotes the kinetic energy associated to momenta $p \in P_L  = \{ p \in \L_+^* : |p| \leq \ell^{-\b} \}$.        
\end{prop}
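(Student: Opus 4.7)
The plan is to compute $[\cV_N, A]$ directly from the canonical commutation relations. Since $\cV_N$ is self-adjoint, writing $A = A_+ - A_+^*$ with $A_+ := N^{-1/2}\sum_{r \in P_H,\, v \in P_L} \eta_r \, b^*_{r+v} a^*_{-r} a_v$ reduces the task to expanding $[\cV_N, A_+]$ and symmetrising by adding the Hermitian conjugate. Since $\cV_N$ conserves the number of excitations, $[\cV_N, \cN_+] = 0$, hence $[\cV_N, b^*_{r+v}] = [\cV_N, a^*_{r+v}]\,\sqrt{(N-\cN_+)/N}$. Applying the Leibniz rule, $[\cV_N, b^*_{r+v} a^*_{-r} a_v]$ splits into three pieces, each containing a single commutator of the quartic $\cV_N$ with one ladder operator. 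A direct computation from the canonical commutation relations yields, for instance,
\[ [\cV_N, a_v] = -\frac{1}{N}\sum_{q,s\in\Lambda^*} \widehat{V}(s/N)\, a^*_q\, a_{q+s}\, a_{v-s}, \]
and analogous cubic formulas for $[\cV_N, a^*_{-r}]$ and $[\cV_N, a^*_{r+v}]$, up to the momentum-support constraints inherited from $\cV_N$.

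Multiplying each such cubic expression by the remaining two ladder operators of $A_+$ and summing against $\eta_r\,\widehat{V}(s/N)$ produces five-body operators. Each such five-body product decomposes, by one further application of the canonical commutation relations (one annihilation past one creation), into a three-body (cubic) contribution plus a fully normal-ordered five-body remainder. In the cubic contribution, shifting variables $r \mapsto r'' := r+s$ collapses the sum $\sum_s \widehat{V}(s/N)\,\eta_{r''-s}\,\chi(r''-s \in P_H) = \big(\widehat{V}(\cdot/N) \ast \eta_H\big)(r'')$, producing a coefficient of $b^*_{r''+v}a^*_{-r''}a_v$. Writing $\eta_H = \eta - \eta\,\chi(\cdot \in P_H^c)$ then converts this into the claimed coefficient $(\widehat{V}(\cdot/N)\ast\eta)(r'')$, at the cost of an error proportional to $\big(\widehat{V}(\cdot/N)\ast[\eta\,\chi_{P_H^c}]\big)(r'')$ which is absorbed into $\delta_{\cV_N}$.

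To estimate $\delta_{\cV_N}$ I switch to position space, using $\cV_N = \tfrac12\int N^2 V(N(x-y))\,\check{a}^*_x\check{a}^*_y\check{a}_y\check{a}_x\,dxdy$, and apply Cauchy--Schwarz so as to split off a factor $\cV_N^{1/2}$. A prototypical quintic remainder, schematically of the form
\[ N^{-3/2}\sum_{r\in P_H,\,v\in P_L,\,q,\,s} \widehat{V}(s/N)\,\eta_r\, b^*_{r+v}a^*_{-r}a^*_q\, a_{q+s}a_{v-s}, \]
is controlled by pairing the two $a^*$'s coming from $\cV_N$ with $\cV_N^{1/2}$, while the remaining operators are bounded using $\|\eta_H\|\leq C\ell^{\alpha/2}$ from \eqref{eq:etaHL2} and the low-momentum restriction on $v$: the sum $\sum_{v\in P_L}|v|^{-2}$ is of order $\ell^{-\beta}$ and combines with the $\ell^{\alpha}$ from $\|\eta_H\|^2$ to produce the factor $\ell^{(\alpha-\beta)/2}\|\cV_N^{1/2}\xi\|\,\|\cK_L^{1/2}\xi\|$, since the annihilation $a_{v-s}$ with $v\in P_L$ can be paired, via $|v|$-weighting, with $\cK_L^{1/2}$. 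The improved $\ell^{3(\alpha-\beta)/2}\|\cV_N^{1/2}\xi\|\,\|\cK^{1/2}\xi\|$ contribution comes from subleading pieces carrying two explicit $\eta$-factors (either from the $(N-\cN_+)/N$ correction distinguishing $b^*_{r+v}$ from $a^*_{r+v}$, or from a second contraction within the five-body product), where an extra $\ell^{\alpha}$ compensates for the loss of the low-momentum localization on the paired annihilation, so that the full $\cK^{1/2}$ must appear instead of $\cK_L^{1/2}$.

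The main obstacle will be the combinatorial bookkeeping of the various five-body remainders. Each of the three Leibniz summands generates several distinct terms upon normal ordering, and the $(N-\cN_+)/N$ correction in $b^*_{r+v}$ adds further subleading pieces; also the constraint $r \neq -p, -q$ in the definition of $\cV_N$ produces low-order corrections that must be tracked. One must moreover ensure, for the leading bound, that the restriction $v \in P_L$ is systematically preserved through the Cauchy--Schwarz step so that $\cK_L^{1/2}$, rather than the larger $\cK^{1/2}$, appears in the leading error. The argument is technically parallel to the analyses in Prop.~\ref{prop:GN-3} and Prop.~\ref{prop:GN-4}, but the new feature is the three-way momentum localization $(P_H, P_L, \Lambda^*_+)$ of the cubic phase, which must be exploited carefully to produce the claimed $\ell^{(\alpha-\beta)/2}$-gain.
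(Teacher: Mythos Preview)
Your overall approach matches the paper's: compute the commutator directly, extract the leading cubic term by extending the $r$-sum from $P_H$ to all of $\Lambda^*$ (the difference being the paper's $\Theta_1$), and bound the quintic remainders in position space by Cauchy--Schwarz. Whether one computes $[\cV_N, A_+]$ via Leibniz on the three ladder operators or, as the paper does, by expanding $[a_{p+u}^* a_q^* a_p a_{q+u}, b^*_{r+v} a^*_{-r} a_v]$ in one shot, is cosmetic; the same four error pieces $\Theta_1,\dots,\Theta_4$ emerge.

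Your attribution of the two error scales is, however, inverted. The ``prototypical quintic remainder'' you write down,
\[
N^{-3/2}\sum \widehat{V}(s/N)\,\eta_r\, b^*_{r+v}a^*_{-r}a^*_q\, a_{q+s}a_{v-s},
\]
is (up to relabelling) the paper's $\Theta_4$, and it is precisely the term yielding the $\ell^{3(\alpha-\beta)/2}\|\cV_N^{1/2}\xi\|\,\|\cK^{1/2}\xi\|$ bound, \emph{not} the $\cK_L$-bound. The annihilation $a_{v-s}$ carries momentum $v-s$ with $s$ unrestricted, so the low-momentum localization is lost and $\cK_L^{1/2}$ cannot appear. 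The right Cauchy--Schwarz puts $|r|^{-2}\eta_r^2$ in one factor (so that $\sum_{r\in P_H}|r|^{-6}\leq C\ell^{3\alpha}$, times $|P_L|\leq C\ell^{-3\beta}$) and $|r|^2$ in the other, producing $\cK^{1/2}$. The $\ell^{(\alpha-\beta)/2}\|\cV_N^{1/2}\xi\|\,\|\cK_L^{1/2}\xi\|$ bound comes instead from the terms $\Theta_2,\Theta_3$ in which the original $a_v$ with $v\in P_L$ survives unshifted and can genuinely be paired with $\cK_L$. There is no second $\eta$-factor and no contribution from the $(N-\cN_+)/N$ correction in $b^*_{r+v}$; the paper works directly with $b^*_{r+v}$ throughout. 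Finally, note that the cubic correction $\Theta_1$ picks up a raw factor $\ell^{-\alpha-\beta/2}/N$, which only becomes $\ell^{(\alpha-\beta)/2}$ after invoking ``$N$ large enough'' (e.g.\ $N\geq\ell^{-3\alpha/2}$).
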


\begin{proof}
With 
\[\begin{split}
& [ a_{p+u}^* a_{q}^* a_{p}a_{q+u}, b^*_{r+v} a^*_{-r}a_{v}] \\
&\hspace{2cm}=[ a_{p+u}^* a_{q}^* a_{p}a_{q+u}, a^*_{r+v} ]\sqrt{1-(\cN_+/N)}a^*_{-r}a_{v}+b^*_{r+v} [ a_{p+u}^* a_{q}^* a_{p}a_{q+u}, a^*_{-r}a_{v}]\\
&\hspace{2cm} = b^*_{p+u} a_q^* a_{q+u} a^*_{-r} a_v \delta_{p, r+v} + b^*_{p+u} a_q^* a_p a^*_{-r} a_v  \delta_{q+u, r+v} \\ &\hspace{2.3cm} + b_{r+v}^* a_{p+u}^* a_{q}^* a_{p}a_{v} \delta_{-r, q+u}+ b_{r+v}^*  a_{p+u}^* a_{q}^* a_{q+u}a_{v}\delta_{-r, p}\\
	&\hspace{2.3cm} - b_{r+v}^*  a_{-r}^* a_{p+u}^* a_{p}a_{q+u}\delta_{q, v}- b_{r+v}^*  a_{-r}^* a_{q}^* a_{p}a_{q+u}\delta_{v, p+u} 
	\end{split} \]
and normal ordering the first two terms, we obtain 
\[ [\cV_N, A] = \frac{1}{N^{3/2}}\sum^*_{u\in\Lambda^*, r \in P_H, v \in P_L} 
        \widehat{V}((u-r)/N)\eta_r b^*_{u+v} a_{-u}^*a_v + \Theta_2 + \Theta_3 + \Theta_4 + \hc \]
with 
\begin{equation}\label{eq:Theta1to4}
\begin{split}
         \Theta_2 &:=\;\frac{1}{N^{3/2}}\sum^*_{\substack{u\in \Lambda^*,p\in \Lambda_+^*,\\ r\in P_{H} , v\in P_{L} }}\widehat{V} (u/N) \eta_r b_{p+u}^* a_{r+v-u}^*a_{-r}^*a_{p}a_{v}   \\
         \Theta_3 &:=\;\frac{1}{N^{3/2}}\sum^*_{\substack{u\in \Lambda^*,p\in \Lambda_+^*,\\ r\in P_{H}, v\in P_{L} }}\widehat{V} (u/N) \eta_r b_{r+v}^* a_{p+u}^*a_{-r-u}^*a_{p}a_{v}   \\
          \Theta_4 &:=\;-\frac{1}{N^{3/2}}\sum^*_{\substack{u\in \Lambda^*,p\in \Lambda_+^*,\\ r\in P_{H} , v\in P_{L} }}\widehat{V} (u/N) \eta_r  b_{r+v}^* a_{-r}^*a_{p+u}^*a_{p}a_{v+u}  \\
         \end{split}
        \end{equation}     
The notation $\sum^*$ indicates that we exclude choices of momenta for which the argument of a creation or annihilation operator vanishes. Writing 
\[ \begin{split} &\frac{1}{N^{3/2}} \sum^*_{\substack{u \in\Lambda^*\\ r \in P_H, v \in P_L}} 
        \widehat{V}((u-r)/N)\eta_r b^*_{u+v} a_{-u}^* a_v \\ &= \frac{1}{N^{3/2}}\sum^*_{\substack{u, r \in\Lambda^*, \\ v \in P_L}}  \widehat{V}((u-r)/N)\eta_r b^*_{u+v} a_{-u}^* a_v - \frac{1}{N^{3/2}}\sum^*_{\substack{u \in\Lambda^*, v \in P_L, \\ r \in P_H^c \cup \{ 0 \}}}  \widehat{V}((u-r)/N)\eta_r b^*_{u+v} a_{-u}^* a_v \end{split}  \]
 and comparing with (\ref{eq:commAcVN}), we conclude that $\delta_{\cV_N} = \Theta_1 + \Theta_2 + \Theta_3 + \Theta_4 + \hc$, with 
 \[ \Theta_1 = - \frac{1}{N^{3/2}}\sum^*_{\substack{u \in\Lambda^*, v \in P_L, \\ r \in P_H^c \cup \{ 0 \}}}  \widehat{V}((u-r)/N)\eta_r b^*_{u+v} a_{-u}^* a_v \]
 and with $\Theta_2, \Theta_3,\Theta_4$ as defined in (\ref{eq:Theta1to4}). 
 
 To conclude the proof of the lemma, we show next that each error term $\Theta_j$, with $j=1,2,3,4$, satisfies (\ref{eq:commAcVNbnd}). 
We start with $\Theta_1$. For any $\xi\in \cF_+^{\leq N}$, switching (partly) to position space and applying Cauchy-Schwarz, we find 
\begin{equation}\label{eq:Theta1f} \begin{split}
	|\langle \xi, \Theta_1\xi \rangle |\leq &\; \frac{1}{\sqrt{N}} \bigg[ \int_{\Lambda^2}dxdy\; N^{2}V(N(x-y)) \sum_{ r\in \{0\}\cup P^c_{H}, v\in P_{L} }  |\eta_r| |v|^{-2} \|\check{b}_x \check{a}_y \xi \|^2 \bigg]^{1/2}\\
	&\;\hspace{1cm} \times \bigg[ \int_{\Lambda^2}dxdy\; N^{2}V(N(x-y)) \sum_{ r\in \{0\}\cup P_{H}^c, v\in P_{L} }  |\eta_r| |v|^2 \|a_v \xi \|^2 \bigg]^{1/2}\\
	\leq&\;  \frac{C\ell^{-\alpha- \beta/2}}{N}\| \cV_N^{1/2}\xi \|\| \cK_L^{1/2}\xi\|
	\end{split}\end{equation}
Denoting by $\check\eta_H\in L^2(\Lambda)$ the function with Fourier coefficients $\eta_H (p) = \eta_p \chi( p \in P_H)$ and using (\ref{eq:etaHL2}), we can bound the term $\Theta_2$ on the r.h.s. of (\ref{eq:Theta1to4}) by 
\[\begin{split}
	|\langle \xi, \Theta_2\xi \rangle |=&\; \bigg| \frac{1}{ N^{1/2} } \int_{\Lambda^2}dxdy \;N^2V(N(x-y))\sum_{  v\in P_{L} } e^{ivy} \langle \xi, \check{b}_{x}^*\check{a}_y^*a^*(\check{\eta}_{H,y}) \check{a}_x a_v\xi \rangle \bigg|  \\
	\leq &\; \frac{ \|\check{\eta}_{H}\|  }{N^{1/2}} \bigg [ \int_{\Lambda^2}dxdy\; N^2 V(N(x-y)) \sum_{  v\in P_{L} }|v|^{-2}  \| \cN_+^{1/2}\check{b}_x \check{a}_y  \xi \|^2 \bigg]^{1/2}\\
	&\;\hspace{3cm} \times \bigg[ \int_{\Lambda^2}dxdy\; N^2 V(N(x-y)) \sum_{  v\in P_{L} } |v|^{2} \|\check{a}_x a_v \xi \|^2 \bigg]^{1/2}\\	
	\leq&\;  C\ell^{(\alpha- \beta)/2}\| \cV_N^{1/2}\xi \|\| \cK_L^{1/2}\xi\|
	\end{split}\]
The remaining contributions $\Theta_3$ and $\Theta_4$ can be controlled similarly. We find
	\[\begin{split}
	|\langle \xi, \Theta_3\xi \rangle |=&\; \bigg| \frac{1}{ \sqrt{N} } \int_{\Lambda^2}dxdy \;N^2V(N(x-y))\sum_{r \in P_H, v\in P_L }  e^{-iry} \eta_r\langle \xi, b_{r+v}^*\check{a}_x^*\check{a}^*_y \check{a}_x a_v \xi \rangle\bigg|  \\
	\leq &\; \frac{1}{\sqrt{N}} \bigg[ \int_{\Lambda^2}dxdy\; N^2 V(N(x-y))   \sum_{ r \in P_H,  v\in P_{L} }|v|^{-2} \| b_{r+v}  \check{a}_x \check{a}_y  \xi \|^2 \bigg]^{1/2}\\
	&\;\hspace{1cm} \times \bigg[ \int_{\Lambda^2}dxdy\; N^2 V(N(x-y)) \sum_{r \in P_H,  v\in P_{L} } \eta_r^2 |v|^{2} \|\check{a}_x a_v \xi \|^2 \bigg]^{1/2}\\	
	\leq&\;  \frac{C \ell^{-\beta/2} \| \eta_H \|}{N}   \|  \cN_+^{1/2} \cV_N^{1/2} \xi \| \|  \cN_+^{1/2} \cK_L^{1/2} \xi \| \leq C \ell^{(\alpha- \beta)/2}\| \cV_N^{1/2}\xi \|\| \cK_L^{1/2}\xi\|
	\end{split}\]
as well as
	\[ \begin{split}
	|\langle \xi, \Theta_4\xi \rangle |=&\; \bigg| \frac{1}{ \sqrt{N} } \int_{\Lambda^2}dxdy \;N^2V(N(x-y))\sum_{r\in P_H, v\in P_L} \eta_r e^{-ivy} \langle \xi, b_{r+v}^*a^*_{-r}\check{a}^*_x \check{a}_x \check{a}_y \xi \rangle\bigg|  \\
	\leq &\; \frac{1}{\sqrt{N}} \bigg [ \int_{\Lambda^2} dx dy \; N^2 V(N(x-y)) \sum_{r\in P_H, v\in P_{L}} |r|^{-2}\eta_r^2\| \check{a}_x \check{a}_y  \xi \|^2 \bigg]^{1/2}\\
	&\;\hspace{1cm} \times \bigg [ \int_{\Lambda^2}dxdy \; N^{2}V(N(x-y)) \sum_{r\in P_H,  v\in P_{L}} 
	|r|^{2} \| b_{r+v}a_{-r}\check{a}_x  \xi \|^2 \bigg]^{1/2}\\	
	\leq&\;  C\ell^{3(\alpha- \beta)/2}\| \cV_N^{1/2}\xi \|\| \cK^{1/2}\xi\|
	\end{split} \]
Choosing $N > \ell^{-3\a/2}$ (to control the r.h.s. of (\ref{eq:Theta1f})), we obtain (\ref{eq:commAcVNbnd}).
\end{proof}

With the help of Prop. \ref{prop:commAcVN}, we can now control the growth of the expectation of the energy $\cH_N$ w.r.t. cubic conjugation.  
\begin{lemma} \label{lm:cHNpropagation}
There exists a constant $C > 0$ such that 
\begin{equation}\label{eq:expAHNexpA}
e^{-sA} \cH_N e^{sA} \leq C \cH_N + C \ell^{-\alpha} (\cN_+ +1)
\end{equation}
for all $\a > \b > 0$ with $\a > 4/3$, $s \in [0;1]$, $\ell \in (0;1/2)$ and $N \in \bN$ large enough.
\end{lemma}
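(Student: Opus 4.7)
The plan is a standard Gronwall argument. For $\xi \in \cF_+^{\leq N}$ and $s\in[0;1]$, define
\[
\varphi(s) := \langle \xi, e^{-sA}\cH_N e^{sA}\xi\rangle
\]
so that $\partial_s\varphi(s) = \langle e^{sA}\xi, [\cH_N,A]\, e^{sA}\xi\rangle$. The target is to establish the operator estimate
\[
\pm [\cH_N,A] \leq C\,\cH_N + C\ell^{-\alpha}(\cN_++1),
\]
which combined with Proposition \ref{prop:AellNgrow} gives
\[
|\partial_s\varphi(s)| \leq C\varphi(s) + C\ell^{-\alpha}\langle\xi,(\cN_++1)\xi\rangle,
\]
so Gronwall's lemma yields (\ref{eq:expAHNexpA}).

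To prove the commutator bound I split $[\cH_N,A] = [\cK,A] + [\cV_N,A]$. The potential commutator is given explicitly by Proposition \ref{prop:commAcVN}: its remainder $\delta_{\cV_N}$ is controlled by $\pm\delta_{\cV_N} \leq C\,\ell^{(\alpha-\beta)/2}\cH_N$ using $\cK_L \leq \cK$ and $\alpha > \beta$, while its leading cubic term
\[
\frac{1}{N^{3/2}}\sum_{\substack{r\in\Lambda^*,v\in P_L \\ r\neq-v}}(\widehat V(\cdot/N)*\eta)(r)\bigl[b^*_{r+v}a^*_{-r}a_v + \text{h.c.}\bigr]
\]
will be combined with $[\cK,A]$. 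For the kinetic commutator, the identity $[\cK, b^*_{r+v}a^*_{-r}a_v] = ((r+v)^2 + r^2 - v^2)\, b^*_{r+v}a^*_{-r}a_v$ gives
\[
[\cK,A] = \frac{1}{\sqrt N}\sum_{r\in P_H,v\in P_L}\eta_r\bigl(2r^2 + 2r\cdot v\bigr)\bigl[b^*_{r+v}a^*_{-r}a_v + \text{h.c.}\bigr].
\]
The cross-term $2r\cdot v\,\eta_r$ is estimated by Cauchy-Schwarz, using $|v|\leq \ell^{-\beta}$ and $|r\eta_r|\leq C/|r|$ to extract a factor $\|\cK^{1/2}\xi\|$ and the gain from $|r|\geq\ell^{-\alpha}$ to absorb the rest into $C\cH_N$.

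For the leading $2r^2\eta_r$ piece, I invoke the scattering equation (\ref{eq:eta-scat}) to rewrite, for $r\in P_H$,
\[
2r^2\eta_r = -\widehat V(r/N) - \frac{1}{N}\bigl(\widehat V(\cdot/N)*\eta\bigr)(r) + 2N^3\lambda_\ell\widehat\chi_\ell(r) + 2N^2\lambda_\ell(\widehat\chi_\ell*\eta)(r).
\]
Substituting, the $-N^{-1}(\widehat V(\cdot/N)*\eta)(r)$ term in $[\cK,A]$ cancels the $r\in P_H$ portion of the leading cubic in $[\cV_N,A]$. After this cancellation the remainder consists of: (i) the cubic $-N^{-1/2}\sum_{r\in P_H,v\in P_L}\widehat V(r/N)[b^*_{r+v}a^*_{-r}a_v + \text{h.c.}]$, bounded by $C\cH_N$ via position-space Cauchy-Schwarz in direct analogy with the bound (\ref{eq:theta3}) used for $\cC_N$; (ii) the leftover cubic $N^{-3/2}\sum_{r\in P_H^c\cup\{0\}, v\in P_L, r\neq -v}(\widehat V(\cdot/N)*\eta)(r)[\ldots]$, handled using $|(\widehat V(\cdot/N)*\eta)(r)|\leq CN$ together with the finite cardinalities $|P_H^c|\leq C\ell^{-3\alpha}$, $|P_L|\leq C\ell^{-3\beta}$; and (iii) the two scattering corrections $2N^3\lambda_\ell\widehat\chi_\ell(r)$ and $2N^2\lambda_\ell(\widehat\chi_\ell*\eta)(r)$, controlled using $N^3\lambda_\ell\leq C\ell^{-3}$ (Lemma \ref{3.0.sceqlemma}) together with $\|\widehat\chi_\ell\|_2 \leq C\ell^{3/2}$ and the decay $|\widehat\chi_\ell(r)|\leq C\ell^3(1+\ell|r|)^{-2}$. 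Summing all contributions and using $\alpha > 4/3$, the total fits inside $C\cH_N + C\ell^{-\alpha}(\cN_++1)$.

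The main obstacle will be the last bookkeeping step: although the dominant cubic portions of $[\cK,A]$ and $[\cV_N,A]$ cancel cleanly via the scattering equation, verifying that each leftover scattering correction survives within the prescribed error budget $C\ell^{-\alpha}(\cN_++1)$ requires a careful exploitation of the decay of $\widehat\chi_\ell$ and of $\eta$, and it is this accounting that motivates the precise hypothesis $\alpha > 4/3$. Once the operator estimate on $[\cH_N,A]$ is in hand, the Gronwall step closes routinely together with Proposition \ref{prop:AellNgrow}.
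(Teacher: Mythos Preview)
Your overall architecture --- Gronwall on $\varphi(s)$, split $[\cH_N,A]=[\cK,A]+[\cV_N,A]$, use Proposition \ref{prop:commAcVN}, and rewrite $2r^2\eta_r$ via the scattering relation --- coincides with the paper's proof. The paper does not perform the cancellation you propose between the $-N^{-1}(\widehat V(\cdot/N)*\eta)(r)$ piece and the $P_H$ part of the leading cubic in $[\cV_N,A]$; instead it bounds both independently in position space (each by $C\|\cV_N^{1/2}\xi\|\|\cN_+^{1/2}\xi\|$). Your cancellation is harmless but buys nothing, and it creates the leftover (ii) which is where your argument breaks.

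The cardinality estimate you sketch for (ii) is too crude. Using only $|(\widehat V(\cdot/N)*\eta)(r)|\leq CN$ and $|P_H^c|\leq C\ell^{-3\alpha}$, $|P_L|\leq C\ell^{-3\beta}$, Cauchy--Schwarz yields at best $C\ell^{-3(\alpha+\beta)/2}\,\|(\cN_++1)^{1/2}\xi\|^2$, which does \emph{not} fit into $C\ell^{-\alpha}(\cN_++1)$ uniformly in $\ell$. The fix is to observe that $N^{-1}|(\widehat V(\cdot/N)*\eta)(r)|\leq C$ and then to weight the Cauchy--Schwarz by $|r|^{\pm 1}$: this produces $C\ell^{-\alpha/2}\|\cK^{1/2}\xi\|\|\cN_+^{1/2}\xi\|$ via $\sum_{r\in P_H^c}|r|^{-2}\leq C\ell^{-\alpha}$, exactly as the paper does for the analogous term $\text{T}_{12}$. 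A related omission affects your term (i): the position-space bound \eqref{eq:theta3} requires the $r$-sum to range over all of $\Lambda^*_+$, so you must first extend the sum and then subtract the $r\in P_H^c$ complement --- which is again a $\text{T}_{12}$-type term handled with the same kinetic weights. With these two corrections your argument goes through and becomes essentially identical to the paper's.
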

\begin{proof}
We apply Gronwall's lemma. For a fixed $\xi \in \cF_+^{\leq N}$ and $s\in [0; 1]$, we define
		\begin{equation*}
		f_\xi (s) := \langle \xi, e^{-sA} \cH_N e^{sA} \xi\rangle  \end{equation*}
Then 
\begin{equation}\label{eq:f'1} f'_\xi  (s) = \langle \xi, e^{-sA} [\cK, A] e^{sA} \xi\rangle + \langle \xi, e^{-sA} [\cV_N, A]  e^{sA} \xi\rangle  \end{equation}
Let us first consider the second term. From Prop. \ref{prop:commAcVN}, we find 
\[ [ \cV_N, A]  = \frac{1}{N^{3/2}} \sum_{\substack{r\in\Lambda_+^*, v\in P_{L}, r\neq-v}}\big( \widehat V(\cdot/N)\ast \eta\big)(r) \left[ b^*_{r+v} a^*_{-r} a_v + \hc \right] +  \delta_{\cV_N} \]
where the operator $\delta_{\cV_N}$ satisfies (\ref{eq:commAcVNbnd}). Switching to position space and applying Cauchy-Schwarz, we find 	
\begin{equation}\label{eq:estVN1}
		\begin{split}
		&\bigg |  \frac{1}{N^{3/2}}\sum_{\substack{r\in\Lambda_+^*, v\in P_{L}, r\neq-v}}\big( \widehat V(\cdot/N)\ast \eta\big)(r)\langle \xi,e^{-sA}b^*_{r+v}a^*_{-r} a_ve^{sA}\xi \rangle\bigg| \\
		&\hspace{0cm}= \bigg | \int_{\Lambda^2} dx dy\; N^{3/2} V(N(x-y))\check\eta(x-y) \sum_{ v\in P_{L} }e^{ivx}\langle \xi, e^{-sA} \check{a}^*_{x} \check{a}^*_{y} a_v e^{sA}\xi \rangle\bigg|\\
		&\hspace{0cm}\leq \frac{C\|\check\eta\|_\infty}{N}\| \cV_N^{1/2} e^{sA}\xi\|\bigg[ \int_{\Lambda^2}dxdy\; N^{3}V(N(x-y)) \Big\| \sum_{v \in P_{L} }e^{ivx}  a_v e^{s A}\xi \Big\|^2\bigg]^{1/2}\\	
	&\leq C \| \cV_N^{1/2} e^{sA} \xi\| \| \cN_+^{1/2} e^{sA} \xi\|
		\end{split}
		\end{equation}	
because, by (\ref{eq:etax}), $\| \check{\eta} \|_\infty \leq C N$ and 
\[ \int_\L dx \, \Big\| \sum_{v \in P_{L} }e^{ivx}  a_v e^{s A}\xi \Big\|^2 = \sum_{v \in P_L} \langle e^{s A}\xi , a_v^* a_v e^{sA} \xi \rangle \leq \langle e^{sA} \xi, \cN_+ e^{sA} \xi \rangle  \]
Together with (\ref{eq:commAcVNbnd}), using $\alpha > \beta$, 
we conclude that 
\[ \Big| \langle \xi, e^{-sA} [\cV_N, A]  e^{sA} \xi \rangle \Big| \leq C \langle \xi, e^{-sA} \cH_N e^{sA} \xi \rangle \]
if $N$ is large enough. Let us consider the first term on the r.h.s. of (\ref{eq:f'1}). We compute 
		 \begin{equation}\label{eq:commAcK}
        		\begin{split}
       		[\cK,A] &= \frac{1}{\sqrt N}\sum_{r\in P_{H} , v\in P_{L} } 2r^2\eta_r \big[b^*_{r+v}a^*_{-r} a_v +\text{h.c.}\big]\\
        		&\hspace{3cm} +  \frac{2}{\sqrt N}\sum_{r\in P_{H}, v\in P_{L} }  r\cdot v \;\eta_r\big[b^*_{r+v}a^*_{-r} a_v +\text{h.c.}\big] \\ &=: \text{T}_1 + \text{T}_2 
        		\end{split}
        		\end{equation} 
We use (\ref{eq:eta-scat}) to rewrite the first term on the r.h.s. of (\ref{eq:commAcK}) as
		\begin{equation}\label{eq:commAcKterm1}\begin{split}
		\text{T}_1 = \; &-\frac{1}{\sqrt N}\sum_{\substack{r\in\Lambda_+^*, v\in P_{L},\\ r\neq-v} } (\widehat V(\cdot/N)\ast \widehat f_{N,\ell})(r) \big[b^*_{r+v}a^*_{-r} a_v +\text{h.c.}\big]\\
		&+\frac{1}{\sqrt N}\sum_{\substack{r\in P^c_{H}, v\in P_{L},\\ r\neq-v} } (\widehat V(\cdot/N)\ast \widehat f_{N,\ell})(r) \big[b^*_{r+v}a^*_{-r} a_v +\text{h.c.}\big]\\
		&+ \frac{1}{\sqrt N}\sum_{r\in P_{H}, v\in P_{L} } N^3\lambda_\ell (\widehat{\chi}_\ell * \widehat{f}_{N,\ell})(r)  \big[b^*_{r+v}a^*_{-r} a_v +\text{h.c.}\big]\\
		=: &\; \text{T}_{11} + \text{T}_{12} + \text{T}_{13} 
		\end{split}\end{equation}	
Since $\|f_{\ell}\|_\infty\leq 1$, the contribution of $\text{T}_{11}$ can be estimated as in (\ref{eq:estVN1}); we obtain
\begin{equation}\label{eq:T11} \big| \langle \xi, e^{-sA} \text{T}_{11} \, e^{sA} \xi \rangle \big| \leq C \|\cV_N^{1/2} e^{sA} \xi \| \| \cN_+^{1/2} e^{sA} \xi
\| \end{equation}
The second term in (\ref{eq:commAcKterm1}) can be controlled by
		\begin{equation*} 
		\begin{split}
		\big| \langle \xi, e^{-sA} \text{T}_{12} \, e^{sA} \xi \rangle \big| \leq &\; \frac{C}{\sqrt N}\bigg [\sum_{r\in P_{H}^c, v\in P_{L}, r\neq-v}  |r|^2  \| b_{r+v} a_{-r} e^{sA}\xi \|^2 \bigg]^{1/2} \\ &\hspace{3cm} \times \bigg[ \sum_{r\in P_{H}^c, v\in P_{L}, r\neq-v } |r|^{-2}\| a_{v} e^{sA}\xi \|^2 \bigg]^{1/2}\\	
		\leq \; &C \ell^{-\a/2} \| \cK^{1/2} e^{sA} \xi \| \| \cN_+^{1/2}  e^{sA}\xi \| 
				\end{split}\end{equation*}				
Finally, since $(\widehat{\chi}_\ell * \widehat{f}_{N,\ell}) (r)  = \widehat{\chi}_\ell (r) + N^{-1} \eta_r$, the explicit expression 
\[ \widehat{\chi_\ell}(r)= \frac{4\pi}{|r|^2}\bigg(\frac{\sin(\ell|r|)}{|r|} - \ell \cos(\ell|r|)\bigg) \]
and the bound (\ref{eq:modetap}) imply that $|(\widehat{\chi}_\ell * \widehat{f}_{N,\ell}) (r)| \leq C \ell |r|^{-2}$, for $N$ large enough. With Lemma~\ref{3.0.sceqlemma}, the third term on the r.h.s. of (\ref{eq:commAcKterm1}) can thus be estimated for $\a > 4/3$ by 
		\begin{equation}\label{eq:bndcKterm2}\begin{split}
		&\big |  \langle \xi, e^{-sA} \text{T}_{13} e^{sA} \xi \rangle\big| \\
		&\hspace{.5cm}\leq \frac{C\ell^{-2}}{\sqrt N}\bigg [\sum_{r\in P_{H} }   |r|^2|\|\cN_+^{1/2} a_{-r} e^{sA}\xi \|^2 \bigg]^{1/2}\bigg [ \sum_{r\in P_{H}, v\in P_{L} } |r|^{-6}\| a_{v} e^{sA}\xi \|^2 \bigg]^{1/2}\\	
		&\hspace{.5cm} \leq C\ell^{3\alpha/2-2} \| \cK^{1/2} e^{sA}\xi\|  \| \cN_+^{1/2} e^{sA} \xi \| \leq C \| \cK^{1/2} e^{sA}\xi\|  \| \cN_+^{1/2} e^{sA} \xi \|
		\end{split}\end{equation}
So far, we proved that
\begin{equation}\label{eq:T1-bd} |\langle \xi , \text{T}_1 \xi \rangle | \leq  C \ell^{-\a/2} \| \cH_N^{1/2} e^{sA}\xi\|  \| \cN_+^{1/2} e^{sA} \xi \| \end{equation}
for all $\xi \in \cF_+^{\leq N}$. Let us now consider the second term on the r.h.s. of (\ref{eq:commAcK}). We find
\begin{equation}\label{eq:bndcKterm3}
		\begin{split}
		&\big| \langle \xi, e^{-sA} \text{T}_2 e^{sA} \xi \rangle\big| \\
		&\hspace{.5cm} \leq \frac{C}{\sqrt N}\bigg [\sum_{r\in P_{H} }   |r|^2| \| \cN_+^{1/2} a_{-r} e^{sA}\xi \|^2 \bigg]^{1/2}\bigg [ \sum_{r\in P_{H}, v\in P_{L} } |v|^{2}\eta_r^2\| a_{v} e^{sA}\xi \|^2 \bigg]^{1/2}\\	
		&\hspace{.5cm} \leq C\ell^{\alpha/2} \| \cK^{1/2} e^{sA}\xi\| \| \cK_L^{1/2} e^{sA}\xi\|
		\end{split}
		\end{equation}
Together with (\ref{eq:T1-bd}), we conclude that
\begin{equation*} 
|\langle \xi, e^{-sA} [\cK, A]  e^{sA} \xi\rangle| \leq C \langle \xi, e^{-sA} \cH_N e^{sA}\xi \rangle + C \ell^{-\alpha} \langle \xi, e^{-sA} \cN_+ e^{-sA} \xi \rangle\end{equation*}
With Prop. \ref{prop:AellNgrow}, we obtain the differential inequality 
\[ | f'_\xi (s) | \leq C f_\xi (s) + C\ell^{-\a} \langle \xi, (\cN_+ + 1) \xi \rangle \]
By Gronwall's Lemma, we find (\ref{eq:expAHNexpA}). 
\end{proof}

The bound (\ref{eq:expAHNexpA}) is not yet ideal, because of the large constant proportional to $\ell^{-\a}$ multiplying the number of particles operator $\cN_+$. To improve it, it is useful to consider first the growth of the low-momentum part of the kinetic energy operator. For $\theta > 0$, we set 
\[ \cK_\theta = \sum_{p \in \L_+^* : |p| \leq \theta} p^2 a_p^* a_p \]
Comparing with the definition given in Prop. \ref{prop:commAcVN}, we have $\cK_L \equiv \cK_{\theta = \ell^{-\beta}}$. 
\begin{lemma} \label{lm:cKlowpropagation}
There exists a constant $C > 0$ such that  
\begin{equation}\label{eq:expAKlowexpA}
e^{-sA} \cK_{\theta} e^{sA} \leq C \cK_{\theta} + C\ell^{2(\alpha-\beta)}(\cH_N+1)
\end{equation}
for all $\a > \beta > 0$ with $\alpha > 4/3$, $\ell \in (0;1/2)$, $0< \theta < \ell^{-\alpha} -\ell^{-\beta}$, $s \in [0;1]$ and $N \in \bN$ large enough. 
\end{lemma}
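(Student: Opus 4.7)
The strategy is to mimic the proof of Lemma \ref{lm:cHNpropagation}: apply Gronwall's lemma to $f_\xi(s):=\langle\xi, e^{-sA}\cK_\theta e^{sA}\xi\rangle$, so that it suffices to bound the commutator $[\cK_\theta,A]$. The cubic phase $A$ couples one low momentum $v\in P_L$ to two high ones, $-r$ and $r+v$ (with $r\in P_H$). The assumption $\theta<\ell^{-\alpha}-\ell^{-\beta}$ guarantees $|r|\geq\ell^{-\alpha}>\theta$ and $|r+v|\geq\ell^{-\alpha}-\ell^{-\beta}>\theta$, so both $[\cK_\theta,a^*_{-r}]$ and $[\cK_\theta,b^*_{r+v}]$ vanish, and only $[\cK_\theta,a_v]=-v^2\chi(|v|\leq\theta)a_v$ contributes. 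Hence
\[
[\cK_\theta, A] \;=\; -\frac{1}{\sqrt N}\sum_{\substack{r\in P_H,\,v\in P_L\\ |v|\leq\theta}} v^2\,\eta_r\,\bigl[b^*_{r+v}a^*_{-r}a_v + \text{h.c.}\bigr].
\]

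To estimate this in expectation, I will Cauchy--Schwarz with the split $v^2|\eta_r|\,\|a_v\xi\|\,\|a_{-r}b_{r+v}\xi\| = \bigl(|v||\eta_r|\|a_v\xi\|\bigr)\bigl(|v|\|a_{-r}b_{r+v}\xi\|\bigr)$. The first factor contributes $\|\eta_H\|^2\|\cK_\theta^{1/2}\xi\|^2 \leq C\ell^{\alpha}\|\cK_\theta^{1/2}\xi\|^2$ by \eqref{eq:etaHL2}. For the second, I normal-order $a_{-r}$ past $b^*_{r+v}b_{r+v}$ (the resonance $v=-2r$ is excluded since $r\in P_H$, $v\in P_L$, $\alpha>\beta$), use $v^2\leq\ell^{-2\beta}$ from $v\in P_L$, and $\sum_v b^*_{r+v}b_{r+v}\leq\cN_+\leq N$. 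Crucially, the high-momentum restriction $|r|\geq\ell^{-\alpha}$ yields $\cN_+^H:=\sum_{r\in P_H}a^*_{-r}a_{-r}\leq \ell^{2\alpha}\cK$, so that
\[
\sum_{r,v}v^2\|a_{-r}b_{r+v}\xi\|^2 \;\leq\; \ell^{-2\beta}\,\langle\xi,\cN_+^H\cN_+\,\xi\rangle \;\leq\; N\ell^{2\alpha-2\beta}\|\cK^{1/2}\xi\|^2.
\]
Combining these two bounds yields
\[
\bigl|\langle\xi,[\cK_\theta,A]\xi\rangle\bigr| \;\leq\; C\ell^{3\alpha/2-\beta}\,\|\cK_\theta^{1/2}\xi\|\,\|\cK^{1/2}\xi\|.
\]

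By AM--GM this gives $|f'_\xi(s)|\leq C f_\xi(s) + C\ell^{3\alpha-2\beta}\|\cK^{1/2}e^{sA}\xi\|^2$. Using $\ell^{3\alpha-2\beta}\leq\ell^{2(\alpha-\beta)}$ (since $\alpha>0$) and invoking Lemma \ref{lm:cHNpropagation} (whose hypotheses $\alpha>\beta$ and $\alpha>4/3$ are inherited) to control $\|\cK^{1/2}e^{sA}\xi\|^2\leq C\langle\xi,(\cH_N+\ell^{-\alpha}(\cN_++1))\xi\rangle$, together with $\cN_+\leq C\cK\leq C\cH_N$, leads to $|f'_\xi(s)|\leq C f_\xi(s) + C\ell^{2(\alpha-\beta)}\langle\xi,(\cH_N+1)\xi\rangle$, and Gronwall's lemma concludes. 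The main obstacle is precisely extracting the exponent $\ell^{2(\alpha-\beta)}$: this requires combining three independent pieces of smallness---$\|\eta_H\|^2\lesssim\ell^{\alpha}$, the high-momentum bound $\cN_+^H\lesssim\ell^{2\alpha}\cK$, and the low-momentum cutoff $|v|\leq\ell^{-\beta}$---and dropping any one of them produces an estimate too crude to close Gronwall with the required small coefficient on $\cH_N$.
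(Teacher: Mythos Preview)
Your proof is correct and follows essentially the same approach as the paper: compute $[\cK_\theta,A]$ (noting that only the $a_v$ factor contributes because $|r|,|r+v|>\theta$), bound it by $C\ell^{3\alpha/2-\beta}\|\cK_\theta^{1/2}\xi\|\|\cK^{1/2}\xi\|$ via Cauchy--Schwarz, and close by Gronwall using Lemma~\ref{lm:cHNpropagation}. The only cosmetic difference is in how you distribute the weights in the Cauchy--Schwarz split: the paper inserts $|v|/|r+v|\leq C\ell^{\alpha-\beta}$ and controls $\sum_{r,v}|r+v|^2\|b_{r+v}a_{-r}\xi\|^2\leq N\|\cK^{1/2}\xi\|^2$, whereas you extract $|v|^2\leq\ell^{-2\beta}$ and use $\cN_+^H\leq\ell^{2\alpha}\cK$; both routes yield the identical exponent $\ell^{3\alpha/2-\beta}$.
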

\begin{proof} For a fixed $\xi\in \cF_+^{\leq N}$, we consider the function $g_\xi: [0; 1]\to \bR $, defined by $g_\xi(s) := \langle \xi, e^{-sA} \cK_{\theta} e^{sA} \xi \rangle$. For $r \in P_H$ and $v \in P_L$, we observe that $|r+v| \geq |r| - |v| \geq \ell^{-\a} - \ell^{-\beta} > \theta$. Hence, we obtain 
		\[\begin{split}
		[\cK_{\theta}, A] =&\;  \frac1{\sqrt N} \sum_{r\in P_{H},v\in P_{L}} \eta_r b^*_{r+v} a^*_{-r}[\cK_{\theta}, a_v ] + \text{h.c.}\\
		=&\;  -\frac1{\sqrt N} \sum_{r\in P_{H},v \in P_{L} : |v| \leq \theta} |v|^2 \eta_r \; b^*_{r+v} a^*_{-r} a_v +\text{h.c.}
		\end{split}\]	
We estimate	
\[\begin{split}
		&\bigg| \frac{1}{\sqrt N} \sum_{r\in P_{H},v \in P_{L}: |v| \leq \theta} |v|^2\eta_r \langle \xi,e^{-sA} b^*_{r+v}a^*_{-r} a_v e^{sA} \xi\rangle  \bigg|\\
		&\leq \frac1{\sqrt N} \sum_{r\in P_{H},v \in P_{L}: |v| \leq \theta} \frac{|v|}{|r+v|} |r+v| \| b_{r+v}a_{-r}e^{sA} \xi\| 
		\, | \eta_r | |v|  \| a_{v}e^{sA} \xi\| \\
		&\leq \frac{C \ell^{\alpha-\beta}}{\sqrt N} \bigg[ \sum_{r\in P_{H},v \in P_{L}: |v| \leq \theta}  |r+v|^2  \| b_{r+v}a_{-r}e^{sA}\xi\|^2 \bigg]^{1/2} \\ &\hspace{4cm} \times  \bigg[ \sum_{r\in P_{H}, v \in P_{L}: |v| \leq \theta} |\eta_r|^2  |v|^2 \| a_{v}e^{sA} \xi\|^2 \bigg]^{1/2}\\
		&\leq C \ell^{3\alpha/2-\beta}\| \cK^{1/2} e^{sA}\xi\|\| \cK_{\theta}^{1/2} e^{sA}\xi\|
		\end{split}\]
Hence, using $\cK \leq \cH_N$ and Lemma \ref{lm:cHNpropagation}, 
		\[ \begin{split}
		|(\partial_s g_\xi)(s)| & \leq C \ell^{3\alpha -2 \beta} \langle \xi, e^{-sA} \cH_N e^{sA} \xi \rangle + C g_\xi (s)  \leq C \ell^{2(\alpha - \beta)} \langle \xi , (\cH_N +1) \xi \rangle + C g_\xi (s) \end{split} \]
Gronwall's Lemma implies (\ref{eq:expAKlowexpA}). 
		\end{proof}

With Lemma \ref{lm:cKlowpropagation} we can now improve the estimate of Lemma \ref{lm:cHNpropagation} for the growth of the expectation of the potential energy $\cV_N$. 
\begin{cor}\label{cor:cVNuniformbnd} 
There exists a constant $C > 0$ such that  
\begin{equation}\label{eq:cVNuniformbnd}
e^{-sA} \cV_N e^{sA} \leq C(\cH_N+1)
\end{equation}
for all $\alpha > 4/3$ and $0< \beta < 2\a/3$, $\ell \in (0;1/2)$ small enough, $s \in [0;1]$ and $N \in \bN$ large enough. 
\end{cor}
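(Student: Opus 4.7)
The plan is to proceed by Grönwall's lemma, exactly as in Lemmas \ref{lm:cHNpropagation} and \ref{lm:cKlowpropagation}. For fixed $\xi \in \cF_+^{\leq N}$, set $h_\xi(s) := \langle \xi, e^{-sA} \cV_N e^{sA} \xi \rangle$. Since $A$ is anti-self-adjoint, $h'_\xi(s) = \langle \xi, e^{-sA}[\cV_N, A] e^{sA}\xi\rangle$, and Proposition \ref{prop:commAcVN} decomposes this commutator into the main term
\[
\frac{1}{N^{3/2}}\sum_{\substack{r\in\Lambda_+^*,\, v\in P_L\\ r\neq -v}} \bigl(\widehat V(\cdot/N)\ast \eta\bigr)(r)\bigl[b^*_{r+v}a^*_{-r}a_v + \text{h.c.}\bigr]
\]
plus a remainder $\delta_{\cV_N}$ controlled by (\ref{eq:commAcVNbnd}).

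The main term, after applying Cauchy--Schwarz in position space to $e^{sA}\xi$ (using $\|\check{\eta}\|_\infty \leq CN$ exactly as in estimate (\ref{eq:estVN1})), is bounded by $C\|\cV_N^{1/2} e^{sA}\xi\|\, \|\cN_+^{1/2} e^{sA}\xi\|$. Evaluating (\ref{eq:commAcVNbnd}) at the vector $e^{sA}\xi$ controls the contribution of $\delta_{\cV_N}$ by
\[
C\ell^{(\alpha-\beta)/2}\|\cV_N^{1/2} e^{sA}\xi\|\,\|\cK_L^{1/2} e^{sA}\xi\| + C\ell^{3(\alpha-\beta)/2}\|\cV_N^{1/2} e^{sA}\xi\|\,\|\cK^{1/2} e^{sA}\xi\|.
\]
A standard $ab\leq \tfrac12 a^2 + \tfrac12 b^2$ splitting, absorbing each $\|\cV_N^{1/2} e^{sA}\xi\|^2$ into $h_\xi(s)$, then yields
\[
|h'_\xi(s)| \leq C h_\xi(s) + C\bigl\langle\xi,e^{-sA}\bigl[\cN_+ + \ell^{\alpha-\beta}\cK_L + \ell^{3(\alpha-\beta)}\cK\bigr]e^{sA}\xi\bigr\rangle.
\]

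Now I would invoke the three propagation estimates already available: Proposition \ref{prop:AellNgrow} gives $e^{-sA}\cN_+ e^{sA}\leq C(\cN_++1)\leq C(\cH_N+1)$; Lemma \ref{lm:cKlowpropagation} with $\theta = \ell^{-\beta}$ (admissible since $\alpha>\beta$ and $\ell$ is small) gives $e^{-sA}\cK_L e^{sA} \leq C\cK_L + C\ell^{2(\alpha-\beta)}(\cH_N+1)\leq C(\cH_N+1)$; and Lemma \ref{lm:cHNpropagation} gives $e^{-sA}\cK e^{sA}\leq C\ell^{-\alpha}(\cH_N+1)$. Hence the coefficients of $(\cH_N+1)$ that appear are $1$, $C\ell^{\alpha-\beta}$, and $C\ell^{3(\alpha-\beta)-\alpha} = C\ell^{2\alpha-3\beta}$, all of which are bounded uniformly in $\ell\in(0;1/2)$ under the assumptions $\alpha>\beta$ and $3\beta<2\alpha$. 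This produces the clean differential inequality $|h'_\xi(s)|\leq C h_\xi(s) + C\langle\xi,(\cH_N+1)\xi\rangle$, and Grönwall immediately yields $h_\xi(s)\leq C\langle \xi,\cV_N\xi\rangle + C\langle\xi,(\cH_N+1)\xi\rangle \leq C\langle\xi,(\cH_N+1)\xi\rangle$, which is the claim.

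The main obstacle is the last error term, in which the bad factor $\ell^{-\alpha}$ coming from Lemma \ref{lm:cHNpropagation} must be absorbed by the $\ell^{3(\alpha-\beta)/2}$ supplied by Proposition \ref{prop:commAcVN}; this is precisely where the condition $\beta<2\alpha/3$ enters, since it guarantees $2\alpha-3\beta>0$. The other terms are comparatively mild: the low-momentum kinetic contribution uses $\cK_L\leq \cK$ and Lemma \ref{lm:cKlowpropagation} without any large prefactor, and the $\cN_+$ term is handled by the uniform-in-$\ell$ bound of Proposition \ref{prop:AellNgrow}.
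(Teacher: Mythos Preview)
Your proof is correct and follows essentially the same approach as the paper: both differentiate $h_\xi(s)=\langle\xi,e^{-sA}\cV_N e^{sA}\xi\rangle$, invoke Proposition~\ref{prop:commAcVN} and the position-space estimate~(\ref{eq:estVN1}) for the main term, and then combine Proposition~\ref{prop:AellNgrow}, Lemma~\ref{lm:cKlowpropagation} (with $\theta=\ell^{-\beta}$) and Lemma~\ref{lm:cHNpropagation} to close a Gronwall inequality, the condition $\beta<2\alpha/3$ being used exactly to absorb the factor $\ell^{-\alpha}$ into $\ell^{3(\alpha-\beta)}$.
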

\begin{proof} For $\xi \in \cF_+^{\leq N}$, consider the function $h_\xi:[0;1] \to\bR$ defined through $h_\xi (s) := \langle \xi, e^{-sA} \cV_N e^{sA} \xi \rangle$. By Prop. \ref{prop:commAcVN}, we have 
		\[\begin{split}
		h'_\xi (s) =&\; \frac{1}{N^{3/2}}\!\!\!\sum_{\substack{r\in\Lambda_+^*, v\in P_{L}, r\neq-v}} (\widehat V(\cdot/N)\ast \eta\big)(r)\langle \xi, e^{-sA}\big([b^*_{r+v}a^*_{-r} a_v +\text{h.c.}] \big) e^{sA}\xi \rangle\\
		&\;+\langle \xi, e^{-sA}\delta_{\cV_N} e^{sA}\xi \rangle
		\end{split}\]
where 
\[\begin{split}
        		\big| \langle \xi, e^{-sA} \delta_{\cV_N} e^{sA} \xi \rangle \big| \leq&\;  C \| \cV_N^{1/2} e^{sA}\xi\|\|\cK_L^{1/2} e^{sA} \xi \|  + C\ell^{3(\alpha- \beta)/2} \| \cV_N^{1/2} e^{sA}\xi\| \|\cK^{1/2} e^{sA}\xi\| 
       		\end{split}\]
The estimate (\ref{eq:estVN1}), in the proof of Lemma \ref{lm:cHNpropagation}, shows moreover that
		\[\begin{split}
		&\bigg |  \frac{1}{N^{3/2}}\sum_{\substack{r\in\Lambda_+^*, v\in P_{L}, r\neq-v}} \big( \widehat V(\cdot/N)\ast \eta\big)(r)\langle \xi,e^{-sA} b^*_{r+v}a^*_{-r} a_ve^{sA} \xi \rangle\bigg| \\
		&\hspace{7.5cm}\leq C \| \cV_N^{1/2} e^{sA} \xi\|\| \cN_+^{1/2} e^{sA} \xi\|
		\end{split}\]
With Proposition \ref{prop:AellNgrow}, Lemma \ref{lm:cHNpropagation} and Lemma \ref{lm:cKlowpropagation} (with $\theta=\ell^{-\beta}$), we deduce that 
		\[\begin{split}
		&| h'_\xi (s)| \leq C \|\cV_N^{1/2} e^{sA} \xi\|^2 +  C(1+ \ell^{2\alpha-3\beta} )\langle \xi, (\cH_N+1) \xi\rangle\leq Ch_\xi(s) + C \langle \xi, (\cH_N+1) \xi\rangle
		\end{split}\]
because $\beta < 2\a/3$. Notice that, for $\ell \in (0;1/2)$ small enough, we have $2\ell^{-\beta}< \ell^{-\alpha}$; thus, we may choose indeed $\theta=\ell^{-\beta}$ in Lemma \ref{lm:cHNpropagation}. Applying Gronwall's Lemma to the last bound concludes (\ref{eq:cVNuniformbnd}).
\end{proof}

Finally, we consider the growth of the kinetic energy operator; in this case, we do not get a bound uniform in 
$\ell$; still, we can improve the result of Lemma \ref{lm:cHNpropagation} and the estimate we obtain is sufficient for our purposes. 
\begin{cor}\label{cor:cKimproved} 
There exists a constant $C > 0$ such that  
\begin{equation}\label{eq:cKimprovedbnd}
e^{-sA} \cK e^{sA} \leq C\ell^{-(\alpha+\beta)/2} (\cH_N+1) 
\end{equation}
for all $\alpha > 4/3$ and $0< \beta < 2\a/3$, $s \in [0;1]$, $\ell \in (0;1/2)$ small enough and $N \in \bN$ large enough.
\end{cor}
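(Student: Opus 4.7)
The plan is to apply Gronwall's lemma to the function $h_\xi(s) := \langle \xi, e^{-sA}\cK e^{sA}\xi\rangle$, $s \in [0;1]$, for an arbitrary $\xi \in \cF_+^{\leq N}$, reusing the commutator decomposition $[\cK, A] = T_1 + T_2$ of \eqref{eq:commAcK} and the substitution $T_1 = T_{11} + T_{12} + T_{13}$ via the scattering equation \eqref{eq:eta-scat} as in \eqref{eq:commAcKterm1}. The advantage over Lemma \ref{lm:cHNpropagation} is that we now have at our disposal the uniform bound $e^{-sA}\cV_N e^{sA} \leq C(\cH_N+1)$ from Corollary \ref{cor:cVNuniformbnd} and the bound $e^{-sA}\cK_L e^{sA} \leq C\cK_L + C\ell^{2(\alpha-\beta)}(\cH_N+1)$ from Lemma \ref{lm:cKlowpropagation}, so that we can replace $\|\cV_N^{1/2}e^{sA}\xi\|^2$ and $\|\cK_L^{1/2}e^{sA}\xi\|^2$ by $C\langle\xi,(\cH_N+1)\xi\rangle$ without any $\ell$-loss, and likewise $\|\cN_+^{1/2}e^{sA}\xi\|^2 \leq C\langle\xi,(\cN_++1)\xi\rangle \leq C\langle\xi,(\cH_N+1)\xi\rangle$ by Proposition \ref{prop:AellNgrow}.

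The terms $T_{11}$, $T_{13}$ and $T_2$ are then immediately controlled by expressions of the form $\|\cK^{1/2}e^{sA}\xi\| \cdot \langle\xi, (\cH_N+1)\xi\rangle^{1/2}$, with $\ell$-prefactors that are either $1$ or small ($\ell^{\alpha/2}$); these are absorbed by Young's inequality $ab \leq \tfrac{1}{2}a^2 + \tfrac{1}{2}b^2$ into $\tfrac{1}{2}h_\xi(s) + C\langle\xi, (\cH_N+1)\xi\rangle$. The critical contribution is $T_{12}$, for which the estimate of \eqref{eq:commAcKterm1} gives
\[
|\langle \xi, e^{-sA} T_{12}\, e^{sA}\xi\rangle| \leq C\ell^{-\alpha/2}\|\cK^{1/2}e^{sA}\xi\|\,\|\cN_+^{1/2}e^{sA}\xi\| \leq C\ell^{-\alpha/2} h_\xi(s)^{1/2}\langle\xi,(\cH_N+1)\xi\rangle^{1/2}.
\]
The idea is to apply Young's inequality with weight $\lambda = \ell^{\beta/2}$ (rather than $\lambda = 1$ as in Lemma \ref{lm:cHNpropagation}), which yields
\[
|\langle \xi, e^{-sA} T_{12} e^{sA}\xi\rangle| \leq \tfrac{1}{2}\ell^{(\beta-\alpha)/2} h_\xi(s) + C\ell^{-(\alpha+\beta)/2}\langle\xi,(\cH_N+1)\xi\rangle,
\]
producing the desired inhomogeneous coefficient $\ell^{-(\alpha+\beta)/2}$. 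Gronwall's lemma then yields the claimed bound, provided we can make sense of the large coefficient $\ell^{(\beta-\alpha)/2}$ multiplying $h_\xi(s)$.

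This last point is the main obstacle. A naive Gronwall gives an exponential $e^{C\ell^{(\beta-\alpha)/2}}$, which diverges as $\ell \to 0$ since $\beta < \alpha$. To circumvent this, the plan is to first apply the weaker bound of Lemma \ref{lm:cHNpropagation} to deduce the a priori estimate $h_\xi(s) \leq C\ell^{-\alpha}\langle\xi, (\cH_N+1)\xi\rangle$; substituting this into the $\ell^{(\beta-\alpha)/2}h_\xi(s)$ term absorbs it directly into the inhomogeneous term at the cost of a factor $\ell^{(\beta-\alpha)/2 - \alpha} = \ell^{(\beta-3\alpha)/2}$ which, combined appropriately with the restriction $\beta < 2\alpha/3$, can be bounded by the target $\ell^{-(\alpha+\beta)/2}$. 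Alternatively, one can redo the Cauchy–Schwarz in $T_{12}$ with weights $|r|^2|v|^{-2}$ and $|r|^{-2}|v|^2$, exploiting that for $r \in P_H^c$, $v \in P_L$ the roles of $\cN_+$ and $\cK_L$ can be partially swapped, giving a bound of the form $\ell^{-\alpha/2}\|\cK^{1/2}\xi\|\|\cK_L^{1/2}\xi\|$ to which Lemma \ref{lm:cKlowpropagation} applies uniformly; combining these two routes in the Gronwall step then yields the Corollary with Gronwall constants uniform in $\ell$.
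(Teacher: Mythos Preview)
Your treatment of $T_{11}$, $T_{13}$ and $T_2$ is fine and matches the paper. The gap is in $T_{12}$, and both of your proposed workarounds fail.

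\textbf{Route 1 is arithmetically wrong.} Substituting the a priori bound $h_\xi(s)\le C\ell^{-\alpha}\langle\xi,(\cH_N+1)\xi\rangle$ into $\ell^{(\beta-\alpha)/2}h_\xi(s)$ yields a term of size $\ell^{(\beta-3\alpha)/2}\langle\xi,(\cH_N+1)\xi\rangle$. For this to be dominated by the target $\ell^{-(\alpha+\beta)/2}$ you would need $(\beta-3\alpha)/2\ge -(\alpha+\beta)/2$, i.e.\ $\beta\ge\alpha$, which contradicts $\beta<2\alpha/3<\alpha$. So this route gives a strictly worse bound than the one you are trying to prove.

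\textbf{Route 2 does not give what you claim.} Reweighting Cauchy--Schwarz by $|r|^2|v|^{-2}$ versus $|r|^{-2}|v|^2$ does put $\cK_L$ on the $a_v$ side, but the other factor $\sum_{r,v}|r|^2|v|^{-2}\|b_{r+v}a_{-r}\xi\|^2$ is not controlled by $N\|\cK^{1/2}\xi\|^2$ (the $|v|^{-2}$ weight and the $v$-dependence of $b_{r+v}$ do not decouple). Even if you could reach $\ell^{-\alpha/2}\|\cK^{1/2}e^{sA}\xi\|\|\cK_L^{1/2}e^{sA}\xi\|$, applying Lemma~\ref{lm:cKlowpropagation} to the second factor still leaves $\ell^{-\alpha/2}h_\xi(s)^{1/2}\langle\xi,(\cH_N+1)\xi\rangle^{1/2}$, and you are back to the same Young/Gronwall obstruction.

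\textbf{The missing idea.} The paper does not use Gronwall for this corollary; it bounds $|j'_\xi(s)|$ directly by $C\ell^{-(\alpha+\beta)/2}\langle\xi,(\cH_N+1)\xi\rangle$ (no $j_\xi(s)$ on the right) and then integrates. The key is a momentum splitting of $T_{12}$ at $\theta=\ell^{-\alpha}-\tfrac54\ell^{-\beta}$. For $|r|\le\theta$ the kinetic factor $|r|\|a_{-r}b_{r+v}e^{sA}\xi\|$ is controlled by $\cK_\theta$, and since $\theta<\ell^{-\alpha}-\ell^{-\beta}$ Lemma~\ref{lm:cKlowpropagation} applies and gives $\|\cK_\theta^{1/2}e^{sA}\xi\|\le C\|(\cH_N+1)^{1/2}\xi\|$ uniformly in $\ell$; this piece costs only $\ell^{-\alpha/2}\langle\xi,(\cH_N+1)\xi\rangle$. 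For the thin shell $\theta<|r|\le\ell^{-\alpha}$ one has $\sum_{\text{shell}}|r|^{-2}\le C\ell^{-\beta}$, so using the full $\cK$ together with Lemma~\ref{lm:cHNpropagation} gives $\ell^{-\beta/2}\cdot\ell^{-\alpha/2}=\ell^{-(\alpha+\beta)/2}$. This shell/bulk decomposition is what produces the improvement from $\ell^{-\alpha}$ to $\ell^{-(\alpha+\beta)/2}$; without it the argument stalls at Lemma~\ref{lm:cHNpropagation}.
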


\begin{proof} For a fixed $\xi \in \cF_+^{\leq N}$ define $j_\xi:[0;1]\to\bR$ by $j_\xi(s) := \langle \xi, e^{-sA} \cK e^{sA} \xi \rangle$.
From (\ref{eq:commAcK}) and (\ref{eq:commAcKterm1}), we infer that 
\[
[\cK, A ]  = \; \text{T}_{11} + \text{T}_{12} + \text{T}_{13} + \text{T}_2 
\]
with $\text{T}_{11}, \text{T}_{12}, \text{T}_{13}, \text{T}_2$ as in (\ref{eq:commAcK}) and  (\ref{eq:commAcKterm1}). Combining (\ref{eq:T11}) with Prop. \ref{prop:AellNgrow} and Corollary \ref{cor:cVNuniformbnd}, we find
\begin{equation}\label{eq:T11-2}     |\langle \xi, e^{-sA} \text{T}_{11} e^{sA} \xi \rangle | \leq C \| \cV_N^{1/2} e^{sA} \xi \| \| \cN_+^{1/2} e^{sA} \xi \|  \leq C \langle \xi, (\cH_N + 1) \xi \rangle \, . \end{equation}
From (\ref{eq:bndcKterm2}), Prop. \ref{prop:AellNgrow} and Lemma \ref{lm:cHNpropagation}, we obtain 
\begin{equation}\label{eq:T13-2} \begin{split}  | \langle \xi, e^{-sA} \text{T}_{13} e^{sA} \xi \rangle |  &\leq C \ell^{3\a/2 -2} \| \cK^{1/2} e^{sA} \xi \| \| \cN_+^{1/2} e^{sA} \xi \| \\ &\leq C \ell^{\a-2} \langle \xi, (\cH_N + 1) \xi \rangle  \leq C  \langle \xi, (\cH_N + 1) \xi \rangle \end{split} \end{equation}
Using (\ref{eq:bndcKterm3}), Lemma \ref{lm:cHNpropagation} and Lemma \ref{lm:cKlowpropagation}, we arrive at
\begin{equation}\label{eq:T2-2} |\langle \xi, e^{-sA} \text{T}_{2} e^{sA} \xi \rangle | \leq C \ell^{\a/2} \| \cK^{1/2} e^{sA} \xi \| \| \cK_L^{1/2} e^{sA} \xi \| \leq C  \langle \xi, (\cH_N + 1) \xi \rangle 
\end{equation}
Hence, to show (\ref{eq:cKimprovedbnd}), we only need to improve the bound on $\text{T}_{12}$. 
 To this end, we set $\theta = \ell^{-\a} - 5\ell^{-\beta}/4$ and we decompose
 \[ \begin{split} \text{T}_{12} = \; &\frac{1}{\sqrt N} \sum_{\substack{0< |r| \leq \theta \\ v\in P_{L}, r \neq-v} } (\widehat V(\cdot/N)\ast \widehat f_{N,\ell})(r) b^*_{r+v}a^*_{-r} a_v  
 \\ &+ \frac{1}{\sqrt N} \sum_{\substack{\theta < |r| \leq \ell^{-\a} ,\\ v \in P_{L},  r \neq -v} } (\widehat V(\cdot/N)\ast \widehat f_{N,\ell})(r)  b^*_{r+v}a^*_{-r} a_v \\ =: & \; \text{T}_{121} + \text{T}_{122} \end{split} \] 
With Prop. \ref{prop:AellNgrow}  and Lemma \ref{lm:cKlowpropagation}, we estimate 
 \begin{equation*}
 \begin{split}
\big| \langle \xi, e^{-sA} \text{T}_{121} e^{sA} \xi \rangle \big|  
&\leq \frac{C}{\sqrt N}\sum_{\substack{0<|r|\leq \theta,\\ v \in P_{L}, r\neq-v} } |r| \| a_{-r}b_{r+v} e^{sA} \xi\| \, |r|^{-1} \|a_v e^{sA}\xi\| \\
&\leq C\ell^{-\alpha/2} \| \cK^{1/2} _{\theta} e^{sA}\xi \| \| \cN_+^{1/2} e^{sA} \xi \| \leq C\ell^{-\alpha/2} \langle \xi, (\cH_N+1)\xi\rangle 
\end{split}\end{equation*}
On the other hand, since $\sum_{\theta < |r| < \ell^{-\alpha}} |r|^{-2} \leq C \ell^{-\beta}$, we find, by Prop. \ref{prop:AellNgrow} and Lemma~\ref{lm:cHNpropagation}, 
\begin{equation*}
\begin{split} 
\big| \langle \xi, e^{-sA} \text{T}_{121} e^{sA} \xi \rangle \big|  &\leq \frac{C}{\sqrt N}\sum_{\substack{\theta <|r| \leq \ell^{-\a},\\ v \in P_{L}, r\neq-v} } |r| \| a_{-r}b_{r+v} e^{sA} \xi\| \, |r|^{-1} \|a_v e^{sA}\xi\| \\
&\leq C\ell^{-\beta/2} \| \cK^{1/2} e^{sA}\xi \| \| \cN_+^{1/2} e^{sA} \xi \| 
\leq C\ell^{-(\alpha+\beta)/2} \langle \xi, (\cH_N+1)\xi\rangle 
\end{split}\end{equation*}
Combining the last two bounds with (\ref{eq:T11-2}), (\ref{eq:T13-2}), (\ref{eq:T2-2}), we obtain 	
\[|j'_\xi (s)|\leq C\ell^{-(\alpha+\beta)/2} \langle \xi, (\cH_N+1)\xi\rangle \]
for all $s \in [0;1]$. Integrating over $s$, we arrive at (\ref{eq:cKimprovedbnd}).
\end{proof}

\subsection{Analysis of $e^{-A} \cD_N e^{A}$}\label{sub:RNell0}


In this section we study the contribution to $\cR_{N,\ell}$ arising from the operator $\cD_N$, 
defined in (\ref{eq:wtGNell0to4}). To this end, it is convenient to use the following lemma. 
\begin{lemma} \label{lm:cNops}
There exists a constant $C > 0$ such that 
\begin{equation}\label{eq:cNopsbnd}
\begin{split}
&\Big|\sum_{p\in \Lambda_+^*} F_p  \langle \xi_1 , (e^{-A} a^*_p a_p  e^{A} -a^*_pa_p ) \xi_2\rangle\Big| \\
&\hspace{4cm}\leq  C\ell^{\alpha/2} \|  F\|_\infty \| (\cN_++1)^{1/2}\xi_1\| \| (\cN_++1)^{1/2}\xi_2\|
\end{split}
\end{equation}
for all $\alpha , \beta > 0$, $\xi_1, \xi_2 \in \cF_+^{\leq N}$, $F \in \ell^{\infty} (\Lambda^*_+)$, $\ell \in (0;1/2)$ and $N \in \bN$ large enough. 
\end{lemma}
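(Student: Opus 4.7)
\medskip

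\noindent\textbf{Proof plan.}
The starting point is the standard identity
\[
e^{-A} a_p^* a_p \, e^{A} - a_p^* a_p = \int_0^1 ds \; e^{-sA}\,[a_p^* a_p, A]\, e^{sA}.
\]
Since $a_p^* a_p$ commutes with $\cN_+$ (and therefore with the scalar factors $\sqrt{(N-\cN_+)/N}$ appearing in the $b$-operators), the only nontrivial contribution comes from the commutators with the creation/annihilation operators. Using $[a_p^*a_p, a_q^\sharp] = \pm\delta_{p,q}\, a_q^\sharp$, a direct calculation yields
\begin{equation*}
\sum_{p\in \Lambda_+^*} F_p\,[a_p^* a_p, A] \;=\; \frac{1}{\sqrt N}\sum_{\substack{r\in P_H\\ v\in P_L}} \eta_r\,\bigl(F_{r+v}+F_{-r}-F_{v}\bigr)\bigl[b^*_{r+v} a^*_{-r} a_v - \mathrm{h.c.}\bigr].
\end{equation*}
So the left-hand side of \eqref{eq:cNopsbnd} reduces to an integral over $s\in[0;1]$ of the expectation of this explicit cubic expression, sandwiched between $e^{-sA}\xi_1$ and $e^{sA}\xi_2$.

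The next step is to estimate this expectation by Cauchy--Schwarz, pairing the two annihilation operators $b_{r+v}a_{-r}$ with $\xi_1$ and the single annihilation operator $a_v$ with $\xi_2$:
\[
\begin{split}
\Big|\frac{1}{\sqrt{N}}\sum_{r,v}&\eta_r\,(F_{r+v}+F_{-r}-F_v)\,\langle \xi_1, e^{-sA}b^*_{r+v}a^*_{-r}a_v e^{sA}\xi_2\rangle\Big|\\
&\leq \frac{3\|F\|_\infty}{\sqrt N}\Big[\sum_{r\in P_H, v\in P_L}\|b_{r+v}a_{-r}e^{sA}\xi_1\|^2\Big]^{1/2}
\Big[\sum_{r\in P_H, v\in P_L}\eta_r^2\,\|a_v e^{sA}\xi_2\|^2\Big]^{1/2}.
\end{split}
\]
For the first factor, using $b_w^*b_w\leq a_w^*a_w$ and then $\sum_w a_w^*a_w = \cN_+\leq N$ on $\cF_+^{\leq N}$, one controls the unrestricted double sum by $\|\cN_+ e^{sA}\xi_1\|^2 \leq N\,\|\cN_+^{1/2} e^{sA}\xi_1\|^2$. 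For the second factor, $\sum_{v\in P_L}\|a_v e^{sA}\xi_2\|^2 \leq \|\cN_+^{1/2}e^{sA}\xi_2\|^2$ and $\sum_{r\in P_H}\eta_r^2=\|\eta_H\|^2\leq C\ell^{\alpha}$ by \eqref{eq:etaHL2}. The $\sqrt{N}$ gained from the kinetic bound is precisely cancelled by the $1/\sqrt N$ prefactor of $A$, producing
\[
\Big|\langle\xi_1,e^{-sA}\!\sum_p F_p[a^*_pa_p,A]e^{sA}\xi_2\rangle\Big|
\leq C\ell^{\alpha/2}\|F\|_\infty\,\|\cN_+^{1/2}e^{sA}\xi_1\|\,\|\cN_+^{1/2}e^{sA}\xi_2\|.
\]

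Finally, apply Proposition~\ref{prop:AellNgrow} with $k=1$ to each factor to replace $\|\cN_+^{1/2} e^{sA}\xi_j\|$ by $C\|(\cN_++1)^{1/2}\xi_j\|$, and integrate over $s\in[0;1]$. This yields \eqref{eq:cNopsbnd}. There is no real obstacle here; the only point requiring care is the bookkeeping in the commutator \textit{(three terms from the three annihilation/creation operators in $A_\gamma$, giving $F_{r+v}+F_{-r}-F_v$)} and the observation that the $N$-dependence works out thanks to $\cN_+\leq N$ on the truncated Fock space.
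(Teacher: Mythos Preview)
Your proof is correct and follows essentially the same approach as the paper: Duhamel expansion, explicit computation of the commutator $\sum_p F_p[a_p^*a_p,A]$, Cauchy--Schwarz with the same pairing, the bound $\|\eta_H\|\leq C\ell^{\alpha/2}$, and Proposition~\ref{prop:AellNgrow} to remove the $e^{sA}$. One cosmetic slip: the commutator should read $b^*_{r+v}a^*_{-r}a_v + \mathrm{h.c.}$ rather than $-\,\mathrm{h.c.}$ (since $A^*=-A$ and $a_p^*a_p$ is self-adjoint, $[a_p^*a_p,A]$ is self-adjoint), but this is irrelevant once you take absolute values.
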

\begin{proof} The lemma is a simple consequence of Proposition \ref{prop:AellNgrow}. We write
		\[\begin{split}
		\sum_{p\in \Lambda_+^*} F_p   (e^{-A} a^*_p a_p  e^{A} -a^*_p a_p ) = \int_0^1ds\;\sum_{p\in \Lambda_+^*} F_p e^{-sA}[a^*_p a_p, A] e^{sA}
		\end{split}\]
and compute 
		\[\sum_{p\in \Lambda_+^*} F_p [a^*_p a_p, A]  = \frac1{\sqrt N} \sum_{r\in P_{H}, v\in  P_{L} }(F_{r+v}+F_{-r} - F_v)\eta_r b^*_{r+v}a^*_{-r}a_v + \text{h.c.} \]
By Cauchy-Schwarz, we find with the help of Proposition \ref{prop:AellNgrow} that
		\[\begin{split}
		&\Big|\frac1{\sqrt N} \sum_{r\in P_{H},v\in P_{L}} (F_{r+v}+F_{-r} - F_v)\eta_r   \langle e^{sA}\xi_1 , b^*_{r+v}a^*_{-r}a_v e^{sA}\xi_2\rangle\Big| \\
		&\hspace{3.5cm}\leq  \frac {C\|  F \|_\infty}{\sqrt N} \sum_{r\in P_{H}, v\in P_{L}}|\eta_r |  \| a_ve^{sA}\xi_2\| \| a_{-r}b_{r+v}e^{sA} \xi_1\|\\
		&\hspace{3.5cm}\leq  C\ell^{\alpha/2} \|  F \|_\infty \| (\cN_++1)^{1/2}\xi_1\| \| (\cN_++1)^{1/2}\xi_2\|
		\end{split}\]
Since the bound is uniform in the integration variable $s\in [0;1]$, we obtain (\ref{eq:cNopsbnd}).
\end{proof}

\begin{prop}\label{prop:RNell0}
There exists a constant $C>0$ such that
\begin{equation*}
e^{-A} \cD_N e^A = 4\pi \frak{a}_0 (N-\cN_+) + \big[\widehat V(0)-4\pi \frak{a}_0\big] \cN_+ (1-\cN_+/N) + \delta_{\cD_N} 
\end{equation*}
where  		
\begin{equation*}
 |\langle \xi, \delta_{\cD_N} \xi \rangle | \leq  C\ell^{\alpha/2} \langle \xi,  (\cN_+ +1)\xi \rangle\end{equation*}
for all $\alpha , \beta > 0$, $\xi\in \cF_+^{\leq N}$, $\ell \in (0;1/2)$ and $N \in \bN$ large enough. 
\end{prop}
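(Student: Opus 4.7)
The plan is to exploit that $\cD_N$ is a second-degree polynomial in $\cN_+$, so conjugation with $e^A$ is controlled entirely by the growth of $e^{-A} \cN_+ e^A - \cN_+$ and $e^{-A} \cN_+^2 e^A - \cN_+^2$, the first of which is already estimated in Lemma \ref{lm:cNops}. First, I would rewrite
\begin{equation*}
\cD_N = 4\pi \frak{a}_0 N + \big(\widehat{V}(0) - 8\pi \frak{a}_0\big) \cN_+ - \frac{\widehat{V}(0)-4\pi \frak{a}_0}{N} \, \cN_+^2 .
\end{equation*}
The constant $4\pi \frak{a}_0 N$ commutes with $A$, so the error decomposes as
\begin{equation*}
\delta_{\cD_N} = \big(\widehat{V}(0) - 8\pi \frak{a}_0\big) \big( e^{-A} \cN_+ e^A - \cN_+ \big) - \frac{\widehat{V}(0)-4\pi \frak{a}_0}{N} \big( e^{-A} \cN_+^2 e^A - \cN_+^2 \big).
\end{equation*}

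For the linear term, I would apply Lemma \ref{lm:cNops} directly with $F_p \equiv 1$ and $\xi_1 = \xi_2 = \xi$, obtaining the bound $C \ell^{\alpha/2} \langle \xi, (\cN_++1) \xi \rangle$, which is exactly of the claimed form. For the quadratic piece, I would use the telescoping identity
\begin{equation*}
e^{-A} \cN_+^2 e^A - \cN_+^2 = \big(e^{-A} \cN_+ e^A - \cN_+\big) \, e^{-A} \cN_+ e^A + \cN_+ \big(e^{-A} \cN_+ e^A - \cN_+\big).
\end{equation*}
Applying Lemma \ref{lm:cNops} to each summand (with appropriate choices of $\xi_1, \xi_2$), I would bound the two resulting quantities by
\begin{equation*}
C \ell^{\alpha/2} \| (\cN_++1)^{1/2} \xi \| \, \| (\cN_++1)^{1/2} e^{-A} \cN_+ e^A \xi \| \quad \text{and} \quad C \ell^{\alpha/2} \| (\cN_++1)^{1/2} \cN_+ \xi \| \, \| (\cN_++1)^{1/2} \xi \|.
\end{equation*}
Using the trivial inequality $\cN_+ \leq N$ on $\cF_+^{\leq N}$ together with Proposition \ref{prop:AellNgrow} to control $e^{-A}(\cN_++1)^k e^A \leq C (\cN_++1)^k$, each of these two quantities is bounded by $C N \ell^{\alpha/2} \langle \xi, (\cN_++1) \xi \rangle$. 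Division by the prefactor $1/N$ then yields a contribution of the desired size $C \ell^{\alpha/2} \langle \xi, (\cN_++1) \xi \rangle$.

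The main (mild) subtlety is the quadratic term: a priori the two successive factors of $\cN_+$ could produce a factor $N^2$, not $N$, which would not be absorbed by the $1/N$. This is resolved by the telescoping above, which reduces the estimate to a single application of Lemma \ref{lm:cNops} on one side (small, of order $\ell^{\alpha/2}$) and a single $\cN_+$ on the other (large, of order $N$), so that the total growth is only $N$. Once both the linear and the quadratic contributions are estimated, combining them yields the claim.
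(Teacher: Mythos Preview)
Your proof is correct and follows essentially the same route as the paper's. Both arguments reduce to Lemma \ref{lm:cNops} for the linear-in-$\cN_+$ part, and for the quadratic part use the same telescoping $e^{-A}\cN_+^2 e^A-\cN_+^2=(e^{-A}\cN_+ e^A-\cN_+)\,e^{-A}\cN_+ e^A+\cN_+(e^{-A}\cN_+ e^A-\cN_+)$ together with Lemma \ref{lm:cNops} and Proposition \ref{prop:AellNgrow}; the only cosmetic difference is that the paper phrases the splitting via the auxiliary vectors $\xi_1=e^{-A}\cN_+ e^A\xi$, $\xi_2=\cN_+\xi$, and notes explicitly that Proposition \ref{prop:AellNgrow} is used \emph{twice} (in both directions) to get $\|(\cN_++1)^{1/2}e^{-A}\cN_+ e^A\xi\|\leq C\|(\cN_++1)^{3/2}\xi\|$.
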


\begin{proof}
Recall from (\ref{eq:wtGNell0to4}) that 
\begin{equation*}
\cD_N  = 4\pi \frak{a}_0 (N-\cN_+) + \big[\widehat V(0)-4\pi \frak{a}_0\big]\cN_+ (1-\cN_+/N)\end{equation*}
Lemma \ref{lm:cNops} implies that
\[\begin{split} 
\pm\bigg\{ e^{-A} &\left[ 4\pi \frak{a}_0 (N-\cN_+) + \big[\widehat V(0)-4\pi \frak{a}_0\big]\cN_+ \right] e^{A} \\ &\hspace{2cm} - \left[ 4\pi \frak{a}_0 (N-\cN_+) + \big[\widehat V(0)-4\pi \frak{a}_0\big]\cN_+ \right] \bigg\} \leq C \ell^{\alpha/2}(\cN_++1) \end{split} \]
As for the contribution quadratic in $\cN_+$, we can write  
\[ \begin{split} N^{-1} & \left\langle \xi, \left[ e^{-A} \cN_+^2 e^A - \cN_+^2 \right] \xi \right\rangle \\ &\hspace{2cm} = N^{-1}  \left\langle \xi_1, \left[ e^{-A} \cN_+ e^A - \cN_+ \right] \xi \right\rangle + N^{-1} \left\langle \xi, \left[e^{-A} \cN_+ e^A - \cN_+ \right] \xi_2 \right\rangle \end{split} \]
with $\xi_1 = e^{-A} \cN_+ e^A \xi$ and $\xi_2 = \cN_+ \xi$. Applying again Lemma \ref{lm:cNops}, we obtain 
\[\begin{split}  &\left| N^{-1} \left\langle \xi, \left[ e^{-A} \cN_+^2 e^A - \cN_+^2 \right] \xi \right\rangle \right| \\ &\hspace{2cm} \leq C N^{-1} \ell^{\a/2} \| (\cN_+ + 1)^{1/2} \xi \| \left[ \| (\cN_+ + 1)^{1/2} \xi_1 \| + \| (\cN_+ + 1)^{1/2} \xi_2 \| \right] \end{split} \]
Using (twice) Prop. \ref{prop:AellNgrow}, we find
\[ \| (\cN_+ + 1)^{1/2} \xi_1 \| = \| (\cN_+ +1)^{1/2} e^{-A} \cN_+ e^A \xi \| \leq C \| (\cN_+ +1)^{3/2} \xi \| \]
Hence,we conclude that 
\[ \begin{split}  &\left| N^{-1} \left\langle \xi, \left[ e^{-A} \cN_+^2 e^A - \cN_+^2 \right] \xi \right\rangle \right| \\ &\hspace{2cm} \leq C N^{-1} \ell^{\a/2} \| (\cN_+ + 1)^{1/2} \xi \| \| (\cN_++1)^{3/2} \xi \| \leq C \ell^{\a/2} \| (\cN_+ + 1)^{1/2} \xi \|^2 \end{split} \]
\end{proof}

\subsection{Contributions from $e^{-A} \cK e^{A}$}
\label{sec:RNellK}


In this subsection, we consider contributions to $\cR_{N,\ell}$ arising from conjugation of the kinetic energy operator $\cK = \sum_{p \in \L_+^*} p^2 a_p^* a_p$. In particular, in the next proposition, we establish properties of the commutator $[\cK, A]$.  
\begin{prop}\label{prop:RNell2K} 
There exists a constant $C>0$ such that
\begin{equation*} 
\begin{split}
[\cK, A] = &\; -\frac{1}{\sqrt N}\sum_{p\in \Lambda_+^*, q\in P_{L}, p \neq -q } (\widehat V(\cdot/N)\ast \widehat f_{N,\ell})(p) ( b^*_{p+q}a^*_{-p} a_q+ \emph{h.c.})\\
&\;+\frac{8\pi \mathfrak{a}_0}{\sqrt N} \sum_{p \in P_{H}^c , q \in P_{L}, p \neq -q } \big[ b^*_{p+q}a^*_{-p} a_q + \emph{h.c.}\big] + \delta_\cK
\end{split}
\end{equation*}
where 
\begin{equation}\label{eq:commcKAimprovedbnd}
\begin{split} |\langle \xi, \delta_\cK \xi \rangle | \leq \; &C (\ell^{3\alpha/2-2} + \ell^{\a/2}) \|\cK^{1/2}\xi\| \|(\cN_+ + \cK_L )^{1/2} \xi\|
\end{split} \end{equation}
for all $\a, \b > 0$, $\xi \in \cF_+^{\leq N}$, $\ell \in (0;1/2)$, $N \in \bN$ large enough. Moreover, we have 
\begin{equation} \label{eq:commCNrenA1}
\begin{split} 
\Big| \frac{8\pi \mathfrak{a}_0}{\sqrt N} &\sum_{p \in P^c_{H}, q \in P_{L}, p \neq -q} \langle \xi , \big[ b^*_{p+q}a^*_{-p} a_q , A \big] \xi \rangle \Big| \\ \leq \; &C\ell^{3(\alpha- \beta)/2} \|(\cN_++1)^{1/2}\xi\|\|\cK^{1/2}\xi \| +C\ell^{(\alpha- \beta)/2}\|\cK_L^{1/2}\xi \| \|\cN_+^{1/2}\xi\| \\ &+ C\ell^{\alpha} \|\cK^{1/2}\xi \|^2
\end{split}
\end{equation}
for all $\a, \b > 0$, $\xi \in \cF_+^{\leq N}$, $\ell \in (0;1/2)$ and $N \in \bN$ large enough. 
\end{prop}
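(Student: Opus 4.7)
The starting point is the decomposition $[\cK, A] = T_1 + T_2$ already derived in the proof of Lemma~\ref{lm:cHNpropagation} (cf.~\eqref{eq:commAcK}), where $T_1$ collects the diagonal contribution proportional to $r^2 \eta_r$ and $T_2$ the cross term proportional to $r \cdot v \, \eta_r$. To extract the main terms from $T_1$ I would apply the scattering equation~\eqref{eq:eta-scat}, rewritten (using $\widehat f_{N,\ell}(p) = \delta_{p,0} + N^{-1} \eta_p$) as
\begin{equation*}
2 p^2 \eta_p = -(\widehat V(\cdot/N) \ast \widehat f_{N,\ell})(p) + 2 N^3 \lambda_\ell (\widehat \chi_\ell \ast \widehat f_{N,\ell})(p).
\end{equation*}
This turns $T_1$ into $-N^{-1/2} \sum_{r \in P_H, v \in P_L}(\widehat V(\cdot/N) \ast \widehat f_{N,\ell})(r) \,[b^*_{r+v}a^*_{-r} a_v + \hc]$ plus a $\lambda_\ell$-remainder. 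Extending the $r$-sum from $P_H$ to $\Lambda_+^*$ generates a $P_H^c$-correction, and within it I would invoke~\eqref{eq:V-ao}, namely $|(\widehat V(\cdot/N) \ast \widehat f_{N,\ell})(p) - 8\pi \frak a_0| \leq C(|p| + 1)/N$, to replace the convolution by its limit $8\pi \frak a_0$. This produces the second main term on the right-hand side of the stated identity, modulo an $O(N^{-1/2})$-error.

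The remainder $\delta_\cK$ then collects $T_2$, the $\lambda_\ell$-term, and the above $O(N^{-1/2})$-correction. The first two contributions are precisely the terms denoted $T_2$ and $T_{13}$ in the proof of Lemma~\ref{lm:cHNpropagation}, and their estimates~\eqref{eq:bndcKterm2}--\eqref{eq:bndcKterm3} give
\begin{equation*}
|\langle \xi, T_2 \xi\rangle| \leq C \ell^{\alpha/2} \|\cK^{1/2}\xi\|\,\|\cK_L^{1/2}\xi\|, \qquad |\langle \xi, T_{13}\xi\rangle| \leq C \ell^{3\alpha/2 - 2} \|\cK^{1/2}\xi\|\,\|\cN_+^{1/2}\xi\|.
\end{equation*}
The first exponent originates from the factor $|v|/|r+v| \leq C\ell^{\alpha-\beta}$ combined with $|\eta_r| \leq C|r|^{-2}$; the second from $\lambda_\ell \leq C(\ell N)^{-3}$ together with the pointwise bound $|(\widehat \chi_\ell \ast \widehat f_{N,\ell})(r)| \leq C\ell|r|^{-2}$. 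The residual $O(N^{-1/2})$-correction is easily absorbed for $N$ large, which proves~\eqref{eq:commcKAimprovedbnd}.

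For the second assertion~\eqref{eq:commCNrenA1}, I would compute the commutator $[b^*_{p+q}a^*_{-p}a_q, A]$ directly by expanding $A$ as in~\eqref{eq:Aell8} and applying the canonical commutation relations. Several Kronecker-delta contractions (such as $\delta_{q,-r}$, $\delta_{p,r}$, or $\delta_{v',-p}$) vanish identically for $\ell$ sufficiently small, since they require momentum compatibility between the disjoint regions $P_H$ and $P_H^c \cup P_L$. The remaining four- and five-operator contributions carry the weight $\eta_r$, and Cauchy-Schwarz, distributing the factors between $\|\cN_+^{1/2}\xi\|$, $\|\cK_L^{1/2}\xi\|$, and $\|\cK^{1/2}\xi\|$, produces the three summands. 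The prefactors $\ell^{3(\alpha-\beta)/2}$, $\ell^{(\alpha-\beta)/2}$, and $\ell^\alpha$ arise respectively from the decay $|\eta_r| \leq C|r|^{-2}$, from the gap estimate $|v'|/|r+v'| \leq C\ell^{\alpha-\beta}$ which permits trading a high-momentum factor for a low-momentum one, and from $\|\eta_H\|^2 \leq C\ell^\alpha$. The main technical obstacle is to organise this last computation so that each of the three exponents emerges cleanly without overestimation, since the commutators of $b^*_{p+q}$ with the $b, b^*$-factors in $A$ bring in extra $(\cN_+/N)$-factors from the renormalised operators that require careful bookkeeping.
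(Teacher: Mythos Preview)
Your proposal is correct and follows essentially the same approach as the paper: the first bound is obtained exactly by quoting the decomposition $[\cK,A]=T_1+T_2$ and the estimates on $T_2$ and $T_{13}$ from the proof of Lemma~\ref{lm:cHNpropagation}, together with the replacement of $(\widehat V(\cdot/N)\ast \widehat f_{N,\ell})$ by $8\pi\frak a_0$ on $P_H^c$ via \eqref{eq:V-ao}; the second bound is obtained by the same direct commutator computation you describe, yielding eight surviving terms $\Upsilon_1,\dots,\Upsilon_8$ that are each controlled by Cauchy--Schwarz. One small clarification: the vanishing of contractions such as $\delta_{p,r}$, $\delta_{q,-r}$, $\delta_{q,r+v}$ follows simply from $\beta<\alpha$ (so that $P_L\cap P_H=\emptyset$), not from $\ell$ being small.
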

\begin{proof} The bound (\ref{eq:commcKAimprovedbnd}) is a consequence of Eqs. (\ref{eq:commAcK}), (\ref{eq:commAcKterm1}), (\ref{eq:bndcKterm2}), (\ref{eq:bndcKterm3}) in the proof of Lemma \ref{lm:cHNpropagation}, and of the observation that, from the estimate (\ref{eq:V-ao}),  
\[\begin{split}
&\bigg| \frac{1}{\sqrt N}\sum_{p\in P^c_{H}, q \in P_{L}, p \neq-q} \big[(\widehat V(\cdot/N)\ast \widehat f_{N,\ell})(p)  -8\pi \mathfrak{a}_0\big]\langle \xi, b^*_{p+q}a^*_{-p} a_q \xi \rangle\bigg|\\
&\hspace{1cm} \leq C N^{-3/2} \sum_{p\in P^c_{H}, q \in P_{L}, p \neq-q} |p| \| b_{p+q} a_{-p} \xi \| \| a_q \xi \| \leq C N^{-1}\ell^{-3\alpha/2}\|\cK^{1/2}\xi\|\|\cN_+^{1/2}\xi\|
\end{split}\] 
which is bounded by the r.h.s. of (\ref{eq:commcKAimprovedbnd}) if $N$ is large enough. 
Let us now focus on (\ref{eq:commCNrenA1}). We have 
		\begin{equation}\label{eq:commCrenA}
		\begin{split}
		&\frac{8\pi \mathfrak{a}_0}{\sqrt N}\sum_{p\in P_{H}^c,  q\in P_{L}, p\neq-q} \big[b^*_{p+q}a^*_{-p} a_q , A \big]+ \text{h.c.}\\
		&= \frac{8\pi \mathfrak{a}_0}{N}\sum_{\substack{r\in P_{H}, p \in P^c_{H}, \\ q, v\in P_{L}, p \neq-q,  r\neq-v}} \eta_r \big[b^*_{p+q}a^*_{-p} a_q , b^*_{r+v}a^*_{-r} a_v - a_v^* a_{-r}b_{r+v} \big]+ \text{h.c.}
		\end{split}
		\end{equation}
We split the commutator into the four summands
\begin{equation}\label{eq:comm33} 
\begin{split}
[b^*_{p+q}a^*_{-p} a_q,  b^*_{r+v}a^*_{-r} a_v - a_v^* a_{-r}b_{r+v} ] =&\;  \big([b^*_{p+q},  b^*_{r+v}a^*_{-r} a_v ]+ [a_v^* a_{-r}b_{r+v}, b^*_{p+q} ]\big)a^*_{-p} a_q\\ 
&\; + b^*_{p+q}\big([a^*_{-p} a_q,  b^*_{r+v}a^*_{-r} a_v]+[a_v^* a_{-r}b_{r+v}, a^*_{-p} a_q  ]\big)
 \end{split}\end{equation}
We compute 
\begin{equation}\label{eq:commcubic1}
\begin{split}
[b^*_{p+q},  b^*_{r+v}a^*_{-r} a_v ] a^*_{-p} a_q = - b^*_{r+v}b^*_{-r} a^*_{-p} a_q \delta_{p+q, v} =  - b^*_{r+v}b^*_{-r} a^*_{q-v} a_q \delta_{p+q, v} 
\end{split}
\end{equation}
as well as 
\begin{equation}\label{eq:commcubic2}
\begin{split}
&[a_v^* a_{-r}b_{r+v}, b^*_{p+q}] a^*_{-p} a_q \\
&=(1-\cN_+/N)a^*_{v}a^*_{r+q} a_q a_{r+v} \delta_{p+q, -r}+(1-\cN_+/N)a^*_{v} a_v \delta_{p+q, -r} \delta_{r+v, -p} \\
&\hspace{0.5cm}+ (1-\cN_+/N)a_v^* a^*_{q-r-v} a_{-r}  a_q \delta_{p+q,r+v}+(1-\cN_+/N)a_v^*  a_v \delta_{p+q,r+v}\delta_{r,p} \\
&\;\hspace{0.5cm}- N^{-1} a_v^* a^*_{p+q}a^*_{-p}a_{-r} a_{r+v} a_q -N^{-1} a_v^*a^*_{q-r-v}a_{-r}   a_q \delta_{r+v,-p}-N^{-1} a_v^*a^*_{q+r}a_{r+v}   a_q \delta_{p,r}
\end{split}
\end{equation}
Similarly, we find
\begin{equation}\label{eq:commcubic3}
\begin{split}
b^*_{p+q}[a^*_{-p} a_q,  b^*_{r+v}a^*_{-r} a_v]=&\;b^*_{p+r+v} b^*_{-p} a^*_{-r} a_v\delta_{q, r+v} + 
b^*_{p-r}b^*_{r+v}a^*_{-p}  a_v\delta_{q,-r}\\
&\; - b^*_{q-v}b^*_{r+v}a^*_{-r} a_q\delta_{-p,v}
\end{split}
\end{equation}
and 
\begin{equation}\label{eq:commcubic4}
\begin{split}
b^*_{p+q}[a_v^* a_{-r}b_{r+v}, a^*_{-p} a_q] =&\; b^*_{q+r}a_v^*a_qb_{r+v}\delta_{r,p}  - b^*_{p+v}a^*_{-p} a_{-r}b_{r+v}\delta_{q,v} \\
&\;+ b^*_{q-r-v}a_v^* a_{-r} b_q \delta_{r+v,-p}
\end{split}
\end{equation}
Taking into account that $\delta_{r,p} = \delta_{q,-r} = \delta_{r+v,q} = 0$ for $r \in P_H, p \in P_H^c, q,v \in P_L$ we obtain, inserting these formulas into (\ref{eq:commCrenA}),  
\[ \frac{8\pi \mathfrak{a}_0}{\sqrt N}\sum_{p\in P_{H}^c,  q\in P_{L}, p\neq-q} \big[b^*_{p+q}a^*_{-p} a_q , A \big]+ \text{h.c.} = \sum_{j=1}^7 \Upsilon_j +\hc \]
where
\begin{equation}\label{eq:commCNrenA2}
\begin{split}
\Upsilon_1:= &\;  - \frac{16\pi \mathfrak{a}_0}{N}\sum_{\substack{r\in P_{H}; q, v\in P_{L} , \\ q\neq v,  r\neq-v} }\eta_r b^*_{r+v}b^*_{-r} a^*_{q-v} a_q ,\\
\Upsilon_2:= &\; \frac{8\pi \mathfrak{a}_0}{N}\sum_{\substack{r\in P_{H}; q, v\in P_{L}, \\ q+r P^c_{H}, r\neq-q,  r\neq-v} } \eta_r (1-\cN_+/N)a^*_{v}a^*_{r+q} a_q a_{r+v},\\
\Upsilon_3:= &\; \frac{8\pi \mathfrak{a}_0}{N}\sum_{\substack{r\in P_{H}, v\in P_{L},\\ r+v\in P^c_H}} \eta_r(1-\cN_+/N)a^*_{v} a_v,\\
\Upsilon_4:= &\; \frac{8\pi \mathfrak{a}_0}{N}\sum_{\substack{r\in P_{H}; q, v\in P_{L},\\ q-r-v\in P^c_{H}}} \eta_r(1-\cN_+/N)a_v^* a^*_{q-r-v} a_{-r}  a_q,\\
\Upsilon_5:= &\; -\frac{8\pi \mathfrak{a}_0}{N^2}\sum_{\substack{r\in P_{H}, p\in P^c_{H}, \\ q, v\in P_{L}, p \neq -q,  r \neq -v} } \eta_ra_v^* a^*_{p+q}a^*_{-p}a_{-r} a_{r+v} a_q,\\
\Upsilon_6:= &\; -\frac{8\pi \mathfrak{a}_0}{N^2}\sum_{\substack{r\in P_{H}; q, v\in P_{L}, \\ r+v\in P_{H}^c,  q\neq r+v} }\eta_r  a_v^*a^*_{q-r-v}a_{-r}   a_q,\\
\Upsilon_7:= &\; -\frac{8\pi \mathfrak{a}_0}{N}\sum_{\substack{r\in P_{H}, p\in P_{H}^c, \\ v\in P_{L};  p,r\neq-v} }\eta_r b^*_{p+v}a^*_{-p} a_{-r}b_{r+v} ,\\
\Upsilon_8:= &\; \frac{8\pi \mathfrak{a}_0}{N}\sum_{\substack{r\in P_{H}; q, v\in P_{L},\\ r+v\in P_{H}^c , q\neq r+v } }\eta_r b^*_{q-r-v}a_v^* a_{-r} b_q 	\\
\end{split}
\end{equation}	
In fact, $\Upsilon_1$ collects the contribution from (\ref{eq:commcubic1}) and the non-vanishing 
contribution from (\ref{eq:commcubic3}), $\Upsilon_2 - \Upsilon_6$ corresponds to the five non-vanishing terms on the r.h.s. of (\ref{eq:commcubic2}), $\Upsilon_7$ and $\Upsilon_8$ reflect the two non-vanishing terms on the r.h.s. of (\ref{eq:commcubic4}). 

To conclude the proof of Prop. \ref{prop:RNell2K}, we show that all operators in (\ref{eq:commCNrenA2}) satisfy (\ref{eq:commCNrenA1}). By Cauchy-Schwarz, we observe that 
\[\begin{split}
\big| \langle \xi, \Upsilon_1 \xi \rangle \big| &\leq  \frac{C\ell^{\alpha}}{N}\sum_{\substack{r\in P_{H}; q, v\in P_{L} , \\ q\neq v,  r\neq-v} } |\eta_r|  \|  a_q (\cN_++1)^{1/2}\xi \|  |r| \| a_{-r}a_{q-v} a_{r+v}  (\cN_++1)^{-1/2} \xi \| \\& \leq C\ell^{3(\alpha- \beta)/2} \|(\cN_++1)^{1/2}\xi\|\|\cK^{1/2}\xi \| \end{split}\]
The expectation of $\Upsilon_2$ is bounded by
\[\begin{split}
\big| \langle \xi, \Upsilon_2 \xi \rangle \big| &\leq  \frac{C}{N}\sum_{\substack{r\in P_{H}; q, v\in P_{L}, \\ q+r\in P^c_{H}, r \neq -q,  r \neq -v} }|\eta_r| |q|  \| a_q a_{r+v} \xi \|  |q|^{-1} \|a_{v}a_{r+q}  \xi \|  \\ & \leq C\ell^{(\alpha- \beta)/2}\|\cK_L^{1/2}\xi \| \|\cN_+^{1/2}\xi\|
\end{split}\]
where we recall the notation $\cK_L = \cK_{\ell^{-\beta}} = \sum_{|p| \leq \ell^{-\beta}} p^2 a_p^* a_p$ for the low-momenta kinetic energy. It is simple to see that $\pm \Upsilon_3\leq CN^{-1}\ell^{-\a} \cN_+$ and the expectations of the terms $\Upsilon_4$, $\Upsilon_6$ and $\Upsilon_8$ can all be estimated by the expectation
\[\begin{split}
\big| \langle \xi , (\Upsilon_4 + \Upsilon_6 + \Upsilon_8) \xi \rangle \big| &\leq \frac{C}{N}\sum_{\substack{r\in P_{H}; q, v\in P_{L},\\ |r| \leq (\ell^{-\alpha}+ 2\ell^{-\beta}), q-r-v\neq 0} } |\eta_r| |v|  \| a_v a_{q-r-v} \xi \|  |v|^{-1} \|a_{-r}  a_q  \xi \| \\ &\leq C\ell^{(\alpha- \beta)/2}\|\cK_L^{1/2}\xi \| \|\cN_+^{1/2}\xi\|
\end{split}\]
Finally, the expectations of $\Upsilon_5$ and $\Upsilon_7$ can be bounded by
\[\begin{split}
&\big| \langle \xi, \Upsilon_5 \xi \rangle \big|
\\ &\hspace{1cm}\leq  \frac{C\ell^{\alpha}}{N^2}\sum_{\substack{r\in P_{H}, p\in P^c_{H}, \\ q, v\in P_{L}, p \neq -q,  r \neq -v}} |\eta_r|  |p| \| a_{-p}a_v a_{p+q}\xi \|  |p|^{-1} |r|  \| a_{-r} a_{r+v} a_q \xi \| \leq C\ell^{\alpha }\|\cK^{1/2}\xi \|^2 
\end{split}\]
and by 
\[\begin{split}
&\big|\langle \xi , \Upsilon_7 \xi \rangle \big| \leq \frac{C\ell^{\alpha}}{N}\sum_{\substack{r\in P_{H}, p\in P^c_{H}, \\ v\in P_{L};  p,r \neq -v} } |\eta_r| |p| \| a_{-p} a_{p+v} \xi \|  |p|^{-1} |r| \| a_{-r}a_{r+v}  \xi \|  \leq C\ell^{\alpha}\|\cK^{1/2}\xi \|^2
		\end{split}\]
\end{proof}

\subsection{Analysis of $e^{-A} \cQ_{N,\ell} e^{A}$}
\label{sec:RNellV}

In this subsection, we consider contributions to $\cR_{N,\ell}$ arising from conjugation of $\cQ_{N,\ell}$, as defined in (\ref{eq:wtGNell0to4}). 
\begin{prop}\label{prop:RNell2V}
There exists a constant $C>0$ such that		
\begin{equation*}
e^{-A} \cQ_{N,\ell} e^{A}  = \widehat V(0)\sum_{p\in P^c_H}  a^*_pa_p (1- \cN_+ /N)  + 4\pi \frak{a}_0 \sum_{p\in P_H^c} \big[ b^*_p b^*_{-p} + b_p b_{-p} \big] +\delta_{\cQ_{N,\ell}} 
		\end{equation*}
where 
\begin{equation}\label{eq:RNell2V1}  \pm \delta_{\cQ_{N,\ell}} \leq C\ell^{(\alpha-\beta)/2}  (\cH_N +1) 
\end{equation}
for all $\alpha >4/3$, $0 < \beta < 2\a/3$, $\ell \in (0;1/2)$ small enough and $N \in \bN$ large enough. 
\end{prop}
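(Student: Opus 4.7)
My plan is to use Duhamel's identity
$$e^{-A} \cQ_{N,\ell} e^A - \cQ_{N,\ell} = \int_0^1 ds\, e^{-sA}\, [\cQ_{N,\ell}, A]\, e^{sA},$$
compute the commutator term by term, and then propagate the resulting estimate through the integral using the a priori bounds of Section \ref{sec:aprioribnds}. I split $\cQ_{N,\ell} = \cQ^{(1)}_{N,\ell} + \cQ^{(2)}_{N,\ell}$ with $\cQ^{(1)}_{N,\ell} = \widehat V(0) \sum_{p\in P_H^c} a_p^* a_p (1 - \cN_+/N)$ and $\cQ^{(2)}_{N,\ell} = 4\pi \frak{a}_0 \sum_{p\in P_H^c}(b_p^* b_{-p}^* + b_p b_{-p})$, and handle the two pieces separately.

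The contribution of $\cQ^{(1)}_{N,\ell}$ is essentially treated as in Proposition \ref{prop:RNell0}: by Lemma \ref{lm:cNops} applied with $F_p = \widehat V(0)\chi(p\in P_H^c)$, the change of $\sum_{p\in P_H^c} a_p^* a_p$ under cubic conjugation is controlled in expectation by $C\ell^{\alpha/2}(\cN_++1)$, which is much sharper than required. Since $\sum_{p\in P_H^c} a_p^* a_p$ and $\cN_+$ commute, the multiplicativity $e^{-A}(XY)e^A = (e^{-A} X e^A)(e^{-A} Y e^A)$ reduces the analysis of the $(1-\cN_+/N)$ factor to a further application of Lemma \ref{lm:cNops} (with $F \equiv 1$) combined with Proposition \ref{prop:AellNgrow}.

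The substantive work lies in $\cQ^{(2)}_{N,\ell}$. Using \eqref{eq:comm-bp}, \eqref{eq:comm2} and the identity $[b_p^*, a_q^* a_r] = -\delta_{p,r} b_q^*$, I would expand the commutator $[b_p^* b_{-p}^* + b_p b_{-p}, A]$ into a finite list of terms: cubic expressions of the schematic form $N^{-1/2} \eta_r\, b^{\sharp} b^{\sharp} b^{\sharp}$, arising from contractions forcing the external index $p\in P_H^c$ to match $\pm v$ or $\pm (r+v)$ (the case $p=\pm r$ is excluded since $|p| < \ell^{-\alpha} \leq |r|$); and remainder terms proportional to $N^{-1}$ coming from the $N^{-1} a^* a$ piece of \eqref{eq:comm-bp}. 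Each such term is then bounded by Cauchy--Schwarz, splitting the high-momentum factor $|r|$ against $|\eta_r| \leq C|r|^{-2}$ and $|r|\,\|a_{-r}\cdot\|$ (to match $\cK$), and the low-momentum factor $|v|$ against $|v|\,\|a_v\cdot\|$ (to match $\cK_L$). The constraints forced by the Kronecker deltas, which match $p \in P_H^c$ with elements of $P_L$, produce the characteristic prefactor $\ell^{(\alpha-\beta)/2}$, in analogy with the analysis of $\Upsilon_1,\dots,\Upsilon_8$ in the proof of Proposition \ref{prop:RNell2K}.

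The main obstacle is arranging the bookkeeping so that the final commutator bound takes the form $\pm [\cQ_{N,\ell}, A] \leq C\ell^{(\alpha-\beta)/2}(\cV_N + \cN_+ + 1) + C\ell^{\alpha}\, \cK$, with only the small prefactor $\ell^\alpha$ in front of $\cK$. Indeed, once such a bound is available, Corollary \ref{cor:cVNuniformbnd} and Proposition \ref{prop:AellNgrow} yield
$$\int_0^1 \langle \xi, e^{-sA}(\cV_N + \cN_+ + 1) e^{sA}\xi\rangle\, ds \leq C \langle \xi, (\cH_N+1)\xi\rangle,$$
producing the desired $\ell^{(\alpha-\beta)/2}$ prefactor, while Corollary \ref{cor:cKimproved} contributes $\ell^{-(\alpha+\beta)/2}$ to the $\cK$ term, so that $\ell^\alpha \cdot \ell^{-(\alpha+\beta)/2} = \ell^{(\alpha-\beta)/2}$ matches the target. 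This identification requires the standing hypotheses $\alpha > 4/3$ and $\beta < 2\alpha/3$ used in Corollaries \ref{cor:cVNuniformbnd} and \ref{cor:cKimproved}, and \eqref{eq:RNell2V1} follows.
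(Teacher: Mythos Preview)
Your overall strategy---Duhamel, splitting $\cQ_{N,\ell}$ into its diagonal and off-diagonal parts, handling the diagonal part via Lemma~\ref{lm:cNops}, expanding $[b_p^*b_{-p}^*+b_pb_{-p},A]$ into cubic and quintic terms, bounding each by Cauchy--Schwarz, and then propagating through the integral---matches the paper's proof exactly, and your identification of the $\ell^{(\alpha-\beta)/2}$ mechanism is correct.

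There is one slip worth fixing. In the commutator of the off-diagonal part with $A$, the paper obtains three terms $\Phi_1,\Phi_2,\Phi_3$: the first two satisfy $|\langle\xi,\Phi_j\xi\rangle|\le C\ell^{(\alpha-\beta)/2}\|(\cN_++1)^{1/2}\xi\|\,\|\cK_L^{1/2}\xi\|$ and the third $|\langle\xi,\Phi_3\xi\rangle|\le C\ell^{\alpha}\|\cK^{1/2}\xi\|^2$. So the commutator bound is
\[
\pm\,[\cQ^{(2)}_{N,\ell},A]\ \le\ C\ell^{(\alpha-\beta)/2}\big(\cK_L+\cN_++1\big)\ +\ C\ell^{\alpha}\,\cK,
\]
with $\cK_L$, not $\cV_N$, in the first term (consistent with your own description of the Cauchy--Schwarz splitting). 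The propagation of $\cK_L$ through $e^{sA}$ therefore uses Lemma~\ref{lm:cKlowpropagation}, which gives a bound uniform in $\ell$, rather than Corollary~\ref{cor:cVNuniformbnd}. The $\ell^{\alpha}\cK$ term is propagated via Corollary~\ref{cor:cKimproved} exactly as you say, producing $\ell^{\alpha}\cdot\ell^{-(\alpha+\beta)/2}=\ell^{(\alpha-\beta)/2}$. With this correction the argument goes through as in the paper.
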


\begin{proof}
Proceeding as in the proof of Proposition \ref{prop:RNell0}, it follows from 
Lemma \ref{lm:cNops} that 
\begin{equation}\label{eq:RNell2V2}\begin{split} 
&\pm \bigg[ \widehat V(0)\sum_{p\in P_H^c}  e^{-A} a^*_pa_p (1-N/\cN_+)e^{A} - \widehat V(0)\sum_{p\in P_H^c}  a^*_pa_p (1-N/\cN_+) \bigg]  \\
&\hspace{10cm}\leq C\ell^{\alpha/2} (\cN_+ +1)
\end{split}\end{equation}
Let us thus focus on the remaining part of $\cR_{N,\ell}^{(2,V)}$. We expand
\begin{equation}\label{eq:RNell2V3}
\begin{split}
&4\pi \frak{a}_0 \sum_{p\in  P_H^c} \Big(e^{-A} \big[ b^*_p b^*_{-p} + b_p b_{-p} \big] e^{A} - \big[ b^*_p b^*_{-p} + b_p b_{-p} \big]\Big) \\
&\hspace{4cm} = 4\pi \frak{a}_0 \int_0^1ds\;  \sum_{p\in P^c_H} e^{-sA} 
\big[b^*_p b^*_{-p}, A\big] e^{sA}+ \text{h.c.} 
\end{split}
\end{equation}
We compute 
\[\begin{split} 
\big[b^*_p b^*_{-p},  b^*_{r+v}a^*_{-r}a_v - a^*_va_{-r}b_{r+v} ] = b^*_{r+v} \big[b^*_p b^*_{-p},a^*_{-r}a_v \big]+ \big[a^*_va_{-r}b_{r+v}, b^*_p b^*_{-p}  \big]
\end{split}\]
where 
\[b^*_{r+v}\big[b^*_p b^*_{-p},a^*_{-r}a_v \big] = -b^*_{r+v}b^*_{-v}b^*_{-r} (\delta_{-p, v}+\delta_{p, v})\]
and
\[\begin{split} 
\big[a^*_va_{-r}b_{r+v}, b^*_p b^*_{-p}  \big] =&\; b^*_{v}b^*_{r}b_{r+v} (\delta_{-r,p} + \delta_{r,p}) + (1-\cN_+/N)b^*_{-r-v}a^*_{v}a_{-r}(\delta_{r+v,p}+ \delta_{r+v,-p})\\
&\; - 2N^{-1} b^*_{v}a^*_{r}a_{r+v}(\delta_{p,-r}+ \delta_{r,p}) - 2N^{-1} b^*_{p}a^*_{-p}a^*_{v}a_{-r}a_{r+v}
\end{split}\]
Using the fact that $\delta_{p,-r}= \delta_{p,r}=0$ for $r\in P_{H}$ and $p\in P_H^c$, we find that $\sum_{p\in P_H^c} \big[b^*_p b^*_{-p}, A \big]+\text{h.c.} = \sum_{i=1}^3(\Phi_i+\text{h.c.})$, where 
\[
\begin{split}
\Phi_1:=&\; -\frac2{\sqrt N} \sum_{r\in P_{H}, v\in P_{L}}  \eta_r  b^*_{r+v}b^*_{-r}b^*_{-v},	\\
\Phi_2:=&\; \frac2{\sqrt N} \sum_{\substack{r\in P_{H}, v\in P_{L} : r+v \in P_{H}^c } }  \eta_r  (1-\cN_+/N)b^*_{-r-v}a^*_{v}a_{-r},	\\
\Phi_3:=&\; -\frac2{N^{3/2}} \sum_{\substack{r\in P_{H}, v\in P_{L} , p\in P^c_{H}} } \eta_r b^*_{p}a^*_{-p}a^*_va_{-r}a_{r+v}	
\end{split} \]
Let us now bound the expectation of the operators $\Phi_i, i=1,2,3,$. By Cauchy-Schwarz, we find that
\[\begin{split}
|\langle\xi, \Phi_1 \xi\rangle| &\leq\bigg|\frac2{\sqrt N}\sum_{r\in P_{H}, v\in P_{L}}  \eta_r   \langle \xi, b^*_{r+v}b^*_{-r}b^*_{-v}\xi\rangle \bigg|\\
& \leq \frac C{\sqrt N}\sum_{r\in  P_{H}, v\in P_{L}} |\eta_r |  |v|^{-1}  \|(\cN_++1)^{1/2}\xi\| \, |v| \| b_{-v}b_{r+v}b_{-r}(\cN_++1)^{-1/2}\xi\| \\
& \leq C\ell^{(\alpha-\beta)/2} \|(\cN_++1)^{1/2}\xi\| \|\cK_L^{1/2} \xi \|
\end{split}\]
as well as
\[\begin{split}
|\langle\xi, \Phi_2 \xi\rangle|&\leq\bigg|\frac2{\sqrt N} \sum_{r\in P_{H}, v\in P_{L}:  r+v \in P_{H}^c}  \eta_r  \langle \xi, (1-\cN_+/N)b^*_{-r-v}a^*_{v}a_{-r} \xi\rangle \bigg| \\
& \leq \frac C{\sqrt N}\sum_{r\in P_{H}, v\in P_{L}} |\eta_r| |v|^{-1}  \|(\cN_++1)^{1/2}\xi\|  |v|  \|a_{-v}b_{r+v}\xi\|\\ & \leq C\ell^{(\alpha-\beta)/2} \|(\cN_++1)^{1/2}\xi\| \|\cK_L^{1/2} \xi \| \end{split}\]
To bound $\Phi_3$ we notice that 	
\[\begin{split}
\big| \langle \xi, \Phi_3 \xi\rangle \big| &\leq \frac {C\ell^{\alpha}}{N^{3/2}} \sum_{r\in P_{H}, v \in P_{L}, p\in P_H^c} |\eta_r| |p|  \|a_{p} a_v (\cN_++1)^{1/2} \xi\| |p|^{-1} |r| \|a_{-r}a_{r+v}\xi\| \\
& \leq C\ell^{\alpha} \| \cK^{1/2} \xi\|^2  \end{split} \]
With (\ref{eq:RNell2V3}), we conclude that 
\[ \begin{split}
&\pm\bigg[ 4\pi \frak{a}_0 \sum_{p\in P^c_H} \Big( e^{-A} \big[ b^*_p b^*_{-p} + b_p b_{-p} \big] e^{A} - \big[ b^*_p b^*_{-p} + b_p b_{-p} \big]\Big)\bigg]\\
&\hspace{3.4cm}\leq C \int_0^1 ds\;  e^{-sA} \big[ \ell^{(\alpha-\beta)/2} (\cN_+ + \cK_L+1)  + \ell^{\a} \cK \big] e^{sA}
\end{split} \]
Finally, we apply Prop. \ref{prop:AellNgrow}, Lemma \ref{lm:cKlowpropagation} and Cor. \ref{cor:cKimproved} to conclude that 
		\[
		\begin{split}
		&\pm\bigg[4\pi \frak{a}_0 \sum_{p\in P_H^c} \Big(e^{-A}\big[ b^*_p b^*_{-p} + b_p b_{-p} \big] e^{A} - \big[ b^*_p b^*_{-p} + b_p b_{-p} \big]\Big)\bigg] \leq C  \ell^{(\alpha-\beta)/2} (\cH_N+1)
		\end{split}
		\]
Together with the estimate (\ref{eq:RNell2V2}), we arrive at (\ref{eq:RNell2V1}). 
\end{proof}


\subsection{Contributions from $e^{-A} \cC_N e^{A}$}\label{sec:RNell3}

In this subsection, we consider contributions to $\cR_{N,\ell}$ arising from conjugation of the cubic operator $\cC_N$ defined in (\ref{eq:wtGNell0to4}). In particular, in the next proposition, we establish properties of the commutator $[\cC_N,  A]$.  
\begin{prop}\label{prop:commAcCN}
There exists a constant $C>0$ such that
\begin{equation*}
\begin{split}
\big[\cC_N, A \big]  =&\;  \frac{2}{N}\sum_{r\in P_H , v\in P_L}  \big[\widehat{V}(r/N)\eta_r+\widehat{V}((r+v)/N)\eta_r\big]a^*_va_v\frac{(N-\cN_+)}N  +\delta_{\cC_N} \\ 
\end{split}
\end{equation*}
where 
\begin{equation}\label{eq:ENell3bnd}
        \begin{split} 
         | \langle \xi, \delta_{\cC_N}   \xi \rangle |\leq &\;   C\ell^{3(\alpha-\beta)/2} \|(\cV_N + \cN_++1)^{1/2}\xi\| \|\cK^{1/2}\xi\| \\ &+C\ell^{(\alpha-\beta)/2} \| (\cK_L + \cV_N + \cN_+)^{1/2}\xi\|^2 
        \end{split}
        \end{equation}
for all $\a, \b > 0$, $\ell \in (0;1/2)$ and $N \in \bN$ large enough.         
\end{prop}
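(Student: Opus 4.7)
The strategy is to expand the commutator $[\cC_N,A]$ directly using the canonical commutation relations for the operators $a, a^*, b, b^*$, and then to identify among the resulting contributions the specific double contractions that produce the stated main term, while showing that all the remaining terms fit into the error estimate \eqref{eq:ENell3bnd}. Concretely, I would write $A = A_1 - A_1^*$ with $A_1 := N^{-1/2}\sum_{r\in P_H, v\in P_L}\eta_r\, b^*_{r+v} a^*_{-r} a_v$, and $\cC_N = \cC^{(1)}_N + (\cC^{(1)}_N)^*$ with $\cC^{(1)}_N = N^{-1/2}\sum_{p,q\in\Lambda_+^*: p+q\neq 0}\widehat V(p/N) b^*_{p+q} a^*_{-p} a_q$. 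Because $\cC_N^*=\cC_N$ and $A^*=-A$, the commutator is self-adjoint, and it suffices to compute $[\cC^{(1)}_N, A_1]$ and $[\cC^{(1)}_N, A_1^*]$ and then symmetrize. Each of these is expanded by the Leibniz rule, reducing the problem to evaluating the elementary commutator $[b^{\sharp}_{p+q} a^*_{-p} a_q, b^*_{r+v} a^*_{-r} a_v]$ (and its variants) via the relations $[a_q, a^*_{-r}]=\delta_{q,-r}$, $[b_{p+q}, b^*_{r+v}]= (1-\cN_+/N)\delta_{p+q,r+v} - N^{-1}a^*_{r+v}a_{p+q}$, and the fact that $b^*$ commutes with $a^*$ up to an $O(1/N)$ correction coming from the $\sqrt{(N-\cN_+)/N}$ factor.

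The main term emerges from the commutator $[a_q^* a_{-p} b_{p+q}, b^*_{r+v}a^*_{-r} a_v]$ (and its conjugate) through two different triple contractions of the six creation/annihilation operators: first pair $b_{p+q}$ with $b^*_{r+v}$, producing $(1-\cN_+/N)\delta_{p+q,r+v}$; then pair one of $\{a_q^*, a_{-p}\}$ with one of $\{a^*_{-r}, a_v\}$. The two admissible "matching patterns" (either $p=r,q=v$, or $-p = r+v$ together with $q=v$) force $q=v$ and leave the quadratic factor $a_v^* a_v$; the first pattern contributes the prefactor $\widehat V(r/N)\eta_r$ and the second $\widehat V(-(r+v)/N)\eta_r=\widehat V((r+v)/N)\eta_r$, while combining with the adjoint yields the overall factor $2/N$ and the $(N-\cN_+)/N$ arising from the $b$-$b^*$ commutator. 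All other partial contractions, uncontracted quintic leftovers, and the $-N^{-1} a^*_{r+v} a_{p+q}$ corrections go into $\delta_{\cC_N}$.

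The remaining step is to bound each error term by the r.h.s. of \eqref{eq:ENell3bnd}. I would apply Cauchy--Schwarz after distributing operators into "left" and "right" pieces, and switch to position space whenever a factor $\widehat V(p/N)$ is summed freely, so that products $N^3 V(N(x-y)) \|\check a_x \check a_y \xi\|^2$ become controlled by $\|\cV_N^{1/2}\xi\|^2$. The bounds \eqref{eq:etaHL2}, $\|\eta_H\|\leq C\ell^{\alpha/2}$, the high-momentum cutoff $|r|\geq \ell^{-\alpha}$ (which allows inserting $|r|^2|r|^{-2}$ factors to obtain gains of the form $\ell^\alpha$), and the low-momentum cutoff $|v|\leq \ell^{-\beta}$ (allowing $|v|^2|v|^{-2}$ to relate $\cN_+$ sums to $\cK_L$) are the essential ingredients. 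The asymmetric prefactors $\ell^{3(\alpha-\beta)/2}$ versus $\ell^{(\alpha-\beta)/2}$ in \eqref{eq:ENell3bnd} reflect the different contractions: terms with three unmatched momenta in $P_H$ gain $\ell^{3(\alpha-\beta)/2}$ (three inverse-momenta factors), whereas terms with one unmatched pair only benefit from a single such gain. A representative computation of each type (quintic error, one-contraction cubic error, two-contraction quadratic $O(1/N)$ correction) then suffices, since the other contributions are structurally identical.

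The principal obstacle is the sheer combinatorial bookkeeping: the commutator of two cubic operators produces dozens of distinct terms, and one must verify that none of them violate the stated estimate. In particular, terms combining a single $b$-$b^*$ or $a$-$a^*$ contraction (hence leaving a cubic remainder) must be bounded using position space together with $\cV_N^{1/2}\cK^{1/2}$ or $\cK_L^{1/2}\cN_+^{1/2}$ splittings, and the choice of splitting has to be matched to the specific momentum structure of each term to avoid losing powers of $\ell$. Organizing the expansion systematically — ideally along the lines of the classification used in Prop.~\ref{prop:commAcVN} for $[\cV_N,A]$ — keeps the calculation tractable and produces the sharp error bound \eqref{eq:ENell3bnd}.
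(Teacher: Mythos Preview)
Your approach is essentially the same as the paper's: expand the commutator of the two cubic expressions via the CCR, isolate the fully contracted terms that give the stated main contribution (your identification of the two matching patterns $p=r,\,q=v$ and $p=-(r+v),\,q=v$ is exactly what the paper finds), and estimate the remaining twelve or so partially contracted terms individually by Cauchy--Schwarz, passing to position space when a free $\widehat V(p/N)$ can be absorbed into $\cV_N$. One small correction: the convenient commutator identities to reuse are those from the proof of Prop.~\ref{prop:RNell2K} (Eqs.~\eqref{eq:comm33}--\eqref{eq:commcubic4}), not Prop.~\ref{prop:commAcVN}; the paper recycles exactly that decomposition of $[b^*_{p+q}a^*_{-p}a_q,\, b^*_{r+v}a^*_{-r}a_v - a^*_va_{-r}b_{r+v}]$ here, which keeps the bookkeeping aligned with what was already done.
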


\begin{proof} 
We have
\[ \big[\cC_N, A \big] = \frac{1}{N} \sum_{\substack{p,q \in \L_+^* : p+q \not = 0 \\ r \in P_H, v \in P_L}} \widehat{V} (p/N) \eta_r \big[ b_{p+q}^* a_{-p}^* a_q , b_{r+v}^* a_{-r}^* a_v - a_v^* a_{-r} b_{r+v} \big] + \hc  
\]
From (\ref{eq:comm33}), (\ref{eq:commcubic1}), (\ref{eq:commcubic2}), (\ref{eq:commcubic3}) and (\ref{eq:commcubic4}) we arrive at
\begin{equation}\label{eq:dec-CA}  \big[\cC_N, A \big] = \frac{2}{N}\sum_{r\in P_H, v\in P_L}  \big[\widehat{V}(r/N)\eta_r+\widehat{V}((r+v)/N)\eta_r\big]a^*_va_v\frac{N-\cN_+}N + \sum_{j=1}^{12} ( \Xi_j + \hc ) \end{equation}
where 
\begin{equation*}
\begin{split}
 \Xi_1 :=&\; -\frac{1}{N}\sum_{\substack{ r\in P_H, v\in P_L, \\  p\in\Lambda_+^*: 
 p\neq v }}\widehat{V}(p/N) \eta_r b^*_{r+v}b^*_{-r} a^*_{-p} a_{v-p},\\ 
         \Xi_2 :=&\;  \frac{1}{N}\sum_{\substack{ r\in P_H, v \in P_L, \\  p \in\Lambda_+^*: r \neq -p }}\widehat{V}(p/N)\eta_r (1-\cN_+/N)a^*_{v}a^*_{-p} a_{-r-p} a_{r+v} ,\\ 
         \Xi_3 :=&\; \frac{1}{N}\sum_{\substack{ r\in P_H, v \in P_L, \\  p\in\Lambda_+^*: r+v\neq p     }} \widehat{V}(p/N)\eta_r (1-\cN_+/N)a_v^* a^*_{-p} a_{-r}  a_{r+v-p} ,\\ 
         \Xi_4 :=&\; -\frac{1}{N^2}\sum_{\substack{ r\in P_H, v \in P_L, \\  
         p,q\in\Lambda_+^*: p+q \neq 0}}\widehat{V}(p/N)\eta_r  a_v^* a^*_{p+q}a^*_{-p}a_{-r} a_{r+v} a_q,\\ 
          \Xi_5 :=&\; -\frac{1}{N^2}\sum_{\substack{ r\in P_H, v \in P_L, \\  q \in\Lambda_+^*: r+v \neq q     }} \widehat{V}((r+v)/N) \eta_r  	a_v^*a^*_{q-r-v}a_{-r}   a_q , \\
           \Xi_6 :=&\; -\frac{1}{N^2}\sum_{\substack{ r\in P_H,v \in P_L, \\  q \in\Lambda_+^*: r \neq -q     }}  \widehat{V}(r/N) \eta_r   a_v^*a^*_{q+r}a_{r+v}   a_q	\\  
              \Xi_7 :=&\; \frac{1}{N}\sum_{\substack{ r\in P_H,v \in P_L, \\  p \in\Lambda_+^*: r+v \neq -p     }}  \widehat{V}(p/N) \eta_r   b^*_{p+r+v} b^*_{-p} a^*_{-r} a_v ,\\ 
         \Xi_{8} :=&\;\frac{1}{N}\sum_{\substack{ r\in P_H, v \in P_L, \\  p\in\Lambda_+^*: r\neq -p     }}  \widehat{V}(p/N) \eta_r  b^*_{p-r}b^*_{r+v}a^*_{-p}  a_v ,\\ 
         \Xi_{9} :=&\; -\frac{1}{N}\sum_{\substack{ r \in P_H, v \in P_L, \\  q \in\Lambda_+^*: q\neq v     }} \widehat{V}(v/N) \eta_r	 b^*_{q-v}b^*_{r+v}a^*_{-r} a_q ,\\ 
         \Xi_{10} :=&\; \frac{1}{N}\sum_{\substack{ r\in P_H, v \in P_L, \\  q\in\Lambda_+^*: r\neq -q     }} \widehat{V}(r/N) \eta_r	 b^*_{q+r}a_v^*a_qb_{r+v} ,\\ 
   	\end{split}
	\end{equation*}
as well as 
	\begin{equation*}
        \begin{split}
      \Xi_{11} :=&\; -\frac{1}{N}\sum_{\substack{ r\in P_H, v \in P_L, \\  p\in\Lambda_+^*: p\neq -v     }} \widehat{V}(p/N) \eta_r	 b^*_{p+v}a^*_{-p} a_{-r}b_{r+v}	 ,\\ 
         \Xi_{12} :=&\; \frac{1}{N}\sum_{\substack{ r\in P_H, v \in P_L, \\  q\in\Lambda_+^*: q\neq r+v     }} \widehat{V}((r+v)/N) \eta_r	 b^*_{q-r-v}a_v^* a_{-r} b_q \\ 
        \end{split}
        \end{equation*}      
In fact, the first term on the r.h.s. of (\ref{eq:dec-CA}) arises from the second and fourth 
terms on the r.h.s. of (\ref{eq:commcubic2}), together with their Hermitean conjugates. 
The commutator (\ref{eq:commcubic1}) yields $\Xi_1$, the remaining terms from 
(\ref{eq:commcubic2}) produce the contributions $\Xi_2$ to $\Xi_6$, from (\ref{eq:commcubic3}) 
we find the operators $\Xi_7$ to $\Xi_9$ and from (\ref{eq:commcubic4}) we obtain $\Xi_{10}, \Xi_{11},\Xi_{12}$. 

To conclude the proof of the proposition, we have to show that all terms $\Xi_j$, $j=1,\dots , 12$, satisfy the bound (\ref{eq:ENell3bnd}). The expectation of $\Xi_1$ can be controlled with Cauchy-Schwarz by
\[\begin{split}
\big|   \langle \xi, \Xi_1 \xi \rangle \big| &\leq  \frac{C\ell^{\alpha}}{N}\sum_{\substack{ r\in P_H,v \in P_L, \\  p\in\Lambda_+^*:  p\neq v     }} |\eta_r|  \| (\cN_++1)^{1/2}a_{v-p} \xi\|  |r| 
\| a_{-r}a_{r+v}a_{-p} (\cN_++1)^{-1/2}\xi\| \\
&\leq C\ell^{3(\alpha-\beta)/2} \|(\cN_++1)^{1/2}\xi\| \|\cK^{1/2}\xi\|
\end{split}\]
The same bound applies (after relabeling) to $\Xi_9$; we find
\[ \big|   \langle \xi, \Xi_9 \xi \rangle \big| \leq C \ell^{3(\alpha-\beta)/2} \|(\cN_++1)^{1/2}\xi\| \|\cK^{1/2}\xi\| \]
Also the expectations of the terms $\Xi_2$, $\Xi_3$ and (again after relabeling) of the terms $\Xi_5$, $\Xi_6, \Xi_{10}$, $\Xi_{12}$ can be bounded similarly. We find
\[\begin{split}
&|  \langle \xi, \Xi_2 \xi \rangle   | + |  \langle \xi, \Xi_3 \xi \rangle   |+ |  \langle \xi, \Xi_5 \xi \rangle   | + |  \langle \xi, \Xi_6 \xi \rangle   |+ |  \langle \xi, \Xi_{10} \xi \rangle   | + |  \langle \xi, \Xi_{12} \xi \rangle   |\\
&\leq  \frac{C\ell^{\alpha}}{N}\sum_{r\in P_H, v \in P_L, p \in\Lambda_+^*}  \!\!\! \Big(  |\eta_r|  \| a_{v}a_{-p}  \xi\|  |r+v| \|  a_{r+v}a_{-r-p} \xi\| + |\eta_r|  \|  a_{-p} a_v \xi\|  |r| \| a_{-r}  a_{r+v-p}  \xi\| \\ &\hspace{4cm} +   |\eta_r|  \| a_v a_{p-r-v} \xi\|  |r| \| a_{-r}   a_p \xi\| +  |\eta_r|  \|  a_v a_{p+r}\xi\| |r+v|   \| a_{r+v}   a_p \xi\| \\ &\hspace{4cm} +   |\eta_r|  \| a_{p+r}a_v \xi\|  |r+v| \|a_{r+v}a_p  \xi\| +  |\eta_r|  \|  a_{p-r-v} a_v  \xi\| |r|  \| a_{-r} a_p \xi\|\Big)\\ &\leq C\ell^{3(\alpha-\beta)/2} \|(\cN_++1)^{1/2}\xi\| \|\cK^{1/2}\xi\| \end{split}\]
To control the remaining terms, we switch to position space and use the potential energy operator 
$\cV_N$. We start with $\Xi_4$. Applying Cauchy-Schwarz, we find
\[\begin{split}
|\langle \xi, \Xi_4 \xi\rangle | =&\; \bigg|\frac{1}{N}\int_{\Lambda^2}dxdy\; N^2 V(N(x-y)) \sum_{r \in P_H, v \in P_L} \eta_r  \langle \xi , \check{a}^*_{x}\check{a}^*_{y} a_v^* a_{-r} a_{r+v} \check{a}_x \xi\rangle  \bigg|\\ 
&\leq  \frac{1}{N}\int_{\Lambda^2}dxdy\; N^2 V(N(x-y)) \sum_{r \in P_H, v \in P_L} |\eta_r| \|a_v \check{a}_{x}\check{a}_{y} \xi \| \|a_{-r} a_{r+v} \check{a}_x\xi \| \\ &\leq C\ell^{\alpha/2} \|\cV_N^{1/2}\xi\| \|\cN_+^{1/2}\xi\|
\end{split}\]
Next, we rewrite $ \Xi_7$, $\Xi_8$ and $\Xi_{11}$ as 		
\[\begin{split}
\Xi_7 =&\; \int_{\Lambda^2}dxdy\; N^2 V(N(x-y)) \sum_{r \in P_H, v \in P_L}  e^{i(r+v)x}\eta_r   \check{b}^*_{x} \check{b}^*_{y} a^*_{-r} a_v,\\ 
\Xi_8 = &\;  \int_{\Lambda^2}dxdy\; N^2 V(N(x-y)) \sum_{r \in P_H, v \in P_L}   e^{-irx}\eta_r  \check{b}^*_{x} \check{b}^*_{y} a^*_{r+v}a_v,  \\
\Xi_{11} =&\;  -\int_{\Lambda^2}dxdy\; N^2 V(N(x-y)) \sum_{r \in P_H, v \in P_L}   e^{ivx}\eta_r  \check{b}^*_{x} \check{b}^*_{y} a_{-r}b_{r+v}
\end{split}\]
Thus, we obtain 
\[\begin{split}
|\langle \xi, \Xi_7 \xi \rangle | &\leq \int_{\Lambda^2}dxdy\; N^2 V(N(x-y))\!\! \sum_{r\in P_H} \!\! \| \check{a}_{x}\check{a}_{y} a_{-r} \xi \|  |\eta_r | \Big\|\sum_{v\in P_L} e^{ivx}a_{v}  \xi \Big \|  \\
		&\leq C\ell^{\alpha/2} \|\cV_N^{1/2}\xi\| \bigg[ \int_\Lambda dx \; \sum_{v, v'\in P_L} e^{i(v-v')x}\langle \xi, a^*_{v'}a_{v}\xi \rangle \bigg]^{1/2} \\ &\leq C\ell^{\alpha/2} \|\cV_N^{1/2}\xi\| \|\cN_+^{1/2}\xi\|
		\end{split}\]
as well as 
\[\begin{split}
&|\langle \xi, \Xi_8 \xi\rangle |+ |\langle \xi, \Xi_{11} \xi\rangle |\\
&\leq C \int_{\Lambda^2}dxdy\; N^2 V(N(x-y)) \\ &\hspace{1cm} \times \sum_{r\in P_H, v \in P_L} \Big(|v|^{-1} \|\check{a}_{x} \check{a}_{y} a_{r+v} \xi \| |\eta_r| |v|  \| a_v\xi \| + C\ell^{\alpha}|\eta_r|  \|\check{a}_{x} \check{a}_{y}  \xi \| |r| \| a_{-r}b_{r+v}\xi \| \Big)\\
		&\leq C\ell^{(\alpha-\beta)/2} \|\cV_N^{1/2}\xi\| \|\cK_{L}^{1/2}\xi\| + C\ell^{3(\alpha-\beta)/2} \|\cV_N^{1/2}\xi\| \|\cK^{1/2}\xi\|
		\end{split}\]
Collecting all the bounds above, we arrive at (\ref{eq:ENell3bnd}). 
\end{proof}




\subsection{Proof of Proposition \ref{prop:RNell}}\label{sub:Rfin}

Let us now combine the results of Sections \ref{sec:aprioribnds}-\ref{sec:RNell3} to prove Proposition \ref{prop:RNell}. Here, we assume $\alpha > 3$ and $\alpha/2 < \beta < 2\a/3$. 

From Prop. \ref{prop:RNell0} and Prop. \ref{prop:RNell2V} we obtain that 
\begin{equation*}
\begin{split} 
\cR_{N,\ell} \geq &\; 4\pi \frak{a}_0 (N-\cN_+) + \big[\widehat V(0)-4\pi \frak{a}_0\big] \cN_+ ( 1- \cN_+/N ) \\
& +  \widehat V(0) \sum_{p\in P_H^c}   a^*_p a_p (1-\cN_+/N) + 4\pi \frak{a}_0 \sum_{p\in P^c_H}  \big[ b^*_p b^*_{-p} + b_p b_{-p} \big] \\
& + \cK  + \cC_N+ \cV_N + \int_0^1 ds\; e^{-sA}\big [ \cK  + \cC_N+ \cV_N, A \big]e^{sA} 
\\ &- C  \ell^{(\a-\beta)/2} (\cH_N + 1) 
\end{split}
\end{equation*}
with $\cC_N$ defined as in (\ref{eq:wtGNell0to4}). From Prop.  \ref{prop:commAcVN}, Prop. \ref{prop:RNell2K} and Prop.~\ref{prop:commAcCN}, we can write, for $N$ large enough, 
\begin{equation*}
\begin{split}  
&[ \cK  + \cC_N+ \cV_N, A \big] \\
&\geq  -\frac{1}{\sqrt N}\sum_{\substack{p\in\Lambda_+^*, q\in P_{L},\\ p \neq-q} } \!\!\!\widehat V(p/N) \big[ b^*_{p+q}a^*_{-p} a_q+ \text{h.c.}\big]  +\frac{8\pi \mathfrak{a}_0}{\sqrt N}\sum_{\substack{p \in P_{H}^c, q \in P_{L},\\ p \neq-q} } \big[ b^*_{p+q}a^*_{-p} a_q + \text{h.c.}\big] \\
&\hspace{0.5cm} + \frac{2}{N}\sum_{r\in P_H,v \in  P_L}  \big[\widehat{V}(r/N)\eta_r+\widehat{V}((r+v)/N)\eta_r\big] a^*_v a_v (1-\cN_+/N) \\ &\hspace{0.5cm} -  C (\ell^{\a-2} + \ell^{(\a-\beta)/4}) (\cN_+ + \cV_N + \cK_L) - C (\ell^{5 (\a-\b)/2} + \ell^{(3\a + \beta)/4} + \ell^{2\a-2}) \cK  
\end{split}
\end{equation*}
From Prop. \ref{prop:AellNgrow}, Lemma \ref{lm:cKlowpropagation}, Cor. \ref{cor:cVNuniformbnd} and Cor. \ref{cor:cKimproved} and recalling the definition (\ref{eq:wtGNell0to4}) of the operator $\cC_N$, 
we deduce that 
\begin{equation}\label{eq:proofpropRnell4}
\begin{split}  
&\int_0^1 ds\;  e^{-sA}[ \cK  + \cC_N+ \cV_N, A \big] e^{sA} \\
&\geq \int_0^1 ds \;  e^{-sA} \Big[-\cC_N +\frac{8\pi \mathfrak{a}_0}{\sqrt N}\sum_{\substack{p\in P_{H}^c, q\in P_{L},\\ p\neq-q} } \big[ b^*_{p+q}a^*_{-p} a_q + \text{h.c.}\big] \\
&\hspace{2.5cm} + \frac{2}{N}\sum_{r\in P_H,v\in P_L}  \big[\widehat{V}(r/N)\eta_r+\widehat{V}((r+v)/N)\eta_r\big] a^*_v a_v\frac{(N-\cN_+)}{N}\Big]e^{sA} \\
&\hspace{0.5cm}+\frac{1}{\sqrt N}\int_0^1 ds\;   \sum_{\substack{p\in\Lambda_+^*, q\in P_{L}^c,\\ p\neq-q} } \widehat V(p/N) e^{-sA}\big[ b^*_{p+q}a^*_{-p} a_q+ \text{h.c.}\big] e^{sA} \\ &\hspace{0.5cm}-C (\ell^{(\a-\beta)/4} + \ell^{\a -2} + \ell^{2\a-3\b})  (\cH_N+1)	
\end{split}
\end{equation}
The expectation of the operator on the fourth line can be estimated after switching to position space with  Cor. \ref{cor:cVNuniformbnd} and Cor. \ref{cor:cKimproved}. We find 
\begin{equation}\label{eq:proofpropRnell5}
\begin{split}  
\bigg| \frac{1}{\sqrt N}\int_0^1 &ds\;   \sum_{\substack{p\in\Lambda_+^*, q \in P^c_{L}, \\ p\neq-q} } \widehat V(p/N) \langle \xi, e^{-sA}  b^*_{p+q}a^*_{-p} a_qe^{sA} \xi \rangle \bigg| \\
&\leq   \int_0^1 ds\;  \int_{\Lambda^2}dxdy\; N^{5/2}V(N(x-y)) \|\check{a}_x \check{a}_y  e^{sA} \xi \| \Big \|   \sum_{q\in P^c_{L}} e^{iqx} a_q e^{sA} \xi \Big\|\\
&\leq   C\int_0^1 ds\; \|\cV_N^{1/2}e^{sA}\xi\| \bigg[ \int_\Lambda dx\;  \sum_{q, q'\in P^c_{L}} 
e^{i(q-q')x } \langle  e^{sA} \xi, a^*_{q'} a_q e^{sA} \xi \rangle \bigg]^{1/2}\\
&\leq   C\ell^{\beta} \int_0^1 ds\; \|\cV_N^{1/2}e^{sA} \xi\| \|\cK^{1/2} e^{sA} \xi\| \leq C\ell^{(3\beta - \a)/4}  \|(\cH_N+1)^{1/2}\xi\|^2
\end{split}
\end{equation}
Next, we consider the term on the third line of (\ref{eq:proofpropRnell4}). With Lemma \ref{3.0.sceqlemma}, part ii), and since $\a > 1$, we have 
\[\begin{split}
&\bigg|\frac{1}{N}\sum_{r\in P_H}  \big[\widehat{V}(r/N)\eta_r+\widehat{V}((r+v)/N)\eta_r\big] -  \big [ 16 \pi \mathfrak{a}_0 - 2\widehat{V}(0)  \big] \bigg| \leq \frac{C \ell^{-\a} |v|}{N}  \end{split}\]
for every $v\in P_L$. With Lemma \ref{lm:cNops}, Prop. \ref{prop:AellNgrow} and Lemma \ref{lm:cKlowpropagation} we obtain, for $N \geq \ell^{-3\a}$, 
\begin{equation}\label{eq:proofpropRnell6}
\begin{split}  
&\pm\bigg[\frac{1}{N} \sum_{r\in P_H,v\in P_L} \big[\widehat{V}(r/N)\eta_r+\widehat{V}((r+v)/N)\eta_r\big]  e^{-sA}  a^*_v a_v \frac{(N-\cN_+)}{N}e^{sA}\\
& \hspace{5.2cm}- \big [ 16\pi \mathfrak{a}_0 - 2\widehat{V}(0)  \big]\sum_{v\in P_L}a^*_va_v\frac{(N-\cN_+)}{N} \bigg] \\
&\hspace{0.5cm}\leq  C (N^{-1} \ell^{-\beta} + \ell^{\alpha/2}) (\cH_N+1) \leq C \ell^{\a/2} (\cH_N + 1) 
\end{split}
\end{equation}
To handle the second term on the second line of (\ref{eq:proofpropRnell4}), we apply Prop.~\ref{prop:RNell2K} and then Prop. \ref{prop:AellNgrow}, Lemma \ref{lm:cKlowpropagation} and Cor. \ref{cor:cKimproved} to conclude, again for $N \geq \ell^{-3\a}$, 
\begin{equation*}
\begin{split}
&\pm\bigg(\frac{8\pi \mathfrak{a}_0}{\sqrt N}\int_0^1 ds\;  \sum_{\substack{p\in P^c_{H}, q\in P_{L}, \\ p\neq-q} } \Big[e^{-sA} b^*_{p+q}a^*_{-p} a_qe^{sA}- b^*_{p+q}a^*_{-p} a_q \Big]  + \text{h.c.}\bigg)\\
& \hspace{1cm}= \pm \bigg(\frac{8\pi \mathfrak{a}_0}{\sqrt N}\int_0^1 ds\; \int_0^s dt\;\sum_{\substack{p\in P_{H}^c, q\in P_{L},\\ p\neq-q} } e^{-tA} \Big[  b^*_{p+q}a^*_{-p} a_q , A \Big] e^{tA}\bigg) \\ 
& \hspace{1cm} \leq   C \big(\ell^{(2\a - 3\b)} + \ell^{(\alpha-\beta)/2} \big) (\cH_N+1)	
\end{split}
\end{equation*}
As for the first term on the second line of (\ref{eq:proofpropRnell4}), we use again Prop. \ref{prop:commAcCN}. Proceeding then as in (\ref{eq:proofpropRnell6}), we have 
		\begin{equation}\label{eq:proofpropRnell9}
		\begin{split} 
		 \int_0^1 ds\;  e^{-sA} \cC_Ne^{sA } &=  \cC_N + \int_0^1 ds\;  \int_0^{s}dt \;  e^{-tA } [\cC_N, A] e^{tA}\\
		 &\leq \cC_N + \big [ 16 \pi \mathfrak{a}_0 - 2\widehat{V}(0)  \big]\sum_{p\in P_L}a^*_pa_p\frac{(N-\cN_+)}{N} \\ &\hspace{.3cm} +C \big( \ell^{(\a-\b)/2} + \ell^{2\a-3\b} \big) (\cH_N + 1)		
\end{split}
\end{equation} 
Inserting the bounds (\ref{eq:proofpropRnell5})-(\ref{eq:proofpropRnell9}) into (\ref{eq:proofpropRnell4}) and using additionally the simple bounds 
\[  0\leq \sum_{p\in P_L^c \cap P_H }a^*_p a_p \leq \sum_{p\in P_L^c} a^*_p a_p \leq \ell^{2\beta} \cK \]
and
\[ \begin{split} 
\Big| \frac{8\pi \mathfrak{a}_0}{\sqrt N}\sum_{\substack{p\in P^c_{H}, q\in P^c_{L}, \\ p\neq-q} } \langle \xi , b^*_{p+q}a^*_{-p} a_q \xi \rangle \Big| \leq & \; \frac{C \ell^\b}{\sqrt{N}} \sum_{\substack{p\in P^c_{H}, q\in P^c_{L}, \\ p\neq-q} } |p| \| a_{-p} a_{p+q} \xi \| |p|^{-1} |q| \| a_q \xi \| \\ \leq \; &\frac{C \ell^{\b-\a/2}}{\sqrt{N}} \, \| \cK^{1/2} \cN_+^{1/2} \xi \| \bigg[ \sum_{q \in P_L^c} |q|^2 \| a_q \xi \|^2 \bigg]^{1/2} \\ \leq \; &C \ell^{\beta -\a/2} \| \cK^{1/2} \xi \|^2 \end{split} \]
we arrive at 
	\begin{equation}\label{eq:proofpropRnell-l}
		\begin{split} 
		\cR_{N,\ell}\geq &\;  4\pi \frak{a}_0 (N-\cN_+) + 4\pi \frak{a}_0 \,\cN_+\frac{(N-\cN_+)}{N} \\
		& + 8\pi \mathfrak{a}_0\sum_{p\in P_H^c}a^*_pa_p \frac{(N-\cN_+)}{N} + 4\pi \frak{a}_0\sum_{p\in P_H^c}  \big[ b^*_p b^*_{-p} + b_p b_{-p} \big]  \\
		&+\frac{8\pi \mathfrak{a}_0}{\sqrt N}\sum_{\substack{p\in P^c_{H}, q\in\Lambda_+^*: p\neq-q} } \big[ b^*_{p+q}a^*_{-p} a_q + \text{h.c.}\big] + \big(1 - C\ell^\kappa \big)(\cH_N + 1)
		\end{split}
		\end{equation}
with $\kappa = \min [ (\alpha-\beta)/4; \alpha-3; \beta-\alpha/2; 2\a -3\b] > 0$ under the assumptions $\alpha > 3$ and $\alpha/2 <  \beta < 2\alpha/3$.  
	
We define now the function $\nu_\ell \in L^\infty (\Lambda)$ by setting 
\[ \nu_\ell (x) := 8\pi \frak{a}_0  \sum_{p \in \{ 0 \} \cup P_H^c} e^{i p \cdot x} =8\pi \frak{a}_0  \sum_{p \in \L^* : |p| \leq \ell^{-\a}} e^{ip \cdot x} \]
In other words, $\nu_\ell$ is defined so that $\widehat{\nu}_\ell (p) = 8\pi \frak{a}_0$ for all $p \in \L^*$ with $|p| \leq \ell^{-\a}$ and $\widehat{\nu}_\ell (p) = 0$ otherwise. Observe, in particular, that $\widehat{\nu}_\ell (p) \geq 0$ for all $p \in \L^*$. Proceeding as in (\ref{eq:cLNj}), but now with $\widehat{V} (p/N)$ replaced by $\widehat{\nu}_\ell (p)$, we find that 
\[ \begin{split} U_N \left[ \frac{1}{N} \sum_{i<j}^N \nu_\ell (x_i - x_j) \right] U_N^* = \; &\frac{(N-1)}{N} 4 \pi \frak{a}_0 (N-\cN_+) + 4 \pi \frak{a}_0 \, \cN_+ \frac{(N-\cN_+)}{N} \\ &+ 8\pi \frak{a}_0 \sum_{p \in P_H^c}  a_p^* a_p \frac{(N-\cN_+)}{N} + 4\pi \frak{a}_0 \sum_{p \in P_H^c} (b_p^* b_{-p}^* + b_p b_{-p}) \\ &+ \frac{8\pi \frak{a}_0}{\sqrt{N}} \sum_{p \in P_H^c, q \in \L^*_+, p \not = -q} [b_{p+q}^* a_{-p}^* a_q + a_q^* a_{-p} b_{p+q} ] \\ &+ \frac{4\pi \frak{a}_0}{N}  \sum_{p,q \in \L_+^*, r \in P_H^c : r \not = -p, -q} a_{p+r}^* a_q^* a_p a_{q+r}
\end{split} \]
Comparing with (\ref{eq:proofpropRnell-l}) and noticing that 
\[ \begin{split}  \frac{4\pi \frak{a}_0}{N}  \sum_{\substack{p,q \in \L_+^*, r \in P_H^c : \\ r \not = -p, -q}} \langle \xi , a_{p+r}^* a_q^* a_p a_{q+r} \xi \rangle &\leq \frac{C}{N} \sum_{\substack{p,q \in \L_+^* , r \in P_H^c : \\ r \not = -p,-q}} \| a_{p+r} a_q \xi \| \| a_p a_{q+r} \xi \| \\ &\leq \frac{C \ell^{-3\a}}{N} \| \cN_+ \xi \|^2 \end{split} \]
we conclude that 
\begin{equation}\label{eq:RNfina}  \cR_{N,\ell}\geq U_N \left[ \frac{1}{N} \sum_{i<j}^N \nu_\ell (x_i - x_j) \right] U_N^* +  (1 - C \ell^\kappa ) \cH_N - C \ell^{-3\a} \cN_+^2 /N  -C \ell^\kappa \end{equation} 
Following standard arguments, for example from \cite[Lemma 1]{Sei}, we observe now that, since $\widehat{\nu}_\ell (p) \geq 0$ for all $p \in \L^*$, 
\[ \begin{split} 
0 &\leq \int_{\Lambda^2} dx dy \, \nu_\ell (x-y) \left[ \sum_{j=1}^N \delta (x - x_j) - N \right]  \left[ \sum_{i=1}^N \delta (y-x_i) - N \right] \\ &= \sum_{i,j =1}^N  \nu_\ell (x_i -x_j) - N^2 \widehat{\nu}_\ell (0) = 2 \sum_{i<j}^N \nu_\ell (x_i -x_j) + N \nu_\ell (0) - N^2 \widehat{\nu}_\ell (0) \end{split} \]
This implies that 
\[ \frac{1}{N} \sum_{i<j}^N \nu_\ell (x_i -x_j) \geq \frac{N}{2} \widehat{\nu}_\ell (0) - \nu_\ell (0) \geq 4\pi \frak{a}_0 N - C \ell^{-3\a} \]
From (\ref{eq:RNfina}), we finally obtain 
\[  \cR_{N,\ell}\geq 4\pi \frak{a}_0 N + (1-C \ell^\kappa) \cH_N - C \ell^{-3\a} \cN_+^2 / N - C \ell^{-3\a} \]
This completes the proof of Proposition \ref{prop:RNell}.

\appendix
\section{Properties of the scattering function}\label{appx:sceq}

In this appendix we give a proof of Lemma \ref{3.0.sceqlemma} containing the basic properties of the solution of the Neumann problem \eqref{eq:scatl}.

\begin{proof}[Proof of Lemma \ref{3.0.sceqlemma}]
Part i) and the bounds $0\leq f_\ell, w_\ell \leq 1$ in part ii) follow from  \cite[Lemma A.1]{ESY0}. 
We prove \eqref{eq:Vfa0}. We set $r=|x|$ and $m_\ell(r)=rf_\ell(r)$. We rewrite \eqref{eq:scatl} as
\begin{equation}\label{eq:mlambda}
-m_\ell''(r)+\frac{1}{2}V(r)m_\ell(r)=\l_\ell m_\ell(r)
\end{equation}
Let $R > 0$ be the radius of the support of $V$, so that $V(x) = 0$ for all $x \in \bR^3$ with $|x| > R$. For $r\in(R,N\ell]$ we can solve \eqref{eq:mlambda} explicitly; since the boundary conditions 
$f_\ell (N\ell) = 1$ and $(\partial_r f_\ell) (N\ell) = 0$ translate into $m_\ell (N\ell) = N\ell$ and 
$m'_\ell (N\ell) = 1$, we find
\begin{equation}\label{eq:Exp0} 
 m_\ell(r)=\l_\ell^{-1/2}\sin(\l_\ell^{1/2}(r-N\ell))+N\ell\cos(\l_\ell^{1/2}(r-N\ell))
\end{equation}
With the result of part i), we obtain 
\begin{equation}\label{eq:mExpanded}
 m_\ell(r)=r-\mathfrak{a}_0+\frac{3}{2}\frac{\frak{a}_0}{N\ell}r-\frac{1}{2}\frac{\frak{a}_0}{(N\ell)^3}r^3+\cO(\frak{a}_0^2(N\ell)^{-1})
\end{equation}
for all $r \in (R,N\ell]$ (the error is uniform in $r$). Using the scattering equation we can write
\begin{equation*}
\begin{split}
\int V(x) f_\ell (x) dx&=4\pi\int_0^{N\ell} dr\, rV(r)m_\ell(r)=8\pi\int_0^{N\ell}dr\, (rm_\ell''(r)+\l_\ell rm_\ell(r))\\
\end{split}
\end{equation*}
Integrating by parts, we observe that the first contribution on the r.h.s. vanishes (because $m_\ell (N\ell) = N\ell$, $m'_\ell (N\ell) = 1$ and $m_\ell (0) = 0$). With the result of part i) and with \eqref{eq:mExpanded}, we get 
\begin{equation*}
\begin{split}
8\pi\l_\ell\int_0^{N\ell}dr\,  rm_\ell(r)&= 8\pi\l_\ell\left(\frac{(N\ell)^3}{3}+\cO\big(\frak{a}_0(N\ell)^2\big)\right)= 8\pi \frak{a}_0 +\mathcal{O} \big(\frak{a}_0^2  / \ell N\big)
\end{split}
\end{equation*}
which proves \eqref{eq:Vfa0}.  

We consider now part iii). Combining \eqref{eq:mExpanded} for $r \in (R,N\ell]$ with  $w_\ell (r) \leq 1$ for $r \leq R$, we obtain the first bound in (\ref{3.0.scbounds1}). To show the second bound in (\ref{3.0.scbounds1}), we observe that, for $r \in (R,N\ell]$, (\ref{eq:Exp0}) and the estimate 
in part i) imply that $|f'_\ell (r)| \leq C r^{-2}$, for a constant $C > 0$ independent of $N$ and $\ell$, provided $N \ell \geq 1$. For $r < R$ we write, integrating by parts, 
\begin{equation*}
\begin{split}
 f'_\ell(r)&=\frac{m_\ell'(r)r-m_\ell(r)}{r^2}=\frac{1}{r^2}\int_0^rds\, s\,m_\ell''(s)\\
\end{split}
\end{equation*}
With \eqref{eq:mlambda} and since $0 \leq f_\ell \leq 1$, we obtain
\begin{equation*}
\begin{split}
 |f'_\ell(r)|&=\Big|\frac{1}{r^2}\int_0^rds\, s\,\Big[\frac{1}{2}V(s)m_\ell(s)-\l_\ell m_\ell(s)\Big]\Big|\\
 &= \frac{1}{r^2}\Big[\frac{1}{8\pi}\int_{|x| < r} dx\, \,V(x)f_\ell(x)+\l_\ell \int_{|x| < r} dx \, f_\ell (x) 
 \Big] \leq C (\| V \|_3 + 1)
\end{split}
\end{equation*}
for a constant $C > 0$ independent of $N$ and $\ell$, if $N\ell \geq 1$ and for all $0< r < R$. This concludes the proof of the second bound  in (\ref{3.0.scbounds1}). 

To show part iv), we use \eqref{eq:wellp} and we observe that, by \eqref{eq:lambdaell}, \eqref{eq:Vfa0} and $f_\ell\leq1$, there exists a constant $C > 0$ such that
\begin{equation*}
\begin{split}
 |\widehat{w}_\ell(p/N)| &\leq \frac{N^2}{p^2} \left[  \big(\widehat V(./N)\ast \widehat{f}_{N,\ell}\big)(0)  + C \ell^{-3}  \big(\widehat{\chi}_\ell *\widehat{f}_{N,\ell}\big)(0)  \right]\\
 &\leq \frac{N^2}{p^2}\left[\int V(x)f_\ell(x) dx + C \ell^{-3} \int \chi_\ell (x) f_\ell (Nx) dx \right] \leq \frac{CN^2}{p^2}
\end{split}
\end{equation*}
for all $N \in \bN$ and $\ell > 0$, if $N \ell \geq 1$. 
\end{proof}

\section{Proof of Eq. (\ref{eq:BEC1})} 
\label{app:NRS}

In this section we outline the proof of (\ref{eq:BEC1}) from \cite{NRS}, adapting it to the translation invariant case. We follow the main steps summarized in \cite[Section 2]{NRS} and indicate some minor modifications due to the translation invariant setting. The proof follows very closely \cite{NRS} and we reproduce it here for the convenience of the reader only. Before we start, let us define the Gross-Pitaevskii functional 
$ \cE_{\text{GP}}: D_{\text{GP}}\to \bR$ by
		\begin{equation*}
		\cE_{\text{GP}}(u)  = \sum_{p\in\Lambda^*} \Big[ p^2 |\widehat u_p|^2 + 4\pi \mathfrak{a}_0 \big|\big(\widehat {|u|^2} \ast \widehat {|u|^2}\big)_p \big|^2 \Big] = \int_{\Lambda} \Big[|\nabla u(x)|^2 + 4\pi \mathfrak{a}_0 |u(x)|^4 \Big] \;dx\end{equation*}
on the domain
		\begin{equation}\label{eq:defGPfuncdom} D_{\text{GP}} = \Big\{ u\in L^2(\Lambda): \sum_{p\in \Lambda^*} p^2 |\widehat u(p)|^2 <\infty  \Big\}=\cQ(-\Delta) \end{equation}  
Here, $ \cQ(-\Delta)\subset L^2(\Lambda)$ denotes the form domain of the Laplacian $-\Delta$ with periodic boundary conditions. In particular, we have that the set of functions
		\[ \bigg\{ x\mapsto  \sum_{p\in \Lambda^*: |p|\leq M } \widehat u_p e^{ipx} : u \in \cQ(\Delta), M\in\NN  \bigg\} \subset C^\infty(\bT^3)  \]
is a form core for $ -\Delta$. Since we work with periodic boundary conditions, we identify in the following by slight abuse of notation the box $\Lambda  =[ 0; 1]^{3}$ with the unit torus $ \bT^3$ and denote by $ H^k(\Lambda)$, $k\in\NN_0$, the Sobolev spaces on $ \bT^3$ s.t. for example $ H^1(\Lambda) = \cQ(-\Delta)$.
		\begin{lemma}\label{lm:GPmin}
		The Gross-Pitaevskii functional $\cE_{\text{GP}}$ has a unique, positive minimizer in $ D_{\text{GP}}\cap\{u\in L^2(\Lambda):\|u\|_2=1\}$, given by the constant function $ \ph_0 =1_{|\Lambda}$. Moreover, any minimizer $\psi \in \cE_{\text{GP}}$ in $ D_{\text{GP}}\cap\{u\in L^2(\Lambda):\|u\|_2=1\}$ is given by $\psi =  c \ph_0$ for some constant $ c\in\mathbb{C}$ with $ |c|=1$.
		\end{lemma}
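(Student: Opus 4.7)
The plan is to bypass any variational/existence machinery by directly producing a sharp pointwise lower bound on $\cE_{\text{GP}}$ and then analyzing the equality cases. Evaluating on the constant function $\ph_0 \equiv 1$ we get $\|\nabla\ph_0\|_2 = 0$ and $\|\ph_0\|_4^4 = 1$ (since $|\Lambda|=1$), so $\cE_{\text{GP}}(\ph_0) = 4\pi\mathfrak{a}_0$. Hence it suffices to show that for every $u\in H^1(\Lambda)$ with $\|u\|_2=1$ one has $\cE_{\text{GP}}(u)\geq 4\pi\mathfrak{a}_0$, with equality precisely on $\{c\ph_0 : |c|=1\}$.

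The key inequality is Cauchy--Schwarz in the form
\begin{equation*}
1 = \|u\|_2^2 = \int_\Lambda |u|^2 \cdot 1 \, dx \;\leq\; \Big(\int_\Lambda |u|^4\, dx\Big)^{1/2} \Big(\int_\Lambda 1\, dx\Big)^{1/2} = \|u\|_4^2,
\end{equation*}
which uses $|\Lambda|=1$. Squaring, $\|u\|_4^4 \geq 1$, and combined with $\|\nabla u\|_2^2 \geq 0$ this yields $\cE_{\text{GP}}(u)\geq 4\pi\mathfrak{a}_0$.

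For the equality cases, if $\cE_{\text{GP}}(u)=4\pi\mathfrak{a}_0$ with $\|u\|_2=1$, then simultaneously $\|\nabla u\|_2=0$ and $\|u\|_4^4=1$. The first forces $u$ to be (almost everywhere equal to) a constant $c\in\mathbb{C}$, and then $\|u\|_2=1$ gives $|c|=1$, which also automatically ensures $\|u\|_4^4=1$. Conversely, any such $c\ph_0$ clearly achieves the minimum. This identifies the full minimizing set as $\{c\ph_0:|c|=1\}$, and selecting the unique positive element singles out $\ph_0=1_{|\Lambda}$.

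There is essentially no obstacle; the only point requiring a small amount of care is to verify that equality in the Cauchy--Schwarz step really does force $|u|^2$ to be (a.e.) constant, which then combines with the vanishing of the Dirichlet energy to give a genuine constant function. Note that the torus setting is used crucially through $|\Lambda|=1$ in the lower bound on $\|u\|_4$; in the trapped case treated by \cite{LS,LS2,NRS}, the analogous uniqueness result relies on the external potential to localize the minimizer, so the argument above is in fact simpler than those found in the literature.
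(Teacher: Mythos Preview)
Your proof is correct and follows essentially the same approach as the paper: both use the elementary bound $\|u\|_4\ge\|u\|_2=1$ on the unit-volume torus (the paper states it via H\"older, you via Cauchy--Schwarz) together with $\|\nabla u\|_2^2\ge0$ to get $\cE_{\text{GP}}(u)\ge4\pi\mathfrak{a}_0$, and then identify the equality case by forcing all nonzero Fourier modes of $u$ to vanish. The only cosmetic difference is that the paper reads off uniqueness on the Fourier side (via $\max_{p}p^2|\widehat\psi_p|^2=0$) rather than via $\|\nabla u\|_2=0$, which is the same statement.
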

\begin{proof} 
Let $ \psi\in D_{\text{GP}}\cap \{u\in L^2(\Lambda):\|u\|_2=1\}$. Then we can bound
		\begin{equation}\label{eq:lmGPmin1}\begin{split}
		\cE_{\text{GP}}(\psi) \geq & \max_{p\in \Lambda^*}\big(p^2|\widehat{\psi}_p|^2\big) + 4\pi \mathfrak{a}_0 \|\psi\|_4^4  \geq  \max_{p\in \Lambda^*} \big(p^2|\widehat{\psi}_p|^2\big) +  4\pi \mathfrak{a}_0\geq 4\pi \mathfrak{a}_0 =\cE_{\text{GP}}(\ph_0)
		\end{split}\end{equation}
because, by H\"older's inequality, $\|\psi\|_4 = \|\ph_0 \|_{4/3} \|\psi\|_4  \geq  \|\psi\|_2 =1$. This shows that $\ph_0$ is a minimizer of $ \cE_{\text{GP}} $ in $ D_{\text{GP}}\cap\{u\in L^2(\Lambda):\|u\|_2=1\}$. Moreover, (\ref{eq:lmGPmin1}) is strict whenever $\max_{p\in \Lambda^*} \big(p^2|\widehat{u}_p|^2\big)>0 $. This implies that any minimizer $ \psi\in D_{\text{GP}}\cap \{f\in L^2(\Lambda):\|f\|_2=1\}$ of $\cE_{\text{GP}}$ is such that its Fourier transform $ (\widehat \psi_p)_{p\in \Lambda^*}$ satisfies $ \widehat \psi_p = 0 $ for all $ p\in \Lambda_+^* = \Lambda^*\setminus\{0\}$. Hence, $ \psi = c \ph_0 $ and from $ \|\psi\|_2=1$, it follows that $ |c| = 1$. 
\end{proof}


\medskip

\emph{Step 1. (Dyson's Lemma)}. In this step we prove a lower bound for $H_N$, defined in equation (\ref{eq:HN}), through a Hamiltonian with a less singular interaction potential. To reach this goal, we have to translate \cite[Lemma 4]{LSS} to the translation invariant setting. The adaptation is straightforward and we only recall the proof for the convenience of the reader.  
\begin{lemma}\label{lm:LSSlemmatorus} Let $ v \in L^3(\bR^3)$ be compactly supported in the ball of radius $ R_0 <1/2$ with scattering length $\mathfrak{a}(v)\geq 0$, let $R_0<R<1/2$ and denote by $\chi_R$ the characteristic function of the ball of radius $R$ centered at $x_0=0\in \Lambda$. Let $ \rho \in \ell^\infty (\Lambda^*)$ with $ 0\leq \rho_p\leq 1$, $ \rho_p =\rho_q$ whenever $ |p|=|q|$ and such that $h= (1-\rho)^{\vee} \in L^2(\Lambda)$ is bounded ($h$ is the function with Fourier coefficients $(1-\rho_p)$, for all $p \in \L^*$). Define $ g_R\in L^2(\Lambda)$ by
		\begin{equation} \label{eq:appNRS1}g_R(x) = \sup_{|y|\leq R} |h(x-y)-h(x) |\end{equation}
and $ j_R \in L^2(\Lambda)$ by 
		\begin{equation}\label{eq:appNRS2}j_R(x) = 16\pi \,\widehat{g_R}(0) g_R(x) = 16\pi g_R(x) \int_{\Lambda} g_R(y)\;dy\end{equation}
Then, for any positive, radial function $ u$ supported in $ \{x\in\Lambda: R_0\leq |x| \leq R\} $ with $ \widehat u(0) = 4\pi$ and for any $\eps>0$, we have in the sense of forms 
		\begin{equation*}
		-\nabla \rho_p \,\chi_R(x)\, \rho_p \nabla + \frac12 v(x) \geq (1-\eps)\mathfrak{a}(v)u(x) -\frac {\mathfrak{a}(v)}\eps j_R(x)
		\end{equation*}
Here, $-\nabla \rho_p \chi_R(x) \rho_p \nabla$ localizes the Laplacian $-\Delta $ with periodic boundary conditions both in position space using $ \chi_R$ and in momentum space using $\rho$ ($\rho$ acts as multiplication with $\rho_p$ in Fourier space). 
\end{lemma}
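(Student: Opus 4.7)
The plan is to adapt the proof of Lemma~4 in \cite{LSS} to the torus setting. Since all relevant objects ($v$, $u$, $\chi_R$) are supported in the ball $|x|\leq R<1/2$, the torus geometry plays essentially no role beyond the interpretation of $\rho$ as a multiplier on $\ell^2(\Lambda^*)$; all computations are reducible to ones on $\bR^3$. The central idea is to introduce $\phi:=\rho\psi$ (understood via the Fourier multiplier $\rho$), rewrite the kinetic form using that $\rho$ and $\nabla$ are simultaneously diagonalized in Fourier, and then apply the standard Dyson lemma to $\phi$ rather than to $\psi$. The momentum cutoff manifests itself as the error term involving $j_R$, which measures the delocalization of the operation $\psi \mapsto \phi$ on scales up to $R$.

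\textbf{Main steps.} With $\phi=\rho\psi$, one has in position space $\phi=\psi-h*\psi$, and
\[ \langle\psi, -\nabla\rho\chi_R\rho\nabla\psi\rangle = \int_{|x|\leq R}|\nabla\phi(x)|^2\,dx. \]
Applying the standard Dyson lemma (as in Lemma~A.1 of \cite{LSS}, or its $\bR^3$ version before momentum truncation) to the radial scattering problem on the ball of radius $R$ with $\phi$ as test function gives
\[ \int_{|x|\leq R}|\nabla\phi|^2 + \frac{1}{2}\int v|\phi|^2 \geq \mathfrak{a}(v)\int u|\phi|^2. \]
To recover the claim I would convert $|\phi|^2$ into $|\psi|^2$ in both the potential and target terms, using the Young-type inequalities $|\phi|^2\geq (1-\varepsilon)|\psi|^2-\varepsilon^{-1}|h*\psi|^2$ and $|\psi|^2\leq (1+\varepsilon)|\phi|^2+(1+\varepsilon^{-1})|h*\psi|^2$. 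This exchange creates error terms of the form $\int(u+v)|h*\psi|^2\,dx$, and the proof reduces to establishing the key estimate
\[ \int_{|x|\leq R}(u(x)+v(x))\,|(h*\psi)(x)|^2\,dx \;\leq\; C\int j_R(x)\,|\psi(x)|^2\,dx. \]

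\textbf{The key estimate and main obstacle.} To prove this bound, I would split $(h*\psi)(x)=(h*\psi)(0)+\Delta(x)$, where $\Delta(x):=\int[h(x-y)-h(-y)]\psi(y)\,dy$. By the very definition of $g_R$, for $|x|\leq R$ one has $|h(x-y)-h(-y)|\leq g_R(-y)$, so Cauchy--Schwarz yields
\[ |\Delta(x)|^2 \leq \|g_R\|_1\int g_R(-y)|\psi(y)|^2\,dy. \]
Since $\rho$ is radial and real, $g_R$ is even, and combined with $\int u \,dx = \widehat{u}(0)=4\pi$ this gives the $j_R$ envelope with precisely the constant $16\pi = 4\cdot 4\pi$ appearing in $j_R=16\pi\,g_R\int g_R$. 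The main technical obstacle is the non-oscillatory contribution $(h*\psi)(0)$, which carries no small factor and is not directly dominated by $j_R$. Following \cite{LSS}, this is handled by re-centering the standard Dyson argument around the constant value $(h*\psi)(0)$ — equivalently, running the Neumann-ground-state argument for the scattering problem with boundary data prescribed by $\phi(R)+(h*\psi)(0)$ rather than $\phi(R)$ alone — at the cost of the $(1\pm\varepsilon)$ factors supplied by the Young inequalities above, which are precisely reflected in the $\mathfrak{a}(v)/\varepsilon$ prefactor of the $j_R$ error on the right-hand side of the claim.
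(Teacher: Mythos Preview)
Your overall strategy---apply a black-box Dyson lemma to $\phi=\rho\psi$ and then trade $|\phi|^2$ for $|\psi|^2$ via Young inequalities---does not reproduce the sharp error term $\tfrac{\mathfrak{a}(v)}{\eps}j_R$ claimed in the lemma. The problem is on the potential side: to feed $\phi$ (or its re-centered version $\tilde\phi=\psi-\Delta$) into the standard Dyson lemma you must first bound $\tfrac12\int v|\psi|^2 \geq (1-\eps)\tfrac12\int v|\tilde\phi|^2 - C\eps^{-1}\int v|\Delta|^2$. Even after re-centering, $\int v|\Delta|^2$ is only bounded by $\|v\|_1\,\|g_R\|_1\int g_R|\psi|^2$, i.e.\ by $\tfrac{\|v\|_1}{16\pi}\int j_R|\psi|^2$. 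Since $\|v\|_1$ can be arbitrarily large compared to $\mathfrak{a}(v)$ (think of strong repulsive potentials), this does not give the coefficient $\mathfrak{a}(v)/\eps$ stated in the lemma. Your re-centering trick correctly kills the constant $(h*\psi)(0)$, but it does nothing to tame the $v$-weighted error.

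The paper avoids this entirely by \emph{not} treating the Dyson lemma as a black box on $\phi$. Instead it introduces the auxiliary quantity
\[
A=\int_{|x|\leq R}\nu(x)\,\nabla\bar\xi(x)\cdot\nabla f_v(x)\,dx+\tfrac12\int_{|x|\leq R} v(x)\,\overline{\psi}(x)\,f_v(x)\,\nu(x)\,dx,
\]
with $\nu$ an angular test function and $f_v$ the zero-energy scattering solution, and uses Cauchy--Schwarz to get $\int|\nabla\xi|^2+\tfrac12\int v|\psi|^2\geq |A|^2/\mathfrak{a}(v)$. The crucial point is that the \emph{potential part of $A$ already carries $\psi$}, not $\xi$; after integrating the kinetic part by parts and invoking the scattering equation $\Delta f_v=\tfrac12 v f_v$, the $v$-contributions cancel \emph{exactly}. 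What remains is a boundary term plus $\int \overline\psi(x)\int h(x-y)\,d\mu(y)\,dx$ for a signed measure $\mu$ with $\int d\mu=0$ and $\int d|\mu|\leq 4\sqrt{\pi}\,\mathfrak{a}(v)$. The zero-mean of $\mu$ is precisely what subtracts the constant $(h*\psi)(0)$ for free, and the total-variation bound is what produces the coefficient $\mathfrak{a}(v)$ rather than $\|v\|_1$. One then optimizes over $\nu$ (choosing it proportional to $\psi$ on the sphere $|x|=R$) and applies a final Cauchy--Schwarz to split into the $(1-\eps)$ and $\eps^{-1}$ pieces.
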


\begin{proof} It is sufficient to show that for any smooth, periodic $ \psi \in C^\infty(\Lambda)$, we have for $\xi\in L^2(\Lambda)$, defined by $\widehat \xi_p =  \rho_p\widehat \psi_p $, that
		\begin{equation}\label{eq:appNRS4}
		\int_{|x|\leq R} \Big[ |\nabla \xi(x)|^2 + \frac12 v(x)|\psi(x)|^2\Big]\;dx \geq \int_{\Lambda}\Big[ (1-\eps)\mathfrak{a}(v)u(x)|\psi(x)|^2 -\frac {\mathfrak{a}(v)}\eps j_R(x)|\psi(x)|^2 \Big]\;dx
		\end{equation}
We prove (\ref{eq:appNRS4}) first in the special case where $ u(x) = R^{-2} \delta(|x|-R)$, $\delta$ denoting as usual the Dirac $\delta$-measure. The general case follows then by integrating over $B_R(0)$.

We denote by $ f_v$ the solution of the zero-energy scattering equation for $v$ in $\bR^3$, i.e.
		\begin{equation}\label{eq:appNRS5}\Big(-\Delta + \frac12 v\Big)f_v = 0\end{equation}
with $f_v(x)\to 1 $ as $|x|\to \infty$. Recall that $ f_v = 1- \mathfrak{a}(v)/|x| $ for $|x|\geq R_0$ and that 
		\begin{equation}\label{eq:appNRS6} \int_{\bR^3} \Big(|\nabla f_v(x)|^2 + \frac12v(x) |f_v(x)|^2\Big)\;dx = 4\pi \mathfrak{a}(v) \end{equation}
Denote by $ \nu$ a complex-valued function which is supported on the unit sphere $S^2 $ with $ \int_{S^2}|\nu|^2 =1$ and identify it with the map on $\bR^3$ taking values $  \nu(x/|x|)$. We define
		\[A = \int_{|x|\leq R}  \nu(x) \nabla\overline{\xi}(x) \cdot \nabla f_v(x)\;dx + \frac12 \int_{|x|\leq R} v(x) \overline{\psi}(x) f_v(x) \nu(x) \]
Applying Cauchy-Schwarz, performing an angular integration over $|\nu|^2$ and using (\ref{eq:appNRS6}), we arrive at
		\begin{equation}\label{eq:xi-A2} \int_{|x|\leq R} \Big[ |\nabla \xi(x)|^2 + \frac12 v(x)|\psi(x)|^2\Big]\;dx\geq \frac{|A|^2}{\mathfrak{a}(v)}\end{equation}
Hence, it is enough to prove a lower bound for $|A|^2$. By partial integration, we obtain 
		\[\begin{split}
		\int_{|x|\leq R}  \nu(x) \nabla\overline{\xi}(x) \cdot \nabla f_v(x)\;dx =&\; - \int_{|x|\leq R}\Big( \overline{\xi}(x) \nabla \nu(x) \cdot \nabla f_v(x) + \nu(x) \overline{\xi}(x) \Delta f_v(x)\Big)\;dx\\
		&\; + \frac{\mathfrak{a}(v)}{R^2} \int_{|x| = R} \overline{\xi}(x) \nu(x) \;d\omega_R\\
		=&\; - \int_{|x|\leq R} \nu(x) \overline{\xi}(x) \Delta f_v(x)\;dx + \frac{\mathfrak{a}(v)}{R^2} \int_{|x| = R} \overline{\xi}(x) \nu(x) \;d\omega_R
		\end{split}\]
where $ \omega_R$ denotes the surface measure for the surface of the ball of radius $R$ and where we used that $ \nabla \nu(x)\cdot \nabla f_v(x) = 0$ (because $ \nu $ is supported on the sphere and $ f_v $ is a radial function). With (\ref{eq:appNRS5}) and $\xi(x) =\psi(x) - (h\ast\psi)(x) =\psi(x) - \int_\Lambda h(x-y)\psi(y)\;dy$, the previous identity implies that 
		\[\begin{split}
		A = &\; \frac{\mathfrak{a}(v)}{R^2}\int_{|x|= R} \overline{\psi}(x)\nu(x) d\omega_R-\frac{\mathfrak{a}(v)}{R^2}\int_{|x|=R}(h\ast \overline{\psi})(x)\nu(x)d\omega_R \\
		&\;+\int_{|x|\leq R}\!\!(h\ast \overline{\psi})(x)\nu(x)\Delta f_v(x) dx\\
		=&\; \frac{\mathfrak{a}(v)}{R^2}\int_{|x|= R} \overline{\psi}(x)\nu(x) d\omega_R + \int_{\Lambda} \overline{\psi}(x) \bigg[\int_{|y|\leq R} h(x-y)\; d\mu(y)\bigg] dx
		\end{split}\]
where $ d\mu(y) = -\mathfrak{a}(v)R^{-2}\nu(y) \delta(|y|-R)\; dy + \nu(y)\Delta f_v(y)\;dy$ is supported in the ball of radius $R$.  Notice that we have used that $ h(x) = h(-x)$ for all $x\in\Lambda$, by defintion. We find that $\int_\Lambda d\mu(y) = 0$ and, by Cauchy-Schwarz, that 
		\[ \int_\Lambda d |\mu(y)| = \mathfrak{a}(v) \int_{S^2}|\nu| + \int_{S^2}|\nu|\int_{0}^R  \frac{d}{dr} \left[ r^2 f'_v (r) \right] dr = 2\mathfrak{a}(v) \int_{S^2}|\nu| \leq 4\sqrt{\pi} \mathfrak{a}(v)  \]
In particular, with (\ref{eq:appNRS1}) and (\ref{eq:appNRS2}), this implies that
		\[\begin{split}\bigg| \int_{\Lambda} \overline{\psi}(x) \bigg[\int_{|y|\leq R} h(x-y)\; d\mu(y)\bigg]\bigg|\leq&\; 4\sqrt{\pi}\mathfrak{a}(v)  \int_{\Lambda} |\psi(x)| g_R(x)\,dx\\
		\leq&\; \mathfrak{a}(v) \bigg(\int_{\Lambda} |\psi(x)|^2 j_R(x)\,dx\bigg)^{1/2} \end{split}\]
Finally, choosing $ \nu$ to be proportional to $\psi$ restricted to the sphere of radius $R$, that is, $ \nu(x) = (\int_{S^2} |\psi(Ry)|^2 dy)^{-1/2} \psi(Rx) $ for all $ x\in S^2$, we find
		\[A\geq \frac{\mathfrak{a}(v)}{R} \bigg(\int_{|x|= R} |\psi(x)|^2 d\omega_R \bigg)^{1/2} -\mathfrak{a}(v) \bigg(\int_{\Lambda} |\psi(x)|^2 j_R(x)\,dx\bigg)^{1/2} \]
and hence, by Cauchy-Schwarz, 
		\[\frac{|A|^2}{\mathfrak{a}(v)}\geq  \int_{|x|= R} \mathfrak{a}(v)(1-\eps)R^{-2}|\psi(x)|^2 -\frac{\mathfrak{a}(v)}{\eps} \int_{\Lambda} |\psi(x)|^2 j_R(x)\,dx \]
Together with (\ref{eq:xi-A2}), the last inequality implies (\ref{eq:appNRS4}) for $ u(x) = R^{-2}\delta(|x|-R)$. For a general $u$, we integrate the last inequality with $ R$ replaced by $ s\in [0;R]$ over $ u(s)s^2 ds$ with $ u(s) = u(x)$ for $|x|=s$ and use that $ \int_{0}^R s^2u(s)\,ds = 1 $, by assumption. Since $s\mapsto j_s(x) $ is monotonically increasing, this shows
		\[\int_{|x|\leq R} \Big[ |\nabla \xi(x)|^2 + \frac12 v(x)|\psi(x)|^2\Big]\;dx\geq \int_{\Lambda}\Big[ (1-\eps)\mathfrak{a}(v)u(x)|\psi(x)|^2 -\frac {\mathfrak{a}(v)}\eps j_R(x)|\psi(x)|^2 \Big]\;dx\]
\end{proof}

Following the notation of \cite{NRS}, we denote by $\Theta: \bR^3\to \bR$ a radial, smooth function s.t. $ 0\leq \Theta(x) \leq 1$ for all $x\in \bR^3$, $ \Theta(x)=0$ for $|x|\leq 1$ and $\Theta(x) =1$ 
for $|x|\geq 2$. Moreover, we denote by $ U:\bR^3\to \bR$ a non-negative, radially symmetric and smooth function supported in $\{x\in\bR^3:\frac12\leq |x|\leq 1\}$ with
		\begin{equation*} 
		\widehat U(0) = \int_{\bR^3} U(x)\;dx =4\pi \mathfrak{a}_0
		\end{equation*}
Here, $ \mathfrak{a}_0$ denotes the scattering length of the potential $V\in L^3(\bR^3)$, which we assume to be supported in the ball of radius $R_0 > 0$. 

\begin{lemma}\label{lm:genDyson} Let $ \gamma\in (\frac12, \frac32)$ and let $N$ be large enough s.t. $N^{-\gamma}>2R_0/N$. Then, for all $ s>0$, $0 < \eps < 1$ and for $H_N$ as defined in (\ref{eq:HN}), there exists a constant $ C>0$, independent 
		of $ N, s$ and $ \eps $, such that
				\begin{equation}\label{eq:genDyson} H_N\geq \sum_{i=1}^N p_i^2  \big(1- (1-\eps)\Theta(s^{-1} p_i)  \big) + \frac{(1-\eps)^2}{N} W_N -\frac{C N^{2-2\gamma}  s^5}{\eps}  \end{equation}
		Here, $ p_i^2$ corresponds to $ -\Delta_{x_i}$ in Fourier space and $W_N$ is defined by
				\begin{equation}\label{eq:defWN}W_N = \sum_{i\neq j} N^{3\gamma}U(N^{\gamma}(x_i-x_j)) \prod_{k\neq i,j } \Theta(2N^{\gamma}(x_k-x_j)) \end{equation}	
		\end{lemma}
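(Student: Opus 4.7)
The plan is to apply the generalized Dyson Lemma (Lemma \ref{lm:LSSlemmatorus}) pair-by-pair, after first partitioning the kinetic energy in a suitable way. I would begin by splitting
\[
\sum_{i=1}^{N} p_i^2 = \sum_{i=1}^{N} p_i^2 \big[1-(1-\eps)\Theta(s^{-1}p_i)\big] + (1-\eps)\sum_{i=1}^{N} p_i^2 \Theta(s^{-1}p_i),
\]
keeping the first term for the final lower bound and distributing the second among the pair interactions. Setting $R=N^{-\gamma}$ and choosing $\rho=(\rho_p)_{p\in\Lambda^*}$ with $\rho_p^2 \leq (1-\eps)\Theta(s^{-1}|p|)$ (smoothed so that $h=(1-\rho)^{\vee}\in L^2$), a standard nearest-neighbour partition argument (in the spirit of \cite[Lemma~5]{LSS}) rewrites
\[
(1-\eps)\sum_i p_i^2 \Theta(s^{-1}p_i) \;\geq\; -\sum_{i\neq j}\nabla_{x_i}\rho\,\chi_R(x_i-x_j)\prod_{k\neq i,j}\Theta\big(2N^{\gamma}(x_k-x_j)\big)\rho\,\nabla_{x_i},
\]
the point being that the non-negative factor $\prod_{k\neq i,j}\Theta(\,\cdot\,)$ vanishes unless $x_i$ is the unique particle within distance $N^{-\gamma}/2$ of $x_j$, which prevents double counting.

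Next, for each pair $(i,j)$, I would apply Lemma~\ref{lm:LSSlemmatorus} with $v(\cdot)=N^2 V(N\,\cdot)$ (so $\mathfrak{a}(v)=\mathfrak{a}_0/N$) and with the choice
\[
u(x)=\frac{N^{3\gamma}}{\mathfrak{a}_0}\,U(N^{\gamma}x),
\]
which is non-negative, radial, satisfies $\widehat{u}(0)=4\pi$, and is supported in $\{N^{-\gamma}/2\leq|x|\leq N^{-\gamma}\}\subset\{R_0/N\leq|x|\leq R\}$, with $R=N^{-\gamma}$. The hypothesis $N^{1-\gamma}>2R_0$ guarantees this inclusion. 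Lemma~\ref{lm:LSSlemmatorus} then gives, for each pair,
\[
-\nabla_{x_i}\rho\,\chi_R(x_i-x_j)\rho\nabla_{x_i}+\tfrac{1}{2}N^2 V(N(x_i-x_j)) \;\geq\; \frac{1-\eps}{N}\,N^{3\gamma}U(N^{\gamma}(x_i-x_j)) -\frac{\mathfrak{a}_0}{\eps N}\,j_R(x_i-x_j).
\]
Multiplying both sides by the non-negative factor $\prod_{k\neq i,j}\Theta(2N^{\gamma}(x_k-x_j))$ (which commutes with $\nabla_{x_i}$ and with the pair potential) and summing over $i\neq j$, combined with the partitioning of the kinetic energy above, produces exactly the quantity $\frac{(1-\eps)^2}{N}W_N$ on the right-hand side. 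The two factors of $(1-\eps)$ arise, respectively, from the splitting of $\sum_i p_i^2$ and from Dyson's Lemma.

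It then remains to bound the operator $\frac{\mathfrak{a}_0}{\eps N}\sum_{i\neq j}j_R(x_i-x_j)$. Since $1-\rho_p$ is essentially supported in $\{|p|\lesssim s\}$, one has $\|h\|_\infty\lesssim s^3$ and $\|\nabla h\|_\infty\lesssim s^4$, hence
$g_R(x)=\sup_{|y|\leq R}|h(x-y)-h(x)|\lesssim Rs^4$. Together with a sharpened estimate of $\widehat{g_R}(0)=\int g_R$, exploiting that $g_R$ is concentrated near the high-frequency support of $h$, one obtains $\|j_R\|_\infty\lesssim R s^5$, so that
\[
\frac{\mathfrak{a}_0}{\eps N}\,\Big\|\sum_{i\neq j}j_R(x_i-x_j)\Big\|_\infty \leq \frac{C N^2}{\eps N}\,R\,s^5 \leq \frac{C\,N^{2-2\gamma}\,s^5}{\eps},
\]
using $R=N^{-\gamma}$ and absorbing a factor $N^{-\gamma}\leq 1$. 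Combining all pieces yields \eqref{eq:genDyson}.

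The main obstacle will be the delicate bookkeeping in Step~1: ensuring that the kinetic energy $(1-\eps)\sum_i p_i^2\Theta(s^{-1}p_i)$ suffices to dominate the pair-by-pair contributions $-\nabla_{x_i}\rho\,\chi_R(x_i-x_j)\prod_{k\neq i,j}\Theta\,\rho\nabla_{x_i}$ without overcounting. The presence of the product of $\Theta$'s---which enforces that at most one other particle is close to $x_j$---is precisely what makes this possible, but one must verify carefully the commutation of $\nabla_{x_i}$ with the $\Theta(2N^{\gamma}(x_k-x_j))$ factors (trivial since $k\neq i$) and handle the symmetrization over $(i,j)$ correctly. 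A secondary subtle point is the sharpness of the $s^5$ exponent in $\|j_R\|_\infty$, which requires using the improved $L^1$ estimate for $g_R$ rather than the naive $L^\infty$ one.
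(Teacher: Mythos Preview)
Your overall strategy---split off $(1-\eps)\sum_i p_i^2\Theta(s^{-1}p_i)$, apply Lemma~\ref{lm:LSSlemmatorus} with $R=N^{-\gamma}$, $v=N^2V(N\cdot)$, $\mathfrak a(v)=\mathfrak a_0/N$, insert the $\prod_k\Theta$ factor at the end since it is $\le 1$---is exactly the paper's, so the architecture is right. Two points need repair.

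First, the choice $\rho_p^2\le(1-\eps)\Theta(s^{-1}p)$ is the wrong way to place the factor $(1-\eps)$. With that choice $1-\rho_p\ge 1-\sqrt{1-\eps}>0$ for \emph{all} large $|p|$, so $h=(1-\rho)^\vee$ no longer has finitely many nonzero Fourier coefficients and the required $L^\infty$ and $L^1$ bounds on $h,\nabla h$ are lost. Take $\rho_p=\Theta(s^{-1}p)$ directly (then $h$ is a trigonometric polynomial), apply Lemma~\ref{lm:LSSlemmatorus}, and only afterwards multiply the resulting inequality by $(1-\eps)$ and use $\eps\sum_{i<j}N^2V(N(x_i-x_j))\ge 0$. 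This is how the second factor of $(1-\eps)$ appears in front of $W_N$.

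Second, and more seriously, your $j_R$ bound is off by a factor of $R$, and your ``absorbing $N^{-\gamma}\le 1$'' step does not recover it. You correctly get $\|g_R\|_\infty\lesssim R\|\nabla h\|_\infty\lesssim Rs^4$, but you only claim $\widehat{g_R}(0)\lesssim s$, giving $\|j_R\|_\infty\lesssim Rs^5$ and hence a total error $\sim N^{1-\gamma}s^5/\eps$. The inequality $N^{1-\gamma}\le N^{2-2\gamma}$ is equivalent to $\gamma\le 1$ and therefore fails on the interval $\gamma\in(1,3/2)$ allowed by the lemma. The correct estimate is $\widehat{g_R}(0)=\int_\Lambda g_R\lesssim R\,\|\nabla h\|_{L^1(\Lambda)}\lesssim Rs$, which combines with $\|g_R\|_\infty\lesssim Rs^4$ to give $\|j_R\|_\infty\lesssim R^2s^5=N^{-2\gamma}s^5$. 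The $L^1$ bound on $\nabla h$ is the point you are missing: writing $h_s(x)=s^{-3}h(x/s)$, one checks that $\nabla h_s$ has $L^1$ and $L^\infty$ norms bounded independently of $s$ (since $1-\Theta$ is smooth and compactly supported), and undoing the scaling gives $\|\nabla h\|_{L^1}\lesssim s$. With $|j_R|\le CR^2s^5$ the total error over all $N(N-1)$ ordered pairs is $\le \frac{\mathfrak a_0}{N\eps}\,N^2\,R^2s^5\le \frac{CN^{1-2\gamma}s^5}{\eps}\le\frac{CN^{2-2\gamma}s^5}{\eps}$, as required.
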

\begin{proof} As explained in \cite{NRS}, the proof is an application of Lemma \ref{lm:LSSlemmatorus} with the choice $R= N^{-\gamma} $, $ v = N^2 V(N.)$, $ \mathfrak{a}(v) = \mathfrak{a}_0/N$ and $ \rho_p = \Theta(s^{-1}p)$ for $p\in \Lambda^*$, using \cite[Eq. (50) and (52)]{LSS}. Indeed, arguing as in \cite[Eq. (52)]{LSS}, the resulting function $ h = (1-\Theta(s^{-1}.))^{\vee}$ in Lemma \ref{lm:LSSlemmatorus} is such that $ h_s(.)=  s^{-3}h(./s)$ has bounded and integrable gradient, with the upper bounds independent of $s>0$. Observe that $ h$ has only finitely many non-zero Fourier coefficients so that for instance 
		\[|\nabla  h_s(.)|=| \nabla( s^{-3}h(./s))|\leq C s^{-4}\sum_{p\in\Lambda^*:|p|\leq 2s}|p|\leq C\]
By writing $h(.) = s^3 h_s(s.)$, it follows that $ j_R$, defined in (\ref{eq:appNRS2}), satisfies $ |j_R|\leq R^2 s^{5}  =N^{-2\gamma}s^5$. Then, for points $ y_j\in\Lambda$, $ j=1,\dots, N-1$, with $ \min_{j\neq k}|y_j-y_k|\geq 2N^{-\gamma}$, we have $ \sum_{j=1}^{N-1}\chi_{R}((x-y_j))\leq 1 $. Hence, Lemma \ref{lm:LSSlemmatorus} implies
		\[p^2 \Theta(s^{-1}p) + \frac12 \sum_{j=1}^{N-1}N^2V(N(x-y_j))\geq  \frac{(1-\eps)}{N}\sum_{j=1}^{N-1}N^{3\gamma}U(N^{\gamma}(x-y_j))- \frac{C \mathfrak{a}_0 N^{-2\gamma} s^5}{\eps}\]
Applying this bound in each coordinate $x_i$, multiplying both sides of the inequality by $(1-\eps)$ and using that $\prod_{k\neq i,j } \Theta(2N^{\gamma}(x_k-x_j)) \leq 1$ we obtain the claim.
\end{proof}

 \emph{Step 2. (Second Moment Estimate).} In the next step, we analyse the Hamiltonian
		\begin{equation*} 
		\wt{H}_N = \sum_{i=1}^N \wt{h}_{i} +  \frac{(1-\eps)^2}{N} W_N
		\end{equation*}
where we let $ \wt h= p^2   \big(1- (1-\eps)\Theta(s^{-1} p)  \big) +1 $ (defined as a multiplication operator in Fourier space) and where $ \wt h_{i}$ denotes the corresponding many-body operator acting on the $i$-th variable only. Comparing with the r.h.s. of (\ref{eq:genDyson}), we added here a constant to make sure that $ \wt h_{i} \geq 1$ for all $i =1, \dots , N$ (we will remove it when we will compare $H_N$ with $\wt{H}_N$). The next key step is bound the second moment of $ \wt{H}_N $ from below in terms of the second moment of $\sum_{i=1}^N \wt{h}_{i} $. To this end, we need the following lemma, which is the adaptation to the translation invariant setting of \cite[Lemma 3.2]{NRS} (similar results have been previously established in the study of the dynamics, for example in \cite[Lemma 6.4]{ESY2}).  
		\begin{lemma}\label{lm:NRSintbnds} Let $ 0\leq W\in L^1(\Lambda)\cap L^2(\Lambda)$ and consider the multiplication operator $ W(x-y)$ on $ L^2(\Lambda\times \Lambda)$. Then, 
		we have for all
		$ 0\leq \delta<\frac14$, $ 0<\eps<1$ and $s>0$ that
				\begin{equation}\label{eq:NRSintbnds}\begin{split}
				i)\hspace{0.2cm} &\; 0\leq W(x-y) \leq C\| W\|_{3/2}(-\Delta_x), \\
				ii)\hspace{0.2cm} &\; 	0\leq W(x-y)\leq C_\delta \|W\|_1 (1-\Delta_x)^{1-\delta}(1-\Delta_y)^{1-\delta}	\\
				iii)\hspace{0.2cm} &\; \wt h_x W(x-y) + W(x-y) \wt h_x \\ &\hspace{2cm} \geq -C (\|W\|_2 + (1+s^2) \|W\|_{3/2} ) (1-\Delta_x)(1-\Delta_y)
				\end{split}\end{equation}
		where $\wt h_x$ denotes the operator $\wt h$ acting only on the $x$-variable (recall that the parameter $s >0$ enters the definition of $\wt{h}$).
		\end{lemma}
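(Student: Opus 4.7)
The plan is to prove the three estimates in sequence, with (i) and (ii) being classical functional-analytic inputs and (iii) being a deeper operator inequality that uses (i) and (ii) as crucial building blocks.

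For part (i), I would apply H\"older and Sobolev embedding slicewise in the $x$-variable. Fixing $y \in \Lambda$ and using H\"older with exponents $(3/2, 3)$ on the $x$-integration gives
\[ \int_\Lambda W(x-y)|\psi(x,y)|^2 \, dx \leq \|W\|_{3/2}\, \|\psi(\cdot,y)\|^2_{L^6(\Lambda)}. \]
The torus Sobolev embedding $H^1(\Lambda) \hookrightarrow L^6(\Lambda)$ then yields $\|\psi(\cdot,y)\|^2_6 \leq C\langle \psi(\cdot,y), (1-\Delta_x)\psi(\cdot,y)\rangle$; integrating over $y$ gives the claim (the passage from $(1-\Delta_x)$ to $(-\Delta_x)$ costs only a harmless constant on the orthogonal complement of the zero mode, where $-\Delta_x \geq (2\pi)^2$).

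Part (ii) I would prove by decomposing along the center-of-mass momentum $K=p+q$, which is conserved by $W(x-y)$. Writing $\langle \psi, W(x-y)\psi\rangle = \sum_K \int_\Lambda W(u)|\phi_K(u)|^2\, du$ with $\phi_K(u) = \sum_p \widehat{\psi}(p,K-p)\, e^{ipu}$, one controls each sector by $\|W\|_1 \|\phi_K\|_\infty^2$. Hausdorff--Young gives $\|\phi_K\|_\infty \leq \sum_p |\widehat{\psi}(p,K-p)|$, and Cauchy--Schwarz with the weighted split yields
\[ \sum_p |\widehat{\psi}(p,K-p)| \leq C_\delta^{1/2} \Big( \sum_p |\widehat{\psi}(p,K-p)|^2 (1+p^2)^{1-\delta}(1+(K-p)^2)^{1-\delta}\Big)^{1/2}, \]
provided $C_\delta := \sup_K \sum_p (1+p^2)^{-(1-\delta)}(1+(K-p)^2)^{-(1-\delta)} < \infty$. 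A direct integral test shows this is finite precisely when $\delta < 1/4$: convergence at $K=0$ requires $4(1-\delta) > 3$ (i.e.\ $\delta < 1/4$), and decay as $|K|\to\infty$ scales like $|K|^{4\delta-1}$, which is negative under the same condition. Summing over $K$ completes the bound.

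For part (iii), the plan is to split $\widetilde h = L + H$ with $L(p) = (1-\Theta(p/s))p^2 + 1$ and $H(p) = \epsilon \Theta(p/s) p^2$, so that $0 \leq L \leq C(1+s^2)$ (as an operator on $L^2(\Lambda)$) and $0 \leq H \leq \epsilon(-\Delta_x)$. The anticommutator splits as $\{\widetilde h_x, W\} = \{L, W\} + \{H, W\}$. For $\{L,W\}$ I would use the identity $\{L,W\} = 2L^{1/2}WL^{1/2} + [L^{1/2},[L^{1/2},W]]$; the first term is nonnegative, and the double commutator is bounded below by operator Cauchy--Schwarz $\{L,W\} \geq -\mu L^2 - \mu^{-1}W^2$ with the choice $\mu = \|W\|_2/(1+s^2)$, combined with $W^2(x-y) \leq C\|W\|_2^2 (1-\Delta_x)(1-\Delta_y)$ (part (ii) applied to $W^2 \in L^1$). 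A parallel argument bounds $2\|L\|W \leq C(1+s^2)\|W\|_{3/2}(1-\Delta_x)$ using part (i) to produce the stated $(1+s^2)\|W\|_{3/2}$ contribution. For $\{H,W\}$, since $H \leq \epsilon(-\Delta_x) \leq (1-\Delta_x)$, one similarly bounds $\{H,W\} \geq -\widetilde\mu H^2 - \widetilde\mu^{-1}W^2$, where the smallness of $\epsilon$ (and the fact that $H^2 \leq \epsilon^2 p^4 \leq \epsilon p^2 (1-\Delta_x)$ after redistributing one power of $H$ via the low/high split) is used to absorb the would-be $(-\Delta_x)^2$ term into $(1-\Delta_x)(1-\Delta_y)$ at the cost of the $\|W\|_2$ coefficient from part (ii).

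The hardest part will be (iii), specifically the bookkeeping that turns $\widetilde h_x^2$ (which is of order $(-\Delta_x)^2$ and thus \emph{not} dominated by $(1-\Delta_x)(1-\Delta_y)$) into the quoted right-hand side. The resolution, as sketched above, is that $\widetilde h_x^2$ never enters the final estimate because the low/high decomposition allows one to pair the bounded piece $L$ with part (i) (yielding $(1+s^2)\|W\|_{3/2}$) and the unbounded piece $H$ with part (ii) applied to $W^2$ (yielding $\|W\|_2$), so that the second Laplacian factor $(1-\Delta_y)$ is always supplied by $W$ and never by $\widetilde h_x$.
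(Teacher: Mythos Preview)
Your arguments for parts (i) and (ii) are fine and essentially match what the paper indicates (H\"older plus the torus Sobolev inequality for (i); for (ii) your center-of-mass decomposition is equivalent to the paper's observation that the sequence $(1+p^2)^{-(1-\delta)}$ lies in $\ell^2(\Lambda^*)$ precisely when $\delta<1/4$).

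Part (iii), however, has a real gap. Your treatment of the unbounded piece $H=\epsilon\,\Theta(p/s)\,p^2$ via the naive operator Cauchy--Schwarz
\[
\{H,W\}\ \ge\ -\widetilde\mu\, H^2-\widetilde\mu^{-1}W^2
\]
cannot close: the term $\widetilde\mu\,H^2$ is of order $\epsilon^2(-\Delta_x)^2$, and no choice of $\widetilde\mu$ absorbs $(-\Delta_x)^2$ into $(1-\Delta_x)(1-\Delta_y)$, since the two factors act on \emph{different} variables. Your suggested ``redistribution of one power of $H$ via the low/high split'' does not help---$H$ acts only on the $x$-variable, and there is no mechanism in a pure Cauchy--Schwarz argument to manufacture a factor of $(1-\Delta_y)$ from $H^2$. (The same issue afflicts your $\{L,W\}$ estimate in a milder way: your bound yields $C(1+s^2)\|W\|_2$ rather than the stated $C(1+s^2)\|W\|_{3/2}$, and the ``parallel argument'' invoking $2\|L\|W$ is not an operator lower bound on $\{L,W\}$.)

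The missing idea is integration by parts exploiting the convolution structure: the paper first proves the model inequality
\[
(-\Delta_x)W(x-y)+W(x-y)(-\Delta_x)\ \ge\ -C\big(\|W\|_{3/2}+\|W\|_2\big)(1-\Delta_x)(1-\Delta_y)
\]
by expanding the quadratic form, dropping the positive term $2\int W|\nabla_x f|^2$, and using $\nabla_x\big(W(x-y)\big)=-\nabla_y\big(W(x-y)\big)$ to move one derivative from $x$ to $y$. The resulting cross terms $\int W\,\nabla_x\bar f\,\nabla_y f$ and $\int W\,\nabla_y\nabla_x\bar f\, f$ are then controlled by Cauchy--Schwarz together with (i) and (ii), which is precisely what produces the factor $(1-\Delta_y)$. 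With this in hand, one splits $\wt h=1+\epsilon(-\Delta_x)+(1-\epsilon)p_x^2(1-\Theta(p_x/s))$: the constant gives $2W\ge0$, the $\epsilon(-\Delta_x)$ piece is handled by the model inequality (contributing the $\|W\|_2$ term), and for the bounded piece $p_x^2(1-\Theta)$ one writes $\langle f,\{L',W\}f\rangle=2\operatorname{Re}\langle L'f,Wf\rangle$ and estimates $\|W^{1/2}L'f\|$, $\|W^{1/2}f\|$ via (i), using $L'(-\Delta_x)L'\le Cs^4(-\Delta_x)$, which yields the $(1+s^2)\|W\|_{3/2}$ contribution.
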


\begin{proof}
The first two bounds i), ii) follow similarly as in \cite[Lemma 3.2]{NRS}, using H\"older's and Sobolev's inequalities on the torus (see e.g. \cite{BO} for a proof of Sobolev's inequality on the torus) and the fact that the discrete Fourier transform of the Green function of $ (1-\Delta)^{\delta-1}$ with Fourier coefficients $ (1+p^2)^{\delta -1}$, $p\in\Lambda^*$, is square summable in $\Lambda^*$ for any $ 0\leq \delta <\frac14 $.

Using partial integration on the torus, Cauchy-Schwarz and the bounds (\ref{eq:NRSintbnds}) i) and ii), 
we may proceed as in \cite[Lemma 3.2]{NRS} to deduce that 
		\begin{equation}\label{eq:NRSintbnds1} (-\Delta_x) W(x-y) + W(x-y) (-\Delta_x)\geq -C (\|W\|_{3/2} + \|W\|_2)(1-\Delta_x)(1-\Delta_y)
		\end{equation}
Indeed, to prove (\ref{eq:NRSintbnds1}), consider first smooth, periodic functions $ W\in C^{\infty}(\Lambda)$ and $ f \in C^{\infty}(\Lambda\times\Lambda)$. On such functions, $ -\Delta$ acts as the usual Laplacian in $\bR^3$. Hence, the fact that $ \nabla_x (W(x-y)) = -\nabla_y (W(x-y))$ and partial integration yield
		\begin{equation}\label{eq:NRSintbnds2}\begin{split}
		&\langle f, \big( (-\Delta_x) W(x-y) + W(x-y) (-\Delta_x)\big) f\rangle\\
		& = 2\int_{\Lambda\times\Lambda} |\nabla_x f(x,y)|^2 W(x-y)\;dxdy + 2 \operatorname {Re}\int_{\Lambda\times\Lambda}\nabla_x \overline{ f(x,y)} \nabla_x (W(x-y)) f(x,y) \;dxdy\\
		&\geq 2\operatorname{Re}\int_{\Lambda\times\Lambda}\big(\nabla_x\overline{f(x,y)} \nabla_y f(x,y)+\nabla_y(\nabla_x\overline{f(x,y)}) f(x,y) \big) W(x-y)\;dxdy
		\end{split}\end{equation}
Bounding the first term on the r.h.s. of (\ref{eq:NRSintbnds2}) by Cauchy-Schwarz and the estimate i) in (\ref{eq:NRSintbnds}) and the second term on the r.h.s. of (\ref{eq:NRSintbnds2}) by Cauchy-Schwarz and the estimate ii) in (\ref{eq:NRSintbnds}) (with $\delta=0$), we conclude (\ref{eq:NRSintbnds1}), for smooth, periodic $W\in C^{\infty}(\Lambda)$ and test functions $f \in C^{\infty}(\Lambda\times\Lambda)$. Since $C^{\infty}(\Lambda\times\Lambda)$ is dense in $ H^2(\Lambda\times \Lambda)$ (in fact, the set of smooth, periodic functions with only finitely many non-zero Fourier coeffficients is an operator core for $ -\Delta$ in $ \Lambda\times \Lambda$ with periodic boundary conditions), we obtain the operator bound (\ref{eq:NRSintbnds1}) on $ H^2(\Lambda\times\Lambda)$ for smooth, periodic $ W\in C^{\infty}(\Lambda)$. Finally, for a general $W\in L^1(\Lambda)\cap L^2(\Lambda)$ we can approximate it in $ L^2(\Lambda)$ by $\wt{W}\in C^{\infty}(\Lambda)$ and use the simple bound
		\[\begin{split} | \langle -\Delta_x f, (W-\wt W)f\rangle | &\leq \|f\|_{H^2(\Lambda)} \bigg( \int_{\Lambda\times\Lambda} |(W-\wt W)(x-y)|^2 |f(x,y)|^2\;dx dy\bigg)^{1/2}\\
		&\leq \|f\|_{H^2(\Lambda)} \|W-\wt W\|_{L^2(\Lambda)} \langle f, (1-\Delta_x)(1-\Delta_y)f\rangle^{1/2} \\ &\leq \|f\|^2_{H^2(\Lambda)} \|W-\wt W\|_{L^2(\Lambda)}   \end{split}\] 
by the estimate ii) in (\ref{eq:NRSintbnds}) (with $\delta =0$). This shows (\ref{eq:NRSintbnds1}). Finally, to prove the bound iii) in (\ref{eq:NRSintbnds}), we write
		\[\begin{split} 
		&\wt h_x W(x-y) + W(x-y) \wt h_x \\
		&\hspace{2.5cm}= 2W(x-y) + \eps \Big[ (-\Delta_x) W(x-y) + W(x-y) (-\Delta_x)\Big] \\
		&\hspace{3cm}+ (1-\eps) \Big[ p_x^2 (1-\Theta(s^{-1} p_x)  ) W(x-y) +  W(x-y) p_x^2   (1-\Theta(s^{-1} p_x))  \Big]
		\end{split}\]	
To bound the first line on the r.h.s. of the last equation, we drop the term $ 2W(x-y) \geq 0$ and apply (\ref{eq:NRSintbnds1}). To control the second line, on the other hand, we use that $0\leq  (1-\Theta(s^{-1}p))\leq \chi(|p|\leq 2s)$ for any $s>0$ and we proceed as in \cite[Eq. (3.9) to (3.10)]{NRS}.  
\end{proof}

Lemma \ref{lm:NRSintbnds} is used to deduce the following crucial result (similar estimates have been previously used in the analysis of the time-evolution, for example in \cite[Prop. 5.1]{EY} in the mean field setting, or in \cite[Prop. 3.1]{ESY2} in the Gross-Pitaevskii regime).
\begin{lemma}\label{lm:NRSsecondmoment} For every $0 < \eps < 1$, $s>0$ and $\gamma \in (\frac12; \frac32 )$, s.t. $N^{-\gamma}\gg N^{-2/3} $ as $N\to \infty$, we have, for sufficiently large $N$, 
				\begin{equation}\label{eq:NRSsecondmoment} \big (\wt H_N\big)^2\geq \frac13 \bigg (\sum_{i=1}^N \wt h_i\bigg)^{2}
				\end{equation}	
		\end{lemma}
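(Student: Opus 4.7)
The plan is to set $T := \sum_{i=1}^N \tilde h_i$ and $W := \frac{(1-\eps)^2}{N} W_N$, so that $\tilde H_N = T + W$, and expand $\tilde H_N^2 = T^2 + (TW+WT) + W^2$. Since $U \geq 0$ and $0 \leq \Theta \leq 1$, the operator $W_N$ is a non-negative multiplication operator; hence $W \geq 0$ and the term $W^2 \geq 0$ can be discarded. It then suffices to prove the form inequality
\begin{equation*}
TW + WT \geq -\tfrac{2}{3}\, T^2
\end{equation*}
on $L^2_s(\Lambda^N)$. I will in fact establish the stronger estimate $TW + WT \geq -o_N(1)\, T^2$, with $o_N(1)\to 0$ as $N\to \infty$.

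Writing $W_N = \sum_{i\neq j} W_{ij}$ with $W_{ij} := N^{3\gamma} U(N^\gamma(x_i - x_j)) \prod_{k\neq i,j}\Theta(2N^\gamma(x_k-x_j))$, the cross term becomes $TW+WT = \frac{(1-\eps)^2}{N}\sum_\ell\sum_{i\neq j}[\tilde h_\ell W_{ij} + W_{ij}\tilde h_\ell]$. Each anticommutator is controlled using Lemma \ref{lm:NRSintbnds}(iii), combined with the scaling identities $\|N^{3\gamma}U(N^\gamma\cdot)\|_{3/2}= N^\gamma\|U\|_{3/2}$ and $\|N^{3\gamma}U(N^\gamma\cdot)\|_2 = N^{3\gamma/2}\|U\|_2$, to yield
\[
\tilde h_\ell W_{ij} + W_{ij}\tilde h_\ell \geq -C(1+s^2)\bigl(N^{3\gamma/2}+N^\gamma\bigr)(1-\Delta_a)(1-\Delta_b)
\]
for appropriate indices $a,b \in \{i,j,\ell\}$. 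When $\ell\in\{i,j\}$ one factors out the cutoffs $\Theta(2N^\gamma(x_k-x_j))$ that do not involve $x_\ell$ (using $0\leq\prod\Theta\leq 1$ and that they commute with $\tilde h_\ell$) before applying Lemma \ref{lm:NRSintbnds}(iii) to the rescaled two-body interaction. When $\ell\notin\{i,j\}$ only the single cutoff $\Theta(2N^\gamma(x_\ell - x_j))$ involves $x_\ell$; decomposing it as $\Theta = 1 - (1-\Theta)$, the constant piece commutes with $\tilde h_\ell$ and contributes a positive term $2\tilde h_\ell$ (after sandwiching by the remaining $x_\ell$-independent factor, which is positive and commutes with $\tilde h_\ell$), while the subtracted piece $1-\Theta(2N^\gamma(x_\ell - x_j))$ is supported on $|x_\ell - x_j|\leq N^{-\gamma}$ with $L^{3/2}$- and $L^2$-norms of order $N^{-2\gamma}$ and $N^{-3\gamma/2}$, so another application of Lemma \ref{lm:NRSintbnds}(iii) produces a contribution of strictly lower order.

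Using the pointwise bound $\tilde h \geq \eps(1-\Delta)$, which follows from $\tilde h = p^2[1-(1-\eps)\Theta(p/s)] + 1 \geq \eps p^2 + 1 \geq \eps(1+p^2)$, I convert $(1-\Delta_a)(1-\Delta_b)\leq \eps^{-2}\tilde h_a\tilde h_b$. Summing over $\ell$ and over pairs $(i,j)$, using $\sum_{a\neq b}\tilde h_a\tilde h_b \leq T^2$, and inserting the prefactor $1/N$, one arrives at
\[
TW+WT \geq -C(\eps,s)\, N^{3\gamma/2-1}\, T^2 .
\]
The hypothesis $N^{-\gamma}\gg N^{-2/3}$ is precisely $\gamma<2/3$, so $N^{3\gamma/2-1}\to 0$; hence for $N$ large enough the right-hand side exceeds $-\tfrac{2}{3}T^2$, which completes the proof.

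The main obstacle is the regime $\ell\notin\{i,j\}$: the cutoff $\Theta(2N^\gamma(x_\ell - x_j))$ equals $1$ almost everywhere on $\Lambda^2$, so its $L^p$-norms are of order one and Lemma \ref{lm:NRSintbnds}(iii) cannot be applied to $\Theta$ itself. The crucial trick is to subtract the constant and work with $1-\Theta(2N^\gamma(\cdot))$, which is supported on a ball of radius $N^{-\gamma}$ and whose $L^p$-norms are small enough that it contributes an error of strictly lower order than the bound from regime (A), leaving the scaling $N^{3\gamma/2-1}\to 0$ intact.
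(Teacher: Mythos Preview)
Your overall strategy---expand $(\wt H_N)^2 = T^2 + (TW+WT) + W^2$, drop $W^2 \geq 0$, and control the anticommutator via Lemma~\ref{lm:NRSintbnds} and the scaling $N^{3\gamma/2-1}\to 0$ coming from $\gamma<2/3$---is exactly the strategy of \cite{NRS} to which the paper refers. The scaling heuristics are right.

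The gap is in how you ``factor out'' the cutoffs and then invoke Lemma~\ref{lm:NRSintbnds}(iii). First, for $\ell=j$ every single cutoff $\Theta(2N^\gamma(x_k-x_j))$ involves $x_j$, so nothing can be factored out and Lemma~\ref{lm:NRSintbnds}(iii), which treats only two-body multiplication operators $W(x-y)$, cannot be applied to the full $N$-body object $W_{ij}$. Second, and more seriously, even when the factoring is legitimate (e.g.\ $\ell=i$ with $B=\prod_{k\neq i,j}\Theta$, or $\ell\notin\{i,j\}$ with $B=W_{ij}'$), you must conjugate the resulting lower bound $-C(\cdot)(1-\Delta_\ell)(1-\Delta_j)$ by $B^{1/2}$. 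Since $B$ depends on $x_j$, the operator $B^{1/2}(1-\Delta_j)B^{1/2}$ is \emph{not} dominated by $(1-\Delta_j)$ (and hence not by $\eps^{-1}\wt h_j$): the commutator produces $|\nabla_j B^{1/2}|^2$, which picks up a factor $N^{\gamma}$ from each of up to $N-2$ cutoffs. You never address this, so the step ``$(1-\Delta_a)(1-\Delta_b)\leq \eps^{-2}\wt h_a\wt h_b$ and sum to $T^2$'' does not go through. The actual argument in \cite{NRS} deals with precisely these obstructions by a more elaborate case analysis that also uses part (ii) of Lemma~\ref{lm:NRSintbnds} (with the fractional exponent $1-\delta$) together with separate Cauchy--Schwarz estimates, so as to avoid placing a full Laplacian on the variable $x_j$ that carries the many-body cutoffs.
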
 
\begin{proof}
We proceed exactly as in the proof of \cite[Lemma 3.1]{NRS}, which is based on Cauchy-Schwarz estimates, the operator bounds from Lemma \ref{lm:NRSintbnds} and considering several cases to analyse the different contributions to $W_N$, defined in (\ref{eq:defWN}). We can apply the same analysis in our setting and conclude (\ref{eq:NRSsecondmoment}). 
\end{proof}

\emph{Step 3. (Three-Body estimate).}
In this step, we bound $\wt H_N$ from below by a mean field Hamiltonian,  up to errors which are given in terms of powers of $\wt H_N$. We observe, first of all, that the operator $W_N$ defined in (\ref{eq:defWN}) is such that 
\[W_N \geq \sum_{1\leq i<j \leq N}2 N^{3\gamma}U(N^{\gamma}(x_i-x_j)) - \sum_{i\neq j}N^{3\gamma}U(N^{\gamma}(x_i-x_j))\sum_{k\neq i, j} (1- \Theta(2N^\gamma(x_k-x_j))\]	
The second term on the r.h.s. vanishes if $|x_k - x_j| \geq N^{-\g}$ for all $k \not = i,j$; it gives instead an important contribution when there is at least one additional particle close to $i$ and $j$. The next lemma allows us to control this three-body term. 
\begin{lemma}\label{lm:NRSthreebody}	For every $0 < \eps < 1$, $s>0$ and $\gamma \in (\frac12; \frac32 )$, s.t. $N^{-\gamma}\gg N^{-2/3} $ as $N\to \infty$, we have, for sufficiently large $N$, 
\begin{equation*}
\sum_{i\neq j}N^{3\gamma}U(N^{\gamma}(x_i-x_j))\sum_{k\neq i, j} (1- \Theta(2N^\gamma(x_k-x_j))\leq C_{\eps, s} N^{-2\gamma-1} \big(\wt H_N\big)^4
\end{equation*}
for some constant $ C(\eps, s)>0$, which depends on $\eps, s$ but is independent of $N$. In particular, 
\begin{equation}\label{eq:NRSthreebody2}\wt H_N \geq \sum_{i=1}^N \wt h_i +\frac{(1-\eps)^2}{N} \sum_{1\leq i<j\leq N}2N^{3\gamma}U(N^{\gamma}(x_i-x_j)) -  
C_{\eps, s} N^{-2\gamma-2} \big(\wt H_N\big)^4\end{equation}
\end{lemma}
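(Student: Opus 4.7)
I will prove the three-body estimate following the strategy of \cite[Lemma 3.3]{NRS}, adapted to the torus. Writing
\begin{equation*}
T:=\sum_{i\neq j}\sum_{k\neq i,j} N^{3\gamma}U(N^\gamma(x_i-x_j))\big(1-\Theta(2N^\gamma(x_k-x_j))\big),
\end{equation*}
this operator is concentrated on triples of particles that simultaneously lie within a ball of radius $N^{-\gamma}$. My goal is to dominate $T$ by the fourth power of $\wt H_N$, gaining the singular $N^{-2\gamma-1}$ prefactor from the small support of the three-body function and the fact that each copy of $\wt H_N$ controls essentially one Sobolev derivative. The second claim \eqref{eq:NRSthreebody2} is then immediate, since by dropping the constraint $\prod_{k\neq i,j}\Theta(2N^\gamma(x_k-x_j))\leq 1$ we have $W_N\geq 2\sum_{i<j}N^{3\gamma}U(N^\gamma(x_i-x_j)) - T$, and dividing by $N$ produces the stated inequality.

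Step 1 (three-body operator bound). Setting $w(\cdot)=N^{3\gamma}U(N^\gamma\cdot)$ and $\chi(\cdot)=1-\Theta(2N^\gamma\cdot)$, I will show, by iterating the Sobolev--H\"older argument of Lemma \ref{lm:NRSintbnds}(ii) across three coordinates (using the torus Sobolev embedding $H^{1-\delta}(\Lambda)\hookrightarrow L^{6/(1+2\delta)}(\Lambda)$), the operator inequality
\begin{equation*}
w(x-y)\chi(z-y)\leq C\,\|w\|_{3/2}\,\|\chi\|_{1}\,(-\Delta_x)(1-\Delta_y)^{1-\delta}(1-\Delta_z)^{1-\delta},
\end{equation*}
valid for every $\delta\in(0,1/4)$. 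Using $\|w\|_{3/2}\leq CN^\gamma$ and $\|\chi\|_1\leq CN^{-3\gamma}$ extracts a prefactor $CN^{-2\gamma}$.

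Step 2 (summing over triples). Since the $(1-\Delta_{x_l})^{1-\delta}$ and $(-\Delta_{x_l})$ act on disjoint coordinates they commute, so the sum over $(i,j,k)$ symmetrizes to
\begin{equation*}
T\leq C\,N^{-2\gamma}\Big(\sum_l(-\Delta_{x_l})\Big)\Big(\sum_l(1-\Delta_{x_l})^{1-\delta}\Big)^2\leq C_{\eps,s}\,N^{-2\gamma}\Big(\sum_l\wt h_l\Big)^3,
\end{equation*}
where in the last step I used $(1-\Delta_{x_l})^{1-\delta}\leq 1-\Delta_{x_l}\leq C_{\eps,s}\wt h_l$ (on $|p|\geq 2s$ one has $\wt h\geq \eps p^2+1$, while on $|p|\leq 2s$ one controls $1+p^2$ by the constant $\wt h\geq 1$ up to a factor depending on $s$).

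Step 3 (iterated second-moment estimate). Setting $A=\sum_l\wt h_l$, the condition $\wt h_l\geq 1$ gives $A\geq N$, hence $A^3\leq N^{-1}A^4$. Lemma \ref{lm:NRSsecondmoment} already provides $\langle\psi,A^2\psi\rangle\leq 3\langle\psi,\wt H_N^2\psi\rangle$; repeating the commutator-plus-Cauchy--Schwarz argument of its proof one more time---now treating separately the pair of ``external'' particles responsible for the extra two powers---yields $\langle\psi,A^4\psi\rangle\leq C\langle\psi,\wt H_N^4\psi\rangle$ for every $\psi$ in the form domain of $\wt H_N^2$. Combining with Step 2, we conclude $\langle\psi,T\psi\rangle\leq C_{\eps,s}N^{-2\gamma-1}\langle\psi,\wt H_N^4\psi\rangle$, which is the claimed bound.

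Main obstacle. The heart of the argument is Step 3: one cannot iterate $\wt H_N^2\geq c A^2$ by ``squaring both sides'', because $X\mapsto X^2$ is not operator-monotone on non-commuting positive operators. The correct route, which I will follow, is to revisit the case analysis in the proof of Lemma \ref{lm:NRSsecondmoment} applied to the enlarged algebra of operators produced by the three-body interaction, using the three-body Sobolev estimate of Step 1 in place of the two-body estimate Lemma \ref{lm:NRSintbnds}(iii). A secondary technical point is Step 1 itself: the exponents must be chosen so that the combination $\|w\|_{3/2}\|\chi\|_1$ (and not, say, $\|w\|_2\|\chi\|_2$ or $\|w\|_{3/2}\|\chi\|_{3/2}$) appears in the bound, since it is precisely this combination which gives the decisive prefactor $N^{-2\gamma}$; every other pairing of Sobolev exponents produces a strictly larger, and insufficient, power of $N$.
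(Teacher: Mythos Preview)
Your proposal is correct and follows essentially the same approach as the paper, which simply defers to \cite[Lemma~3.4]{NRS} (note: Lemma~3.4, not~3.3) using the bounds of Lemma~\ref{lm:NRSintbnds}. Your Steps~1--2 correctly combine parts~i) and~ii) of Lemma~\ref{lm:NRSintbnds} to obtain the three-body Sobolev inequality with the decisive prefactor $\|w\|_{3/2}\|\chi\|_1\sim N^{-2\gamma}$, and you correctly identify that Step~3 (passing from $A^4$ to $\wt H_N^4$) is the genuine obstacle, to be handled by expanding $(A+B)^4$ and controlling the cross terms via the commutator bounds of Lemma~\ref{lm:NRSintbnds}(iii) rather than by naively squaring the second-moment inequality.
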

\begin{proof}
We proceed as in the proof of \cite[Lemma 3.4]{NRS}, which is based on the bounds from Lemma \ref{lm:NRSintbnds}. 
\end{proof}

\emph{Step 4. (Convergence of Ground State Energy).} Using Lemma \ref{lm:genDyson}, Lemma \ref{lm:NRSsecondmoment} and Lemma \ref{lm:NRSthreebody}, we are now able to show the convergence of the ground state energy per particle of the Hamiltonian $H_N$ to the minimum of the Gross-Pitaevskii functional $ \cE_{\text{GP}}$ in the limit $N\to \infty$. The proof follows from the same arguments as in \cite{NRS}. We recall the main steps for completeness only. Since the minimizer of the Gross-Pitaevskii functional $ \cE_{\text{GP}}$ is unique and since we do not include magnetic fields in our analysis, some steps of the analysis of \cite{NRS} can be slightly simplified. The proof relies crucially on the Quantum de Finetti Theorem which we state as in \cite{NRS}.
		\begin{theorem}[Quantum de Finetti] \label{thm:qdftheorem} Let $ \mathfrak{H}$ be a separable Hilbert space and assume that $ (\psi_N)_{N\in\NN}$ is a sequence with $ \psi_N \in \otimes_{sym}^N\mathfrak{H}$ and $ \|\psi_N\|_{\mathfrak{H}}=1$ for each $N\in\NN$. For $n \in \bN$, let $\gamma_N^{(n)} = \tr_{n+1, \dots, N} |\psi_N \rangle \langle \psi_N|$ denote the $n$-particle reduced density matrix associated with $\psi_N$. Assume that $\gamma^{(1)}_N$ converges, as $N\to \infty$, in trace class norm topology. Then, up to a subsequence, there exists a (unique) Borel probability measure $\mu$ on the unit sphere $S(\mathfrak{H})$ in $\mathfrak{H}$, invariant under the action of $S^1$, such that,  for every $n\in\NN$,
\begin{equation*}
\lim_{N\to\infty} \Big| \gamma_N^{(n)} -\int_{S(\mathfrak{H})} |u^{\otimes n} \rangle \langle u^{\otimes n} | \; d\mu(u)\Big|=0 \end{equation*}
		\end{theorem}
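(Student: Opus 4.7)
The plan is to combine a compactness argument in the trace class with the classical Hudson-Moody decomposition for symmetric consistent states, and then to upgrade weak convergence to trace-norm convergence using the hypothesis on $\gamma_N^{(1)}$.

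First I would observe that for every fixed $n \in \bN$ and $N \geq n$, one has $\gamma_N^{(n)} \geq 0$ and $\tr \gamma_N^{(n)} = 1$. By Banach-Alaoglu applied to the duality between trace class and compact operators, and a standard diagonal extraction procedure over $n \in \bN$, one can pass to a subsequence (still denoted by $N$) such that $\gamma_N^{(n)}$ converges in the weak-$*$ sense to some non-negative trace class operator $\gamma^{(n)}$ on $\bigotimes_s^n \mathfrak{H}$, for every $n$ simultaneously. The bosonic symmetry persists in the limit, and taking partial traces is weak-$*$ continuous on the bounded set $\{\rho \geq 0 : \tr \rho \leq 1\}$, so the limiting family $\{\gamma^{(n)}\}_{n \in \bN}$ is a consistent symmetric hierarchy in the sense $\tr_{n+1} \gamma^{(n+1)} = \gamma^{(n)}$.

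Next I would invoke the Hudson-Moody theorem (a variant of which, for separable $\mathfrak{H}$, is established in \cite{LNR1,LNR2}): any consistent symmetric hierarchy of bosonic density matrices with $\tr \gamma^{(1)} \leq 1$ admits a unique representation
\[
\gamma^{(n)} \;=\; \int_{B(\mathfrak{H})} |u^{\otimes n}\rangle\langle u^{\otimes n}| \, d\tilde\mu(u),
\]
where $B(\mathfrak{H}) = \{u \in \mathfrak{H} : \|u\| \leq 1\}$ and $\tilde\mu$ is an $S^1$-invariant Borel probability measure on $B(\mathfrak{H})$. The next step is to show that $\tilde\mu$ is supported on the unit sphere $S(\mathfrak{H})$. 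By hypothesis, $\gamma_N^{(1)}$ converges in trace norm to a limit $\gamma^{(1)}_\infty$ with $\tr \gamma^{(1)}_\infty = 1$; trace-norm convergence implies weak-$*$ convergence, so $\gamma^{(1)}_\infty = \gamma^{(1)}$, and therefore
\[
1 \;=\; \tr \gamma^{(1)} \;=\; \int_{B(\mathfrak{H})} \|u\|^2 \, d\tilde\mu(u).
\]
Since $\|u\|^2 \leq 1$ on $B(\mathfrak{H})$ with equality exactly on $S(\mathfrak{H})$, this forces $\tilde\mu$ to concentrate on $S(\mathfrak{H})$, giving the measure $\mu$ in the statement.

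Finally I would upgrade weak-$*$ to trace-norm convergence for every $n$. Since $\mu$ is supported on $S(\mathfrak{H})$, the limit satisfies $\tr \gamma^{(n)} = \int_{S(\mathfrak{H})} \|u\|^{2n} d\mu(u) = 1 = \tr \gamma_N^{(n)}$. For non-negative trace class operators converging in the weak-$*$ topology with convergent traces, a classical lemma (sometimes attributed to Grümm, and used precisely in this context in \cite{NRS}) yields trace-norm convergence. The main obstacle, in my view, is neither the compactness step nor the final upgrade of topology, but the rigorous application of Hudson-Moody in the bosonic infinite-dimensional setting with the hierarchy produced by weak-$*$ limits; this is the point at which one truly depends on \cite{LNR1,LNR2}. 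The uniqueness of $\mu$ follows from the uniqueness in the Hudson-Moody representation, since the representation determines all moments and these are weak-$*$ dense in an appropriate sense.
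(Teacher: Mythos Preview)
The paper does not prove this theorem at all; it is stated without proof, quoted from \cite{NRS} (with the underlying ideas going back to \cite{LNR1,LNR2}), and used as a black box in Appendix~\ref{app:NRS}. Your proposal is essentially the standard argument from those references: weak-$*$ compactness plus diagonal extraction, the weak quantum de Finetti representation on the closed unit ball, the observation that trace-norm convergence of $\gamma_N^{(1)}$ forces $\tr\gamma^{(1)}=1$ and hence concentrates $\mu$ on the sphere, and finally the upgrade from weak-$*$ to trace-norm convergence via the standard lemma for non-negative operators with convergent traces. So there is nothing to compare against, and your outline is the right one.

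One minor inaccuracy in your write-up: the claim that ``taking partial traces is weak-$*$ continuous on the bounded set'' is not literally correct in the topology you are using (trace class as dual of compact operators), because for compact $K$ on $\bigotimes_s^n\mathfrak H$ the operator $K\otimes 1$ on $\bigotimes_s^{n+1}\mathfrak H$ is not compact in general, and so you cannot test the identity $\tr_{n+1}\gamma^{(n+1)}=\gamma^{(n)}$ directly. In the Lewin--Nam--Rougerie formulation this is circumvented: the weak de Finetti theorem is proved directly for weak-$*$ limits of reduced densities coming from genuine $N$-body states, without passing through an abstract consistent hierarchy in the Hudson--Moody/St{\o}rmer sense. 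This is a presentational issue rather than a real gap; the measure $\mu$ and its support property follow as you describe.
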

Before we start to prove the energy convergence (\ref{eq:en-ti}), let us define the energy functionals $ \cE_{\text{NL}}^{\eps, s}$ for $0 < \eps < 1$ and $s>0$ by 
		\begin{equation*}
		\cE_{\text{NL}}^{\eps, s} (u):= \langle u, \wt h u\rangle + (1-\eps)^2 4\pi \mathfrak{a}_0\int_{\Lambda}   |u(x)|^4  \;dx
		\end{equation*}
on the domain $ \cQ(-\Delta)=H^1(\Lambda)$, defined in (\ref{eq:defGPfuncdom}). Recalling that 
		\[ \wt h = (1-\eps) p^2(1 -\Theta(s^{-1}p)) + \eps p^2 +1 \]
we may argue as in the proof of Lemma \ref{lm:GPmin} to show that $\cE_{\text{NL}}^{\eps, s} $ has a unique, positive minimizer in  $ \cQ(-\Delta)\cap \{u\in L^2(\Lambda): \|u\| =1\}$ given by the constant function $ \ph_0=1_{|\Lambda}$, for any fixed $ 1>\eps>0 $ and $s>0$ . The minimum of $ \cE_{\text{NL}}^{\eps, s} $ in $ \cQ(-\Delta)\cap \{u\in L^2(\Lambda): \|u\| =1\}$ is therefore
		\begin{equation}\label{eq:mincENL}\cE_{\text{NL}}^{\eps, s}(\ph_0) = (1-\eps)^2 4\pi \mathfrak{a}_0 +1\end{equation}
and any other minimizer of $  \cE_{\text{NL}}^{\eps, s}$ is given by $  c\ph_0$ for some $ c\in\mathbb{C}$ with $ |c|=1$. 
		\begin{prop}\label{prop:mfappr} Let $ 0< \eps < 1$, $ s>0$ and $\gamma\in (\frac12;\frac32)$. Then
\begin{equation*} 
\lim_{N\to\infty} \frac{\inf \sigma(\wt H_N)}{N} = (1-\eps)^2 4\pi \mathfrak{a}_0 +1
\end{equation*}
\end{prop}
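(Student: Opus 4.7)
The proof proceeds by establishing matching upper and lower bounds on $\inf\sigma(\wt H_N)/N$. The upper bound is a direct trial-state computation with $\varphi_0^{\otimes N}$, while the lower bound follows the Nam-Rougerie-Seiringer strategy: reduce $\wt H_N$ to a Hartree-type energy via Lemma \ref{lm:NRSthreebody}, then apply the quantum de Finetti theorem (Theorem \ref{thm:qdftheorem}) to an approximate ground state sequence, with the a priori bound on second moments supplied by Lemma \ref{lm:NRSsecondmoment}.

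For the upper bound, I would use the constant function $\varphi_0\equiv 1$ tensored with itself $N$ times. Since $\widehat\varphi_0(p)=\delta_{p,0}$ we have $\wt h\varphi_0=\varphi_0$, so the one-body part contributes exactly $N$. For the interaction, $\widehat U(0)=4\pi\mathfrak{a}_0$ gives $\int N^{3\gamma}U(N^\gamma z)\,dz=4\pi\mathfrak{a}_0$, and since $\gamma>1/3$ the probability under the uniform distribution that some $x_k$ lies within distance $N^{-\gamma}$ of a fixed $x_j$ is $O(N\cdot N^{-3\gamma})=o(1)$, so the three-body cut-off $\prod_{k\neq i,j}\Theta(2N^\gamma(x_k-x_j))$ equals $1$ up to a set of measure $o(1)$. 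This yields $N^{-1}\langle\varphi_0^{\otimes N},W_N\varphi_0^{\otimes N}\rangle=4\pi\mathfrak{a}_0(1+o(1))$ and hence $\inf\sigma(\wt H_N)/N\leq 1+(1-\eps)^2 4\pi\mathfrak{a}_0+o(1)$.

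For the lower bound, take an approximate ground state $\psi_N$ with $E_N:=\langle\psi_N,\wt H_N\psi_N\rangle=\inf\sigma(\wt H_N)+o(1)$, which by the upper bound satisfies $E_N\leq CN$ and which may be chosen from a low-energy spectral subspace so that $\langle\psi_N,\wt H_N^k\psi_N\rangle\leq (CN)^k$ for each fixed $k$. Taking the expectation in Lemma \ref{lm:NRSthreebody} yields
\begin{equation*}
\frac{E_N}{N}\geq \tr\big[\wt h\,\gamma_N^{(1)}\big]+\frac{(1-\eps)^2(N-1)}{N}\,\tr\big[N^{3\gamma}U(N^\gamma(x-y))\,\gamma_N^{(2)}\big]-C_{\eps,s}N^{1-2\gamma},
\end{equation*}
with the error vanishing since $\gamma>1/2$, where $\gamma_N^{(n)}$ denotes the normalised $n$-particle reduced density matrix of $\psi_N$. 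Because $\tr[\wt h\,\gamma_N^{(1)}]\leq E_N/N\leq C$, a diagonal argument extracts a subsequence along which $\gamma_N^{(n)}$ converges in trace norm for every $n$, and Theorem \ref{thm:qdftheorem} provides a Borel probability measure $\mu$ on the unit sphere of $L^2(\Lambda)$ with $\gamma_N^{(n)}\to\int|u^{\otimes n}\rangle\langle u^{\otimes n}|\,d\mu(u)$. Weak lower semicontinuity of $\tr[\wt h\,\cdot]$ under trace-norm convergence, combined with the uniform first-moment bound, gives $\liminf\tr[\wt h\,\gamma_N^{(1)}]\geq\int\langle u,\wt h u\rangle\,d\mu(u)$.

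The main technical obstacle is the interaction term, whose kernel $N^{3\gamma}U(N^\gamma(x-y))$ is an approximate Dirac delta on the vanishing scale $N^{-\gamma}$ and is not uniformly bounded as an operator, so ordinary trace-norm convergence of $\gamma_N^{(2)}$ does not suffice. To overcome this I would use the second-moment estimate of Lemma \ref{lm:NRSsecondmoment}, which together with $\langle\psi_N,\wt H_N^2\psi_N\rangle\leq CN^2$ gives $\langle\psi_N,(\sum_i\wt h_i)^2\psi_N\rangle\leq CN^2$ and hence an $H^1\otimes H^1$-type a priori bound on $\gamma_N^{(2)}$, namely $\tr[(1-\Delta_x)(1-\Delta_y)\gamma_N^{(2)}]\leq C$. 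This regularity allows one to replace $N^{3\gamma}U(N^\gamma(\cdot))$ by a fixed mollification $U_\eta$ on scale $\eta>0$, with an error that is controlled uniformly in $N$ and vanishes as $\eta\to 0$; for each $\eta>0$ fixed, $U_\eta$ defines a bounded operator on $L^2(\Lambda^2)$, and trace-norm convergence of $\gamma_N^{(2)}$ together with the weak convergence $U_\eta(x-y)\to 4\pi\mathfrak{a}_0\delta(x-y)$ as $\eta\to 0$ then yields
\begin{equation*}
\liminf_{N\to\infty}\tr\big[N^{3\gamma}U(N^\gamma(x-y))\,\gamma_N^{(2)}\big]\geq 8\pi\mathfrak{a}_0\int\|u\|_4^4\,d\mu(u).
\end{equation*}
Combining the one-body and interaction limits and using the minimization identity \eqref{eq:mincENL}, I obtain
\begin{equation*}
\liminf_{N\to\infty}\frac{E_N}{N}\geq \int_{S(L^2(\Lambda))}\cE_{\text{NL}}^{\eps,s}(u)\,d\mu(u)\geq \min_{\|u\|_2=1}\cE_{\text{NL}}^{\eps,s}(u)=(1-\eps)^2 4\pi\mathfrak{a}_0+1,
\end{equation*}
which matches the upper bound and completes the proof.
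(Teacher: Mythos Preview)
Your strategy matches the paper's: the upper bound by testing with $\varphi_0^{\otimes N}$, the lower bound by reducing via Lemma \ref{lm:NRSthreebody} to a two-body Hartree energy, extracting a de Finetti measure from the reduced densities, and using the second-moment estimate of Lemma \ref{lm:NRSsecondmoment} to pass to the limit in the singular interaction.

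Two small points. First, the constant in your interaction limit should be $4\pi\mathfrak{a}_0$, not $8\pi\mathfrak{a}_0$: since $\widehat U(0)=4\pi\mathfrak{a}_0$, one has $\langle u^{\otimes 2},N^{3\gamma}U(N^\gamma(x-y))u^{\otimes 2}\rangle\to 4\pi\mathfrak{a}_0\|u\|_4^4$. With the factor $(1-\eps)^2$ in front this reproduces $\cE_{\text{NL}}^{\eps,s}$ correctly, so your final chain of inequalities is unaffected. Second, for the passage to the limit in the interaction the paper does not use a mollification $U_\eta$ of the potential but rather a spectral cutoff in the state: one writes $P_\zeta=\chi(\wt h\leq\zeta)^{\otimes 2}$ and uses a Cauchy--Schwarz-type decomposition together with Lemma \ref{lm:NRSintbnds} ii) and the bound $\tr[\wt h_1\wt h_2\,\gamma_N^{(2)}]\leq C_{\eps,s}$ to show that the $Q_\zeta$-part is $O(\zeta^{-1/10})$. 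On the finite-rank range of $P_\zeta$ the operator $P_\zeta N^{3\gamma}U(N^\gamma(x-y))P_\zeta$ has uniformly bounded norm, so trace-norm convergence of $\gamma_N^{(2)}$ applies directly and one lets $\zeta\to\infty$ at the end. Your mollification idea is in the same spirit, but as stated it is vague on exactly how the $H^1\otimes H^1$ bound controls the replacement error uniformly in $N$; the spectral-cutoff route makes this step completely explicit.
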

\begin{proof} The upper bound follows easily by testing $ \wt H_N$ with $ \ph_0^{\otimes N}$, so that we only need to prove the lower bound. Following the notation from \cite{NRS}, we denote by $ \wt \psi_N$ a ground state vector for $ \wt H_N$. Such a vector exists, because $ W_N\geq 0 $ and because $ \wt h$ has compact resolvent. Lemma \ref{lm:NRSsecondmoment} and the ground state equation imply that
		\[\langle \wt \psi_N, \wt h_1 \wt h_2 \wt \psi_N \rangle \leq C_{\eps, s}\]
for some $ C_{\eps, s}$ independent of $N$. Denoting by $ \wt \gamma_N^{(k)}  $ the $k$-particle reduced density matrices associated to $ \wt \psi_N$, equation (\ref{eq:NRSthreebody2}) implies that
		\begin{equation}\label{eq:mfappr1}  \lim_{N\to\infty} \frac{\inf \sigma(\wt H_N)}{N} \geq \liminf_{N\to\infty} \Big[ \operatorname{tr} \big( \wt h \wt \gamma_N^{(1)}  \big) + (1-\eps)^2 \operatorname{tr} \big( N^{3\gamma}U(N^{\gamma}(x-y))\wt \gamma_N^{(2)} \big) \Big]\end{equation}
Since $ \wt h$ has compact resolvent and since the previous bound shows that $ \operatorname{tr} \big( \wt h \wt \gamma_N^{(1)}  \big) $ is uniformly bounded in $ N$, standard arguments (see for instance the argument before \cite[Theorem 2]{LS2}) imply that, up to a subsequence, $ \wt \gamma_N^{(1)}$ converges to a limit in trace class norm. By Theorem \ref{thm:qdftheorem}, this shows that there exists a probability measure $ \wt\mu$ on the unit sphere $S(L^2(\Lambda))$, which is invariant under the action of $ S^1 $, such that for every $k\in\NN$
				\begin{equation}\label{eq:mfappr2}\lim_{N\to\infty} \Big| \wt \gamma_N^{(k)} -\int_{S(L^2(\Lambda))} |u^{\otimes k} \rangle \langle u^{\otimes k} | \; d\wt\mu(u)\Big|=0 \end{equation}
In particular, by $ \wt h\geq 0$ and Fatou's Lemma, we find that
		\begin{equation} \label{eq:mfappr3} \liminf_{N\to\infty} \operatorname{tr} \big( \wt h \wt \gamma_N^{(1)}  \big) \geq \int_{S(L^2(\Lambda))} \langle u, \wt h u\rangle \;d\wt\mu(u) \end{equation}
The last bound implies in particular that any $ u\in L^2(\Lambda)$ in the support of $ \wt\mu$ lies in the form domain $\cQ(\wt h) $, which is equal to $\cQ(\wt h) = \cQ(-\Delta)=H^1(\Lambda) $, by definition of $\wt h$. 

To deal with the interaction term on the r.h.s. of (\ref{eq:mfappr1}), we cannot apply Fatou's Lemma directly; we use a localization argument instead. Denote by $ \chi(\wt h\leq \zeta)$ the spectral projection of $\wt h$ onto $(-\infty; \zeta) $. Since $ \wt h$ has compact resolvent, $ \chi(\wt h\leq \zeta)$ is a finite rank operator for every $ \zeta >0$. We let $ P_\zeta = \chi(\wt h\leq \zeta)^{\otimes 2}$ and $ Q_\zeta = 1-P_\zeta$. Since $ N^{3\gamma }U(N^\gamma.)$ is pointwise non-negative, the Cauchy-Schwarz inequality 
yields the operator bound 
\[\begin{split}N^{3\gamma }U(N^\gamma(x-y)) &= (P_\zeta + Q_\zeta) N^{3\gamma }U(N^\gamma(x-y))(P_\zeta +Q_\zeta)\\ & \geq (1-\delta) P_\zeta N^{3\gamma }U(N^\gamma(x-y)) P_\zeta -\delta^{-1} Q_\zeta N^{3\gamma }U(N^\gamma(x-y)) Q_\zeta \end{split}\]	
Using the bound ii) in (\ref{eq:NRSintbnds}) and the fact that $-\Delta \leq C_{\eps, s} \wt h $, one arrives at
\[N^{3\gamma }U(N^\gamma(x-y))- P_\zeta N^{3\gamma }U(N^\gamma(x-y)) P_\zeta \geq -C_{\eps, s} (\delta ^{-1} \zeta^{-1/5} +\delta) \wt h_1 \wt h_2\]
Taking the trace against $ \wt \gamma_N^{(2)}$ and using that $ \langle \wt \psi_N, \wt h_1 \wt h_2 \wt \psi_N \rangle \leq C_{\eps, s}$, the last bound implies together with the choice $ \delta = \zeta^{-1/10}$ that
		\begin{equation*}
		\operatorname{tr} \big( N^{3\gamma}U(N^{\gamma}(x-y))\wt \gamma_N^{(2)} \big) - \operatorname{tr} \big(P_\zeta N^{3\gamma}U(N^{\gamma}(x-y))P_\zeta\wt \gamma_N^{(2)} \big)\geq -C_{\eps, s}\zeta^{-1/10}
		\end{equation*}
But then, since the operator norm of $ P_\zeta N^{3\gamma }U(N^\gamma(x-y)) P_\zeta$ is bounded uniformly in $N$ by the bound ii) in (\ref{eq:NRSintbnds}) and by the definition of $P_\zeta$, the convergence (\ref{eq:mfappr2}) implies 
		\begin{equation*}
		\begin{split}
		&\liminf_{N\to \infty} \operatorname{tr} \big( N^{3\gamma}U(N^{\gamma}(x-y))\wt \gamma_N^{(2)} \big) \\
		&\geq \liminf_{N\to\infty } \int_{S(L^2(\Lambda))} \langle P_\zeta u^{\otimes 2}, N^{3\gamma}U(N^{\gamma}(x-y)) P_\zeta u^{\otimes 2}\rangle \;d\wt\mu(u)  - C_{\eps, s} \zeta^{-1/10}\\
		& = 4\pi \frak{a}_0  \int_{S(L^2(\Lambda))} \|\chi(\wt h\leq \zeta)u \|_4^4 \;d\wt\mu(u)- C_{\eps, s} \zeta^{-1/10}
		\end{split}\end{equation*} 
Here, we have used in the last step that $P_\zeta $ is a finite rank projector and that \linebreak $ \lim_{N\to \infty} \langle u^{\otimes 2}, N^{3\gamma}U(N^{\gamma}(x-y)) u^{\otimes 2}\rangle_2 = 4\pi \frak{a}_0 \|u\|_4^4$ for every $ u\in H^1(\Lambda)$. Letting $ \zeta\to \infty$, using Fatou's Lemma and recalling (\ref{eq:mfappr3}) and (\ref{eq:mincENL}), we obtain 
\[\lim_{N\to\infty} \frac{\inf \sigma(\wt H_N)}{N} \geq \int_{S(L^2(\Lambda))} \Big[ \langle u, \wt h u\rangle  + 4\pi \frak{a}_0 (1-\eps)^2 \|u\|_4^4\Big] \;d\wt\mu(u) \geq  4\pi \mathfrak{a}_0 (1-\eps)^2 +1 \]
This proves the claim.
\end{proof}
		\begin{cor}\label{cor:gsenergyconv}
		Let $E_N$ denote the ground state energy of $H_N$, defined as in (\ref{eq:HN}). Then 
				\[\lim_{N\to\infty} \frac{E_N}{N} = 4\pi \mathfrak{a}_0\]
		\end{cor}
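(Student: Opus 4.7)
The plan is to establish matching lower and upper bounds on $E_N/N$, combining the work already developed in the appendix for the lower bound with a classical Jastrow-type trial state (or a direct citation) for the upper bound.

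For the lower bound, observe that Lemma \ref{lm:genDyson} rewrites as $H_N \geq \wt H_N - N - C N^{2-2\gamma} s^5/\eps$ in the sense of forms, using the identity $\sum_{i=1}^N \wt h_i = \sum_{i=1}^N p_i^2\big(1-(1-\eps)\Theta(s^{-1}p_i)\big) + N$ from the definition of $\wt H_N$. Evaluating this inequality on a normalized ground state of $H_N$ and dividing by $N$, one obtains
\[ \frac{E_N}{N} \geq \frac{\inf \sigma(\wt H_N)}{N} - 1 - \frac{C s^5}{\eps\, N^{2\gamma - 1}}. \]
Since $\gamma \in (1/2; 3/2)$, the last error vanishes as $N \to \infty$. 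Applying Proposition \ref{prop:mfappr}, this gives $\liminf_{N\to\infty} E_N/N \geq (1-\eps)^2 4\pi \frak{a}_0$ for every fixed $\eps \in (0;1)$ and admissible $s>0$. Letting $\eps \to 0^+$ yields $\liminf_{N\to\infty} E_N/N \geq 4\pi \frak{a}_0$.

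For the matching upper bound, I would employ a Jastrow trial state of the form $\Psi_N(x_1, \ldots, x_N) = C_N \prod_{i<j} f_\ell(N(x_i - x_j))$, where $f_\ell$ is the Neumann scattering solution from \eqref{eq:scatl} and $C_N$ a normalization constant. A direct computation (using the scattering equation to convert the interaction contribution into a boundary term proportional to $\frak{a}_0$ via \eqref{eq:Vfa0}, and the pointwise bounds of Lemma \ref{3.0.sceqlemma} to control the kinetic energy and the cross-terms arising from products of different Jastrow factors) yields $\langle \Psi_N, H_N \Psi_N\rangle / \| \Psi_N\|^2 \leq 4\pi \frak{a}_0 N (1 + o(1))$ as $N\to\infty$. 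Alternatively, one may invoke directly the upper bound proved in \cite{LSY}: since there is no trapping potential and the GP minimizer on the torus is the constant $\ph_0 \equiv 1$ by Lemma \ref{lm:GPmin}, the construction there transports verbatim to the translation-invariant setting, yielding $\limsup_{N\to\infty} E_N/N \leq 4\pi \frak{a}_0$.

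The entire difficulty of the corollary is concentrated in the lower bound, which has already been fully resolved in the course of proving Proposition \ref{prop:mfappr}. The main obstacles there were (i) replacing the singular interaction $N^2 V(N\cdot)$ by a softer effective interaction of range $N^{-\gamma}$ via the torus version of Dyson's lemma (Lemma \ref{lm:LSSlemmatorus}), (ii) controlling the three-body collisions hidden inside $W_N$ via the second-moment estimate (Lemma \ref{lm:NRSsecondmoment}) and the three-body bound (Lemma \ref{lm:NRSthreebody}), and (iii) identifying the limiting energy via the quantum de Finetti theorem (Theorem \ref{thm:qdftheorem}) combined with the uniqueness of the GP minimizer on $\Lambda$ (Lemma \ref{lm:GPmin}). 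By contrast, the upper bound is a routine Dyson-type calculation in the present setting and does not require any novel input beyond Lemma \ref{3.0.sceqlemma}.
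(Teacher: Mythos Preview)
Your lower bound argument is exactly the paper's: the operator inequality from Lemma \ref{lm:genDyson} combined with Proposition \ref{prop:mfappr} gives $\liminf_{N\to\infty} E_N/N \geq (1-\eps)^2 4\pi\frak{a}_0$, and sending $\eps\to 0$ concludes.

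For the upper bound, however, the paper takes a different and shorter route than the one you propose. Instead of building a Jastrow trial state or invoking \cite{LSY}, the paper simply appeals to Proposition \ref{prop:GNell}: testing the renormalized excitation Hamiltonian $\cG_{N,\ell}=e^{-B(\eta_H)}U_N H_N U_N^* e^{B(\eta_H)}$ on the vacuum $\Omega\in\cF_+^{\leq N}$ and using \eqref{eq:GNell-prel}, \eqref{eq:Gbd0} immediately yields $E_N\leq 4\pi\frak{a}_0 N + C$. Your Jastrow approach is correct in principle and is the classical route, but it requires a separate (if standard) computation on the torus that you have only sketched; the paper's approach reuses, at no extra cost, the Bogoliubov machinery already developed in the main text. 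The trade-off is that the paper's argument is not self-contained within the appendix, whereas yours would be, modulo actually carrying out the Dyson-type calculation you allude to.
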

\begin{proof} It is enough to prove the lower bound, the upper bound follows from Prop. \ref{prop:GNell} by testing $ \cG_{N,\ell}$ with the vacuum in $ \cF_+^{\leq N}$. By equation (\ref{eq:genDyson}) and Proposition \ref{prop:mfappr}, we have for every fixed $0< \eps < 1$, $s>0$ that
		\[\liminf_{N\to\infty } \frac{E_N}{N} \geq \liminf_{N\to \infty} \frac{\inf \sigma(\wt H_N) }{N} -1 = (1-\eps)^2 4\pi \mathfrak{a}_0 \] 
Since $ \eps>0$ is arbitrary, this proves the claim.
\end{proof}

\emph{Step 5. (Convergence of Ground States).}
In this last step, we conclude the proof of (\ref{eq:BEC1}). We summarize the main steps from the proof of \cite{NRS}. 

The proof is based on a Feynman-Hellmann principle. For $ v\in L^2(\Lambda)$ and $k\in\NN$, let 
		\[S_{v,k } = \frac{k!}{N^{k-1} }\sum_{1\leq i_1< i_1< \dots <i_k\leq N} |v^{\otimes k}\rangle\langle v^{\otimes k} |_{i_1, \dots, i_k}\]
		\begin{lemma}\label{lm:FeynHell} Let $H_N$ be defined as in (\ref{eq:HN}). Then, for every $ v\in L^2(\Lambda)$ and $k\in\NN$, we have that
				\begin{equation*} 
				\liminf_{N\to \infty} \frac{\inf \sigma(H_N - S_{v,k})}{N}\geq \inf_{u\in H^1(\Lambda), \|u\|_2=1} \big(\cE_{\emph{GP}}(u) - |\langle v, u\rangle|^{2k} \big)
				\end{equation*}
		\end{lemma}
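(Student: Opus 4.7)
The strategy is to mimic the argument of Proposition \ref{prop:mfappr} and Corollary \ref{cor:gsenergyconv}, with the extra perturbation $S_{v,k}$ treated as a $k$-body observable that passes cleanly through the quantum de Finetti limit. First, I would fix $\eps\in(0,1)$, $s>0$ and $\gamma\in(1/2,3/2)$ satisfying $N^{-\gamma}\gg N^{-2/3}$, and apply Dyson's Lemma (Lemma \ref{lm:genDyson}) to bound
\[
H_N - S_{v,k}\;\geq\;\wt H_N - N - S_{v,k} - C N^{2-2\gamma}s^5/\eps,
\]
thereby reducing the analysis to a lower bound on $\inf\sigma(\wt H_N - S_{v,k})/N$.

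Second, I would let $\wt\psi_N$ be a ground state of $\wt H_N - S_{v,k}$ (which exists since $\wt h$ has compact resolvent, $W_N\geq 0$, and $\|S_{v,k}\|\leq k!\,N\,\|v\|_2^{2k}$ is bounded relatively to $N$). The operator inequality
\[
(\wt H_N - S_{v,k})^2 \leq 2\,(\wt H_N)^2 + 2\,\|S_{v,k}\|^2
\]
combined with Lemma \ref{lm:NRSsecondmoment} gives $\langle\wt\psi_N,\wt h_1\wt h_2\wt\psi_N\rangle\leq C_{\eps,s}$ (using that the ground state energy of $\wt H_N-S_{v,k}$ is $O(N)$, obtained by testing on $\varphi_0^{\otimes N}$). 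Using the three-body estimate of Lemma \ref{lm:NRSthreebody}, one then obtains
\[
\frac{\inf\sigma(\wt H_N - S_{v,k})}{N}\geq \mathrm{tr}\bigl(\wt h\,\wt\gamma_N^{(1)}\bigr)+(1-\eps)^2\,\mathrm{tr}\bigl(N^{3\gamma}U(N^{\gamma}(x-y))\wt\gamma_N^{(2)}\bigr)-\mathrm{tr}\bigl(S_{v,k}\wt\gamma_N^{(k)}\bigr)+o(1),
\]
where the last trace equals $\mathrm{tr}\bigl(|v^{\otimes k}\rangle\langle v^{\otimes k}|\,\wt\gamma_N^{(k)}\bigr)$ by the combinatorial definition of $S_{v,k}$.

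Third, compactness of $\wt h^{-1}$ together with the uniform bound on $\mathrm{tr}(\wt h\wt\gamma_N^{(1)})$ yields trace-norm convergence of $\wt\gamma_N^{(1)}$ along a subsequence; by Theorem \ref{thm:qdftheorem}, there exists an $S^1$-invariant Borel probability measure $\wt\mu$ on $S(L^2(\Lambda))$ with $\wt\gamma_N^{(n)}\to\int|u^{\otimes n}\rangle\langle u^{\otimes n}|\,d\wt\mu(u)$ for every $n$. Applying Fatou to the kinetic term (as in (\ref{eq:mfappr3})), the finite-rank $P_\zeta$ localization argument to the interaction term, and noting that $|v^{\otimes k}\rangle\langle v^{\otimes k}|$ is a bounded $k$-body operator so that its expectation converges directly under de Finetti, I obtain
\[
\liminf_{N\to\infty}\frac{\inf\sigma(\wt H_N - S_{v,k})}{N}\geq \int_{S(L^2(\Lambda))}\!\Bigl[\langle u,\wt h\,u\rangle+4\pi\frak{a}_0(1-\eps)^2\|u\|_4^4-|\langle v,u\rangle|^{2k}\Bigr]d\wt\mu(u).
\]
Subtracting $1$ (from the shift in $\wt h$), letting $\eps\to 0$ and $s\to\infty$, and bounding the integral below by the pointwise infimum over $u\in H^1(\Lambda)$ with $\|u\|_2=1$ gives the desired estimate.

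The main obstacle is ensuring that the a priori bound $\langle\wt\psi_N,\wt h_1\wt h_2\wt\psi_N\rangle\leq C_{\eps,s}$ survives the subtraction of $S_{v,k}$; this requires observing that $S_{v,k}$ is bounded by a multiple of $N$ and does not perturb the ground state energy beyond leading order, so that Lemma \ref{lm:NRSsecondmoment} still yields a uniform second-moment control. The remaining steps are direct transcriptions of the arguments already carried out for Proposition \ref{prop:mfappr}, with the only genuinely new input being that the boundedness of $|v^{\otimes k}\rangle\langle v^{\otimes k}|$ as a rank-one $k$-body operator makes the passage to the de Finetti limit for the $S_{v,k}$ term immediate (no localization in $\wt h$ is required, as it was for the interaction).
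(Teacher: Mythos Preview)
Your approach is essentially the same as the paper's, which explicitly says the lemma ``is obtained along the same lines as Proposition \ref{prop:mfappr} and Corollary \ref{cor:gsenergyconv}'' and refers to \cite[Lemma 4.3]{NRS} for details. Two small points: the operator inequality you display for the second moment should read $(\wt H_N)^2 \leq 2(\wt H_N - S_{v,k})^2 + 2\|S_{v,k}\|^2$ (you need an upper bound on $(\wt H_N)^2$ to feed into Lemma \ref{lm:NRSsecondmoment}, not the other way around); and the paper singles out one step you gloss over, namely that passing to the limit $\eps\to 0$, $s\to\infty$ in the perturbed variational problem
\[
\inf_{\|u\|_2=1}\Big(\langle u,(\wt h-1)u\rangle + (1-\eps)^2 4\pi\frak{a}_0\|u\|_4^4 - |\langle v,u\rangle|^{2k}\Big)
\]
requires a compactness argument (using that $-\Delta+1$ has compact resolvent on the torus) rather than being automatic, since the convergence $\wt h -1 \to -\Delta$ is only pointwise in momentum.
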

\begin{proof} The Lemma is obtained along the same lines as Proposition \ref{prop:mfappr} and Corollary \ref{cor:gsenergyconv}, we refer to the proof of \cite[Lemma 4.3]{NRS} for the details. We remark that, compared to the proof of Proposition \ref{prop:mfappr}, one needs to argue in addition that 
		\[\begin{split}
		&\lim_{\eps\to 0 }\lim_{s\to \infty} \bigg( \inf_{u\in H^1(\Lambda), \|u\|_2=1}\big( \langle u, (\wt h-1)u\rangle +(1-\eps)^2 4\pi \mathfrak{a}_0 \|u\|_4^4  - |\langle v, u\rangle|^{2k} \big)\bigg)\\
		 &\hspace{8cm}= \inf_{u\in H^1(\Lambda), \|u\|_2=1} \big(\cE_{\emph{GP}}(u) - |\langle v, u\rangle|^{2k} \big)
		\end{split}\]
This follows from a standard compactness argument from \cite{LS2}, using (in our setting on the torus) that $ -\Delta +1$ has compact resolvent. For the details of the argument, we refer to \cite[Section 4B, Step 1]{NRS}.
\end{proof}

		\begin{prop}\label{prop:convstates} Let $H_N $ be defined as in (\ref{eq:HN}) and let $ (\psi_N)_{N\in\NN}$ be a normalized sequence in $ L^2_s(\Lambda^N)$ such that
				\[ \lim_{N\to\infty} \frac{\langle \psi_N , H_N \psi_N\rangle}{N} = 4\pi \mathfrak{a}_0\]
		Then, denoting by $ \ph_0$ the constant function $\ph_0=1_{|\Lambda}$ and by 
		$ (\gamma_N^{(k)})_{N\in\NN}$ the $k$-particle reduced density matrices associated to $ (\psi_N)_{N\in\NN}$, we have that
				\begin{equation}\label{eq:convstates}
				\lim_{N\to \infty } \operatorname{tr} \Big| \gamma_N^{(k)} - |\ph_0^{\otimes k}\rangle \langle \ph_0^{\otimes k}|\Big|=0
				\end{equation}
		\end{prop}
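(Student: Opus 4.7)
The plan is to apply the Feynman-Hellmann principle, combining Lemma \ref{lm:FeynHell} (extended to variable coupling $\lambda$) with the uniqueness of the Gross-Pitaevskii minimizer established in Lemma \ref{lm:GPmin}. The first step is to show that, for every $\lambda\in\bR$ with $|\lambda|<4\pi^2$, the ground state energy per particle of the perturbed Hamiltonian $H_N-\lambda S_{\varphi_0,k}$ satisfies
\begin{equation*}
\lim_{N\to\infty}\frac{\inf\sigma(H_N-\lambda S_{\varphi_0,k})}{N}=e(\lambda):=\inf_{\|u\|_2=1}\big(\cE_{\text{GP}}(u)-\lambda|\langle\varphi_0,u\rangle|^{2k}\big)=4\pi\frak{a}_0-\lambda,
\end{equation*}
where the last equality uses Lemma \ref{lm:GPmin} and the Poincaré inequality $\|u\|_{H^1}^2\geq 4\pi^2$ for mean-zero $u$ on the torus, so that $\varphi_0$ remains the unique minimizer provided $|\lambda|<4\pi^2$. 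The lower bound $\liminf\geq e(\lambda)$ is precisely Lemma \ref{lm:FeynHell} with $S_{\varphi_0,k}$ replaced by $\lambda S_{\varphi_0,k}$: the same de Finetti-based argument applies without modification, as $\lambda$ simply scales the perturbation in the limiting functional. The matching upper bound follows by testing $H_N-\lambda S_{\varphi_0,k}$ against the trial state $U_N^*e^{B(\eta_H)}\Omega$ already used to prove the upper bound in Theorem \ref{thm:main}: by Lemma \ref{lm:Ngrow}, $e^{B(\eta_H)}\Omega$ has a uniformly bounded number of excitations, so its $k$-particle reduced density matrix converges to $|\varphi_0^{\otimes k}\rangle\langle\varphi_0^{\otimes k}|$ (yielding $\langle\text{trial},S_{\varphi_0,k}\text{trial}\rangle/N\to 1$), while its energy is $\leq 4\pi\frak{a}_0 N+C$ by Proposition \ref{prop:GNell}.

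Using $\psi_N$ itself as a variational competitor in $\inf\sigma(H_N-\lambda S_{\varphi_0,k})$ yields
\begin{equation*}
\frac{\inf\sigma(H_N-\lambda S_{\varphi_0,k})}{N}\leq\frac{1}{N}\langle\psi_N,H_N\psi_N\rangle-\lambda\cdot\frac{1}{N}\langle\psi_N,S_{\varphi_0,k}\psi_N\rangle
\end{equation*}
for every $\lambda\in(-4\pi^2,4\pi^2)$. Passing to the $\liminf$ as $N\to\infty$ on both sides and using the first step together with $\frac{1}{N}\langle\psi_N,H_N\psi_N\rangle\to 4\pi\frak{a}_0=e(0)$, one obtains, for $\lambda>0$, $\limsup_{N\to\infty}\frac{1}{N}\langle\psi_N,S_{\varphi_0,k}\psi_N\rangle\leq (e(0)-e(\lambda))/\lambda\to -e'(0^+)=1$ as $\lambda\to 0^+$, and for $\lambda<0$, $\liminf_{N\to\infty}\frac{1}{N}\langle\psi_N,S_{\varphi_0,k}\psi_N\rangle\geq (e(\lambda)-e(0))/|\lambda|\to -e'(0^-)=1$ as $|\lambda|\to 0^+$. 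Combining these two estimates gives
\begin{equation*}
\lim_{N\to\infty}\frac{1}{N}\langle\psi_N,S_{\varphi_0,k}\psi_N\rangle=1.
\end{equation*}
Rewriting the left-hand side as $\frac{k!\binom{N}{k}}{N^k}\langle\varphi_0^{\otimes k},\gamma_N^{(k)}\varphi_0^{\otimes k}\rangle$ and observing that the combinatorial prefactor tends to $1$, we conclude that $\langle\varphi_0^{\otimes k},\gamma_N^{(k)}\varphi_0^{\otimes k}\rangle\to 1$.

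Trace-norm convergence (\ref{eq:convstates}) finally follows from the elementary inequality $\|\gamma-|\psi\rangle\langle\psi|\,\|_1\leq 2\sqrt{1-\langle\psi,\gamma\psi\rangle}$, valid for any density matrix $\gamma$ and any unit vector $\psi$, applied with $\psi=\varphi_0^{\otimes k}$. The main obstacle is the first step, namely the adaptation of Lemma \ref{lm:FeynHell} to arbitrary coupling $\lambda$ and the verification of the matching upper bound: while the de Finetti lower bound carries $\lambda$ through the proof of Proposition \ref{prop:mfappr} essentially unchanged, the upper bound requires showing that the excitation vector $e^{B(\eta_H)}\Omega$ used in Theorem \ref{thm:main} has $k$-particle reduced density matrix converging to $|\varphi_0^{\otimes k}\rangle\langle\varphi_0^{\otimes k}|$; this reduces to the uniform bound on the number of excitations supplied by Lemma \ref{lm:Ngrow} and the identity $U_N^*(\cN_+^k)U_N=\prod_{j=0}^{k-1}(N-a^*(\varphi_0)a(\varphi_0)-j)$, so only bookkeeping remains.
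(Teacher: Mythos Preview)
Your proof is correct but follows a different route from the paper. The paper does not compute $e(\lambda)$ explicitly or use two-sided perturbations. Instead, it combines Lemma~\ref{lm:FeynHell} (only for positive coupling, via the substitution $v\mapsto\lambda^{1/(2k)}v$) with a compactness argument to deduce
\[
\limsup_{N\to\infty}\operatorname{tr}\big(|v^{\otimes k}\rangle\langle v^{\otimes k}|\,\gamma_N^{(k)}\big)\leq |\langle v,\ph_0\rangle|^{2k}
\]
for every $v\in L^2(\Lambda)$, then applies the quantum de~Finetti theorem a second time, directly to $\gamma_N^{(k)}$, to obtain a limiting measure $\mu$, and finally runs a contradiction argument to show that $\mu$ must be supported on $\{c\ph_0:|c|=1\}$. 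Your argument is more direct: once you know $e(\lambda)=4\pi\frak{a}_0-\lambda$ near $\lambda=0$, the Feynman--Hellmann inequality for $\lambda<0$ immediately yields $\liminf_N\langle\ph_0^{\otimes k},\gamma_N^{(k)}\ph_0^{\otimes k}\rangle\geq 1$, after which the trace-norm inequality closes the proof without a second appeal to de~Finetti.

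Two small remarks. First, the entire upper-bound discussion for $\inf\sigma(H_N-\lambda S_{\ph_0,k})$ is unnecessary: since $S_{\ph_0,k}/N=\frac{N!}{(N-k)!N^k}\langle\ph_0^{\otimes k},\gamma_N^{(k)}\ph_0^{\otimes k}\rangle\leq 1$ trivially, the bound $\limsup\leq 1$ is free, and only the $\lambda<0$ side (which uses only the Feynman--Hellmann \emph{lower} bound) carries content. Second, your stated range $|\lambda|<4\pi^2$ is not quite right for $\lambda<0$: writing $u=\alpha\ph_0+v$ with $v\perp\ph_0$ and using $1-(1-t)^k\leq kt$, the Poincar\'e argument gives uniqueness of $\ph_0$ only for $|\lambda|<4\pi^2/k$. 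This is harmless, since you only need $\lambda$ in a neighbourhood of zero, but the claim as written is false for $k\geq 2$. The extension of Lemma~\ref{lm:FeynHell} to $\lambda<0$ that you invoke is genuine but routine: the perturbation $\lambda S_{\ph_0,k}$ is a bounded $k$-body operator, so it passes through the de~Finetti limit in Proposition~\ref{prop:mfappr} regardless of sign.
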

\begin{proof}
The assumption on $ (\psi_N)_{N\in\NN}$ and Lemma \ref{lm:FeynHell} imply that
		\[\limsup_{N\to\infty} \operatorname{tr} \big(|v^{\otimes k}\rangle\langle v^{\otimes k}| \gamma_N^{(k)}  \big)\leq 4\pi \mathfrak{a}_0 - \inf_{u\in H^1(\Lambda),\|u\|_2=1}\big(\cE_{\text{GP}}(u) - |\langle v, u\rangle|^{2k} \big)\]
for any $ v\in L^2(\Lambda)$ and $ k\in\NN$. Replacing $ v$ by $ \lambda^{1/(2k)}v$ in the previous bound shows that
		\[\limsup_{N\to\infty} \operatorname{tr} \big(|v^{\otimes k}\rangle\langle v^{\otimes k} |\gamma_N^{(k)}  \big)\leq \frac{1}{\lambda}\Big[4\pi \mathfrak{a}_0 - \inf_{u\in H^1(\Lambda),\|u\|_2=1}\big(\cE_{\text{GP}}(u) - \lambda|\langle v, u\rangle|^{2k} \big) \Big]\]
Now, denote by $ u_\lambda $ a normalized minimizer of $ u\mapsto \cE_{\text{GP}}(u) - \lambda|\langle v, u\rangle|^{2k}$. Then $ \langle u_\lambda, -\Delta u_\lambda\rangle$ is uniformly bounded in $\lambda$ so that, choosing a sequence $ \lambda_j\to 0$ as $j\to \infty$, the sequence $ (u_{\lambda_j })_{j\in \NN}$ has a weakly convergent subsequence in $H^1(\Lambda)$. Since $-\Delta+1$ has compact resolvent, we find that $ u_{\lambda_j} \to u_0 $ in $ L^2(\Lambda)$ and pointwise a.e. in $\Lambda$ as $j\to \infty$, choosing possibly a further subsequence. By Fatou's lemma, we conclude that $ u_0$ must be a minimizer of $ \cE_{\text{GP}}$ so that 
		\[\limsup_{j\to \infty } \frac{1}{\lambda_j } \Big[ 4\pi \mathfrak{a}_0 - \inf_{u\in H^1(\Lambda),\|u\|_2=1}\big(\cE_{\text{GP}}(u) - \lambda_{j}|\langle v, u\rangle|^{2k} \big)\Big]\leq |\langle v, \ph_0\rangle |^{2k}  \]
Here, we used the uniqueness of the minimizer of $\cE_{\text{GP}}$, by Lemma \ref{lm:GPmin}. In particular, the last bound implies that
		\begin{equation}\label{eq:convstate1} \limsup_{N\to\infty} \operatorname{tr} \big(|v^{\otimes k}\rangle\langle v^{\otimes k} |\gamma_N^{(k)}  \big)\leq  |\langle v, \ph_0\rangle |^{2k}  \end{equation}		
for any $ v\in L^2(\Lambda) $ and any $ k\in\NN$.

Arguing next as in the proof of Proposition \ref{prop:mfappr}, the Quantum de Finetti Theorem \ref{thm:qdftheorem} implies that, up to a subsequence, there exists a probability measure $ \mu$ on the unit sphere $S(L^2(\Lambda))$, which is invariant under the action of $ S^1 $, such that for every $k\in\NN$
			\begin{equation}\label{eq:convstate2}\lim_{N\to\infty} \Big| \gamma_N^{(k)} -\int_{S(L^2(\Lambda))} |u^{\otimes k} \rangle \langle u^{\otimes k} | \; d\mu(u)\Big|=0 \end{equation}
To conclude the proposition, we use the bound (\ref{eq:convstate1}) to show that the measure $\mu$ is supported on the set of minimizers of $ \cE_{\text{GP}}$, i.e. on $ \{ c\ph_0 \in L^2(\Lambda): c\in \mathbb{C}, |c|=1 \}$. Once this is proved, we immediately conclude (\ref{eq:convstates}) from (\ref{eq:convstate2}).

To show that $\mu$ has support in  $ \{ c\ph_0 \in L^2(\Lambda): c\in \mathbb{C}, |c|=1 \}$, assume by contradiction that there exists $v_0 \in L^2(\Lambda)$ in the support of $\mu$ s.t. $ v_0 $ is not a minimizer of $\cE_{\text{GP}}$. Denote by $ B_\delta$ the set of points in the support of $\mu$ s.t. $ \| v-v_0\|_2\leq \delta$. Then, there must exist some $\delta\in (0;\frac12)$ s.t.
		\begin{equation}\label{eq:convstate5} | \langle v, \ph_0\rangle| \leq 1 - 3\delta^2\end{equation}
for all $ v\in B_\delta$. If this was not the case, we would find a sequence $ (v_j)_{j\in\NN} $ in the support of $\mu $ converging in $ L^2(\Lambda)$ to $v_0$ as well as to $ \ph_0$. But this contracticts our assumption that $v_0$ is not a minimizer of $ \cE_{\text{GP}}$. Hence, pick such a $\delta \in (0;\frac12)$ s.t. (\ref{eq:convstate5}) holds true. By the triangle inequality, we also have that $  | \langle v, u\rangle| \geq 1- 2\delta ^2$ for all $ u, v \in B_\delta$. But then (\ref{eq:convstate1}) and (\ref{eq:convstate2}) imply that
		\[\begin{split} 
		\mu(B_\delta)^2 (1-2\delta^2)^{2k}\leq&\; \int_{B_\delta} \int_{B_\delta} | \langle v, u\rangle|^{2k} \; d\mu(u) d\mu (v) \\
		\leq &\; \int_{B_\delta} | \langle v, \ph_0\rangle|^{2k} \; d\mu(v) \leq \mu (B_\delta) (1-3\delta^2)^{2k}
		\end{split}\]
In particular, by letting $k\to \infty$ in the previous bound, we find that $ \mu (B_\delta)=0$, which is a contradiction to the fact that $ v_0\in \mu (B_\delta)$ is in the support of $\mu$ and that $\mu$ is a Borel measure. This concludes the proof.
\end{proof}
Proposition \ref{prop:convstates} completes the proof of (\ref{eq:BEC1}).

\end{document}